\documentclass{amsart}
\usepackage{amsmath}%
\usepackage{amssymb}
\usepackage{newunicodechar}
\usepackage{amsthm}
\usepackage{mathtools}
\usepackage{physics}
\usepackage{xfrac}
\usepackage{amsfonts}%
\usepackage{dsfont}
\usepackage{graphicx}
\usepackage{tikz-cd}
\usepackage[utf8]{inputenc}
\usepackage{indentfirst}
\usepackage{enumerate}
\usepackage{setspace}
\usepackage{verbatim}
\usepackage{yfonts}
\usepackage{cite}
\usepackage{url}
\usepackage{hyperref}
\usepackage{ragged2e}

\DeclareFontFamily{U}{mathx}{\hyphenchar\font45}
\DeclareFontShape{U}{mathx}{m}{n}{
      <5> <6> <7> <8> <9> <10>
      <10.95> <12> <14.4> <17.28> <20.74> <24.88>
      mathx10
      }{}
\DeclareSymbolFont{mathx}{U}{mathx}{m}{n}
\DeclareFontSubstitution{U}{mathx}{m}{n}
\DeclareMathAccent{\widecheck}{0}{mathx}{"71}
\DeclareMathAccent{\wideparen}{0}{mathx}{"75}

\theoremstyle{plain}
\newtheorem{theorem}{Theorem}[section]
\newtheorem{definition}[theorem]{Definition}

\newtheorem{corollary}[theorem]{Corollary}

\newtheorem{example}[theorem]{Example}

\newtheorem{lemma}[theorem]{Lemma}
\newtheorem{notation}[theorem]{Notation}

\newtheorem{proposition}[theorem]{Proposition}
\newtheorem{remark}[theorem]{Remark}

\numberwithin{equation}{section}
\newenvironment{skproof}{\begin{proof}}{\end{proof}}

\newcommand{\cg}[6]{C{\mathstrut}^{#1,}_{#4,}{\mathstrut}^{#2,}_{#5,}{\mathstrut}^{#3}_{#6}}

\begin{document}

\title[On symbol correspondences for quark systems II]{On symbol correspondences for\\ quark systems II: Asymptotics}
\author{P. A. S. Alcântara}
\author{P. de M. Rios}
\address{Instituto de Ci\^encias Matem\'aticas e de Computa\c{c}\~ao, Universidade de S\~ao Paulo. \newline
S\~ao Carlos, SP, Brazil.}
\email{pedro.antonio.alcantara@usp.br}
\email{prios@icmc.usp.br}
\thanks{This work was supported in part by Coordenação de Aperfeiçoamento de Pessoal de Nível Superior (CAPES), Brasil - Finance Code 001.}

\subjclass[2020]{17B08, 20C35, 22E46, 22E70, 41A60, 43A85, 53D99, 81Q20, 81S10, 81S30}

\keywords{Dequantization,  Quantization, Symmetric mechanical systems, Symbol correspondences, Quark systems ($SU(3)$-symmetric systems), Semiclassical asymptotics.}

\begin{abstract}
We study the semiclassical asymptotics of twisted algebras induced by symbol correspondences for quark systems ($SU(3)$-symmetric mechanical systems) as defined in our previous paper \cite{ARpreprint}. The linear span of harmonic functions on (co)adjoint orbits is identified with the space of polynomials on $\mathfrak{su}(3)$ restricted to these orbits, and we find two equivalent criteria for the asymptotic emergence of Poisson algebras from sequences of twisted algebras of harmonic functions on (co)adjoint orbits which are induced from sequences of symbol correspondences (the fuzzy orbits). Then, we proceed by ``gluing'' the fuzzy orbits along the unit sphere $\mathcal S^7\subset \mathfrak{su}(3)$, defining Magoo spheres, and studying their asymptotic limits. We end by highlighting the possible generalizations from $SU(3)$ to other compact symmetry groups, specially compact simply connected semisimple Lie groups, commenting on some peculiarities from our treatment for $SU(3)$ deserving further investigations.
\end{abstract}
\maketitle

\tableofcontents

\allowdisplaybreaks

\section{Introduction}\label{sec:intro}

This is Paper II of two serial works on correspondences for quark systems, i.e. mechanical systems with $SU(3)$-symmetry. Here we present the asymptotic analysis of twisted products induced by symbol correspondences over symplectic (co)adjoint orbits, as defined in \cite{ARpreprint} (henceforth referred to as Paper I), 
and address the question of how such twisted products can be extended to the unit sphere $\mathcal S^7\subset \mathfrak{su}(3)$.

Throughout this paper, we shall often recall and refer to the results of Paper I. Thus, excerpts of Paper I are cited by adding an ``I'' to the number; for instance, (I.2.71) means equation (2.71) of Paper I, likewise for Proposition I.3.5, etc.

While we worked with abstract orbits $\mathbb CP^2$ and $\mathcal E$ in Paper I, in this text we adopt a special family of actual (co)adjoint orbits in $\mathcal S^7\subset\mathfrak{su}(3)$, the orbits that are equivalent by rescaling to an orbit in $\mathfrak{su}(3)$ of highest weight $(p,q)$, $p,q\in\mathbb N$, which shall be called \emph{rational orbits}, so that this family defines a \emph{rational coarsening} of the orbit foliation of $\mathcal S^7$, cf.~Definitions \ref{ratorbdef} and \ref{def:rat_coars}. For each of these rational orbits, the sequences of correspondences suitable for semiclassical asymptotic analysis are \emph{rays of correspondences} defined on sequences of quantum systems determined by rays in the lattice of dominant weights given by the orbits themselves. 

That is, for each rational orbit $\mathcal O_\xi\subset\mathcal S^7$ if $\omega_{(p_0,q_0)}$ is the first highest weight whose orbit is equivalent to $\mathcal O_\xi$, 
we consider the sequence of highest weights $(\omega_{(sp_0,sq_0)})_{s\in\mathbb N}$, with its sequence of symbol correspondences to functions on $\mathcal O_\xi$, cf. Definition \ref{def:pois_ray}.
Then, for each irrep $(sp_0,sq_0)$, a symbol correspondence defines a  twisted algebra on a $d^2$-dimensional subspace of $C^\infty_{\mathbb C}(\mathcal O_\xi)$, where $d=\dim(sp_0,sq_0)$, which is isomorphic to the matrix algebra $M_{\mathbb C}(d)$. Hence, each $\xi$-ray of correspondences defines a sequence of twisted algebras of functions on $\mathcal O_\xi$, also called a \emph{fuzzy orbit}.

The necessity of working with sequences of (increasing) finite dimensional twisted algebras of functions on $\mathcal O_\xi$ and investigating if/when/how their asymptotic limits coincide with the classical Poisson algebra of smooth functions on  $\mathcal O_\xi$, stems from some results for $SU(3)$-invariant unital $C^\star$-algebra structures on $C^\infty(\mathcal O_\xi)$, which we state and prove, cf. Theorem \ref{theo:const_quant_cp2}, Proposition \ref{CstarE} and Corollary \ref{nogo-cor}. 

However, while the method used in \cite{RS} for studying such asymptotic limits can be generalized from spin systems to pure-quark systems, albeit with greater difficulty, its generalization to mixed-quark systems seems hopeless, so in this paper we develop a new method using the universal enveloping algebra. Also, in \cite{RS} the criterion for recovering the Poisson algebra of harmonic functions on $\mathcal S^2$ as an asymptotic limit of spin twisted algebras is more clearly seen by comparison to a suitable sequence of Stratonovich-Weyl correspondences for spin systems. However, for quark systems, specifically mixed quark systems, the characterization of Stratonovich-Weyl correspondences is quite cumbersome, cf. Remark I.5.26, so here we adopt as paradigm the sequences of (highest weight) Berezin correspondences. 

Karabegov \cite{karab} has shown in a quite general setting  that such Berezin correspondences satisfy a version of the so called \emph{correspondence principle}, which we enunciate in the context of quark system as an asymptotic ($s\to\infty$) \emph{Poisson type} property, cf. Definition \ref{PoissonOxi}. Then, we apply the results for Berezin correspondences to derive a classification of $\xi$-rays of correspondences for quark systems which are of Poisson type, stated in two different ways, cf. Theorems \ref{cor:pois_crit} and \ref{theo:pois_char_m_ber}.

Thereafter, given $\xi$-rays of correspondences with their induced  sequences of twisted algebras,  defined for each and every rational orbit $\mathcal O_\xi\subset\mathcal S^7$, we proceed by ``gluing'' all these fuzzy orbits together along the rational coarsening of $\mathcal S^7$, thus defining a \emph{Magoo sphere}, cf. Definition \ref{def:penc_magoo}. We do so by first defining a chain of nested subsets of rational orbits as a sequence indexed by $n\in\mathbb N$ which converges to the full set of rational orbits in $\mathcal S^7$. This leads to the definition of a Magoo sphere as a bi-sequence of twisted algebras and, by first taking the asymptotic limit $s\to\infty$ and then the chain limit $n\to\infty$, we arrive at the definition of Magoo spheres of Poisson type, cf. Definition  \ref{MagooAsymp-defn}, and Theorem \ref{theo:Magoo_pois} shows that this property is satisfied for a Magoo sphere if and only if every fuzzy orbit is of Poisson type. 

Inverting the order of the limits for a Magoo sphere, taking $n\to\infty$ first and then $s\to\infty$, leads to the definition of Magoo spheres of \emph{uniform Poisson type}, cf. Definition \ref{unifPoiss} and Proposition \ref{charUnifPoisson}. Thus, we end by studying if this property is satisfied for the Berezin Magoo sphere, and Theorem \ref{theo:unif_conv_compact_ber} states that this is so if we restrict to any compact ``cylinder'' $\mathcal S^7|_{\mathcal K}\subset\mathcal S^7$ which does not contain neighborhoods of the nongeneric orbits. On the other hand, in Proposition \ref{Berspecial-prop} we present an example of Magoo sphere of Poisson type for which the uniform Poisson property does not hold even in any such a ``cylinder'', showing that the Berezin Magoo sphere is special, in this sense. However, we have not yet been able to prove or disprove the uniform Poisson property for the whole Berezin Magoo sphere. 

This paper is organized as follows.

In section \ref{sec:basic_frame} we stablish some basic tools and results used throughout the paper. We describe the symplectic foliation of $\mathfrak{su}(3)$ and its unit sphere $\mathcal S^7$ by (co)adjoint orbits, and introduce the coarse Poisson sphere as the countable collection of rational orbits in $\mathcal S^7$. Then we state and prove some results on $C^\star$-algebras and discuss how they imply the necessity to work with sequences of finite-dimensional twisted algebras to study the semiclassical asymptotic limit. We 
also describe harmonic functions on orbits and on $\mathcal S^7$ as polynomial functions, resorting to an isomorphism from the universal enveloping algebra $U(\mathfrak{sl}(3))$ to $Poly(\mathfrak{su}(3))$ in order to describe the Poisson algebra of polynomials.
Then we use the pullback of symbol correspondences to  $U(\mathfrak{sl}(3))$ so that we can deal with correspondences defined on a fixed domain, which makes it easier to take asymptotic limits.

In section \ref{sec:asymp_anal}, we develop the semiclassical analysis of twisted algebras of functions on orbits (fuzzy orbits). First, we reproduce some general results of \cite{karab} in the specific setting of quark systems, and we use them to obtain two equivalent conditions for a ray of correspondences to be of Poisson type. The first criterion  is a comparison between limits of symbols and polynomials, and the second one is by means of the characteristic matrices defined in Paper I.

 Section \ref{sec:corresp_sphere} is devoted to ``gluing'' the fuzzy orbits along the coarse Poisson sphere, defining the Magoo spheres, and studying their asymptotic limits.   

 Then, in the last section \ref{sec:conc} we discuss how most of the results of both papers I and II can be generalized to other compact symmetry groups, specially to general  compact  simply connected semisimple Lie groups, and finish with last comments on peculiarities from our treatment of $SU(3)$ that deserve further investigations. 

 Finally, in Appendix \ref{proofThmBerCharNumbers} we present a proof of Proposition \ref{ThmBCN},
 and in Appendix \ref{app:pq-asymp} we summarize the Clebsch-Gordan approach to the asymptotics of twisted algebras for pure-quark systems, which is presented in full in \cite{Thesis}.

\section{Basic framework and preliminary results}\label{sec:basic_frame}
We will work with symbol correspondences for functions on concrete adjoint orbits $\mathcal O\subset\mathfrak{su}(3)$ rather than the abstract ones, $\mathbb CP^2$ or $\mathcal E$, as indicated in Remark I.3.3. Our approach shall be based on the quite general method that Karabegov applied to Berezin correspondences in \cite{karab}. We begin by establishing some definitions and notations. We refer to Paper I, Section 2, for details.

Recall that $\{E_j = i\lambda_j/\sqrt{2}:j=1,...,8\}$, cf. (I.2.1), is an orthonormal basis of $\mathfrak{su}(3)$ w.r.t.~the standard inner product (I.2.8), and the fundamental weights are
\begin{equation}\label{fundweights}
	\varpi_1 = \dfrac{1}{\sqrt{2}}E_3+\dfrac{1}{\sqrt{6}}E_8 \ , \ \ \varpi_2 = \sqrt{\dfrac{2}{3}}E_8\, ,
\end{equation}
cf. (I.2.65), so that dominant weights are of the form
\begin{equation}\label{high-weight}
	\omega_{\vb*p} = p\,\varpi_1 + q\,\varpi_2 \ , \ \ \ \vb*p=(p,q)\in \mathbb N_0\times \mathbb N_0\, ,
\end{equation}
and we identify an irreducible representation with highest weight $\omega_{\vb*p}$  by the pair $\vb*p = (p,q)$, the case $\vb*p = (0,0)$ being the trivial representation which is often discarded. 

Now, by the Stone-Weierstrass Theorem,
\begin{equation}\label{Poly(O)}
	Poly(\mathcal O) := \{f|_{\mathcal O}: f\in Poly(\mathfrak{su}(3))\}
\end{equation}
is uniformly dense in $C^\infty_{\mathbb C}(\mathcal O)$ for every orbit $\mathcal O \subset \mathfrak{su}(3)$. Since the space $Poly_d(\mathfrak{su}(3))$ of complex homogeneous polynomials on $\mathfrak{su}(3)$ of degree $d\in\mathbb N$ is an invariant subspace for the $SU(3)$-action, the linear span of harmonic functions on $\mathcal O$ is precisely $Poly(\mathcal O)$. However, although $Poly_d(\mathfrak{su}(3))$ provide a grading for the algebra of polynomials $Poly(\mathfrak{su}(3))$, its restriction to an orbit $\mathcal O \subset \mathfrak{su}(3)$, 
\begin{equation}\label{polydO}
	Poly_d(\mathcal O) := \{f|_{\mathcal O}: f\in Poly_d(\mathfrak{su}(3))\} \, , 
\end{equation}
does not provide a grading of $Poly(\mathcal O)$ because the restriction of polynomials of different degrees from $\mathfrak{su}(3)$ to $\mathcal O$ may coincide. For instance, if $(x_1,..., x_8)$ are coordinates on $\mathfrak{su}(3)$ w.r.t.~the orthonormal basis $\{E_j\}_{1\leq j\leq 8}$, then 
\begin{equation}
	\sum_{j=1}^8x_j^2\big|_{\mathcal O} \equiv 1 \ \  \ \ \forall\, \mathcal O \subset \mathcal S^7 \ , 
\end{equation}
where $\mathcal S^7\subset\mathfrak{su}(3)$ is the unitary sphere. In the same vein, there is a homogeneous cubic polynomial, associated to the cubic Casimir of $SU(3)$, that is constant along each orbit $\mathcal O \subset \mathfrak{su}(3)$, cf. Proposition \ref{prop:c3_pol}, further below. Even so, we will still make use of $Poly_d(\mathcal O)$, as well as
\begin{equation}\label{polyleqdO}
	Poly_{\le d}(\mathcal O) = \bigoplus_{m=0}^d Poly_m(\mathcal O)\, .
\end{equation}
Likewise, for the unitary sphere $\mathcal S^7\subset\mathfrak{su}(3)$,
\begin{equation}
    Poly(\mathcal S^7) := \{f|_{\mathcal S^7}:f \in Poly(\mathfrak{su}(3))\}
\end{equation}
is uniformly dense in $C^\infty_{\mathbb C}(\mathcal S^7)$, and we will also make use of the spaces 
\begin{equation}\label{polyS7} 
\begin{aligned}
    & Poly_d(\mathcal S^7) := \{f|_{\mathcal S^7}: f\in Poly_d(\mathfrak{su}(3))\} \ , \\
    & \hspace{2 em} Poly_{\le d}(\mathcal S^7) = \bigoplus_{m=0}^d Poly_m(\mathcal S^7)\, .
\end{aligned}
\end{equation}

\subsection{The smooth and the coarse Poisson spheres} 

We shall be interested in algebras of functions on $\mathcal S^7\subset \mathfrak{su}(3)$ or on orbits  $\mathcal O\subset\mathcal S^7$.

\begin{notation}\label{notationF}
Let $\overline{\mathcal F}$ be the arc of circumference given by the intersection of the unitary sphere $\mathcal S^7\subset \mathfrak{su}(3)$ with the closed principal Weyl chamber, so that $\mathcal F$ is the subset obtained by removing the endpoints. We can write the points of $\overline{\mathcal F}$ as
\begin{eqnarray}
	&\xi_{(x,y)} := \sqrt{\dfrac{3}{2}}\left(x\,\varpi_1+y\,\varpi_2\right) \equiv \dfrac{i}{\sqrt{6}}\small{\begin{pmatrix}
		2x+y & 0 & 0\\
		0 & -x+y & 0\\
		0 & 0 & -x-2y
	\end{pmatrix}}\in \overline{\mathcal F}\, ,& \label{pointsF} \\
 &\mbox{where}\quad\quad\begin{cases}
		\norm{\xi_{(x,y)}}^2=x^2+xy+y^2 = 1\\ 
        x,y\ge 0 
	\end{cases}\, ,& \label{par_xy}
\end{eqnarray}
with strict inequality in (\ref{par_xy}) for $\xi_{(x,y)}\in\mathcal F$. Given $\xi_{(x,y)}=\xi \in \overline{\mathcal F}$, we write $\mathcal O_{(x,y)}=\mathcal O_\xi\subset \mathcal S^7$ for its orbit, identifying $\overline{\mathcal F}$ with the set of unitary orbits,
\begin{equation}
    \overline{\mathcal F}\ni\xi\leftrightarrow \mathcal O_\xi\subset \mathcal S^7\subset \mathfrak{su}(3) \ .
\end{equation}
For functions on $\mathcal O_\xi$, we denote the supremum norm  by $\norm{\,}_\xi$ whereas on $\mathcal S^7$ we denote the supremum norm by $\norm{\,}_\infty$. In addition, we use the left-invariant integral on $\mathcal O_\xi$ induced by the Haar measure of $SU(3)$ to define the inner product $\ip{\cdot}{\cdot}_\xi$ as
\begin{equation}\label{Haarip}
    \ip{f_1}{f_2}_\xi = \int_{\mathcal O_\xi}\overline{f_1}(\vb*\varsigma)f_2(\vb*\varsigma) d\vb*\varsigma
\end{equation}
for $f_1,f_2 \in L^2(\mathcal O_\xi)$ w.r.t.~the inner-product norm $\norm{f}_{\xi,2}=\sqrt{\ip{f}{f}_\xi}$.
\end{notation}

\subsubsection{The symplectic foliation of the smooth Poisson sphere}
We recall that the collection of all unitary adjoint orbits $\mathcal O_\xi\subset\mathcal S^7$ defines a symplectic foliation of  the smooth Poisson manifold $(\mathcal S^7, \widehat{\Pi}_{\mathfrak g})$, where  $\widehat{\Pi}_{\mathfrak g}={\Pi}_{\mathfrak g}|_{\mathcal S^7}$ for ${\Pi}_{\mathfrak g}$ the KAKS Poisson bi-vector on $\mathfrak{g}=\mathfrak{su}(3)$ given by
\begin{equation}\label{KAKS-bivec}
    {\Pi}_{\mathfrak g} = \sum_{j,k,l}c^l_{kj}\,x_l\,\partial_j\otimes\partial_k\, ,
\end{equation}
where $c^l_{k,j}$ are the constant structures of $\mathfrak{su}(3)$ in the basis $\{E_1,...,E_8\}$ and likewise for $(x_1,...,x_8)$ being coordinates in this basis, see \cite{kir}. We denote this foliation by
\begin{equation}\label{foliS7}
\begin{aligned}
    \bigcup_{\xi\in\overline{\mathcal F}}(\mathcal O_\xi,{\Pi}_{\mathfrak g}|_{\mathcal O_\xi})&=(\mathcal S^7,\widehat{\Pi}_{\mathfrak g}) \ , \  \ \widehat{\Pi}_{\mathfrak g}={\Pi}_{\mathfrak g}|_{\mathcal S^7} \ , \\
   (\mathcal O_\xi,{\Pi}_{\mathfrak g}|_{\mathcal O_\xi})\equiv(\mathcal O_\xi&,\Omega_\xi) \ , \ \ \Omega_\xi=\Pi_{\mathfrak g}|_{\mathcal O_\xi} 
   \ \ \mbox{symplectic}  \  . 
   \end{aligned}     
\end{equation}

The orbits for $\xi_{(x,y)}\in\mathcal F$ are the leaves $\mathcal O_{\xi_{(x,y)}}\simeq\mathcal E$ of the regular part of this foliation, with the two closing orbits $\mathcal O_{\xi_{(1,0)}}\simeq\mathcal O_{\xi_{(0,1)}}\simeq\mathbb C P^2$ comprising the singular leaves. We now describe this singular foliation in more detail. 

Recall parametrization \eqref{par_xy} of $\overline{\mathcal F}$. For $x\ge 1/\sqrt{3}$, we have $y\le 1/\sqrt{3}$ and we consider the orbit $\mathcal O_{(x,y)}$ as a $\mathcal S^2$ bundle over the base $SU(3)/H$, where each fiber $\mathcal S^2$ is generated by the action of $H\simeq U(2)$. In this manner, as $\xi_{(x,y)}$ approaches $\xi_{(1,0)}$, whose isotropic subgroup is $H$, the $2$-spheres given by the action of $H$ on $\xi_{(x,y)}$ must collapse. More explicitly, via the parametrization of $H$ by Euler angles,
\begin{equation}
    R_U(\alpha,\beta, \gamma) = \exp(-i\alpha U_3)\exp(-\dfrac{\beta}{2}(U_+-U_-))\exp(-i\gamma U_3)\, ,
\end{equation}
we get the following parametrization of the fiber that contains $\xi_{(x,y)}$:
\begin{equation}
\begin{aligned}
    & \dfrac{\sqrt{3}}{4}\left\{2x+y\left(1-\cos(\beta)\right)\right\}E_3+\dfrac{\sqrt{3}}{2}y\sin(\beta)\cos(\alpha)E_6\\
    &\hspace{1 em} +\dfrac{\sqrt{3}}{2}y\sin(\beta)\sin(\alpha)E_7+\dfrac{1}{4}\left\{2x+y(1+3\cos(\beta))\right\}E_8 \\
    = \ &\dfrac{i}{2\sqrt{6}}{\small\begin{pmatrix}
        4x+2y & 0 & 0\\
        0 & -2x-y & 0\\
        0 & 0 & -2x-y
    \end{pmatrix}} +\, \dfrac{iy}{2}\sqrt{\frac{3}{2}} {\small\begin{pmatrix}
        0 & 0 & 0\\
        0 & \cos(\beta) & e^{-i\alpha}\sin(\beta)\\
        0 & e^{i\alpha}\sin(\beta) & -\cos(\beta)
    \end{pmatrix}}\, .
\end{aligned}
\end{equation}
This is a $2$-sphere centered at the diagonal matrix 
\begin{equation}
    \dfrac{\sqrt{3}}{4}(2x+y)E_3+\dfrac{1}{4}(2x+y)E_8 = \dfrac{i}{2\sqrt{6}}(2x+y)\big(2, -1,-1\big)
\end{equation}
in the affine $3$-dimensional space given by translations by $E_6,E_7,E_3-\sqrt{3}E_8$, and the radius of the sphere is 
\begin{equation}\label{rady}
   \varrho(y)= \frac{\sqrt{3}}{2}y\to 0 \ , \ \ \mbox{as} \ \ \  y\to 0 \ . 
\end{equation}

Parameterizing the solutions of \eqref{par_xy} by $y \in [0,1]$, we get
\begin{equation}\label{x(y)}
    x(y) = \dfrac{-y+\sqrt{4-3y^2}}{2} \implies \xi_{(x,y)}=\xi_{(x(y),y)} =:\zeta_y \ ,
\end{equation}
so that we have $\overline{\mathcal F} = \{\zeta_y:y\in [0,1]\}$ and $\mathcal F = \{\zeta_y: y\in (0,1)\}$, and we set
\begin{equation}
    \overline{\mathcal F}_\le := \{\zeta_y: y\in [0,1/\sqrt{3}]\} \ , \ \ \mathcal F_\le := \{\zeta_y: y\in (0,1/\sqrt{3}]\}\, .
\end{equation}

Thus each leaf $\mathcal O_{(x,y)}=\mathcal O_{\zeta_y}$ of the symplectic foliation in a neighborhood of $\mathcal O_{(1,0)}=\mathcal O_{\zeta_0}$ in $\mathcal S^7$ is parametrized by $y\in[0,1/\sqrt{3}]$ and 
\begin{equation}\label{Morsef}
    f: \overline{\mathcal F}_\le\to\mathbb R^+\, , \ \ y\mapsto 
    f(y) = \frac{\sqrt{3}}{4}y^2 \ ,
\end{equation} 
is a Morse function for the Morse-Bott singularity at $y=0$. 

Analogously, for $x\le 1/\sqrt{3}$, $y\ge 1/\sqrt{3}$, we consider the orbit $\mathcal O_{(x,y)}$ as an $\mathcal S^2$ bundle over $SU(3)/\widecheck H$ and obtain equations  \eqref{rady}-\eqref{Morsef} with $x\leftrightarrow y$ interchanged, describing the foliation in a neighborhood of the Bott-Morse singular orbit $\mathcal O_{\xi_{(0,1)}}$. Furthermore, the two closed neighborhoods $\{0\le y\le 1/\sqrt{3}\}$ and $\{1\ge y\ge 1/\sqrt{3}\}$ are glued together at the mesonic orbit $\mathcal O_{(x,y)}$ with $x=y=1/\sqrt{3}$. 

Thus, the singular foliation of $(\mathcal S^7,\widehat\Pi_{\mathfrak{g}})$ by (co)adjoint orbits, with singularities of Morse-Bott type, is analogous to the singular foliation of  $\mathcal S^2$ by circles of constant latitude, with singularities of Morse type, except that now we have isolated singular orbits (isomorphic to $\mathbb C P^2$), instead of isolated singular points.

But for our purposes, it will also be useful to construct the foliation via the special polynomial function below. Again, let $(x_1,...,x_8)$ be coordinates on $\mathfrak{sl}(3)$ in the basis $\{E_1,...,E_8\}$ and recall the parametrization $\xi_{(x,y)}\in \overline{\mathcal F}$, cf. \eqref{pointsF}-\eqref{par_xy}. 

\begin{proposition}\label{prop:c3_pol}
The polynomial $\tau:\mathfrak{sl}(3)\to\mathbb C$ given by
 \begin{equation}\label{taupol}
 \begin{aligned}
     \tau & = 6(x_1^2+x_2^2+x_3^2)x_8-2x_8^3+6\sqrt{3}(x_1(x_4x_6+x_5x_7)-x_2(x_4x_7-x_5x_6))\\
    & \hspace{1 em}-3(x_4^2+x_5^2+x_6^2+x_7^2)x_8+3\sqrt{3}x_3(x_4^2+x_5^2-x_6^2-x_7^2)
\end{aligned}
 \end{equation}
is $SU(3)$-invariant and separates the points of $\overline{\mathcal F}$.
\end{proposition}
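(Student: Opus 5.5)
The plan is to identify $\tau$ with a multiple of the cubic invariant $X\mapsto \det X$ (equivalently $\operatorname{tr}X^3$) on $\mathfrak{sl}(3)$. Invariance then follows at once from $\det(gXg^{-1})=\det X$ for $g\in SU(3)$. Separation of the points of $\overline{\mathcal F}$ will follow from an explicit evaluation of $\tau$ along the arc \eqref{pointsF}.

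\emph{Step 1: identification.} Write $X=\sum_j x_jE_j$ with $E_j=i\lambda_j/\sqrt2$ and the standard Gell-Mann matrices. Then $X=\frac{i}{\sqrt2}\Lambda$, where $\Lambda=\sum_j x_j\lambda_j$, and $\det X=-\frac{i}{2\sqrt2}\det\Lambda$. For traceless $\Lambda$ we have $\det\Lambda=\frac13\operatorname{tr}\Lambda^3$. Using $\operatorname{tr}(\lambda_a\lambda_b\lambda_c)$ symmetrized, we get $\operatorname{tr}\Lambda^3=2\sum d_{abc}x_ax_bx_c$, with the standard totally symmetric constants: $d_{118}=d_{228}=d_{338}=1/\sqrt3$, $d_{888}=-1/\sqrt3$, $d_{448}=\dots=d_{778}=-1/(2\sqrt3)$, $d_{146}=d_{157}=d_{256}=d_{344}=d_{355}=1/2$, and $d_{247}=d_{366}=d_{377}=-1/2$. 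Expanding $\sum d_{abc}x_ax_bx_c$ with multiplicities (3 for two equal indices, 6 for distinct, 1 for all equal) gives exactly $\frac{1}{2\sqrt3}\tau$. Hence $\tau=\sqrt3\operatorname{tr}\Lambda^3=3\sqrt3\det\Lambda$. Rather than trusting all the $d$-values, I would double-check the coefficients by restricting to a few subspaces: the diagonal $(x_3,x_8)$, the $\mathfrak{su}(2)$ block $(x_1,x_2,x_3,x_8)$, and a direct $3\times3$ determinant computation for the mixed terms $x_1x_4x_6$ etc. This bookkeeping is the only laborious step.

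\emph{Step 2: invariance.} Since $\tau=c\,\det X$ with $c$ a nonzero constant, and the adjoint action is conjugation, $\tau(\mathrm{Ad}_gX)=\tau(X)$ for all $g\in SU(3)$. (Complex polynomial invariance on $\mathfrak{sl}(3)$ under $SL(3,\mathbb C)$ follows as well.)

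\emph{Step 3: separation on $\overline{\mathcal F}$.} By \eqref{pointsF}, $\Lambda=-i\sqrt2\,\xi_{(x,y)}$ equals $\frac{1}{\sqrt3}\operatorname{diag}(2x+y,\,-x+y,\,-x-2y)$. So $\tau(\xi_{(x,y)})$ is a fixed nonzero real multiple of $\phi(x,y)=(2x+y)(y-x)(x+2y)$; if needed, fix the constant by $\tau(E_8)=-2$. Parametrize the arc by \eqref{x(y)} with $y\in[0,1]$, or by the angle $\theta$ of $(x,y)$ in a suitable orthonormal frame. Then $\phi(x,y)$ on $x^2+xy+y^2=1$ becomes proportional to $\cos3\theta$, since $\operatorname{tr}\operatorname{diag}(a_1,a_2,a_3)^3$ for unit traceless diagonal matrices is the standard $\cos3\theta$. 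Here $\theta$ ranges over an interval of length $\pi/3$, since the Weyl chamber has angle $\pi/3$. Thus $\cos3\theta$ is strictly monotone there, running from an extreme value at $\xi_{(1,0)}$ to the opposite extreme at $\xi_{(0,1)}$. Alternatively, one can check that $\frac{d}{dy}\phi(x(y),y)$ has constant sign on $(0,1)$ using Lagrange multipliers: critical points of $\phi$ on the ellipse occur only where two eigenvalues coincide, i.e.\ at the endpoints $x=0$ or $y=0$. Strict monotonicity gives injectivity of $\tau|_{\overline{\mathcal F}}$.
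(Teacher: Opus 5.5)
Your proposal is correct, but it proves invariance by a different route from the paper. The paper writes $\tau=\beta_3[S(\tau)]$ and uses the $SU(3)$-equivariance of $\beta_3$ and of the symmetrization map $S$ (Propositions \ref{equivarbeta} and \ref{prop:S_equiv}). It then invokes the cited fact that $S(\tau)$ is proportional to the cubic Casimir element of $U(\mathfrak{sl}(3))$, which is central and hence $\mathrm{Ad}$-fixed.

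You instead identify $\tau$ directly with $\sqrt3\operatorname{tr}\Lambda^3=3\sqrt3\det\Lambda$. Your $d$-values and multiplicities do reproduce \eqref{taupol} term by term. For example, the $x_1x_4x_6$ coefficient is $2\sqrt3\cdot 6\cdot\tfrac12=6\sqrt3$, the $x_3x_4^2$ coefficient is $2\sqrt3\cdot 3\cdot\tfrac12=3\sqrt3$, and $\tau(E_8)=-2$. Invariance then follows from conjugation invariance of $\det$.

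Your route is more elementary and self-contained. It avoids the PBW/symmetrization machinery and the external identification of $S(\tau)$ with the Casimir, which is essentially the same $d_{abc}$ bookkeeping in disguise. It also gives invariance under $SL(3,\mathbb C)$. The paper's route, in exchange, ties $\tau$ to the cubic Casimir, whose eigenvalue antisymmetry $C(p,q)=-C(q,p)$ is used in the remark that follows.

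For separation, your fallback is exactly the paper's argument. The paper computes $\tau(\xi_{(x,y)})=2x^3+3x^2y-3xy^2-2y^3=(2x+y)(x-y)(x+2y)$, which is $-\phi$ in your notation. It then shows by Lagrange multipliers that the only critical points on the ellipse satisfy $x+y=0$ or $xy=0$.

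Your main route is slightly sharper. In the orthonormal Cartan coordinates $(x_8,x_3)=(\cos\psi,\sin\psi)$ one gets $\tau=6x_3^2x_8-2x_8^3=-2\cos3\psi$, with $\psi\in[0,\pi/3]$ on $\overline{\mathcal F}$. This is an explicit strictly monotone function, running from $-2$ at $\xi_{(0,1)}$ ($\psi=0$) to $2$ at $\xi_{(1,0)}$ ($\psi=\pi/3$).
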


The proof of this proposition is deferred to after Proposition \ref{prop:S_equiv}, since the latter  will be used for this proof. 

\begin{remark}\label{checktau}
    Thus, for each $\xi\in\overline{\mathcal F}$, the orbit $\mathcal O_{\xi}\subset \mathcal S^7$ is exactly the preimage by $\tau|_{\mathcal S^7}$ of the real number
\begin{equation}
    \chi_\xi := \tau(\xi) \ .
\end{equation}
In addition, the polynomial which is the complement of the restriction $\tau|_{\mathcal O_\xi}$, 
\begin{equation}\label{delta_func_xy}
    \check\tau_\xi := \tau|_{\mathcal S^7}-\chi_\xi \in Poly(\mathcal S^7) \ , 
\end{equation}
is an $SU(3)$-invariant polynomial vanishing on $\mathcal O_\xi$ and only on this orbit of $\mathcal S^7$.
\end{remark}

\subsubsection{The coarse Poisson sphere}
However, we shall not concern ourselves with functions on all unitary orbits, but only on a countable family identified as follows. Consider the equivalence relation $\sim$ on orbits of $\mathfrak{su}(3)$, which is given by rescaling: 
\begin{equation}\label{eqOrb}
\mathcal O\sim\mathcal O'    \iff \exists\, \alpha> 0 \ \ \mbox{s.t.} \ \ v\mapsto \alpha v \ \ \mbox{is a bijection} \ \mathcal O\to\mathcal O'\, .
\end{equation}

\begin{definition}\label{ratorbdef}
	An \emph{integral orbit} is the orbit in $\mathfrak{su}(3)$ of a dominant weight. A \emph{rational orbit} is an orbit in $\mathcal S^7\subset\mathfrak{su}(3)$ equivalent to some integral orbit.
\end{definition}
\begin{notation}\label{ratorbnot}
   We shall denote by $\overline{\mathcal Q}\subset \overline{\mathcal F}$ the subset of rational orbits, and by $\mathcal Q$ the respective subset of $\mathcal F$. 
\end{notation}

\begin{definition}\label{intrad-defn}
 For each $\xi \in \overline{\mathcal Q}$, its \emph{integral radius} is
\begin{equation}\label{intrad}
    	r(\xi) := \min\{R>0: R\,\xi \  \mbox{is a dominant weight}\, \} 
\end{equation}
and its \emph{first dominant weight} is
	\begin{equation}\label{1_dom_weight}
		\omega_\xi := r(\xi)\,\xi\, .
	\end{equation}
\end{definition}
In other words, for  each $\xi \in \overline{\mathcal Q}$,
\begin{equation}\label{defr} 
(r(\xi))^2=\norm{\omega_\xi}^2=\dfrac{2}{3}(p_1^2+p_1q_1+q_1^2) \ , \ \ \ 
        \omega_\xi = r(\xi)\,\xi=\omega_{(p_1,q_1)} \, ,
\end{equation}
where $\omega_\xi=\omega_{(p_1,q_1)}$ is the first nonzero dominant weight proportional to $\xi\in \overline{\mathcal Q}$, that is,  the dominant weight $\omega_{(p_1,q_1)}\propto\xi$ with the smallest\footnote{Clearly, if $\omega_{(p_1,q_1)}\propto\xi$, then $\omega_{(sp_1,sq_1)}\propto\xi$ $\forall s\in\mathbb N$, with $\norm{\omega_{(sp_1,sq_1)}}=s\norm{\omega_{(p_1,q_1)}}$.} nonzero norm in $\mathfrak{su}(3)$, which is by definition  the integral radius $r(\xi)$ of $\xi$.

Note that for $\xi\in(\overline{\mathcal Q}\setminus{\mathcal Q})=(\overline{\mathcal F}\setminus{\mathcal F})$, we have $r(\xi)=\sqrt{2/3}$ and the first dominant weight is either $\omega_{(1,0)}=\varpi_1$, for the defining representation of $SU(3)$, or  $\omega_{(0,1)}=\varpi_2$ for its dual, cf. (\ref{fundweights}). On the other hand, for any $\xi_{(x,y)}\in \mathcal F$, we have that $\xi_{(x,y)}\in {\mathcal Q}$ if and only if $x/y\in\mathbb Q$ (hence Definition \ref{ratorbdef} and Notation \ref{ratorbnot}), thus the set of rational orbits is dense in the set of all adjoint unitary orbits. 

Therefore, the collection of all rational orbits provides a countably dense symplectic foliation of the Poisson manifold $(\mathcal S^7, \widehat{\Pi}_{\mathfrak g})$ which includes the singular leaves $\mathcal O_{(1,0)}\simeq\mathcal O_{(0,1)}\simeq\mathbb C P^2$ of foliation \eqref{foliS7}. We denote this by 
\begin{equation}\label{coarseS7}
    \bigcup_{\xi\in\overline{\mathcal Q}}(\mathcal O_\xi,{\Pi}_{\mathfrak g}|_{\mathcal O_\xi})=:\{\mathcal S^7,\widehat{\Pi}_{\mathfrak g}\}\subset(S^7,\widehat{\Pi}_{\mathfrak g}) \, . 
\end{equation}

\begin{definition}\label{def:rat_coars}
    We shall refer to $\{\mathcal S^7,\widehat{\Pi}_{\mathfrak g}\}$ as the \emph{rational coarsening} of $(S^7,\widehat{\Pi}_{\mathfrak g})$, or simply refer to  $\{\mathcal S^7,\widehat{\Pi}_{\mathfrak g}\}$ as the \emph{coarse Poisson sphere}.\footnote{In implicit contrast to the \emph{smooth Poisson sphere} $(S^7,\widehat{\Pi}_{\mathfrak g})$.}
\end{definition}

\begin{remark}\label{minradunb}
    As emphasized, $\{\mathcal S^7,\widehat{\Pi}_{\mathfrak g}\}$ is the dense subset of $(S^7,\widehat{\Pi}_{\mathfrak g})$ where we have a well defined function 
    \begin{equation}\label{welldefr}
        r:\{\mathcal S^7,\widehat{\Pi}_{\mathfrak g}\}\to\mathbb R^+ \ , \ \ \mathcal O_\xi\mapsto r(\xi) \, , 
    \end{equation}
    for $r(\xi)$ the integral radius\footnote{We emphasize, for clarity, that the integral radius $r$ of a rational orbit in $\{\mathcal S^7,\widehat\Pi_{\mathfrak{g}}\}$ \emph{is not} the radius $\varrho$ of the two-sphere that fibers over $\mathbb C P^2$ for a generic orbit, cf. \eqref{rady}.} of $\mathcal O_\xi$, cf. \eqref{intrad}-(\ref{defr}). This function $r$, as defined by \eqref{intrad}-(\ref{welldefr}),  has minimum equal to $\sqrt{2/3}$, which is the integral radius of the two singular orbits in $\mathcal S^7$, but $r$ has no upper bound because we can have $\xi_{(x,y)}\propto\omega_{(p_1,q_1)}$ for $p_1$ and $q_1$ without common divisors and as large as we want. 
\end{remark}

In fact, the argument in Remark \ref{minradunb} actually implies: 

\begin{proposition}\label{unboundedr}
    The integral radius function $r:\overline{\mathcal Q}\to\mathbb R^+$, cf. \eqref{intrad}-\eqref{welldefr}, is unbounded on any neighborhood of any $\xi\in\overline{\mathcal Q}$. 
\end{proposition}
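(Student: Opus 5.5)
The plan is to make explicit the correspondence between rational orbits and coprime pairs of nonnegative integers, and then to use density of rationals with large denominators. First I will show that for every $\xi\in\overline{\mathcal Q}$ the first dominant weight is $\omega_\xi=\omega_{(p_1,q_1)}$ with $\gcd(p_1,q_1)=1$. The convention here is $\gcd(p_1,0)=p_1$, so the endpoints give $(1,0)$ and $(0,1)$. Indeed, every dominant weight proportional to $\xi$ is $\omega_{(p,q)}$ with $(p,q)$ a positive multiple of a fixed direction in $\mathbb N_0^2$. By linearity of \eqref{high-weight} and the linear independence of $\varpi_1,\varpi_2$, the minimal-norm one is the primitive lattice vector on that ray. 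Then \eqref{defr} gives
\[
r(\xi)^2=\tfrac23\,(p_1^2+p_1q_1+q_1^2)\ \ge\ \tfrac23\max(p_1,q_1)^2 .
\]
So $r(\xi)$ is large as soon as $\max(p_1,q_1)$ is large.

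Next I will relate the parameter of $\xi$ to $(p_1,q_1)$. From \eqref{pointsF}, $\xi_{(x,y)}\propto x\varpi_1+y\varpi_2$, hence $(x,y)\propto(p_1,q_1)$. I parametrize $\overline{\mathcal F}$ by the angle-like coordinate $t=y/(x+y)\in[0,1]$. This is a homeomorphism from $\overline{\mathcal F}$ onto $[0,1]$: it is continuous, strictly monotone along the arc, and has a continuous inverse obtained by normalizing. Rational orbits correspond exactly to $t\in\mathbb Q\cap[0,1]$. Writing $t=q_1/(p_1+q_1)$ in lowest terms, the coprimality of $p_1,q_1$ gives $\gcd(q_1,p_1+q_1)=1$, so the denominator of $t$ is exactly $p_1+q_1$.

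Now let $\xi\in\overline{\mathcal Q}$, let $U$ be any neighborhood of $\xi$ in $\overline{\mathcal F}$, and let $M>0$. Then $U$ contains an interval $I\subset[0,1]$ of positive length in the $t$-coordinate. There are only finitely many rationals in $[0,1]$ with denominator at most $N$, while $I$ contains infinitely many rationals, for instance $k/P$ for all large primes $P$. Hence $I$ contains some $t'=q'/(p'+q')$ in lowest terms with $p'+q'>N$, for any $N$ we like. The corresponding $\xi'\in\overline{\mathcal Q}\cap U$ then satisfies
\[
r(\xi')^2\ \ge\ \tfrac23\max(p',q')^2\ \ge\ \tfrac16\,(p'+q')^2 ,
\]
and this exceeds $M^2$ once $N$ is chosen large enough. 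This establishes unboundedness in every neighborhood.

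The only step needing some care is the first one: the claim that the minimal-norm weight on the ray is the primitive pair, together with the identification of lowest-terms denominators. This is elementary lattice arithmetic and I expect no real obstacle there. The rest is density of rationals with large denominators, and the endpoint cases $\xi_{(1,0)},\xi_{(0,1)}$ are covered by the same argument with one-sided intervals.
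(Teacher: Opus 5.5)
Your proof is correct and follows the paper's own idea. The paper simply cites the observation in Remark~\ref{minradunb}: rational orbits correspond to coprime pairs $(p_1,q_1)$, and such pairs can be taken as large as desired. You fill in exactly those details, via the primitive lattice vector on the ray, the coordinate $t=y/(x+y)$, and lowest-terms rationals with arbitrarily large denominators in any interval (your argument even gives unboundedness near every point of $\overline{\mathcal F}$).
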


\begin{remark}
    The equivalence relation (\ref{eqOrb}) compensates, up to a point, for the fact that we will be working with actual adjoint orbits embedded in $\mathfrak{su}(3)$, rather than the abstract orbits  $\mathbb C P^2$ or $\mathcal E$. A bonus for this setting is that  we shall later be able to investigate how the twisted algebras defined for each $\mathcal O_\xi\in\{\mathcal S^7,\widehat{\Pi}_{\mathfrak g}\}$ can or cannot be ``glued'' along the rational coarsening of Poisson manifold $(\mathcal S^7, \widehat{\Pi}_{\mathfrak g})$, for appropriate families of symbol correspondence sequences, in an asymptotic limit.
\end{remark} 

\subsection{Main results for $C^\ast$-algebras on (co)adjoint orbits}

We now state and prove the main results for $C^\ast$-algebras on (co)adjoint orbits of $SU(3)$ that will be relevant for our considerations on asymptotics of quark systems.\footnote{It is not yet known to us whether (some of) the results presented below have been stated or proved before, therefore we do so here.} 

First, for the particular cases of pure-quark systems, we have the analogous of the no-go theorem for spin systems, that is, we have the theorem below which is just the translation for the pair $(SU(3),\mathbb C P^2)$ of the theorem proved by Rieffel in \cite{rieffel} for the pair $(SU(2), \mathbb C P^1)$.\footnote{In \cite{rieffel}, Rieffel actually stated his theorem with respect to $SO(3)$, but since the action of $SU(2)$ on $S^2 \simeq\mathbb CP^1$ is effectively an action of $SO(3)$, the two statements are equivalent.} 

\begin{theorem}\label{theo:const_quant_cp2}
	Any $SU(3)$-equivariant unital $C^\ast$-algebra structure on $C^\infty_{\mathbb C}(\mathbb CP^2)$ is commutative.
\end{theorem}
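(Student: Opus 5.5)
Following Rieffel's argument for $(SU(2),\mathbb CP^1)$, the plan is to show that an $SU(3)$-equivariant unital $C^\ast$-algebra structure on $C^\infty_{\mathbb C}(\mathbb CP^2)$ must be the algebra of continuous functions on a homogeneous space $SU(3)/K$ whose dimension is forced to be at most $4$. The rigidity of low-dimensional homogeneous spaces of $SU(3)$ then leaves only commutative possibilities.

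First, I would complete the underlying space in the given $C^\ast$-norm and call the result $\mathcal A$, a unital $C^\ast$-algebra with a continuous $SU(3)$-action by automorphisms. By Peter--Weyl, the $SU(3)$-finite vectors of $C^\infty_{\mathbb C}(\mathbb CP^2)$ decompose as $\bigoplus_{n\ge 0}(n,n)$, each irrep appearing with multiplicity one. These finite vectors are dense in $\mathcal A$, and each isotypic component of $\mathcal A$ is the corresponding $(n,n)$. In particular, the fixed-point subalgebra $\mathcal A^{SU(3)}$ is $\mathbb C\cdot 1$, so the action is ergodic.

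Next, I would invoke the Høegh-Krohn--Landstad--Størmer theorem: an ergodic action of a compact group has a unique invariant state $\phi$, and $\phi$ is a faithful trace. Then I would look at irreducible representations of $\mathcal A$. Rieffel's route is to consider the multiplicities $\mathrm{mult}(\pi,\mathcal A)\le \dim\pi$, which here equal $1$ for every $(n,n)$, and use Wassermann's results on ergodic actions with such multiplicity bounds. A more hands-on alternative argues with the center. Using multiplicity one, the $(1,1)$-component $V\cong\mathfrak{su}(3)$ generates $\mathcal A$ as a $C^\ast$-algebra, since the span of products of elements of $(1,1)$ must reach every $(n,n)$ by the tensor-product decomposition. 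The product restricted to $V\otimes V\to\mathcal A$ is an $SU(3)$-equivariant map. By Schur's lemma, and since $(1,1)\otimes(1,1)\supset(0,0)\oplus 2(1,1)\oplus(2,2)\oplus\dots$ with $\mathcal A$ containing only one copy of $(1,1)$, the antisymmetric part lands in the $(1,1)$-component through the bracket, with a single scalar $c$:
\[
[a,b]=c\,\{a,b\}_{\mathfrak{su}(3)}\quad\text{for } a,b\in V .
\]
So $V$ closes under commutators into a Lie algebra isomorphic to $\mathfrak{su}(3)$ or abelian, and the goal is to show $c=0$.

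The main obstacle is ruling out $c\neq 0$. I would argue as Rieffel does. If $c\neq0$, then $\{ a\in V \}$ with $a^\ast=-a$ realizes a Lie algebra representation of $\mathfrak{su}(3)$ by bounded skew-adjoint elements of a $C^\ast$-algebra. Composing with any irreducible representation of $\mathcal A$ and with the faithful trace, I would get a $\ast$-representation of $\mathfrak{su}(3)$ by bounded operators. Such a representation integrates to a unitary representation of $SU(3)$, under which the Casimir $\sum_j a_j^2$ (which lies in the invariant part, so is a scalar) must act with the Casimir values of finite-dimensional irreps. Because $\mathcal A$ is generated by $V$, every irreducible representation of $\mathcal A$ would then be a finite-dimensional irrep of $SU(3)$ with a fixed Casimir. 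This makes $\mathcal A$ a finite direct sum of matrix algebras, hence finite-dimensional, contradicting the presence of infinitely many $(n,n)$. Therefore $c=0$. The elements of $V$ then commute, and since they generate $\mathcal A$, the algebra is commutative. The delicate step is justifying that bounded $\ast$-Lie representations have discrete Casimir spectrum forcing finite dimensionality. For this I would use that the trace $\phi$ gives a finite type-II/type-I setting in which integrated $SU(3)$ representations decompose discretely.
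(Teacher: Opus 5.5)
There is a genuine gap at your claim that the $(1,1)$-component $V$ generates $\mathcal A$ as a $C^\ast$-algebra ``by the tensor-product decomposition''. The Clebsch--Gordan series of $(1,1)^{\otimes k}$ gives only an upper bound: products of $k$ elements of $V$ lie in $A_0\oplus\cdots\oplus A_k$. Nothing forces the equivariant map $V^{\otimes k}\to\mathcal A$ to be nonzero on the $(k,k)$-summand.

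Worse, the claim is false in exactly the case you must exclude. If $c\neq 0$, apply your integration/Casimir argument to the $C^\ast$-subalgebra $B$ generated by $V$. It shows $B$ has only finitely many irreducible representations, all finite dimensional, so $B$ is finite dimensional and $B\neq\mathcal A$. You therefore never reach ``$\mathcal A$ is finite dimensional'', and no contradiction is obtained.

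As a proof that $B$ is finite dimensional, your argument is a valid substitute for the paper's identity $[a_0,a_>^k]_\star=k\,a_>^k$ combined with boundedness of $[a_0,\cdot\,]_\star$. The H{\o}egh-Krohn--Landstad--St{\o}rmer trace is not needed, since unitary representations of a compact group decompose discretely anyway. The opening plan about a homogeneous space $SU(3)/K$ is never used.

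What is missing is the step that turns ``$B$ is finite dimensional'' into a contradiction. In the paper, $C=A_0+B$ is a finite-dimensional unital $C^\ast$-algebra. On $C$ the inner action $X\mapsto[a_X,\cdot\,]_\star$ coincides with the natural infinitesimal action, because the two are derivations agreeing on generators. Hence the central projections of $C$ are $SU(3)$-fixed, and since $C$ contains a single copy of $(0,0)$, $C\cong M_{\mathbb C}(d)$. A unital full matrix subalgebra splits $\mathcal A\cong C\otimes C'$ equivariantly, where $C'$ is the commutant. There are two cases:
\begin{itemize}
\item $C'=A_0$, so $\mathcal A\cong C$ is finite dimensional, which is impossible;
\item $C'$ contains some $A_n$ with $n\ge 1$; then $A_1\otimes A_n\subset\mathcal A$ contributes two copies of $(n,n)$, contradicting multiplicity one.
\end{itemize}

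Your case $c=0$ has the same hole: ``since they generate $\mathcal A$'' must be proved. The paper argues as follows. Suppose the chain of spans of products from $A_0\oplus A_1$ stabilized. The result would be a finite-dimensional commutative $C^\ast$-algebra $\mathbb C^N$ whose minimal projections are permuted, hence fixed, by the connected group $SU(3)$. This forces $N=1$, contradicting $A_1\subset C$. Non-stabilization then forces the $k$-th span to equal $A_0\oplus\cdots\oplus A_k$ for every $k$, which gives density and hence commutativity.
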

\begin{proof}
We shall follow closely to Rieffel's proof for $(SU(2), \mathbb C P^1)$, making the necessary adaptations for $(SU(3),\mathbb C P^2)$. The main idea is to show that the product of linear polynomials is commutative and generates the entire algebra for $C^\infty_{\mathbb C}(\mathbb CP^2)$.

Let $A_\star=(A,\star, ^\ast, \norm{\,})$ denote a $SU(3)$-equivariant unital $C^\ast$-algebra structure on $A = C^\infty_{\mathbb C}(\mathbb CP^2)$, where $\star$, $^\ast$ and $\norm{\,}$ are the product, involution and $C^\ast$-norm, respectively. We know that $A$ decomposes as a sum of irreps $(n,n)$, for every non negative integer $n$, and each such irrep appears just once, cf. Proposition I.4.2 and Definition I.4.3. Let $A_n \subset A$ be the invariant subspace where $SU(3)$ acts via the irrep $(n,n)$, so that 
\begin{equation}\label{Adecomp}
    A=A_0\oplus A_1\oplus A_2\oplus\cdots A_n\oplus A_{n+1}\oplus\cdots 
\end{equation}

\begin{lemma}\label{lemma:A_0-cent}
$A_0$ is the linear span of the identity in $A_\star=(A,\star, ^\ast, \norm{\,})$.
\end{lemma}
\begin{proof}
    Let $e \in A$ be the identity in $A_\star=(A,\star, ^\ast, \norm{\,})$. Then
    \begin{equation}
        e^ga = (ea^{g^{-1}})^g = a = (a^{g^{-1}}e)^g = ae^g\, ,
    \end{equation}
    hence $e \in A_0$.
\end{proof}

The next lemma is the crucial part of the proof of the theorem.  

\begin{lemma}\label{lemma:A_1-comm}
The product on $A_1\subset A$ is commutative, that is, 
\begin{equation}\label{trivialhomomA1}
    a\star b = b\star a \ , \ \ \forall a,b \in A_1 \ .
\end{equation}
\end{lemma}
\begin{proof}
Consider the commutator on $A_\star$: 
\begin{equation}\label{star-comm}
    [a,b]_\star:=a\star b-b\star a \ , \ \ \forall a,b\in A \, . 
\end{equation}
Since $A_\star$ is $SU(3)$-equivariant, the map $A_1\times A_1 \ni (a,b)\mapsto[a,b]_\star \in A$ factors through an equivariant map $A_1\wedge A_1 \ni a\wedge b \mapsto [a,b]_\star\in A$. First, one can easily verify that
\begin{equation}\label{sumAA}
    (1,1)\wedge (1,1) = (1,1)\oplus (3,0) \oplus (0,3)\, .
\end{equation}
Then, by straightforward computations, we obtain that the highest weight vectors for each respective summand in \eqref{sumAA} are as follows (cf. Definition I.2.1):   
\begin{equation}
\begin{aligned}
    e_>^{(1,1)} & = \, \left(\sqrt{\dfrac{3}{2}}\vb*e((1,1);\vb 0_1, 0)-\dfrac{1}{\sqrt{2}}\vb*e((1,1);\vb 0_1, 1)\right)\wedge \vb*e((1,1);(210),1/2)\\
    & \ \ \ \ \ \ \ + \vb*e((1,1);(120),1)\wedge\vb*e((1,1);(201), 1/2), \\
    e_>^{(3,0)} & =  \vb*e((1,1);(201),1/2)\wedge \vb*e((1,1);(210), 1/2)\, , \\
    e_>^{(0,3)} & = \vb*e((1,1);(120),1)\wedge \vb*e((1,1);(210), 1/2)\, .
\end{aligned}
\end{equation}

By Schur's Lemma and the decomposition of $A$ into irreps $(n,n)$, cf. \eqref{Adecomp}, we conclude that the invariant subspace of $A_1\wedge A_1$ corresponding to $(3,0)\oplus (0,3)$ is in the kernel of the induced commutator map, whereas the restriction of such map to the invariant subspace corresponding to $(1,1)$ is either an isomorphism or the null map, hence $[A_1,A_1]_\star$ is either $A_1$ or $0$. 

Suppose that $[A_1,A_1]_\star = A_1$. Then
\begin{equation}
    e^{(1,1)}_>+T_-(e^{(3,0)}_>) = \sqrt{6}\, \vb*e(1;\vec 1,0)\wedge \vb*e(1;(210),1/2)
\end{equation}
is mapped into a highest weight vector of $A_1$ by the induced commutator map on $A_1\wedge A_1$. So we can choose $a_0,a_> \in A_1$, where $a_0$ is self adjoint and $a_>$ is a highest weight vector, such that $[a_0,a_>]_\star = a_>$. Let $B_k$ be the space spanned by the $\star$-product of at most $k$ elements of $A_1\subset A_\star$, and $B_\star$ the algebra generated by $A_1$. Since the product map $A_1\times A_n\to A$ that sends $(a_1,a)\in A_1\times A_n$ to $a_1\star a$ is a bilinear map, it factors through $A_1\otimes A_n$. Then, by the equivariance of $A_\star$ and the Clebsch-Gordan series of $(1,1)\otimes (n,n)$, we get that $B_k\subset A_1\oplus...\oplus A_k$. 

By the Leibniz rule and induction on $k$, we have
\begin{equation}\label{in_act_alg}
    [a_0, (a_>)^k]_\star = a_>\star[a_0,(a_>)^{k-1}]_\star+[a_0,a_>]\star (a_>)^{k-1} = k(a_>)^k
\end{equation}
for every $k\in \mathbb N$. Hence $B_1\subset ...\subset B_k\subset ...$ eventually stabilizes, otherwise $[a_0,\cdot\, ]_\star$ would be an unbounded operator, from \eqref{in_act_alg}, contradicting the fact the we have a $C^\ast$-algebra. So $B_\star$ is finite dimensional and there is some $k$ such that $C_\star = A_0\oplus B_\star$ is a finite dimensional unital subalgebra whose underlying space decomposes as 
\begin{equation}\label{Cdecomp}
    C = A_0\oplus A_1\oplus...\oplus A_k \ .
\end{equation}

On the other hand, because $C_\star$ is a finite dimensional $C^\ast$-algebra, in principle it would be a direct sum of full matrix algebras, say
\begin{equation}
    C_\star \simeq \bigoplus_{j=1}^n M_{\mathbb C}(d_j)\, .
\end{equation}
But by the assumption $[A_1,A_1]_\star = A_1$ and Schur's Lemma, we have a $SU(3)$-equivariant homomorphism $\phi:\mathfrak{su}(3)\to A_1$ inducing an inner action $\alpha_\phi$ of $\mathfrak{su}(3)$ on $C_\star$ that coincides with the infinitesimal action induced by the natural $SU(3)$-action. Then each identity $\mathds 1_j \in M_{\mathbb C}(d_j)$ is in the center of $C_\star$, which means each $\mathds 1_j$ is fixed by $SU(3)$. However, $C$ has only one copy of the trivial irrep of $SU(3)$, namely on $A_0$, cf. \eqref{Cdecomp}, hence $[A_1,A_1]_\star = A_1$ implies that $C_\star$ is a  $SU(3)$-equivariant unital subalgebra of $A_\star$ isomorphic to a full matrix algebra, 
\begin{equation}\label{C=Md}
    C_\star \simeq M_{\mathbb C}(d)\, , 
\end{equation}
and furthermore implies that we have a $SU(3)$-equivariant isomorphism 
\begin{equation}
    A_\star\simeq C_\star\otimes C_\star' \, , 
\end{equation}
where $C_\star'$ is the commutant  of $C_\star$ in $A_\star$, whose underlying space $C'$ is also invariant by the action of $SU(3)$. 

Therefore, from \eqref{Adecomp} and \eqref{Cdecomp}, either $C'=A_0$, in which case $A_\star\simeq C_\star$ is isomorphic to a matrix algebra, cf. \eqref{C=Md}, in contradiction to $A=C^\infty_{\mathbb C}(\mathbb C P^2)$, or for each nontrivial $A_n \subset C'$, $A_1\otimes A_n$ has two copies of $A_n$, another contradiction, cf. \eqref{Adecomp}. Thus $[A_1,A_1]$ cannot be $A_1$ and hence $[A_1,A_1] = 0$.
\end{proof}

To finish the proof of the theorem, let again $C_\star$ be the $C^\ast$-subalgebra generated by $A_0\oplus A_1$, and let $C_k$ be the linear span of the product of at most $k$ elements in $A_0\oplus A_1$, for $k\in \mathbb N$. As already argued in the proof of the previous lemma, $C_k \subset A_0\oplus A_1\oplus ... \oplus A_k$. Suppose that the chain $C_1 \subset C_2 \subset ... \subset  C_k \subset ...$ eventually stabilizes, which means $A_{n_0}\star A_1 \subset A_{n_0-1}\oplus A_{n_0}$ for some $n_0\in\mathbb N$, and $C_\star$ is a finite-dimensional $C^\ast$-algebra. 

Again, $C_\star$ would in principle be a direct sum of full matrix algebras, cf. \eqref{C=Md}. But since $C_\star$ is commutative due to Lemmas \ref{lemma:A_0-cent} and \ref{lemma:A_1-comm}, we could at most have
\begin{equation}\label{C=sum_complex}
    C_\star \simeq \bigoplus_{j=1}^{\dim C}\mathbb C_j \ , \ \ \mathbb C_j \simeq \mathbb C\hspace{1 em} \forall\, j\in \{1,...,\dim C \}\, .
\end{equation}
Let $1_j \in \mathbb C_j$ be its identity, so that the primitive spectrum of $C_\star$ is a finite discrete space $\mathrm{Prim}(C_\star) =\{\ker(\pi_1),...,\ker(\pi_{\dim C})\}$, where each $\pi_j$ is multiplication by $1_j$, which works as a projection onto $\mathbb C_j$. By $SU(3)$-equivariance of $C_\star$, we have an induced continuous action of $SU(3)$ on $\mathrm{Prim}(C_\star)$. Since $SU(3)$ is connected, this action is trivial, implying that each $1_j$ is fixed by $SU(3)$. But since $C$ carries only one copy of $(0,0)$, namely the subspace $A_0$, cf. \eqref{Cdecomp}, $C_\star$ must be isomorphic to $\mathbb C$, which contradicts the fact that $A_1 \subset C$. Therefore, for every $k\in \mathbb N$, $C_k$ is a proper subspace of $C_{k+1}$ and $C=A\implies C_\star=A_\star$ is commutative.
\end{proof}

Then, the following corollary is immediate from Theorem \ref{theo:const_quant_cp2} and its proof. 

\begin{corollary}\label{nogo-cor1}
    Let $\mathcal O\simeq \mathbb C P^2$ be a nongeneric (co)adjoint orbit  of $SU(3)$. Then,  
    there is no $SU(3)$-equi\-va\-ri\-ant unital $C^\ast$-algebra structure $A_\star=\big(A,\star, ^\ast, \norm{\,}\big)$ for $A=C^\infty_{\mathbb C}(\mathcal O)$ with a nontrivial $SU(3)$-equivariant homomorphism $\phi:\mathfrak{su}(3)\to A_\star$ as  
    \begin{equation}\label{a_X}
    \phi:\mathfrak{su}(3)\ni X\mapsto a_X\in A \ , \ \ a_{[X,Y]} = [a_X,a_Y]_{\star} \ , 
\end{equation}
where $[\cdot,\cdot]_\star$ is the commutator in $A_\star$. Furthermore, if $A_\star=\big(A,\star, ^\ast, \norm{\,}\big)$ is a $SU(3)$-equi\-va\-ri\-ant unital $C^\ast$-algebra structure for an invariant subspace $A\subset C^\infty_{\mathbb C}(\mathcal O)$, then a nontrivial $SU(3)$-equivariant homomorphism $\phi:\mathfrak{su}(3)\to A_\star$ as  in \eqref{a_X} exists only if $A$ is finite dimensional, in which case $A_\star$ is isomorphic to a full matrix algebra with a $\phi$-induced inner action
\begin{equation}\label{innX}
    \alpha_\phi:\mathfrak{su}(3)\times A_\star\to A_\star \ , \ \ (X,a)\mapsto [a_X,a]_\star \ , 
\end{equation}
which coincides with the natural action\footnote{$SU(3)$ acts on the space $A$ of $A_\star$ and this induces the natural infinitesimal action of $\mathfrak{su}(3)$.} of $\mathfrak{su}(3)$ on $A$.
\end{corollary}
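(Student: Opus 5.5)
The corollary has two parts, and I would derive both by re-reading the proofs of Theorem \ref{theo:const_quant_cp2} and Lemma \ref{lemma:A_1-comm}, now for a general invariant subspace $A\subset C^\infty_{\mathbb C}(\mathcal O)$. Any such subspace is a direct sum $\bigoplus_{n\in N}A_n$ over some set $N\subset\mathbb N_0$ of irreps $(n,n)$, each with multiplicity one, by Proposition I.4.2.

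\emph{Step 1: the image of $\phi$.} Suppose $\phi:\mathfrak{su}(3)\to A_\star$ is a nontrivial equivariant homomorphism as in \eqref{a_X}. Since $\mathfrak{su}(3)$ is the irrep $(1,1)$ and $\phi\neq 0$, Schur's Lemma gives $\phi(\mathfrak{su}(3))=A_1$, with $\phi$ an isomorphism onto $A_1$. The relation $a_{[X,Y]}=[a_X,a_Y]_\star$ then gives $[A_1,A_1]_\star=A_1$. Moreover, equivariance of $\star$ means $\alpha_\phi(X,a_Y)=[a_X,a_Y]_\star=a_{[X,Y]}$. So $\alpha_\phi$ and the natural action agree on $A_1$.

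\emph{Step 2: full $C^\infty(\mathcal O)$.} For $A=C^\infty_{\mathbb C}(\mathcal O)$, Lemma \ref{lemma:A_1-comm} shows that $[A_1,A_1]_\star=A_1$ is impossible. That lemma was proved only for $\mathbb CP^2$, but $\mathcal O\simeq\mathbb CP^2$ equivariantly up to the outer automorphism exchanging $(p,q)\leftrightarrow(q,p)$, and that automorphism preserves the decomposition \eqref{Adecomp}. This contradicts Step 1, so no such $\phi$ exists.

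\emph{Step 3: invariant subspaces.} Now let $A$ be an invariant subspace carrying such a $C^\ast$-structure and such a $\phi$. As in the proof of Lemma \ref{lemma:A_1-comm}, pick a self-adjoint $a_0\in A_1$ and a highest weight vector $a_>\in A_1$ with $[a_0,a_>]_\star=a_>$. Then \eqref{in_act_alg} together with boundedness of $[a_0,\cdot]_\star$ forces the subalgebra $C_\star=A_0\oplus B_\star$ generated by $A_1$ to be finite dimensional. Inside $C_\star$, $\alpha_\phi$ is an inner derivation action. The central projections are therefore $SU(3)$-fixed, and since $A$ contains only one copy of $(0,0)$, we get $C_\star\simeq M_{\mathbb C}(d)$. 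The double commutant theorem then gives $A_\star\simeq C_\star\otimes C'_\star$ equivariantly, with $C'$ an invariant subspace. If $C'$ contained some nontrivial $A_n$, then $A_1\otimes A_n\subset A$ would contain two copies of $A_n$, contradicting multiplicity one. Hence $C'=A_0$ and $A=C\simeq M_{\mathbb C}(d)$, which is finite dimensional.

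\emph{Step 4: the two actions agree.} Both $\alpha_\phi(X,\cdot)=[a_X,\cdot]_\star$ and the natural infinitesimal action $X\cdot$ are derivations of $A_\star$; the natural action is a derivation by equivariance of $\star$. By Step 1 they agree on $A_1$, which generates $A_\star$ together with the unit, and both kill the unit by Lemma \ref{lemma:A_0-cent}. Two derivations agreeing on generators agree everywhere, so $\alpha_\phi$ coincides with the natural action.

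The main obstacle is the tensor factorization in Step 3: one must check that the commutant $C'$ is $SU(3)$-invariant and that $A_1\otimes A_n$ really produces a second copy of $A_n$. The latter holds because $(1,1)\otimes(n,n)\supset(n,n)$ for $n\ge1$.
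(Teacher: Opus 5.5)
Your proposal is correct and follows the paper's route. The paper treats this corollary as immediate from Theorem~\ref{theo:const_quant_cp2} and its proof: Lemma~\ref{lemma:A_1-comm} handles $A=C^\infty_{\mathbb C}(\mathcal O)$, and rerunning that lemma's argument handles an arbitrary invariant subspace, exactly as in your Steps 2--3. That rerun consists of finite dimensionality of the algebra generated by $A_1$, trivial center, the factorization $A_\star\simeq C_\star\otimes C'_\star$, and the multiplicity-one contradiction.

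Your Step 4 makes explicit the coincidence of $\alpha_\phi$ with the natural action, which the paper only asserts. However, Step 3 already relies on it, so it should come before Step 3. On the algebra generated by $A_1$, it identifies $[a_0,\cdot\,]_\star$ with the natural action of a Cartan element, whose eigenvalues on the highest weight vectors of the $A_n$ it contains grow linearly in $n$. That is the clean way to turn \eqref{in_act_alg} plus boundedness into finite dimensionality, and it is also what makes the central projections $SU(3)$-fixed.
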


Now, for a generic (co)adjoint orbit $\mathcal O\simeq\mathcal E$ of $SU(3)$, an analogous of Theorem \ref{theo:const_quant_cp2} is not known to us. But we can state a weakened version of Corollary \ref{nogo-cor1}. 

\begin{definition}
    A $SU(3)$-equi\-va\-ri\-ant unital $C^\ast$-algebra $A_\star=\big(A,\star, ^\ast, \norm{\,}\big)$ is a \emph{bona-fide} $SU(3)$-$C^\ast$-\emph{algebra} if there is a nontrivial $SU(3)$-equivariant homomorphism $\phi:\mathfrak{su}(3)\to A_\star$ as in \eqref{a_X} inducing a nontrivial inner action $\alpha_\phi$ of $\mathfrak{su}(3)$ on $A_\star$ as in \eqref{innX}.
    In this case, we denote the algebra by $A_\star^\phi=\big(A,\star, ^\ast, \norm{\,}, \phi\big)$.
\end{definition}

\begin{proposition}\label{CstarE}
Let $\mathcal O\simeq \mathcal E$ be a generic (co)adjoint orbit of $SU(3)$ and assume that $A_\star^\phi=\big(A,\star, ^\ast, \norm{\,},\phi\big)$ is a bona-fide $SU(3)$-$C^\ast$-algebra for $A=C^\infty_{\mathbb C}(\mathcal O)$. Then, the $C^\ast$-algebra generated by $\phi(\mathfrak{su}(3))$ is a finite-dimensional bona-fide $SU(3)$-$C^\ast$-subalgebra $C_\star^\phi\subset A_\star^\phi$ which  is isomorphic to the algebra of operators on an irrep of $SU(3)$ and we have the $SU(3)$-equivariant isomorphism
\begin{equation}\label{A-C}
    A_\star^\phi \simeq C_\star^\phi\otimes C_\star'\, , 
\end{equation}
where $C_\star'$ is the commutant of $C_\star^\phi$ in $A_\star^\phi$. Furthermore, the $\phi$-induced inner action $\alpha_\phi$ of $\mathfrak{su}(3)$ on $C_\star^\phi$ coincides with the natural $\mathfrak{su}(3)$-action on the underlying space $C\subset A$, but $\alpha_\phi$ vanishes on $C'_\star$. 
\end{proposition}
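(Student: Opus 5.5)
The plan is to adapt the argument in the proof of Lemma \ref{lemma:A_1-comm}, now starting from the given homomorphism $\phi$ instead of deriving one from $[A_1,A_1]_\star=A_1$. Since $\phi$ is $SU(3)$-equivariant and nontrivial, Schur's Lemma forces it to be injective, because $\mathfrak{su}(3)$ is simple. So $\phi(\mathfrak{su}(3))$ is a copy of the adjoint irrep $(1,1)$ inside $A$. For a generic orbit, $C^\infty_{\mathbb C}(\mathcal E)$ contains $(1,1)$ with multiplicity two (Paper I, Proposition I.4.2), so this copy need not be a specific one. That does not matter for the argument. What matters is that $\phi$ is a Lie algebra homomorphism into the $\star$-commutator algebra.

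First I show that the $C^\ast$-algebra $C_\star^\phi$ generated by $\phi(\mathfrak{su}(3))$ is finite dimensional. Take $a_0=\phi(H)$ for suitable $H$ in the Cartan subalgebra, and $a_>=\phi(E)$ for a root vector $E$ (complexifying, with $[H,E]=E$), so that $[a_0,a_>]_\star=a_>$. Relation \eqref{in_act_alg} then gives $[a_0,(a_>)^k]_\star=k(a_>)^k$. Because $\|[a_0,\cdot]_\star\|\le 2\|a_0\|$, we get $(a_>)^k=0$ for $k>2\|a_0\|$.

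This nilpotency must be turned into finite dimensionality. The map $\phi$ extends to a representation of $U(\mathfrak{sl}(3))$ on $A$ by left $\star$-multiplication, and the unital subalgebra generated by $\phi(\mathfrak{su}(3))$ is the image $\rho(U(\mathfrak{sl}(3)))$ of this algebra homomorphism. Since $\|[a_0,\cdot]_\star\|$ is bounded, $\mathrm{ad}\,a_0$ has only finitely many eigenvalues. Together with nilpotency of the root vectors and the $*$-structure, which makes $\phi(i\mathfrak{su}(3))$ self-adjoint, this lets me apply the standard argument that a $*$-representation of $\mathfrak{sl}(3)$ by bounded operators with nilpotent root vectors integrates to a unitary representation of $SU(3)$ in which only finitely many weights occur. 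Hence $\rho(U(\mathfrak{sl}(3)))$ is a finite dimensional, and so closed, $C^\ast$-algebra $C_\star^\phi\cong\bigoplus_j M_{\mathbb C}(d_j)$. I expect this step to be the main obstacle.

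Next I show $C_\star^\phi$ is a single matrix algebra acting on an irrep. Each central projection $\mathds 1_j$ commutes with $\phi(\mathfrak{su}(3))$, so the inner action $\alpha_\phi$, which coincides with the natural action on $C$ by equivariance of $\phi$, fixes $\mathds 1_j$. Thus every $\mathds 1_j$ is $SU(3)$-invariant. The $SU(3)$-invariant elements of $A$ are only the constants: $A_0$ is one dimensional and is the span of the identity by the argument of Lemma \ref{lemma:A_0-cent}. Therefore there is a single block, $C_\star^\phi\cong M_{\mathbb C}(d)$. The representation $\rho$ restricted to $M_{\mathbb C}(d)\cong\mathrm{End}(\mathbb C^d)$ makes $\mathbb C^d$ a $\mathfrak{sl}(3)$-module whose image generates all of $\mathrm{End}(\mathbb C^d)$. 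Such a module has no proper invariant subspace, so it is an irrep $(p,q)$, and $C_\star^\phi\cong\mathrm{End}(V_{(p,q)})$. The irrep is nontrivial because $\phi\neq0$.

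Finally, a unital full matrix subalgebra of any unital algebra gives the tensor factorization $A\cong C\otimes C'$ with $C'$ the commutant. To see this, use matrix units $e_{ij}\in C$ and identify $C'$ with $e_{11}Ae_{11}$ via $x\mapsto\sum_i e_{i1}xe_{1i}$. This is a $C^\ast$-isomorphism, and it is $SU(3)$-equivariant because $C$ is invariant and hence so is its commutant; this proves \eqref{A-C}. The inner action $\alpha_\phi(X,\cdot)=[a_X,\cdot]_\star$ vanishes on $C'_\star$ by the definition of commutant. On $C$ it equals the natural action by equivariance of $\phi$: $[a_X,a_Y]_\star=a_{[X,Y]}=X\cdot a_Y$, and this extends to all of $C$ by the Leibniz rule for both derivations.
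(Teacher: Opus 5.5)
Your argument is sound in outline, and it departs from the paper at exactly one point: how you get finite dimensionality of the algebra generated by $\phi(\mathfrak{su}(3))$.

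The paper stays inside $A$. On the spans $B_k$ of products of at most $k$ elements of $\mathrm{Span}_{\mathbb C}\phi(\mathfrak{su}(3))$, the inner action $[a_X,\cdot]_\star$ coincides with the natural $\mathfrak{su}(3)$-action. The paper then implicitly uses that every irrep occurs in $C^\infty_{\mathbb C}(\mathcal E)$ with finite multiplicity. So a non-stabilizing chain $B_1\subset B_2\subset\cdots$ would contain irreps of unbounded highest weight, and on their normalized highest weight vectors $[a_{X_0},\cdot]_\star$ would have unbounded norm.

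You instead integrate the bounded representation $X\mapsto L_{\phi(X)}$ to the simply connected group $SU(3)$. A representation of a compact group with bounded generators has only finitely many isotypic types, so $U(\mathfrak{sl}(3))$ acts through $\bigoplus_{\lambda\in F}\mathrm{End}(V_\lambda)$ and its image is finite dimensional.

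Your route uses nothing about $A$: neither the harmonic decomposition of $C^\infty_{\mathbb C}(\mathcal E)$ nor equivariance. It therefore shows that a Lie homomorphism (for the commutator bracket) from $\mathfrak{su}(3)$ into any unital Banach algebra generates a finite-dimensional subalgebra. The paper's route avoids integration and complete reducibility of Banach representations, at the price of depending on the specific structure of $A$.

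After this step the two proofs run in parallel: central projections are fixed by the inner action, hence lie in $A_0=\mathbb C\cdot 1$. Your last steps are somewhat tighter. Including $1$ in $\rho(U(\mathfrak{sl}(3)))$ replaces the paper's Casimir argument for unitality. Burnside's theorem replaces the argument via projections onto invariant subspaces. Your matrix-unit construction actually proves the factorization $A\cong C\otimes C'$, which the paper only asserts.

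When you write out the finite-dimensionality step, fix three things.
\begin{enumerate}
\item Boundedness of $[a_0,\cdot]_\star$ does not by itself give finitely many eigenvalues. Finiteness comes from the eigenvalues being weights of finite-dimensional $SU(3)$-representations, hence lying in a lattice, while being bounded in modulus by $\|\phi(H)\|$.
\item The nilpotency of $a_>$ is not used once you integrate, so you can drop it.
\item Self-adjointness of $\phi(i\mathfrak{su}(3))$ is not among the hypotheses defining a bona-fide algebra. You do not need it for the integration, since bounded operators on the Banach space $A$ suffice. You do use it to conclude that $\rho(U(\mathfrak{sl}(3)))$ is $*$-closed, i.e.\ that it is the $C^\ast$-algebra generated by $\phi(\mathfrak{su}(3))$. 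Either state it as an assumption or justify it. The paper's proof needs the same fact tacitly, since its $B_k$ contain no adjoints.
\end{enumerate}
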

\begin{proof}
The proof follows closely to most of the proof of Lemma \ref{lemma:A_1-comm}. 

Denote by $A_1\subset A$ the complex linear span of the image of $\phi$.\footnote{The Lie algebra $\mathfrak{su}(3)$ is a \emph{real} vector space and the homomorphism \eqref{a_X} is a real linear map.} Then, similar to what we did in the proof of Lemma \ref{lemma:A_1-comm}, for each $k \in \mathbb N$, let $B_k$ be the linear span of products of at most $k$ elements of $A_1$. Each $B_k$ is an $SU(3)$-invariant subspace of $A$ for which the natural $\mathfrak{su}(3)$-action coincides with the induced inner action as in \eqref{innX}, that is, for every $X \in \mathfrak{su}(3)$ there is $a_X\in A_1$ such that the natural action of $X$ on $B_k$ is of the form $B_k\ni b\mapsto [a_X,b]_\star\in B_k$. 

Again, we claim that the chain $B_1 \subset ... \subset B_k \subset ...$ stabilizes. Suppose it doesn't. Then, there is a sequence $(D_k)_{k\geq 2}$ such that each $D_k\subset B_k\setminus B_{k-1}$ is a $SU(3)$-invariant subspace of $A$ carrying a representation $\vb*a_k$ with
\begin{equation}
    \lim_{k\to\infty}|\vb*a_k| =\infty\, ,
\end{equation}
cf. Notation I.2.3. Thus, we can take $a_{X_0}=\phi(X_0)\in A_1$ for $X_0=2i(T_3+U_3)$ and normalized highest weight vectors $e_>^k \in D_k$, so that
\begin{equation}
    \norm{[a_{X_0},e_>^k]_\star} = |\vb*a_k| \to \infty\, ,
\end{equation}
which is absurd, since $[a_{X_0},\cdot\,]_\star$ must be a bounded operator. 

Therefore, the $C^\ast$-algebra generated by $A_1=\mathrm{Span}_{\mathbb C}(\phi(\mathfrak{su}(3)))$ is a finite dimensional $C^\ast$-subalgebra  $C_\star^\phi\subset A_\star^\phi$ with a closed nontrivial inner action  of $\mathfrak{su}(3)$, 
\begin{equation}\label{innXc}
\alpha_\phi:\mathfrak{su}(3)\times C_\star^\phi\to C_\star^\phi \ , \ \ (X,c)\mapsto [a_X,c]_\star \ , 
\end{equation}
which coincides with the natural $\mathfrak{su}(3)$-action on the underlying space $C\subset A$.

In complete analogy to Lemma \ref{lemma:A_0-cent}, the subspace $A_0 \subset A$ of invariant elements is unidimensional and is generated by the identity of $A_\star^\phi$. Using a suitable Casimir operator, cf. (I.B.3), the morphism of $\mathfrak{su}(3)$ into $C_\star^\phi$ creates a non trivial invariant element in $C_\star^\phi$, thus $A_0 \subset C_\star^\phi$ so that $C_\star^\phi$ is also unital, hence it is a bona-fide $SU(3)$-$C^\ast$-subalgebra of $A_\star^\phi$, and in the same vein as was shown in Lemma \ref{lemma:A_1-comm}, $C_\star^\phi$ must be isomorphic to a full matrix algebra,
\begin{equation}
    C_\star^\phi \simeq M_{\mathbb C}(d)\, .
\end{equation}

In particular, the composition of $\phi$ with the above isomorphism gives a representation of $\mathfrak{su}(3)$ on $\mathbb C^d$, which is the infinitesimal action induced by a representation of $SU(3)$ since the group is simply connected. Such $SU(3)$-representation on $\mathbb C^d$ is irreducible because a projection on any invariant subspace of $\mathbb C^d$ spans a trivial irrep of $SU(3)$ within $C_\star^\phi$, but $C_\star^\phi$ carries only one copy of the trivial irrep, namely $A_0$. Thus, $C_\star^\phi$ is isomorphic to the algebra of operators on an irrep of $SU(3)$, and we have the global $SU(3)$-equivariant isomorphism \eqref{A-C} with $\alpha_\phi$ vanishing on $C_\star'$, the commutant of the $C^\ast$-algebra generated by $\phi(\mathfrak{su}(3))$.
\end{proof}

In view of the above, we introduce: 

\begin{definition}
    A bona-fide $SU(3)$-$C^\ast$-algebra $A_\star^\phi=\big(A,\star, ^\ast, \norm{\,},\phi\big)$ is a \emph{faithful} $SU(3)$-$C^\ast$-algebra if the inner $\mathfrak{su}(3)$-action $\alpha_\phi$ coincides with
    the natural $\mathfrak{su}(3)$-action on the underlying space $A$. 
\end{definition}

\begin{definition}
    Let $A_\star^\phi=\big(A,\star, ^\ast, \norm{\,},\phi\big)$  be a bona-fide $SU(3)$-$C^\ast$-algebra. If $A_\star^\phi$ decomposes as in \eqref{A-C}, where $C_\star^\phi$ is a faithful $SU(3)$-$C^\ast$-subalgebra and $\alpha_\phi$ vanishes on $C_\star'$, then $C_\star^\phi$ is the $SU(3)$-\emph{core} of $A_\star^\phi$.
\end{definition}

Thus, a bona-fide $SU(3)$-$C^\ast$-algebra $A_\star^\phi=\big(A,\star, ^\ast, \norm{\,},\phi\big)$ is faithful if and only if $A_\star^\phi=C_\star^\phi$ ($C'=A_0$ in \eqref{A-C}), and we can restate the previous  results as: 

\begin{corollary}\label{nogo-cor}
    Let $\mathcal O$ be any (co)adjoint orbit of $SU(3)$ and $A\subseteq C^\infty_{\mathbb C}(\mathcal O)$ an invariant subspace. If $A_\star^\phi=\big(A,\star, ^\ast, \norm{\,},\phi\big)$ is a faithful 
    $SU(3)$-$C^\ast$-algebra, then $A$ is finite dimensional. More generally, if $A_\star^\phi$
    is a bona-fide $SU(3)$-$C^\ast$-algebra, then $A_\star^\phi$ has a finite-dimensional $SU(3)$-core $C_\star^\phi$ isomorphic to
    the algebra of operators on an irrep of $SU(3)$ defined by $\phi$. In particular, if $\mathcal O\simeq \mathbb C P^2$, then $A_\star^\phi=C_\star^\phi$.  
\end{corollary}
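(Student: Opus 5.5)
The plan is to assemble Corollary \ref{nogo-cor} directly from Proposition \ref{CstarE}, Corollary \ref{nogo-cor1} and the definitions of faithful algebras and $SU(3)$-cores. Three claims need checking: the faithful case, the general bona-fide case, and the special case $\mathcal O\simeq\mathbb C P^2$.

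The key point is that the argument in the proof of Proposition \ref{CstarE} never used genericity of $\mathcal O$ or that $A$ is all of $C^\infty_{\mathbb C}(\mathcal O)$. I would rerun it for an arbitrary orbit and an arbitrary invariant subspace $A$.

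\begin{itemize}
\item Take $A_1=\mathrm{Span}_{\mathbb C}\phi(\mathfrak{su}(3))$ and let $B_k$ be the span of products of at most $k$ elements of $A_1$. Each $B_k$ is $SU(3)$-invariant, and on $B_k$ the natural action equals $[a_X,\cdot\,]_\star$.
\item The chain $(B_k)$ stabilizes. Otherwise there are irreps $D_k\subset B_k$ with $|\vb*a_k|\to\infty$, and then $[a_{X_0},\cdot\,]_\star$ would be unbounded on normalized highest weight vectors. This contradicts boundedness of left and right multiplication in a $C^\ast$-algebra.
\item Hence the generated $C^\ast$-subalgebra $C_\star^\phi$ is finite dimensional. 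It contains $A_0=\mathbb C e$, by the Casimir argument together with Lemma \ref{lemma:A_0-cent}, whose proof is orbit-independent.
\item Since $C$ carries a single trivial isotypic copy, $C_\star^\phi\simeq M_{\mathbb C}(d)$. Here $\phi$ integrates to an irrep of $SU(3)$ on $\mathbb C^d$ by simple connectivity, and irreducibility again comes from the unique trivial summand.
\item Writing $A_\star^\phi\simeq C_\star^\phi\otimes C_\star'$ then exhibits $C_\star^\phi$ as an $SU(3)$-core.
\end{itemize}

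A subtlety is that $A_0$ might a priori have dimension greater than one for a general invariant subspace on a generic orbit. The unit is still invariant, though, and the Casimir element of $C^\phi_\star$ is central in a full matrix algebra, so it is a scalar multiple of $e$. What matters is only that the trivial isotypic component of $C$ is one dimensional. I would justify this by noting that central projections of the finite-dimensional $C^\ast$-algebra $C_\star^\phi$ are $SU(3)$-fixed and commute with $\phi(\mathfrak{su}(3))$. Since $\phi(\mathfrak{su}(3))$ generates $C_\star^\phi$, they are central, and the unital generated algebra is forced to be a single matrix block once one uses that $\phi$ is nontrivial and that the trivial isotypic part inside the subalgebra generated by $A_1$ is spanned by Casimir polynomials. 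I expect this step to be the main obstacle and would treat it with care.

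For the faithful case, faithfulness means $\alpha_\phi$ equals the natural action on all of $A$. Since $\alpha_\phi$ vanishes on $C'_\star$, the natural action is trivial on $C'$, so $C'\subset A_0$ and hence $C'=A_0$. Then $A=C$ is finite dimensional.

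For $\mathcal O\simeq\mathbb C P^2$, I would invoke Corollary \ref{nogo-cor1}. Its proof shows that otherwise some nontrivial $A_n\subset C'$ would make $A_1\otimes A_n$ contain $A_n$ twice, contradicting the multiplicity-free decomposition \eqref{Adecomp}, which is inherited by invariant subspaces. So $C'=A_0$ and $A_\star^\phi=C_\star^\phi$.
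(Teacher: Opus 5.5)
Your route is the paper's. There, Corollary \ref{nogo-cor} is presented as a restatement of Corollary \ref{nogo-cor1} and Proposition \ref{CstarE}, whose argument indeed uses neither genericity of $\mathcal O$ nor $A=C^\infty_{\mathbb C}(\mathcal O)$. It is combined with the remark that faithfulness amounts to $C'=A_0$ in \eqref{A-C}. Your stabilization, single-block, irreducibility, core and $\mathbb C P^2$ multiplicity arguments all match the paper.

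The one thing to correct is the step you flag as the main obstacle. It is not an obstacle. Since $SU(3)$ acts transitively on $\mathcal O$, every $SU(3)$-invariant element of $A\subseteq C^\infty_{\mathbb C}(\mathcal O)$ is a constant function. So $\dim A_0\le 1$ for every orbit and every invariant subspace. Moreover $e\in A_0$ by the argument of Lemma \ref{lemma:A_0-cent}, hence $A_0=\mathbb C e$. This is exactly what the paper uses when it says $C$ carries only one copy of the trivial irrep. It is also what you need for $C'\subset A_0\Rightarrow C'=A_0$ in the faithful case.

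The justification you propose instead does not work. Nontriviality of $\phi$, together with the fact that invariants of the algebra generated by $A_1$ are images of Casimir elements, cannot by itself force a single matrix block. Suppose $\phi$ were the direct sum of two inequivalent irreps, landing in $M_{\mathbb C}(d_1)\oplus M_{\mathbb C}(d_2)$. Every property you list would still hold, yet the Casimirs would take different scalar values on the two blocks and give two independent invariant central projections. Only the realization of $C$ by functions on a homogeneous space rules this out.
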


\subsection{Preliminary considerations for semiclassical asymptotics}\label{main-arguments}

We now reflect on the semiclassical asymptotics for quark systems, in light of the results of the previous subsection. First, we look at the program of deformation quantization. 

Since every (co)adjoint orbit $\mathcal O$ of $SU(3)$ is a Hamiltonian $SU(3)$-space \cite{kir}, the $SU(3)$-invariant symplectic form on $\mathcal O$, cf. \eqref{foliS7}, defines the classical algebra of observables, which is the Poisson algebra $A_{\mathcal P}=\big(A,\cdot\, , \{\cdot,\cdot \}\big)$, where $\cdot$ is the pointwise product on $A=C^\infty_{\mathbb C}(\mathcal O)$, with respect to which the Poisson bracket $\{\cdot ,\cdot\}$ is a derivation in both entries. 
Furthermore, we have a nontrivial equivariant homomorphism $\widehat\phi$ from $\mathfrak{su}(3)$ to $A_{\mathcal P}$,  
\begin{equation}\label{a_X-pb}
   \widehat\phi: \mathfrak{su}(3)\to A\, , \ X\mapsto \widehat a_X\, , \ \ \mbox{s.t.}\, \ \  \widehat a_{[X,Y]} = \{\widehat a_X,\widehat a_Y\}\, , 
\end{equation}
which induces a nontrivial action $\widehat\alpha$ of $\mathfrak{su}(3)$ on $A_{\mathcal P}$, given by
\begin{equation}\label{su3actPb}
    \widehat\alpha:\mathfrak{su}(3)\times A_{\mathcal P}\to A_{\mathcal P} \ , \ \ (X,f)\mapsto \{\widehat{a}_X,f\} \ . 
\end{equation}

In this setting, the program of deformation quantization amounts to deforming the pointwise product $\cdot$ on $A_{\mathcal P}$ to a noncommutative product $\star_\hbar$ on $A[[\hbar]]$, the ring of formal power series in the deformation parameter $\hbar$ with coefficients in $A=C^\infty_{\mathbb C}(\mathcal O)$, such that, for any $f=\sum_{k=0}^\infty f_k\hbar^k\in A[[\hbar]]$ and  $g=\sum_{k=0}^\infty g_k\hbar^k\in A[[\hbar]]$,\footnote{For explicity constructions of deformation quantizations of coadjoint orbits of compact semisimple Lie groups, we refer to \cite{fioresi, lledo}.}  
\begin{equation}\label{hto0}
    \lim_{\hbar\to 0} f\star_\hbar g=f_0g_0 \ , \ \ \lim_{\hbar\to 0}\left(\hbar^{-1}(f\star_\hbar g-g\star_\hbar f)\right)=i\{f_0,g_0\} \ .
\end{equation}
Such a formal algebra $A_\hbar=(A[[\hbar]],\star_\hbar)$ would be thought of as a `quantum'' algebra deforming the classical algebra $A_{\mathcal P}$ and this is often called a \emph{quantization} of $\mathcal O$. 

But since $SU(3)$ is the symmetry group of $\mathcal O$, any true quantum algebra which respects the $SU(3)$-equivariance of $A$ is a $SU(3)$-equivariant  unital $C^\ast$-algebra. However, from Theorem \ref{theo:const_quant_cp2}, for $A=C^\infty_{\mathbb C}(\mathbb C P^2)$ it is impossible for such a formally deformed algebra $A_\hbar$ to converge\footnote{For instance, by treating $\hbar$ as a constant, as it is in Physics (and which for an appropriate choice of units can be set $\hbar=1$), reinterpreting the limits in \eqref{hto0} accordingly (semiclassical limit of high energies, high momenta, high quantum numbers, high expectation values etc.).} to a $SU(3)$-equi\-va\-ri\-ant unital $C^\ast$-algebra $A_\star$ such that its commutator tends to the Poisson bracket in some limit of elements in $A$. 

Furthermore, any $SU(3)$-symmetric quantum algebra worthy of its name must have an inner action of $\mathfrak{su}(3)$, that is, quantum operators generating the symmetry group. However, from Proposition \ref{CstarE}, for $A=C^\infty_{\mathbb C}(\mathcal E)$ it is impossible for such a formally deformed algebra $A_\hbar$ to converge to a bona-fide $SU(3)$-$C^\ast$-algebra $A_\star^\phi$ such that its commutator approaches the Poisson bracket in some limit, since the $\mathfrak{su}(3)$-action \eqref{su3actPb} is trivial only on the subspace of constant functions, but the inner $\mathfrak{su}(3)$-action on $A_\star^\phi$ is only nontrivial on a finite dimensional subspace of $A$.

On the other hand, from Corollary \ref{nogo-cor}, for any (co)adjoint orbit $\mathcal O$ of $SU(3)$, if we ask for a nontrivial $SU(3)$-equivariant homomorphism from $\mathfrak{su}(3)$ to a $SU(3)$-equivariant unital $C^\ast$-algebra structure $A_\star$ on some invariant subspace $A \subset C^\infty_{\mathbb C}(\mathcal O)$, we end up with a $SU(3)$-core that is isomorphic to the algebra of operators on some irrep of $SU(3)$, that is, a quantum quark system in the sense of Definition I.5.6.

It follows from these previous results that, just as for spin systems and functions on $\mathbb C P^1$, in order to properly approach the asymptotic limit of noncommutative products of  functions on a $SU(3)$-(co)adjoint orbit $\mathcal O$, we must work with sequences of symbol correspondences from quantum quark systems, in other words,  sequences of twisted algebras defined on increasing finite-dimensional subspaces of $C^\infty_{\mathbb C}(\mathcal O)$ which are induced from  symbol correspondence sequences, and then study the asymptotic limit of such sequences as the dimension tends to infinity.

Thus, the first problem we must deal with is the identification of sequences of quantum quark systems that are suitable for semiclassical asymptotic analysis.

For pure-quark systems, the classical phase space is the orbit $\mathcal O \simeq \mathbb CP^2$, and each symbol correspondence is an isomorphism between the algebra of $\delta(p)\times \delta(p)$ complex matrices $M_{\mathbb C}(\delta(p))$ and the corresponding twisted algebra (cf.~Definition I.3.21) on a $\delta(p)^2$-dimensional subspace of $C^\infty_{\mathbb C}(\mathbb CP^2)$, where $\delta(p)$ is the dimension of an $SU(3)$-irrep $\vb*p=(p,0)$, or $\widecheck{\vb*p}=(0,p)$, which is given by 
\begin{equation}\label{d(p)}
\delta(p)=\dim(\vb*p)=\dim(\widecheck{\vb*p})=\frac{(p+1)(p+2)}{2} \ ,
\end{equation}
so that, $p\mapsto p+1\implies \delta(p)\mapsto \delta(p+1)= \delta(p)+p+2$. In this scenario, we must consider sequences of quantum pure-quark systems $((p,0))_{p\in\mathbb N}$ or $((0,p))_{p\in\mathbb N}$. 

The asymptotic analysis of such sequences of symbol correspondences and twisted algebras for pure-quark systems can be worked out in a way that, although quite more cumbersome, is somewhat analogous to the treatment developed in \cite{RS} for spin systems. In Appendix \ref{app:pq-asymp} we summarize the steps and results of this approach. 

There we show the conditions (on the characteristic numbers) for the sequence of symbol correspondences and their twisted algebras to be of Poisson type, that is, for the sequence of twisted products $(\star^p)_{p\in\mathbb N}$ to be such that, in some sense,\footnote{The precise sense for these limits is presented in Section \ref{sec:asymp_anal} and Appendix \ref{app:pq-asymp}.}  
\begin{equation}\label{limp}
    \lim_{p\to\infty}f\star^p g=fg \ , \ \ \lim_{p\to\infty}p[f,g]_{\star^p}=i\{f,g\} \ , \ \ \forall f,g\in Poly(\mathbb C P^2)\, .
\end{equation}

However, the choice of sequences for pure-quark systems needs to be better justified with a principle that can be extended to mixed quark systems, where the classical phase space is a generic orbit $\mathcal O \simeq \mathcal E$ and the matrix algebras of quantum quark systems are indexed not by single integers, but by pairs $(p,q)$ of integers.

Such a generalized principle shall lead to the definition of ``rays'' of correspondences for each (co)adjoint orbit $\mathcal O_\xi$ in the coarse Poisson sphere $\{\mathcal S^7,\widehat\Pi_{\mathfrak{g}}\}$, cf. \eqref{coarseS7}, as presented in the next section (cf. Definition \ref{def:pois_ray}). With  this definition, we shall be able to make sense of limits similar to the ones in \eqref{limp} and thus study the conditions for such rays of correspondences  to be of Poisson type.

But even with such identification of the sequences of general quantum quark systems suitable for  semiclassical asymptotic analysis, the approach presented in Appendix \ref{app:pq-asymp} is not easily generalized to the asymptotic analysis of mixed quark systems. So we shall develop a new framework using the PBW Theorem for the universal enveloping algebra of $\mathfrak{sl}(3)$, as presented in the next subsections. 

\subsection{PBW Theorem and Poisson algebras of harmonic functions}

We consider general orbits $\mathcal O\simeq \mathcal O_\xi\subset\mathcal S^7\subset\mathfrak{su}(3)$ and, 
in what follows,  invoke the Poincaré-Birkhoff-Witt (PBW) Theorem to describe the Poisson algebra on $Poly(\mathcal O)$.

First, we take $\mathfrak{sl}(3)$ as the complexification of $\mathfrak{su}(3)$,
\begin{equation}
\mathrm{Span}_{\mathbb R}\{E_j: j=1,...,8\}=\mathfrak{su}(3)\subset \mathfrak{sl}(3) = \mathrm{Span}_{\mathbb C}\{E_j: j=1,...,8\}\, .
\end{equation}
Note that the restriction of complex polynomials provides identification
\begin{equation*}
    Poly(\mathfrak{sl}(3))\equiv Poly(\mathfrak{su}(3))\ .
\end{equation*}
Furthermore, on $\mathfrak{sl}(3)$ we have the bilinear form 
\begin{equation}
    (X,Y) = \tr(X Y) \ \ \ \ \forall X,Y \in \mathfrak{sl}(3) \ ,
\end{equation}
which is just a renormalization of the Killing form (and naturally restricts to $\mathfrak{su}(3)$), and which defines the standard inner product (cf. (I.2.8))
\begin{equation}\label{standard-ip}
    \langle X|Y\rangle=(X^\dagger,Y)=\tr(X^\dagger Y) \ \ \ \ \forall X,Y \in \mathfrak{sl}(3) \ .
\end{equation}

We consider the GT basis of $\mathfrak{sl}(3)$, with adjoint representation $(1,1)$ of $SU(3)$, as depicted on Figure \ref{gt_b-sl(3)}.
\begin{figure}[h]\small{
	\centering
	\label{gt_b-sl(3)}
	\begin{tikzpicture}[scale = 1]
		
		\foreach \ang in {120,180,...,360}{
			\filldraw[black] (\ang:2cm) circle (2pt);
		}
		
		\filldraw[black] (60:2cm) +(-2pt,-2pt) rectangle +(2pt,2pt);
		
		\filldraw[black] (0,0) circle (2pt);
		\draw (0,0) circle (4pt);
		
		\node[anchor=south west,scale=0.8] at (0:2cm) {$e_{(2,0,1),1/2}=-T_+$};
		
		\node[anchor=south west,scale=0.8] at (0:-4.7cm) {$e_{(0,2,1),1/2}=T_-$};
		
		\node[anchor=south west,scale=0.8] at (60:2cm) {$e_{(2,1,0),1/2}=V_+$};
		
		\node[anchor=south west,scale=0.8] at (155:4cm) {$e_{(1,2,0),1}=U_+$};
		
		\node[anchor=south west,scale=0.8] at (170:1.5cm) {$e_{\vb*0_1,1}= e_{\vec 1 1}=-\sqrt{2}\,U_3$};
		
		\node[anchor=south west,scale=0.8] at (215:2cm) {$e_{\vb*0_1,0}= e_{\vec 1 0}=\sqrt{\dfrac{2}{3}}\,(T_3+V_3)$};
		
		\node[anchor=south west,scale=0.8] at (210:4.5cm) {$e_{(0,1,2),1/2}=V_-$};
		
		\node[anchor=south west,scale=0.8] at (-65:2.5cm) {$e_{(1,0,2),1}=-U_-$};
	\end{tikzpicture}
	\caption{GT basis for $\mathfrak{sl}(3)$, cf. Definition I.2.1.} }
\end{figure}
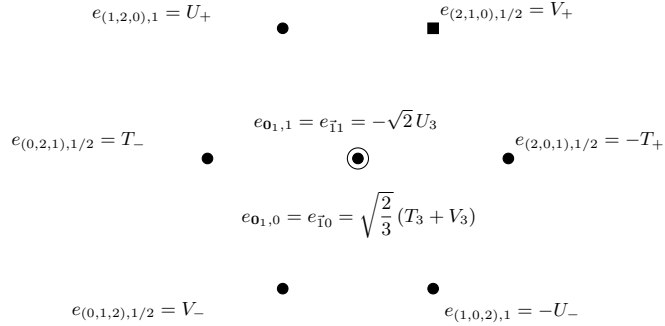
We also impose an ordering on this orthonormal basis s.t. $\{e_1, e_2,e_3\}$ are annihilation operators, $\{e_4,e_5,e_6\}$ are creation operators and $\{e_7,e_8\}$ are Cartan operators. Specifically, we shall choose the ordered basis vectors
\begin{equation}\label{orderingB1}
	\begin{aligned}
		& e_1 = e_{(0,2,1)1/2}=T_- \, , \ e_2  = e_{(1,0,2)1}=-U_- \, , \ e_3 = e_{(0,1,2)1/2}=V_- \, ,\\
	&	e_4 = e_{(2,0,1)1/2}=-T_+\, , \ e_5  = e_{(1,2,0)1}=U_+ \, , \ e_6 = e_{(2,1,0)1/2}=V_+ \, ,\\
		& e_7 = e_{\vec 1 1}=-\sqrt{2}U_3 \  ,  \  e_8 = e_{\vec 1 0}=\sqrt{2/3}(T_3+V_3)=\sqrt{2/3}(2T_3+U_3)\, ,
	\end{aligned}
\end{equation}
cf. Definition I.2.1, 
and we denote this choice of ordered basis for $\mathfrak{sl}(3)$  by 
\begin{equation}\label{ordB1}
    \mathcal B_1=\{e_1,...,e_8\} \, . 
\end{equation}

By PBW Theorem \cite{hump}, the universal enveloping algebra $U(\mathfrak{sl}(3))$ has a basis
\begin{equation}\label{univ_basis}
	\mathcal B_\infty=\bigcup_{d\in \mathbb N_0}\mathcal B_d \ , \ \ \
  \mathcal B_d = \{e_{j_1}...e_{j_d}:1\le j_1\le...\le j_d\le 8\}\, ,    
\end{equation}
where the empty product ($d=0$) is the unity $1$ and where $e_1,\cdots, e_8$ satisfy the commutation relations of $\mathfrak{su}(3)$ (but not any specific nilpotency relation apriori, that is, not represented by matrices of a specific dimension apriori). Thus, for each arbitrary $d\in\mathbb N$, each basis vector in the ordered basis $\mathcal B_d$ is an ordered product of $d$ elements of $\mathcal B_1$, this ordered product induced by the order in $\mathcal B_1$. For instance, 
\begin{equation}
   \mathcal B_2 = \{e_1^2, e_1e_2, e_1e_3, \cdots e_1e_8, e_2^2, e_2e_3, \cdots e_2e_8, e_3^2, e_3e_4,  \cdots e_8^2\} \ .  
\end{equation}

\begin{definition}[PBW]
The \emph{universal enveloping algebra} of $\mathfrak{sl}(3)$,
\begin{equation}
    U(\mathfrak{sl}(3)) := \mathrm{Span}_{\mathbb C}(\mathcal B_\infty) \, ,
\end{equation}
cf. \eqref{ordB1}-\eqref{univ_basis}, is defined by the $\mathfrak{su}(3)$-commutation relations for $\mathcal B_1$ and the fact that commutation is a derivation. It is a graded algebra where
  each subspace  
\begin{equation}
	U_d(\mathfrak{sl}(3)) := \mathrm{Span}_{\mathbb C}(\mathcal B_d) \, 
\end{equation}
is the space of \emph{elements of homogeneous degree $d$}. On the other hand, the \emph{degree}  of a general $u\in U(\mathfrak{sl}(3))$ is given by 
\begin{equation}
	\deg(u):=\min\left\{d\in \mathbb N_0:u \in U_{\le d}(\mathfrak{sl}(3))\right\} \ , 
\end{equation}
where 
\begin{equation}
	U_{\le d}(\mathfrak{sl}(3)) := \bigoplus_{m=0}^d U_m(\mathfrak{sl}(3))\, .
\end{equation}
\end{definition}
Thus, for instance, $e_2e_1$ is not homogeneous of degree $2$, that is  $e_2e_1\notin U_2(\mathfrak{sl}(3))$, but  $e_2e_1$ has degree $2$ since $e_2e_1=e_1e_2-[e_1,e_2]=e_1e_2-e_3\in  U_{\leq 2}(\mathfrak{sl}(3))$. 

Therefore, once chosen the ordering \eqref{orderingB1}-\eqref{univ_basis} defining $\mathcal B_\infty$, the linear map 
\begin{equation}\label{betamap}
    \beta:U(\mathfrak{sl}(3))\to Poly(\mathfrak{sl}(3))
\end{equation}
defined in the basis $\mathcal B_\infty$ by
\begin{equation}\label{beta_basis_B_infty}
 \beta[e_{j_1}...e_{j_d}](X) = (e_{j_1},X)...(e_{j_d},X) \ \ \ \ \forall X\in \mathfrak{sl}(3) \, ,
\end{equation}
is an isomorphism of vector spaces\footnote{This is not a canonical isomorphism $U(\mathfrak{sl}(3))\to Poly(\mathfrak{sl}(3))$ since it depends on a given but not canonical choice of basis for $U(\mathfrak{sl}(3))$, and is obviously not an algebra homomorphism.}, which breaks down into isomorphisms\footnote{Looking at the inverses of \eqref{beta_basis_B_infty} and \eqref{isod}, these are given by a choice of \emph{ordered} basis for each $\mathcal B_d$ whose elements are \emph{ordered} products of elements in $\mathcal B_1$.  In the context of affine systems, where $\mathcal B_1=\{x_i,\partial /\partial x_i\}$, this is also referred to as the ordering problem in quantization.} 
\begin{equation}\label{isod}
    \beta|_{U_d(\mathfrak{sl}(3))}:U_d(\mathfrak{sl}(3))\to Poly_d(\mathfrak{sl}(3)) \ , \ \ \forall d\in\mathbb N_0 \, . 
\end{equation}
In particular, for $\omega_{(p,q)}$ as in \eqref{high-weight}, we have
\begin{equation}\label{charac-omega_p}
	\begin{cases}
		\beta[2T_3](-i\omega_{(p,q)}) = p\\
		\beta[2U_3](-i\omega_{(p,q)}) = q
	\end{cases}\, .
\end{equation}

Now, looking at the (extension of the) adjoint action
\begin{equation}\label{adjactU}
	SU(3)\times U(\mathfrak{sl}(3))\ni (g,u)\mapsto Ad_g(u) \equiv gug^{-1}\in U(\mathfrak{sl}(3)) \, ,
\end{equation}
we have that
$U_d(\mathfrak{sl}(3))$ is not $SU(3)$-invariant because in general the action of $g\in SU(3)$ on $u\in U_d(\mathfrak{sl}(3))$ adds monomials of lower degrees. However, this action never adds monomials of higher degrees, hence $U_{\le d}(\mathfrak{sl}(3))$ is $SU(3)$-invariant. 
\begin{proposition}
    The linear map $\beta$, cf. \eqref{betamap}-\eqref{beta_basis_B_infty}, is not  $SU(3)$-equivariant. 
\end{proposition}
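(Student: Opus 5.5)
The plan is to exhibit a single element $u\in U(\mathfrak{sl}(3))$ and a single $g\in SU(3)$ with $\beta[Ad_g(u)]\neq g\cdot\beta[u]$, where $SU(3)$ acts on $Poly(\mathfrak{sl}(3))$ by $(g\cdot f)(X)=f(g^{-1}Xg)$. This is the action for which $\beta$ restricted to $U_1(\mathfrak{sl}(3))$ is equivariant: $(e,X)=\tr(eX)$ and $\tr(geg^{-1}X)=\tr(e\,g^{-1}Xg)$. Equivariance of a linear map under a connected group implies equivariance of its derivative, so it suffices to find some $Z\in\mathfrak{su}(3)$ such that $\beta\circ ad_Z\neq \mathcal L_Z\circ\beta$ on one element. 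Here $ad_Z(u)=[Z,u]$ is the derivation extending the bracket, and $\mathcal L_Z$ is the corresponding derivation of polynomials.

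The key observation, already noted before the statement, is that $Ad_g$ does not preserve $U_d(\mathfrak{sl}(3))$, whereas the action on $Poly(\mathfrak{sl}(3))$ preserves each $Poly_d(\mathfrak{sl}(3))$. Since $\beta$ maps $U_d$ isomorphically onto $Poly_d$, cf. \eqref{isod}, it is enough to find a basis monomial $u\in\mathcal B_2$ and $Z$ such that $[Z,u]$, once rewritten in the PBW basis $\mathcal B_\infty$, has a nonzero component in $U_1(\mathfrak{sl}(3))$. Then $\beta[[Z,u]]$ has a nonzero degree-one homogeneous part. On the other hand, $\mathcal L_Z\beta[u]\in Poly_2(\mathfrak{sl}(3))$ is homogeneous quadratic. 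The two therefore differ.

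For the concrete choice, I take $u=e_1e_4$, which lies in $\mathcal B_2$ since $1\le4$; up to sign this is $T_-T_+$. I take $Z$ proportional to $e_1$, or rather a real combination in $\mathfrak{su}(3)$ such as $Z=i(T_++T_-)$, and compute by linearity. The Leibniz rule gives $[e_1,e_1e_4]=e_1[e_1,e_4]$, with $[e_1,e_4]=[T_-,-T_+]=2T_3$, a Cartan element. So we obtain $e_1\cdot(\text{Cartan})$. This product is already ordered, because $e_1$ precedes $e_7,e_8$, so it is homogeneous and the test fails for this term.

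The better choice is to act with $e_4$: $[e_4,e_1e_4]=[e_4,e_1]e_4=(\text{Cartan})\,e_4$. This product is out of order, since $7,8>4$, and reordering produces the commutator $[H,e_4]\propto e_4\neq0$, which is a genuine degree-one term. With a complex $Z$ one can split $Z$ into real parts, or simply note that complex-linear extension of the infinitesimal action preserves the equivariance condition. So $\beta[[e_4,e_1e_4]]$ contains a nonzero multiple of the linear form $(e_4,X)$, while $\mathcal L_{e_4}\beta[e_1e_4]$ is quadratic. This contradicts equivariance.

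The only delicate step is bookkeeping of constants and signs. The paper's conventions for $T_3$ and $e_7,e_8$ mean that $2T_3$ must be expressed as a combination of $e_7,e_8$, and one must check that the resulting coefficient of $e_4$ after reordering is nonzero. This reduces to $[2T_3,T_+]=2T_+\neq0$, so it holds.
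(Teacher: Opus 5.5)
Your proposal is correct and uses the same idea as the paper: $\beta$ maps $U_d(\mathfrak{sl}(3))$ isomorphically onto the invariant subspace $Poly_d(\mathfrak{sl}(3))$, while $U_d(\mathfrak{sl}(3))$ itself is not invariant. The paper only asserts that non-invariance, and your infinitesimal computation $[e_4,e_1e_4]=-\sqrt{3/2}\,e_4e_8-\tfrac{1}{\sqrt{2}}\,e_4e_7-2e_4$, whose degree-one part $-2e_4$ is nonzero, supplies an explicit witness for it.
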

\begin{proof}
   $\beta$ gives an isomorphism between $U_d(\mathfrak{sl}(3))$ and $Poly_d(\mathfrak{sl}(3))$, cf. \eqref{isod}, and $Poly_d(\mathfrak{sl}(3))$ is $SU(3)$-invariant but $U_d(\mathfrak{sl}(3))$ is not. 
\end{proof}

On the other hand, defining the natural projection 
\begin{equation}\label{pid}
\pi_d:U(\mathfrak{sl}(3))\to U_d(\mathfrak{sl}(3)) \ , \ \ u\mapsto \pi_d(u) \, , 
\end{equation}
we have the following proposition. 

\begin{proposition}\label{equivarbeta}
For each $d \in \mathbb N_0$, the map
\begin{equation}
\begin{aligned}
	\beta_d:U_{\le d}(\mathfrak{sl}(3))&\to Poly_d(\mathfrak{sl}(3)) \ , \\
    u&\mapsto \beta_d[u] := \beta[\pi_d(u)] \ ,
\end{aligned}     
\end{equation}
is a linear $SU(3)$-equivariant surjection.
\end{proposition}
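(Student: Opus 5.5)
Linearity is immediate, since $\pi_d$ and $\beta$ are both linear. Surjectivity is also immediate: $U_d(\mathfrak{sl}(3))\subset U_{\le d}(\mathfrak{sl}(3))$, $\pi_d$ restricts to the identity on $U_d(\mathfrak{sl}(3))$, and $\beta|_{U_d(\mathfrak{sl}(3))}$ is an isomorphism onto $Poly_d(\mathfrak{sl}(3))$, cf. \eqref{isod}. The real content is equivariance.

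To prove equivariance, the plan is to identify $\beta_d$ with a canonical map that makes no reference to the ordering. That map is the top-degree symbol from the associated graded algebra, followed by the symmetric-algebra identification with polynomials. Consider the PBW filtration $U_{\le 0}\subset U_{\le 1}\subset\cdots$. The subspaces $U_{\le d}(\mathfrak{sl}(3))$ coincide with the canonical filtration by the span of all (not necessarily ordered) products of at most $d$ elements of $\mathfrak{sl}(3)$. Indeed, any product $e_{j_1}\cdots e_{j_m}$ with $m\le d$ can be reordered using commutators, and each commutator only produces terms of lower length; conversely, the ordered monomials are themselves such products. This filtration is $SU(3)$-invariant, since $Ad_g$ acts on $U(\mathfrak{sl}(3))$ as the algebra automorphism extending the adjoint action on $\mathfrak{sl}(3)$. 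Hence $U_{\le d-1}$ is invariant, and so is the quotient map $q_d:U_{\le d}\to U_{\le d}/U_{\le d-1}$. By PBW, $\pi_d$ restricted to $U_{\le d}$ has kernel exactly $U_{\le d-1}$. So $\pi_d$ factors as $\pi_d=\iota_d\circ q_d$, where $\iota_d:U_{\le d}/U_{\le d-1}\to U_d(\mathfrak{sl}(3))$ is the basis-induced linear isomorphism.

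Next I compose with $\beta$. The composition $\beta\circ\iota_d$ sends the class of an arbitrary product $X_1\cdots X_d$, with $X_i\in\mathfrak{sl}(3)$, to the polynomial $X\mapsto (X_1,X)\cdots(X_d,X)$. To see this, expand each $X_i$ in the basis $\mathcal B_1$ and reorder each resulting word into PBW order modulo $U_{\le d-1}$. Reordering modulo lower degree does not change the class, and the value $\prod_i (e_{j_i},X)$ is symmetric in the factors, so it agrees with \eqref{beta_basis_B_infty} on the ordered word. Multilinearity then gives the claim.

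This yields the canonical description
\[
\beta_d[X_1\cdots X_d + u'](X)=\prod_{i=1}^d (X_i,X),\qquad u'\in U_{\le d-1}(\mathfrak{sl}(3)).
\]
Equivariance now follows from the invariance of the trace form: $(gX_ig^{-1},X)=(X_i,g^{-1}Xg)$. Therefore $\beta_d[Ad_g(u)]=\beta_d[u]\circ Ad_{g^{-1}}$, which is precisely the natural $SU(3)$-action on $Poly_d(\mathfrak{sl}(3))$. Both sides are linear, and such products span $U_{\le d}$, so this suffices.

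The main obstacle is the well-definedness step: showing that the reordering modulo $U_{\le d-1}$ is consistent, i.e. that $\ker\pi_d|_{U_{\le d}}=U_{\le d-1}$ and that the symbol of an unordered word equals that of its sorted version. Both are standard consequences of the PBW theorem, namely $\mathrm{gr}\,U(\mathfrak{sl}(3))\simeq S(\mathfrak{sl}(3))$. I would verify the second directly by induction on the number of transpositions needed to sort a word: each swap $e_ae_b\to e_be_a+[e_a,e_b]$ inside a word of length $d$ changes it only by a word of length $d-1$.
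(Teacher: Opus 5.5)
Your proof is correct and follows essentially the same route as the paper. The paper likewise uses that $U_{\le d-1}(\mathfrak{sl}(3))$ is $Ad$-invariant, so $\beta_d$ kills it. It then expands $Ad_g(e_{j_1}\cdots e_{j_d})$ into unordered words with coefficients $D^{(1,1)}_{k,j}(g)$, sorts each word modulo $U_{\le d-1}(\mathfrak{sl}(3))$, and concludes using commutativity of the polynomial product together with $\beta_1[Ad_g(u)]=(\beta_1[u])^g$, which comes from invariance of the trace form. Your ordering-independent formula $\beta_d[X_1\cdots X_d+u'](X)=\prod_i (X_i,X)$ is a cleaner packaging of that same computation.
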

\begin{proof}
The statement is trivial for $d = 0$. For $d>0$, linearity and surjectivity are immediate, so we prove only equivariance. Note that, for any $u \in \mathfrak{sl}(3)\equiv U_1(\mathfrak{sl}(3))$,
\begin{equation}
    \beta_1[Ad_g(u)] = (Ad_g(u),\, \cdot\, ) = (u,Ad_{g^{-1}}(\,\cdot\,)) = (\beta_1[u])^g\, , \ \forall g\in SU(3)  .
\end{equation}
For $d>1$, we have already argued above that $U_{\le d-1}(\mathfrak{sl}(3))$ is invariant, thus $\beta_d[Ad_g(u)] = 0$, $\forall u \in U_{\le d-1}(\mathfrak{sl}(3))$, $\forall g\in SU(3)$. On the other hand, if $u = e_{j_1}...e_{j_d}$ is an element of $\mathcal B_d$, we have
\begin{equation}
    Ad_g(u) = \sum_{k_1,...,k_d}D^{(1,1)}_{k_1,j_1}(g)...D^{(1,1)}_{k_d,j_d}(g) e_{k_1}...e_{k_d}\, ,
\end{equation}
where $D^{(1,1)}_{k,j}$ are Wigner $D$-functions in the basis $\mathcal B_1$, cf. Definition I.2.6. In general, the indices $k_1,...,k_d$ are not necessarily in increasing order, so the rewriting of $e_{k_1}...e_{k_d}$ in the basis $\mathcal B_\infty$, by applying commutation relations, splits in two parts:
\begin{equation}
    e_{k_1}...e_{k_d} = e_{k_{f_{k_1,...,k_d}(1)}}...e_{k_{f_{k_1,...,k_d}(d)}}+ v_{k_1,...,k_d}\, ,
\end{equation}
where $f_{k_1,...,k_d}\in S_d$ is some permutation that places the indices $k_1,...,k_d$ in increasing order, and $v_{k_1,...,k_d}\in U_{\le d-1}(\mathfrak{sl}(3))$. Therefore
\begin{equation}
    \pi_d(Ad_g(u)) = \sum_{k_1,...,k_d}D^{(1,1)}_{k_1,j_1}(g)...D^{(1,1)}_{k_d,j_d}(g) e_{k_{f_{k_1,...,k_d}(1)}}...e_{k_{f_{k_1,...,k_d}(d)}}\, .
\end{equation}
Using the fact that the product of polynomials is commutative, the application of $\beta$ on $\pi_d(Ad_g(u))$ allows us to leave out the permutations $f_{k_1,...,k_d}$, giving
\begin{equation}
\begin{aligned}
    \beta_d[Ad_g(u)] & = \sum_{k_1,...,k_d}D^{(1,1)}_{k_1,j_1}(g)...D^{(1,1)}_{k_d,j_d}(g) \beta_1[e_{k_1}]...\beta_1[e_{k_d}]\\
    & = \left(\sum_{k_1}D^{(1,1)}_{k_1,j_1}(g)\beta_1[e_{k_1}]\right)...\left(\sum_{k_d}D^{(1,1)}_{k_d,j_d}(g)\beta_1[e_{k_d}]\right)\, .
\end{aligned}
\end{equation}
We have already proved that $\beta_1$ is equivariant, so
\begin{equation}
    \beta_d[Ad_g(u)] = \beta_1[e_{j_1}]^g...\beta_1[e_{j_d}]^g = (\beta_1[e_{j_1}]...\beta_1[e_{j_d}])^g = \beta_d[u]^g\, ,
\end{equation}
which proves the equivariance of $\beta_d$.
\end{proof}

Using the commutation relations, one can easily verify the next proposition.

\begin{proposition}\label{prop:point_p_beta}
    The pointwise product of elements of $Poly(\mathfrak{su}(3))$ satisfies
    \begin{equation}\label{point_prod_pi}
	\beta_{\deg(u)+\deg(v)}[uv] = \beta_{\deg(u)}[u]\beta_{\deg(v)}[v]
    \end{equation}
    for every $u,v\in U(\mathfrak{sl}(3))$.
\end{proposition}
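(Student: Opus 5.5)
The idea is to show that $\beta_{\deg}$ is the symbol map of the PBW filtration, i.e.\ the top-degree part of an element of $U(\mathfrak{sl}(3))$ composed with the identification $\mathrm{gr}\,U(\mathfrak{sl}(3))\simeq S(\mathfrak{sl}(3))\simeq Poly(\mathfrak{sl}(3))$. Multiplicativity then follows because the associated graded algebra is commutative.

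\textbf{Step 1 (reordering lemma).} For any indices $k_1,\dots,k_d\in\{1,\dots,8\}$, not necessarily ordered, we have
\begin{equation*}
e_{k_1}\cdots e_{k_d}=e_{k_{\sigma(1)}}\cdots e_{k_{\sigma(d)}}+v,\qquad v\in U_{\le d-1}(\mathfrak{sl}(3)),
\end{equation*}
where $\sigma$ is a permutation that sorts the indices. This is already used in the proof of Proposition \ref{equivarbeta}. It is proved by induction on the number of inversions: swapping adjacent factors $e_ae_b=e_be_a+[e_a,e_b]$ produces a term of length $d-1$, and that term lies in $U_{\le d-1}$ by induction on $d$. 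Consequently,
\begin{equation*}
\beta_d[e_{k_1}\cdots e_{k_d}]=\beta_1[e_{k_1}]\cdots\beta_1[e_{k_d}],\qquad \beta_d[U_{\le d-1}]=0 .
\end{equation*}
We also get $U_{\le m}U_{\le n}\subset U_{\le m+n}$.

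\textbf{Step 2 (products of basis elements).} Write $u=\pi_m(u)+u'$ and $v=\pi_n(v)+v'$, where $m=\deg u$, $n=\deg v$, $u'\in U_{\le m-1}$ and $v'\in U_{\le n-1}$. Then
\begin{equation*}
uv=\pi_m(u)\pi_n(v)+\big(\pi_m(u)v'+u'\pi_n(v)+u'v'\big),
\end{equation*}
and the bracketed terms lie in $U_{\le m+n-1}$, so $\beta_{m+n}$ kills them. By bilinearity, $\pi_m(u)\pi_n(v)$ is a combination of products $(e_{j_1}\cdots e_{j_m})(e_{l_1}\cdots e_{l_n})$ of PBW monomials. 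By Step 1, each such product has $\beta_{m+n}$ equal to $\prod_r\beta_1[e_{j_r}]\prod_s\beta_1[e_{l_s}]$, which is $\beta_m[e_{j_1}\cdots e_{j_m}]\,\beta_n[e_{l_1}\cdots e_{l_n}]$ by \eqref{beta_basis_B_infty}. Summing gives \eqref{point_prod_pi}, since $\beta_m[u]=\beta[\pi_m(u)]$.

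\textbf{Main obstacle.} The only delicate point is Step 1: one must check that reordering never creates terms of degree $\ge d$ other than the sorted monomial, and that the leading coefficient is exactly $1$. The induction on $(d,\#\text{inversions})$ handles both.

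\textbf{Edge cases.} If $\beta_m[u]\beta_n[v]$ happens to vanish, nothing breaks, since $Poly$ is a domain and both sides are computed in $Poly_{m+n}$. The case where $\deg u=0$ or $\deg v=0$ is trivial, because then $\beta_0$ just picks out the scalar component.
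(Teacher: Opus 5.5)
Your proof is correct and takes the route the paper intends. The paper justifies this proposition only by saying it follows from the commutation relations, and your reordering lemma supplies exactly that verification: a product of $d$ generators equals its sorted PBW monomial plus a remainder in $U_{\le d-1}(\mathfrak{sl}(3))$, which is the same fact used in the proof of Proposition \ref{equivarbeta}, and it gives the filtration property $U_{\le m}U_{\le n}\subset U_{\le m+n}$. Your remark about $Poly$ being a domain is not needed for the identity itself; it only yields the extra fact $\deg(uv)=\deg(u)+\deg(v)$.
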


For the Poisson bracket, we have the following characterization.

\begin{proposition}\label{prop:pois_b_beta}
    The Poisson bivector ${\Pi}_{\mathfrak g}$ defines a Poisson bracket $\{\cdot,\cdot\}$ on $Poly(\mathfrak{su}(3))$ satisfying
    \begin{equation}\label{pois_b_pi}
	\Big\{\beta_{\deg(u)}[u],\beta_{\deg(v)}[v]\Big\} = \beta_{\deg(u)+\deg(v)-1}[uv - vu]
\end{equation}
    for every $u,v \in U(\mathfrak{sl}(3))$.
\end{proposition}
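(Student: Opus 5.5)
The plan is to reduce the identity to generators and then propagate it by the derivation property on both sides. Note first that the bracket defined by ${\Pi}_{\mathfrak g}$ on linear functions is $\{\beta_1[X],\beta_1[Y]\}=\beta_1[[X,Y]]$ for $X,Y\in\mathfrak{sl}(3)$. This is the defining property of the KAKS structure, up to the sign and normalization conventions fixed in \eqref{KAKS-bivec}, which I will check against the pairing $(X,Y)=\tr(XY)$. Since $\beta_1[X]=(X,\cdot)$ is linear and the structure constants enter linearly in $x_l$, this is a direct computation: $\{(e_j,\cdot),(e_k,\cdot)\}(Z)=(Z,[e_j,e_k])$ by ad-invariance of the trace form. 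So \eqref{pois_b_pi} holds when $\deg u=\deg v=1$. If $u$ or $v$ has degree $0$, both sides vanish, because $uv-vu=0$ and the bracket with constants is zero.

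Next I reduce to basis monomials. Writing $u=\sum_{m\le a}u_m$ with $a=\deg u$ and $u_m\in U_m$, only $u_a$ contributes on the left. On the right, $u_mv-vu_m$ has degree at most $m+\deg v-1$, because the commutator of elements of degrees $m$ and $n$ lies in $U_{\le m+n-1}$ (the associated graded of $U$ is commutative). Hence $\beta_{a+b-1}$ kills the lower parts. By bilinearity, it therefore suffices to take $u=e_{i_1}\cdots e_{i_a}\in\mathcal B_a$ and $v=e_{j_1}\cdots e_{j_b}\in\mathcal B_b$.

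For the monomials, I argue by induction on $a+b$. Write $u=e_{i_1}u'$ with $\deg u'=a-1$. Then
\[uv-vu=e_{i_1}(u'v-vu')+(e_{i_1}v-ve_{i_1})u'.\]
On the left, Proposition \ref{prop:point_p_beta} gives $\beta_a[u]=\beta_1[e_{i_1}]\beta_{a-1}[u']$, and the Leibniz rule yields
\[\{\beta_a[u],\beta_b[v]\}=\beta_1[e_{i_1}]\{\beta_{a-1}[u'],\beta_b[v]\}+\beta_{a-1}[u']\{\beta_1[e_{i_1}],\beta_b[v]\}.\]
The induction hypothesis identifies the two brackets with $\beta_{a+b-2}[u'v-vu']$ and $\beta_b[e_{i_1}v-ve_{i_1}]$. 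Applying Proposition \ref{prop:point_p_beta} again, the two terms become $\beta_{a+b-1}$ of the corresponding products. This matches the decomposition of $uv-vu$ above, using linearity of $\beta_{a+b-1}$ on $U_{\le a+b-1}$.

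The case $a=1$ still requires $\{\beta_1[e],\beta_b[v]\}=\beta_b[ev-ve]$ for general $b$. I handle it by a symmetric induction on $v$: peel off the first factor of $v$, use Leibniz on the left and the identity $ev-ve=[e,e_{j_1}]v'+e_{j_1}(ev'-v'e)$ on the right, with base case $a=b=1$.

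The main obstacle is applying Proposition \ref{prop:point_p_beta} when a factor's actual degree drops below its nominal degree, for instance when $u'v-vu'$ has degree strictly less than $a+b-2$. In that case the proposition is stated with $\deg$, not with the nominal index. I will handle this with a slightly stronger version of the proposition: for $x\in U_{\le m}$ and $y\in U_{\le n}$, $\beta_{m+n}[xy]=\beta_m[x]\beta_n[y]$. This holds because $\pi_{m+n}(xy)$ depends only on $\pi_m(x)$ and $\pi_n(y)$, by commutativity of the associated graded algebra, and both sides vanish when the degrees are smaller. With this formulation, which is what the ``easily verified'' commutation argument actually proves, the induction closes.
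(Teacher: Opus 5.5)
Your proposal is correct and is essentially the paper's argument. The paper shows that the right-hand side of \eqref{pois_b_pi} is a skew-symmetric biderivation, via Proposition \ref{prop:point_p_beta} and the identity $u\tilde uv-vu\tilde u=u(\tilde uv-v\tilde u)+(uv-vu)\tilde u$, and then matches it with $\Pi_{\mathfrak g}$ on linear coordinates; you run the same Leibniz computation as an explicit induction on degree, starting from the bracket of $\Pi_{\mathfrak g}$. Your strengthened form $\beta_{m+n}[xy]=\beta_m[x]\,\beta_n[y]$ for $x\in U_{\le m}$, $y\in U_{\le n}$ is exactly what the paper's step $\beta_{d-1}[u(\tilde uv-v\tilde u)]=\beta_{\deg(u)}[u]\,\beta_{\deg(\tilde u)+\deg(v)-1}[\tilde uv-v\tilde u]$ tacitly needs, and starting from $\Pi_{\mathfrak g}$ also spares you from checking that the right-hand side is well defined as a bracket on polynomials.
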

\begin{proof}
    It is immediate that \eqref{pois_b_pi} is skew-symmetric. We will show now that it is a biderivation. For any $u,\tilde u,v \in U(\mathfrak{sl}(3))$, let $d = \deg(u)+\deg(\tilde u)+\deg(v)$. By \eqref{point_prod_pi}, we have
    \begin{equation}
    \begin{aligned}
        \beta_{\deg(u)}[u]\beta_{\deg(\tilde u)}[\tilde u] &= \beta_{\deg(u)+\deg(\tilde u)}[u\tilde u] \ \ \implies \\
         \left\{\beta_{\deg(u)}[u]\beta_{\deg(\tilde u)}[\tilde u],\beta_{\deg(v)}[v]\right\} &= \left\{\beta_{\deg(u)+\deg(\tilde u)}[u\tilde u],\beta_{\deg(v)}[v]\right\}\\ =\beta_{d-1}[u\tilde uv-vu\tilde u] &= \beta_{d-1}[u(\tilde uv-v\tilde u)]+\beta_{d-1}[(uv-vu)\tilde u]\, ,
    \end{aligned}
    \end{equation}
    and again using \eqref{point_prod_pi},
    \begin{equation}
    \begin{aligned}
        \beta_{d-1}[u(\tilde uv-v\tilde u)] & = \beta_{\deg(u)}[u]\beta_{\deg(\tilde u)+\deg(v)-1}[\tilde uv-v\tilde u]\\
        & = \beta_{\deg(u)}[u]\left\{\beta_{\deg(\tilde u)}[\tilde u],\beta_{\deg(v)}[v]\right\}\, ,
    \end{aligned}
    \end{equation}
    \begin{equation}
    \begin{aligned}
        \beta_{d-1}[(uv-vu)\tilde u] & = \beta_{\deg(u)+\deg(v)-1}[uv-vu]\beta_{\deg(\tilde u)}[\tilde u]\\
        & = \left\{\beta_{\deg(u)}[u],\beta_{\deg(v)}[v]\right\}\beta_{\deg(\tilde u)}[\tilde u]\, ,
    \end{aligned}
    \end{equation}
    thus \eqref{pois_b_pi} is a derivation in the first coordinate. Since it is skew-symmetric, it is a biderivation.

    To finish, we will verify that \eqref{pois_b_pi} matches the Poisson bracket of ${\Pi}_{\mathfrak g}$ for linear polynomials, and the biderivation property will imply equality for polynomials of any degree. For the linear coordinates $(x_1,...,x_8)$ in the basis $\{E_1,...,E_8\}$, we have
    \begin{equation*}
    \begin{aligned}
        x_j = \tr(E_j^\dagger\, \cdot\,) &= -\tr(E_j \,\cdot \,) = -\beta_1[E_j] \ \ \implies \\
        \left\{x_j,x_k\right\} =\beta_1[E_jE_k-E_kE_j] &= \sum_{l=1}^8c^l_{jk}\beta_1[E_l] = \sum_{l=1}^8c^l_{kj}x_l = \Pi_{\mathfrak g}(dx_j,dx_k) \ . 
    \end{aligned}
    \end{equation*}
    Therefore, 
    \begin{equation}
     \{f,h\} = {\Pi}_{\mathfrak g}(df,dh) \ ,
     \end{equation}
    for every $f,h \in Poly(\mathfrak{su}(3))$.
\end{proof}

Finally, we shall also make use of the \emph{symmetrization} linear map
\begin{equation}\label{symm_def}
\begin{aligned}
    S:Poly(\mathfrak{sl}(3))&\to U(\mathfrak{sl}(3)) \, , \\
S(\beta_1[e_{j_1}]...\beta_1[e_{j_d}]) & = \dfrac{1}{d!}\sum_{f\in S_d}e_{j_{f(1)}}...e_{j_{f(d)}}\, ,
\end{aligned}    
\end{equation} 
where $S_d$ is the symmetric group.

\begin{proposition}\label{prop:S_equiv}
    The symmetrization map $S$ is equivariant. Also, for every polynomial $f \in Poly_d(\mathfrak{sl}(3))$, we have $S(f) \in U_{\le d}(\mathfrak{sl}(3))$ and $\beta_{d}[S(f)] = f$.
\end{proposition}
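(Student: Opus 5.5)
We prove the three assertions in the order: degree bound, the identity $\beta_d[S(f)]=f$, and finally equivariance, which will use the previous two together with Proposition \ref{equivarbeta}.

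\emph{Degree bound.} We work on monomials $f=\beta_1[e_{j_1}]\cdots\beta_1[e_{j_d}]$, which span $Poly_d(\mathfrak{sl}(3))$, and extend linearly. First we check that $S$ is well defined: the right-hand side of \eqref{symm_def} depends only on the multiset $\{j_1,\dots,j_d\}$, because it is a sum over all permutations. Each summand $e_{j_{f(1)}}\cdots e_{j_{f(d)}}$ is a product of $d$ elements of $\mathcal B_1$. Reordering such a product into PBW order uses commutators, and each commutator lowers the length by one. Hence every summand equals the ordered monomial $e_{k_1}\cdots e_{k_d}$, with $k_1\le\dots\le k_d$ the sorted indices, plus an element of $U_{\le d-1}(\mathfrak{sl}(3))$. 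This is the same observation used in the proof of Proposition \ref{equivarbeta}. It follows that $S(f)\in U_{\le d}(\mathfrak{sl}(3))$.

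\emph{The identity $\beta_d[S(f)]=f$.} By the decomposition just described, $\pi_d(S(f))=\frac{1}{d!}\sum_{f\in S_d} e_{k_1}\cdots e_{k_d}=e_{k_1}\cdots e_{k_d}$. Applying $\beta$ and using \eqref{beta_basis_B_infty} gives $\beta_d[S(f)]=\beta_1[e_{k_1}]\cdots\beta_1[e_{k_d}]=f$, since the product of polynomials is commutative.

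\emph{Equivariance.} Write $S$ on degree $d$ as the composite
\[
Poly_d(\mathfrak{sl}(3))\xrightarrow{\ \sigma_d\ }\operatorname{Sym}^d(\mathfrak{sl}(3))\xrightarrow{\ \mathrm{sym}\ } U(\mathfrak{sl}(3)).
\]
\begin{itemize}
\item The map $\mathrm{sym}$ sends $X_1\cdots X_d\mapsto\frac1{d!}\sum_{f\in S_d}X_{f(1)}\cdots X_{f(d)}$. It is the restriction to symmetric tensors of the multiplication map $\mathfrak{sl}(3)^{\otimes d}\to U(\mathfrak{sl}(3))$, which is multilinear and is therefore well defined for arbitrary $X_i$, not only for basis vectors. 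It commutes with $Ad_g$, because $Ad_g$ is an algebra automorphism of $U(\mathfrak{sl}(3))$ acting factorwise on products. Explicitly, $Ad_g(X_{f(1)}\cdots X_{f(d)})=Ad_g(X_{f(1)})\cdots Ad_g(X_{f(d)})$, and summing over $S_d$ gives $\mathrm{sym}$ applied to the transformed tensor.
\item The map $\sigma_d$ is the inverse of the linear isomorphism $\mathrm{Sym}^d(\mathfrak{sl}(3))\to Poly_d(\mathfrak{sl}(3))$, $X_1\cdots X_d\mapsto\beta_1[X_1]\cdots\beta_1[X_d]$. This isomorphism is equivariant because $\beta_1$ is equivariant, as shown in the proof of Proposition \ref{equivarbeta}, and the action on polynomials is multiplicative. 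Hence $\sigma_d$ is equivariant as well.
\end{itemize}
Combining the two, $S(f^g)=Ad_g(S(f))$ for every $g\in SU(3)$ and every $f\in Poly_d(\mathfrak{sl}(3))$.

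The main subtlety is the well-definedness of $S$ and the basis-independence of its description, which the equivariance argument requires. Expressing $S$ through multilinear symmetrization of arbitrary elements $X_i\in\mathfrak{sl}(3)$, rather than only through the chosen basis $\mathcal B_1$, settles both points at once.
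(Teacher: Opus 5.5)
Your proof is correct and follows essentially the paper's route. The paper proves equivariance by expanding $(\beta_1[e_{j_1}]\cdots\beta_1[e_{j_d}])^g$ in Wigner $D$-functions, using the equivariance of $\beta_1$, and regrouping each symmetrized summand as $Ad_g(e_{j_{f(1)}})\cdots Ad_g(e_{j_{f(d)}})$; this is your factorization $S=\mathrm{sym}\circ\sigma_d$ written in coordinates. The paper leaves the degree bound and $\beta_d[S(f)]=f$ as straightforward from the definition, which your PBW-reordering argument makes explicit. One small inaccuracy: you say the equivariance step will use the first two parts, but it does not — it needs only the equivariance of $\beta_1$ and the fact that $Ad_g$ is an algebra automorphism of $U(\mathfrak{sl}(3))$.
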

\begin{proof}
From Proposition \ref{equivarbeta},
\begin{equation}
    (\beta_1[e_{j_1}]...\beta_1[e_{j_d}])^g = \sum_{k_1,...,k_d}D_{k_1,j_1}^{(1,1)}(g)...D_{k_d,j_d}^{(1,1)}(g)\beta_1[e_{k_1}]...\beta_1[e_{k_d}]\, .
\end{equation}
Applying $S$, we obtain
\begin{equation}
    S((\beta_1[e_{j_1}]...\beta_1[e_{j_d}])^g) = \dfrac{1}{d!}\sum_{k_1,...,k_d}\sum_{f \in S_d}D_{k_1,j_1}^{(1,1)}(g)...D_{k_d,j_d}^{(1,1)}(g) e_{k_{f(1)}}...e_{k_{f(d)}}\, .
\end{equation}
The product of Wigner $D$-functions is obviously commutative, so
\begin{equation}
    \begin{aligned}
        & \sum_{k_1,...,k_d}D_{k_1,j_1}^{(1,1)}(g)...D_{k_d,j_d}^{(1,1)}(g) e_{k_{f(1)}}...e_{k_{f(d)}}\\
        & \hspace{3 em} = \sum_{k_1,...,k_d}\left(D_{k_{f(1)},j_{f(1)}}^{(1,1)}(g)e_{k_{f(1)}}\right)...\left(D_{k_{f(d)},j_{f(d)}}^{(1,1)}(g)e_{k_{f(d)}}\right)\\
        & \hspace{7 em} = Ad_g(e_{j_{f(1)}})...Ad_g(e_{j_{f(d)}}) = Ad_g(e_{j_{f(1)}}...e_{j_{f(d)}})\, .
    \end{aligned}
\end{equation}
Therefore, 
\begin{equation}
\begin{aligned}
S((\beta_1[e_{j_1}]...\beta_1[e_d])^g) & = \dfrac{1}{d!}\sum_{f\in S_d}Ad_g(e_{j_{f(1)}}...e_{j_{f(d)}})\\
& = Ad_g\left(S(\beta_1[e_{j_1}]...\beta_1[e_d])\right)\, .  
\end{aligned}
\end{equation}
This proves the equivariance of $S$. The remaining of the statement follows straightforwardly from the definition.
\end{proof}

We can now prove Proposition \ref{prop:c3_pol}.

\begin{proof}[Proof of Proposition \ref{prop:c3_pol}]

Since $\tau$ is homogeneous of degree $3$, we have $\tau = \beta_3[S(\tau)]$ from Proposition \ref{prop:S_equiv}. From equivariance of $S$ and $\beta_3$, cf. Propositions \ref{equivarbeta} and \ref{prop:S_equiv}, $\tau$ is fixed by $SU(3)$ if and only if $S(\tau)$ is fixed by $SU(3)$. But $S(\tau)$ is (proportional to) the cubic Casimir operator of $\mathfrak{sl}(3)$, cf. (I.B.3) or e.g. \cite[eq. (7.31)]{grein}, so this completes the proof of $SU(3)$-invariance for $\tau$.

For the separation of orbits, note that
\begin{equation}
    \tau(\xi_{(x,y)}) = 2x^3+3x^2y-3xy^2-2y^3
\end{equation}
for every $\xi_{(x,y)}\in \overline{\mathcal F}$. Taking
\begin{equation}\label{tau_F_elipse}
    f(x,y) = 2x^3+3x^2y-3xy^2-2y^3\ , \ \ h(x,y) = x^2+xy+y^2\, ,
\end{equation}
the critical points of $\tau|_{\overline{\mathcal F}}$ are the critical points of the restriction of $f$ to the ellipse $h=1$ in the first quadrant, cf. \eqref{par_xy}. By the method of Lagrange multipliers, we want to solve for $\lambda \in \mathbb R$ and $x,y\ge 0$, the system:
\begin{equation}
    \begin{cases}
        2x^2+2xy-y^2 = \lambda(2x+y)\\
        x^2-2xy-2y^2 = \lambda(x+2y)
    \end{cases}\implies \hspace{1 em} x^2-y^2 = \lambda(x+y)\, .
\end{equation}
There are two kind of solutions: $x+y = 0$, which lies outside the first quadrant, and $x+y \ne 0$, which implies $xy = 0$, meaning the critical point must be an endpoint of $\overline{\mathcal F}$. Therefore, $\tau|_{\overline{\mathcal F}}$ is injective.
\end{proof}

\begin{remark}
    As a homogeneous cubic polynomial, $\tau$ is an odd function, so
    \begin{equation}
        \tau(-\xi_{(x,y)}) = -\tau(\xi_{(x,y)})\, .
    \end{equation}
    In particular, $\tau$ vanishes on the mesonic orbit, cf. (I.2.76). This is aligned with the fact that the cubic Casimir operator $S(\tau)$ assumes the form $C(p,q)\mathds 1$ in the representation $(p,q)$ with $C(p,q) = -C(q,p)$.
\end{remark}

\subsection{Universal correspondences for general quark systems}

For a dominant weight $\omega$ of $\mathfrak{su}(3)$, let $\mathcal H_\omega$ be a quark system with highest weight $\omega$.\footnote{For simplicity, we shall often denote a quantum quark system with highest weight $\omega=\omega_{(p,q)}$, cf. Definitions I.4.6 and I.5.6, simply by its Hilbert space $\mathcal H_\omega$.} If $\omega$ is proportional to $\xi\in \overline{\mathcal Q}$, that is, if $\omega = \norm{\omega}\,\xi$, then the quantum quark system $\mathcal H_\omega$ admits symbol correspondences to $\mathcal O_\xi$, cf. Theorems I.4.8 and I.5.9. As suggested by the previous subsection, it will be useful to work on the universal algebra, so we pullback symbol correspondences to $U(\mathfrak{sl}(3))$ via the irreducible representation $\rho_{\omega}:U(\mathfrak{sl}(3))\to \mathcal B(\mathcal H_{\omega})$ of the universal enveloping algebra on $\mathcal H_\omega$ which is induced by the irreducible representation of $SU(3)$ on $\mathcal H_{\omega}$ in the natural way.

\begin{definition}\label{def:univ_corresp}
Given a dominant weight $\omega = \norm{\omega}\,\xi$, with $\xi \in \overline{\mathcal Q}$, a \emph{universal correspondence for $\omega$}, or simply a \emph{universal correspondence}, is a map
\begin{equation}
	w: U(\mathfrak{sl}(3))\to Poly(\mathcal O_\xi): u\mapsto w[u]
\end{equation}
that factors through a symbol correspondence $W^\omega:\mathcal B(\mathcal H_\omega)\to Poly(\mathcal O_\xi)$ and the irrep $\rho_\omega$ of $U(\mathfrak{sl}(3))$ on $\mathcal H_\omega$, as shown in the diagram below:
\begin{equation}\label{univ_corresp_w}
	\begin{tikzcd}
		U(\mathfrak{sl}(3)) \arrow[r, "\rho_\omega"] \arrow[rr, bend right, "w"'] & \mathcal B(\mathcal H_\omega) \arrow[r, "W^\omega"] & Poly(\mathcal O_\xi)
	\end{tikzcd}
\end{equation}
\end{definition}

\begin{remark}\label{infdimker}
Since $U(\mathfrak{sl}(3))$ is infinite dimensional and $\mathcal B(\mathcal H_\omega)$ is finite dimensional, each $\rho_{\omega}$ (and hence also $w$) has an infinite dimensional kernel, which is a primitive ideal of $U(\mathfrak{sl}(3))$ by definition. Thus, universal correspondences are particular instances of equivariant linear maps $U(\mathfrak{sl}(3))\to Poly(\mathcal O_\xi)$ whose kernels are elements of $\mathrm{Prim}\,U(\mathfrak{sl}(3))$.
\end{remark}

\begin{remark}\label{rmk:mq-only}
    In this way, according to Definition \ref{def:univ_corresp}, for a classical mixed-quark system $\mathcal O_\xi \simeq \mathcal E$, $\xi \in \mathcal Q$, we only consider correspondences from mixed-quark systems $\mathcal H_\omega$, with $\omega=\omega_{(p,q)}$ satisfying $pq\ne 0$, cf. Definition I.5.6.
\end{remark}

A family of correspondences of particular interest to us is the Berezin family. The projection $\Pi_>\in \mathcal B(\mathcal H_\omega)$ onto the highest weight subspace is an operator kernel that gives a Berezin correspondence $B^\omega:\mathcal B(\mathcal H_\omega)\to Poly(\mathcal O_\xi)$, $A\mapsto B^\omega_A$ ,  s.t. 
\begin{equation}\label{ber_corresp}
	B_A^\omega(Ad_g(\xi)) = \tr(A\Pi_>^g)\, ,
\end{equation}
for $\omega = \norm{\omega}\,\xi$, cf. Proposition I.4.19, Remark I.4.20 and Theorem I.5.24.

\begin{definition}\label{defn:univBerezin}
	Given a dominant weight $\omega= \norm{\omega}\,\xi$, $\xi \in \overline{\mathcal Q}$, the \emph{universal Berezin correspondence  for $\omega$} is the universal correspondence $b:U(\mathfrak{sl}(3))\to Poly(\mathcal O_\xi)$ obtained from $B^\omega$ given by \eqref{ber_corresp} according to Definition \ref{def:univ_corresp}.
\end{definition}

The map $\beta$ obtained from PBW theorem is very pertinent to describe universal Berezin correspondences.

\begin{proposition}\label{prop:B=b}
	The universal Berezin correspondence $b:U(\mathfrak{sl}(3))\to Poly(\mathcal O_\xi)$ for $\omega = \norm{\omega}\,\xi$, $\xi \in \overline{\mathcal Q}$, is given by
	\begin{equation*}
		b[u](Ad_g(\xi)) = \beta\!\left[Ad_{g^{-1}}(u)\right]\!(-i\omega)
	\end{equation*}
	for every $u\in U(\mathfrak{sl}(3))$ and $g \in SU(3)$.
\end{proposition}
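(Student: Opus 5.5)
First reduce to $g=e$. By \eqref{ber_corresp} and Definition \ref{def:univ_corresp}, $b[u](Ad_g(\xi))=\tr(\rho_\omega(u)\Pi_>^g)$, where $\Pi_>^g=\rho_\omega(g)\Pi_>\rho_\omega(g)^{-1}$. Cyclicity of the trace and $\rho_\omega(g)^{-1}\rho_\omega(u)\rho_\omega(g)=\rho_\omega(Ad_{g^{-1}}(u))$ give
\begin{equation*}
 b[u](Ad_g(\xi))=\tr\!\big(\rho_\omega(Ad_{g^{-1}}(u))\,\Pi_>\big)=\langle \omega|\rho_\omega(Ad_{g^{-1}}(u))|\omega\rangle ,
\end{equation*}
with $|\omega\rangle$ the normalized highest weight vector. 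It therefore suffices to prove the $g=e$ identity
\begin{equation*}
\langle\omega|\rho_\omega(v)|\omega\rangle=\beta[v](-i\omega)\qquad \forall\, v\in U(\mathfrak{sl}(3)).
\end{equation*}

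Both sides are linear in $v$, so I will check it on the PBW basis $\mathcal B_\infty$. For $e_{j_1}\cdots e_{j_d}$ with $j_1\le\dots\le j_d$, the ordering \eqref{orderingB1} puts the lowering operators $e_1,e_2,e_3$ first, then the raising operators $e_4,e_5,e_6$, then the Cartan elements $e_7,e_8$. If some factor is a raising operator, move rightwards past the Cartan factors: either the product already ends in a raising operator, or the Cartan factors, which act on $|\omega\rangle$ by scalars, come first and a raising operator then annihilates $|\omega\rangle$. In both cases the expectation vanishes. Similarly, if some factor is a lowering operator, take adjoints. Since $\rho_\omega(T_-)^\dagger=\rho_\omega(T_+)$ and so on, the leftmost lowering operator becomes a raising operator acting on $\langle\omega|$, which annihilates it, so the expectation is again $0$. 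On the polynomial side, $\beta[e_{j_1}\cdots e_{j_d}](-i\omega)=\prod_k (e_{j_k},-i\omega)$. Since $\omega$ lies in the Cartan subalgebra and the trace form pairs root vectors only with root vectors of opposite root, $(e_j,-i\omega)=0$ for $j\le 6$. So both sides vanish unless every factor is Cartan.

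For the remaining monomials $e_7^ae_8^b$, the left side is $\lambda_7^a\lambda_8^b$, where $\rho_\omega(e_j)|\omega\rangle=\lambda_j|\omega\rangle$. The right side is $(e_7,-i\omega)^a(e_8,-i\omega)^b$, so it is enough to show $\lambda_j=(e_j,-i\omega)$ for $j=7,8$. This is the degree-one case, and it follows from \eqref{charac-omega_p}: $2T_3$ and $2U_3$ have eigenvalues $p,q$ on $|\omega\rangle$, and by \eqref{charac-omega_p} $\beta[2T_3](-i\omega)=p$ and $\beta[2U_3](-i\omega)=q$. Writing $e_7,e_8$ as the linear combinations of $T_3,U_3$ given in \eqref{orderingB1} then yields the claim by linearity.

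The main obstacle is this degree-one normalization: the factor $-i$ and the sign conventions in $(\cdot,\cdot)$ must agree with the eigenvalue convention for $\rho_\omega$ on the Cartan part. I will take \eqref{charac-omega_p} as settling it, and otherwise verify it directly from \eqref{fundweights} and \eqref{high-weight}. A second point to check is that the signs and scalings in the GT basis (such as $e_2=-U_-$) do not matter, because they only rescale vanishing terms.
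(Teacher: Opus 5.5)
Your proof is correct and follows the paper's own route. The paper likewise writes $b[u](Ad_g(\xi))$ as the highest-weight expectation $\langle \omega|\rho_\omega(Ad_{g^{-1}}(u))|\omega\rangle$ and then evaluates it by expanding in the PBW basis $\mathcal B_\infty$ and using \eqref{charac-omega_p}. You additionally spell out the step the paper leaves implicit: with the ordering (lowering, raising, Cartan), every PBW monomial containing a root vector vanishes on both sides, so only the Cartan monomials $e_7^ae_8^b$ remain, and these match by \eqref{charac-omega_p}.
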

\begin{proof}
	Let $\vb*e_>\in \mathcal H_\omega$ be a highest weight unit vector. By definition,
	\begin{equation}
		b[u](Ad_g(\xi)) = \ip{\vb*e_>}{\rho_\omega\left(Ad_{g^{-1}}(u)\right)\vb*e_>} = \beta\!\left[Ad_{g^{-1}}(u)\right]\!(-i\omega)\, ,
	\end{equation}
	where the last equation comes from decomposing $Ad_{g^{-1}}(u)$ in the basis $\mathcal B_\infty$, cf. \eqref{univ_basis}, and using \eqref{charac-omega_p}.
\end{proof}

By construction, a universal correspondence $w: U(\mathfrak{sl}(3))\to Poly(\mathcal O_\xi)$ for the weight $\omega = \norm{\omega}\,\xi$, $\xi \in\overline{\mathcal Q}$, induces a twisted product $\star$ on the image of $w$ by
\begin{equation}
	w[u]\star w[v] = w[uv]
\end{equation}
for every $u,v \in U(\mathfrak{sl}(3))$, so that, recalling Remark \ref{infdimker}, this is the same product induced by the symbol correspondence $W:\mathcal B(\mathcal H_\omega)\to Poly(\mathcal O_\xi)$ that generates $w$. With that in mind, we also import the notion of Stratonovich-Weyl correspondences for the universal ones.

\begin{definition}\label{def:univ_sw}
A universal correspondence $w: U(\mathfrak{sl}(3))\to Poly(\mathcal O_\xi)$ is of type \emph{Stratonovich-Weyl} if, for every $u,v \in U(\mathfrak{sl}(3))$, 
\begin{equation}
	\int_{\mathcal O_\xi}w[uv](\vb*\varsigma)d\vb*\varsigma =  \int_{\mathcal O_\xi}w[u](\vb*\varsigma)w[v](\vb*\varsigma)d\vb*\varsigma \ . 
\end{equation}
\end{definition}

Thus, Proposition I.3.13 translates for universal correspondences as:

\begin{theorem}
	No universal Berezin correspondence is Stratonovich-Weyl.
\end{theorem}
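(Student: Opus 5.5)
The plan is to reduce the statement to Proposition I.3.13, which says that the Berezin symbol correspondence $B^\omega$ is not Stratonovich-Weyl. The only real work is to check that the universal notion in Definition \ref{def:univ_sw} coincides with the operator-level Stratonovich-Weyl property of the symbol correspondence that generates $w$.

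First I fix a dominant weight $\omega=\norm{\omega}\,\xi$ and let $b=B^\omega\circ\rho_\omega$ be the universal Berezin correspondence of Definition \ref{defn:univBerezin}. The key observation is that $\rho_\omega:U(\mathfrak{sl}(3))\to\mathcal B(\mathcal H_\omega)$ is surjective, by Burnside's theorem, because $\mathcal H_\omega$ is irreducible. It is also an algebra homomorphism, so $\rho_\omega(uv)=\rho_\omega(u)\rho_\omega(v)$. Hence the condition of Definition \ref{def:univ_sw} for $b$ reads
\[
\int_{\mathcal O_\xi}B^\omega_{AB}\,d\vb*\varsigma=\int_{\mathcal O_\xi}B^\omega_A\,B^\omega_B\,d\vb*\varsigma \quad \forall A,B\in\mathcal B(\mathcal H_\omega).
\]
Surjectivity is what lets $A=\rho_\omega(u)$ and $B=\rho_\omega(v)$ range over all operators.

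Next I compare this with the Stratonovich-Weyl condition of Paper I. That condition is the isometry property $\ip{W_A}{W_B}_\xi=c\,\tr(A^\dagger B)$, with the normalization of the Haar measure fixed so that $c=1$. Two facts convert it to the form above: equivariance gives $\int_{\mathcal O_\xi}W_C\,d\vb*\varsigma\propto\tr(C)$, and the reality condition gives $W_{A^\dagger}=\overline{W_A}$. So Stratonovich-Weyl in the sense of Paper I is equivalent to $\int W_{AB}=\int W_AW_B$ for all $A,B$, after replacing $A$ by $A^\dagger$. With matching normalizations, the universal property for $b$ is therefore exactly the Stratonovich-Weyl property for $B^\omega$.

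Finally, Proposition I.3.13 says $B^\omega$ is never Stratonovich-Weyl, so $b$ is not either, for every $\omega$.

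The main obstacle I expect is making the normalizations consistent. The measure $d\vb*\varsigma$ in \eqref{Haarip} comes from Haar measure, and this must match the convention under which Paper I defines Stratonovich-Weyl correspondences; otherwise the equivalence holds only up to a constant. If Proposition I.3.13 was proved via characteristic numbers, I would instead argue directly. I would take $u$ in a nontrivial isotypic component and $v=u^\ast$. Then $\int B^\omega_{A}\overline{B^\omega_A}<\int\tr$-pairing, because Berezin characteristic numbers have modulus strictly less than $1$ for nontrivial irreps. That gives a strict inequality, contradicting the identity in Definition \ref{def:univ_sw}.
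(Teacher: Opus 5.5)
Your proposal is correct and follows the paper's route. The paper states this theorem simply as the translation of Proposition I.3.13 to universal correspondences, and your argument supplies the details that translation leaves implicit: surjectivity of $\rho_\omega$ (Burnside), multiplicativity of $\rho_\omega$, and the equivalence, via the normalization and reality axioms, of $\int W_{AB}\,d\varsigma=\int W_A W_B\,d\varsigma$ with the isometry form of the Stratonovich--Weyl condition. Your normalization worry is harmless, since the condition in Definition \ref{def:univ_sw} is invariant under rescaling the measure, so the fallback via characteristic numbers is not needed.
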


\section{Asymptotic analysis for general quark systems}\label{sec:asymp_anal}

The first problem we face in order to work out semiclassical analysis for quark systems is the identification of pertinent sequences of quantum quark systems\footnote{Recall that each quantum quark system is identified by a pair of natural numbers, $(p,q)$, so in principle we could have bi-sequences of such systems, and this will be explored in the next section.}.  We need to find some principle that recovers the case of spin systems, where this problem does not exist at all, cf. \cite{RS}. Such a principle should align with the fact that the orbits $\mathcal O_{(1,0)}$ and $\mathcal O_{(0,1)}$, being isomorphic to $\mathbb CP^2$, correspond to classical pure-quark systems and only admit correspondences from quantum pure-quark systems $(p,0)$ and $(0,p)$. This restriction, together with Definition \ref{def:univ_corresp}, points to a reasonable principle: given $\xi \in\overline{\mathcal Q}$ so that $\mathcal O_\xi\subset\{\mathcal S^7,\widehat\Pi_{\mathfrak g}\}$, we shall look at the ray from the origin in the direction of $\xi$, in the lattice of dominant weights; or in other words, we shall look at the sequence of dominant weights $(s\omega_\xi)_{s\in\mathbb N}$.

\subsection{Rays of universal correspondences: fuzzy orbits}

\begin{definition}\label{def:pois_ray}
	Given $\xi \in\overline{\mathcal Q}$ so that $\mathcal O_\xi\subset\{\mathcal S^7,\widehat\Pi_{\mathfrak g}\}$, a \emph{ray of symbol correspondences attached to $\xi$}, or in short, a  \emph{$\xi$-ray of symbol correspondences} is a sequence of symbol correspondences $$\left(W^{s\omega_\xi}:\mathcal B(\mathcal H_{s\omega_\xi})\to Poly(\mathcal O_\xi)\right)_{s\in \mathbb N} \ ,$$ cf. Definition \ref{intrad-defn}. Accordingly, a \emph{$\xi$-ray of universal correspondences}   is a sequence $$\left(w^s_\xi:U(\mathfrak{sl}(3))\to Poly(\mathcal O_\xi)\right)_{s\in \mathbb N} \ ,$$ where each $w^s_\xi$ is an universal correspondence for $s\,\omega_\xi$ according to Definition \ref{def:univ_corresp}. If $(\mathcal W^s_\xi)_{s\in\mathbb N}$ denotes the sequence of images of $(W^{s\omega_\xi})_{s\in\mathbb N}$ or $(w^s_\xi)_{s\in\mathbb N}$, we have the induced \emph{$\xi$-ray of twisted products} $(\star_\xi^s)_{s\in\mathbb N}$, where each $\star_\xi^s: \mathcal W^s_\xi\times\mathcal W^s_\xi\to\mathcal W^s_\xi$ is given by 
\begin{equation}
w^s_\xi[u]\star^s_\xi w^s_\xi[v]=w^s_\xi[uv] \ , \ \ \forall u,v\in U(\mathfrak{sl}(3)) \ . 
\end{equation}
Then, the pair sequence $\big(\mathcal W^s_\xi,\star_\xi^s\big)_{\!s\in\mathbb N}$ shall be called a    \emph{$\xi$-ray of twisted algebras}, or in short a 
\emph{fuzzy $\xi$-orbit}, denoted 
\begin{equation}\label{fuzzyOxi}
\mathfrak{W}(\mathcal O_\xi)= \big(\mathcal W^s_\xi,\star_\xi^s\big)_{\!s\in\mathbb N} \ .
\end{equation}
\end{definition}

The restriction stated in Remark \ref{rmk:mq-only}, that we do not consider correspondences from quantum pure-quark systems to classical mixed-quark systems, now reverberates in the fact that the sequence of images of a $\xi$-ray of correspondences, $(\mathcal W^s_\xi)_{s\in\mathbb N}$,  asymptotically covers $Poly(\mathcal O_\xi)$, as shown in Lemma \ref{lemma:f_image_orbit} below.

\begin{notation}\label{not:sub-irrep}
	For a linear space $V$ carrying a representation of $SU(3)$, we denote by $V^{\vb*a}$ the maximal invariant subspace of $V$ where $SU(3)$ acts via (copies of) the irrep $\vb*a$.
\end{notation}

\begin{lemma}\label{lemma:f_image_orbit}
    Given $\xi \in \overline{\mathcal Q}$ and $f \in Poly(\mathcal O_\xi)$, there exists $s_0\in\mathbb N$ such that, for every fuzzy $\xi$-orbit, we have $f\in\mathcal W^s_\xi$, $\forall s\ge s_0$.
\end{lemma}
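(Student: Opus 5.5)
The plan is to reduce the statement to a comparison of $SU(3)$-isotypic components. By Paper I (Theorems I.4.8 and I.5.9 and the surrounding results), the image $\mathcal W^s_\xi$ of any symbol correspondence $W^{s\omega_\xi}:\mathcal B(\mathcal H_{s\omega_\xi})\to Poly(\mathcal O_\xi)$ is an $SU(3)$-invariant subspace of $C^\infty_{\mathbb C}(\mathcal O_\xi)$. Moreover, it is isomorphic as a representation to $\mathcal B(\mathcal H_{s\omega_\xi})\simeq (sp_1,sq_1)\otimes(sq_1,sp_1)$. For each irrep $\vb*a$, the subspace $(\mathcal W^s_\xi)^{\vb*a}$, cf. Notation \ref{not:sub-irrep}, is then contained in $C^\infty_{\mathbb C}(\mathcal O_\xi)^{\vb*a}$, which is finite dimensional. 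Equivariance and injectivity of $W^{s\omega_\xi}$ give $\dim(\mathcal W^s_\xi)^{\vb*a}=\dim\mathcal B(\mathcal H_{s\omega_\xi})^{\vb*a}$. So once the multiplicity of $\vb*a$ in $\mathcal B(\mathcal H_{s\omega_\xi})$ equals its multiplicity in $C^\infty_{\mathbb C}(\mathcal O_\xi)$, we get $(\mathcal W^s_\xi)^{\vb*a}=C^\infty_{\mathbb C}(\mathcal O_\xi)^{\vb*a}$. This holds regardless of which correspondence is chosen, which is why $s_0$ will not depend on the fuzzy orbit.

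First I would decompose $f\in Poly(\mathcal O_\xi)$ into finitely many isotypic pieces. Since $f$ is the restriction of a polynomial of degree at most $d$, we have $f\in Poly_{\le d}(\mathcal O_\xi)$. This space is invariant and finite dimensional, being the image of the invariant space $Poly_{\le d}(\mathfrak{su}(3))$ under restriction. Hence $f=\sum_{\vb*a\in\Lambda}f_{\vb*a}$, where $\Lambda$ is a finite set of irreps occurring in $Poly_{\le d}(\mathfrak{su}(3))$. In fact every such $\vb*a$ occurs in $\otimes^{\le d}(1,1)$, so $\vb*a=(a,a')$ with $a+a'\le 2d$ and $a\equiv a' \pmod 3$. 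It then suffices to show that for each fixed $\vb*a$ there is $s_{\vb*a}$ such that, for all $s\ge s_{\vb*a}$, the multiplicity of $\vb*a$ in $(sp_1,sq_1)\otimes(sq_1,sp_1)$ equals its multiplicity in $C^\infty_{\mathbb C}(\mathcal O_\xi)$. We then take $s_0=\max_{\vb*a\in\Lambda}s_{\vb*a}$.

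For the multiplicities on the orbit I would recall from Paper I (Proposition I.4.2 and its mixed analogue) that, by Frobenius reciprocity, the multiplicity of $\vb*a$ in $C^\infty_{\mathbb C}(\mathcal O_\xi)$ equals the dimension of the space of vectors in $\vb*a$ fixed by the isotropy subgroup. For generic $\xi$ the isotropy is the torus, so this is the dimension of the zero-weight space of $\vb*a$. For the nongeneric $\xi\in\overline{\mathcal F}\setminus\mathcal F$ it is $U(2)$, giving multiplicity $1$ exactly for $\vb*a=(n,n)$. On the operator side, the multiplicity of $\vb*a$ in $\omega\otimes\widecheck\omega$ with $\omega=(sp_1,sq_1)$ equals the dimension of the subspace of the zero-weight space of $\vb*a$ cut out by the PRV/Kostant-type conditions $E_\alpha^{\langle\omega,\alpha^\vee\rangle+1}v=0$ for the simple roots $\alpha$, which are $(sp_1+1)$-th and $(sq_1+1)$-th powers. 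Paper I already uses this in deriving Theorems I.4.8 and I.5.9 through the Clebsch–Gordan series.

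The main obstacle is the stabilization step. For large $s$ these exponents exceed any power that can act nontrivially on a zero-weight vector of $\vb*a$; a bound like $a+a'$ suffices. The conditions then become vacuous exactly on the directions where $p_1,q_1>0$. So for $\xi\in\mathcal Q$, once $s\min(p_1,q_1)>a+a'$, the multiplicity of $\vb*a$ in $\mathcal B(\mathcal H_{s\omega_\xi})$ is the full zero-weight dimension, which is the orbit multiplicity. Here Remark \ref{rmk:mq-only} is essential, since pure-quark $\omega$ would never reach it. When $q_1=0$ (or $p_1=0$), only one condition becomes vacuous. I would instead use the classical Pieri-type decomposition $(sp_1,0)\otimes(0,sp_1)=\bigoplus_{n=0}^{sp_1}(n,n)$, which matches the $\mathbb C P^2$ multiplicities for $n\le sp_1$. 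If the PRV citation feels too heavy, I would fall back on explicit Clebsch–Gordan counting as in Paper I, via the Littlewood–Richardson rule for $(sp_1,sq_1)\otimes(sq_1,sp_1)$ restricted to $a+a'\le 2d$. In either case the final step is the same: each $f_{\vb*a}\in(\mathcal W^s_\xi)^{\vb*a}$, hence $f\in\mathcal W^s_\xi$ for all $s\ge s_0$ and every fuzzy $\xi$-orbit.
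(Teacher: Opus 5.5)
Your proposal is correct and follows the paper's proof. Both reduce to a single isotypic component $(a,b)$ and observe that the image of any correspondence contains all of $Poly(\mathcal O_\xi)^{(a,b)}$ once the multiplicity of $(a,b)$ in $\mathcal B(\mathcal H_{s\omega_\xi})$ equals its multiplicity on the orbit. Nongeneric orbits are handled in both via $(s,0)\otimes(0,s)=\bigoplus_{n=0}^s(n,n)$. The only difference is that the paper reads the stabilization off Paper I's multiplicity formulas (I.5.17)--(I.5.18), giving $s_0=\max\{a,b\}$. You instead derive it from the PRV/Kostant criterion, which gives a slightly cruder bound that is still independent of $\xi$.
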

\begin{proof}
    Without loss of generality, we assume $f \in Poly(\mathcal O_\xi)^{(a,b)}$. If $\xi \in \mathcal Q$, recall \eqref{defr}, hence $(sp_1,sq_1) \equiv s\, \omega_\xi$. Then, we can conclude $f$ is in the image of $w^s_\xi$ if
    \begin{equation}
        \mathfrak m(sp_1,sq_1;a,b) = m(a,b)\, ,
    \end{equation}
    cf. Notation I.5.8. From (I.5.17)-(I.5.18), it is sufficient to have
    \begin{equation}
        s\min\{p_1,q_1\}\ge \max\{a,b\}\, .
    \end{equation}

    Therefore, it is sufficient to take 
    \begin{equation}\label{suff-s0}
       s_0=\max\{a,b\} \ ,  
    \end{equation}
    to have $f$ in the image of $w^s_\xi$  for every $s\ge s_0$. Now, if $\xi \in \overline{\mathcal Q}\setminus \mathcal Q$, then $f$ is nonzero only if $a = b$. Since $(a,a)$ is multiplicity free in $Poly(\mathcal O_\xi)$, and
    \begin{equation}
        (s,0)\otimes (0,s) = \bigoplus_{n=0}^s(n,n)\, ,
    \end{equation}
    $f$ is in the image of $w^s_\xi$ for every $s\ge \max\{a,b\}$, cf. \eqref{suff-s0}.
\end{proof}

In view of the previous lemma, we introduce the following. 
    
\begin{definition}\label{PoissonOxi}
    A $\xi$-ray of (symbol or universal) correspondences is of \emph{Poisson type}, or equivalently a fuzzy $\xi$-orbit 
    is of \emph{Poisson type}, if the $\xi$-ray of twisted products $(\star^s_\xi)_{s\in \mathbb N}$ satisfies
	\begin{eqnarray}\label{Poisson-eq}
		\lim_{s\to\infty}\norm{f_1\star^s_\xi f_2-f_1f_2}_\xi &=& 0 \ , \\ \lim_{s\to\infty}\norm{sr(\xi)[f_1,f_2]_{\star^s_\xi}-i\{f_1,f_2\}}_\xi &=& 0 \ , \label{Poisson-eq-2}
	\end{eqnarray}
    for every $f_1,f_2 \in Poly(\mathcal O_\xi)$, where $[\cdot,\cdot]_{\star^s_\xi}$ is the commutator of $\star^s_\xi$ and $\{\cdot,\cdot\}$ is the Poisson bracket defined by $\Omega_\xi=\widehat\Pi_{\mathfrak g}|_{\mathcal O_\xi}$, cf. \eqref{foliS7}. In this case, we denote
    \begin{equation}\label{WtoP}
        \mathfrak{W}(\mathcal O_\xi) \xrightarrow{ \ \sim \ } Poly(\mathcal O_\xi,\Omega_\xi) \ ,  
    \end{equation}
    cf. \eqref{fuzzyOxi}, 
    where the r.h.s.~denotes the Poisson algebra of polynomials on $\mathcal O_\xi$. 
\end{definition}

\begin{remark}\label{rmk:sp_resc}
    The effective asymptotic parameter for the commutator $[\cdot,\cdot]_{\star_\xi^s}$ , for each rational orbit $\mathcal O_\xi\subset \{\mathcal S^7,\widehat\Pi_{\mathfrak g}\}$, is actually $sr(\xi)$, cf. \eqref{Poisson-eq-2}, where $s\in \mathbb N$ and $r(\xi)$ is the integral radius of $\xi$, that is, $\omega_\xi=r(\xi)\xi$, cf. Definition \ref{intrad-defn}. However, since $r(\xi)$ is fixed, for each $\xi$-ray of correspondences $(w_\xi^s)_{s\in\mathbb N}$, we can consider $s\in \mathbb N$ as the single asymptotic parameter in \eqref{Poisson-eq-2}, $\forall\xi\in\overline{\mathcal Q}$, since $s\in \mathbb N$ is also the single asymptotic parameter for the product, $\forall\xi\in\overline{\mathcal Q}$, cf. \eqref{Poisson-eq}. This helps to read all asymptotic limits studied as limits of sequences just indexed by $s\in\mathbb N$, $\forall\xi\in\overline{\mathcal Q}$. But in this way, when considering every $\xi\in\overline{\mathcal Q}$ together, in Section \ref{sec:corresp_sphere}, we shall have to resort to a $\xi$-dependent rescaling of the commutator of the twisted product, as in  Definition \ref{def:penc_magoo-p}, further below.
\end{remark}

With the aim of verifying conditions for Poisson, we first explore rays of Berezin correspondences, reproducing some results of \cite{karab} for the particular setting of $SU(3)$. This will make clear a sufficient condition for a $\xi$-ray of universal correspondences to be of Poisson type, and then we will prove this condition is also necessary.

\subsection{On the asymptotics of Berezin fuzzy orbits}

\begin{definition}\label{Berezin-fuzzy}
    For any orbit $\mathcal O_\xi\subset\mathcal S^7$, its $\xi$-ray of twisted algebras defined by the $\xi$-ray of universal Berezin correspondences $(b^s_\xi)_{s\in \mathbb N}$, cf.~Definition \ref{defn:univBerezin} and Proposition \ref{prop:B=b}, and Definition \ref{def:pois_ray}, is called the \emph{Berezin fuzzy $\xi$-orbit}, denoted $\mathfrak{B}(\mathcal O_\xi)$.
\end{definition}

Then, for $\mathfrak{B}(\mathcal O_\xi)$, $\xi \in \overline{\mathcal Q}$, and every $s\in\mathbb N$, consider the \emph{error map} 
\begin{equation}\label{error_map}
\begin{aligned}
    \varepsilon^s_\xi&:U(\mathfrak{sl}(3))\to Poly(\mathcal O_\xi), \ u\mapsto \varepsilon^s_\xi[u] \ , \\
	\varepsilon^s_\xi[u]&:=(sr(\xi))^{-\deg(u)}b^s_\xi[u]-(-i)^{\deg(u)}\beta_{\deg(u)}[u]\big|_{\mathcal O_\xi}\, .
\end{aligned}    
\end{equation}

\begin{proposition}\label{prop:error_conv}
	Every error map $\varepsilon^s_\xi$ is $SU(3)$-equivariant. Also, given $u\in U(\mathfrak{sl}(3))$, there exists $M(u)\ge 0$ such that
    \begin{equation}\label{unif-error}
        \norm{\varepsilon^s_\xi[u]}_\xi\le \dfrac{M(u)}{sr(\xi)} \leq\dfrac{1}{s}\sqrt{\dfrac{3}{2}}M(u) \ ,
    \end{equation}
    for every $\xi \in \overline{\mathcal Q}$ and $1< s\in \mathbb N$.
\end{proposition}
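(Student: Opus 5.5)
Equivariance first. By Proposition \ref{prop:B=b}, $b^s_\xi[Ad_h(u)](Ad_g(\xi))=\beta[Ad_{g^{-1}h}(u)](-is\omega_\xi)=b^s_\xi[u](Ad_{h^{-1}g}(\xi))$, so $b^s_\xi$ is equivariant. Since $\deg(Ad_h u)=\deg(u)$ (because $U_{\le d}(\mathfrak{sl}(3))$ is invariant), Proposition \ref{equivarbeta} shows that $u\mapsto\beta_{\deg(u)}[u]|_{\mathcal O_\xi}$ is equivariant as well. Hence $\varepsilon^s_\xi$, being a combination of equivariant maps with $\deg(u)$-dependent but $Ad$-invariant coefficients, is equivariant.

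For the bound I will use the formula of Proposition \ref{prop:B=b} together with homogeneity. Write $d=\deg(u)$ and $v_g:=Ad_{g^{-1}}(u)=\sum_{m=0}^d \pi_m(v_g)$, with $\pi_m(v_g)\in U_m(\mathfrak{sl}(3))$. Since $\beta[\pi_m(v_g)]$ is homogeneous of degree $m$ and $-is\omega_\xi=s r(\xi)(-i\xi)$, we get
\begin{equation*}
(sr(\xi))^{-d}\, b^s_\xi[u](Ad_g\xi)=\sum_{m=0}^d (sr(\xi))^{m-d}\,\beta[\pi_m(v_g)](-i\xi).
\end{equation*}
The $m=d$ term is $\beta_d[Ad_{g^{-1}}u](-i\xi)=(-i)^d\beta_d[u](Ad_g\xi)$, using the equivariance of $\beta_d$ and homogeneity. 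This is exactly the subtracted term in \eqref{error_map}, so
\begin{equation*}
\varepsilon^s_\xi[u](Ad_g\xi)=\sum_{m=0}^{d-1}(sr(\xi))^{m-d}\,\beta[\pi_m(Ad_{g^{-1}}u)](-i\xi).
\end{equation*}

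It remains to bound the lower-order terms uniformly in $g\in SU(3)$ and $\xi\in\mathcal S^7$. The maps $g\mapsto \pi_m(Ad_{g^{-1}}u)$ are continuous into the finite-dimensional space $U_{\le d}(\mathfrak{sl}(3))$, since $Ad$ acts there by a finite-dimensional representation. Likewise, $(v,X)\mapsto\beta[v](-iX)$ is continuous on $U_{\le d}\times\mathcal S^7$. By compactness of $SU(3)\times\mathcal S^7$, there is $C_m(u)$ with $|\beta[\pi_m(Ad_{g^{-1}}u)](-i\xi)|\le C_m(u)$ for all $g$ and all $\xi\in\mathcal S^7$. Since $sr(\xi)\ge s\sqrt{2/3}>1$ for $s>1$ (Remark \ref{minradunb}), we have $(sr(\xi))^{m-d}\le (sr(\xi))^{-1}$ for $m\le d-1$. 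Hence \eqref{unif-error} holds with $M(u)=\sum_{m<d}C_m(u)$, and the second inequality follows from $r(\xi)\ge\sqrt{2/3}$.

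The only delicate point is the identification of the top-degree term with $(-i)^d\beta_d[u]$. This needs both the equivariance of $\beta_d$ and the fact that $\beta[\pi_d(Ad_{g^{-1}}u)]=\beta_d[Ad_{g^{-1}}u]$, including when $u$ is not homogeneous. The case $d=0$ is trivial, since then $\varepsilon^s_\xi[u]=0$.
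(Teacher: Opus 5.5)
Your proof is correct and follows the same route as the paper. Both expand $b^s_\xi[u](Ad_g\xi)=\beta[Ad_{g^{-1}}u](-is\omega_\xi)$ by PBW degree, use homogeneity to identify the top-degree term with $(-i)^d\beta_d[u](Ad_g\xi)$, and bound the remaining terms uniformly in $g$ and $\xi$: the paper does this via an $Ad$-invariant inner product on $U_{\le d}(\mathfrak{sl}(3))$ and the operator norms of $v\mapsto\beta[\pi_j(v)]|_{\mathcal S^7}$, you via compactness of $SU(3)\times\mathcal S^7$, which amounts to the same thing.

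Your sum over $0\le m\le d-1$ is in fact slightly more careful than the paper's. The paper's sum starts at $j=1$ and so drops the constant term $(sr(\xi))^{-d}\pi_0(u)$ (note $\pi_0(Ad_{g^{-1}}u)=\pi_0(u)$). This term is nonzero for instance when $u=e_7+1$: there $\varepsilon^s_\xi[u]\equiv (sr(\xi))^{-1}$, while the paper's $M(u)$ would vanish.
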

\begin{proof}
	The equivariance is immediate from the equivariance of the maps in the r.h.s.~of \eqref{error_map}. For the upper bound, we use Proposition \ref{prop:B=b} to get
	\begin{equation}
		\varepsilon^s_\xi[u](Ad_g(\xi)) = \sum_{j=1}^{\deg(u)-1}(-i)^j(sr(\xi))^{j-\deg(u)}\beta[\pi_j(Ad_{g^{-1}}(u))](\xi)
	\end{equation}
	for every $g\in SU(3)$. Let $\psi_j:U_{\le \deg(u)}(\mathfrak{sl}(3))\to Poly(\mathcal S^7)$ be given by
    \begin{equation}
        \psi_j[v] = \beta[\pi_j(v)]\big|_{\mathcal S^7}\, .
    \end{equation}
    Take an equivariant inner product on $U_{\le \deg(u)}(\mathfrak{sl}(3))$ so that $u$ has norm $\norm{u}$ and $\psi_j$ has operator norm $\norm{\psi_j}$. Hence $\norm{\psi_j[Ad_{g^{-1}}(u)]}_\infty\le \norm{\psi_j}\norm{u}$ for every $g\in SU(3)$. Using triangular inequality and setting
	\begin{equation}
		M(u)=(\deg(u)-1)\max\{\norm{\psi_j}\norm{u}:j=1,...,\deg(u)-1\}\, ,
	\end{equation}
	we get what we want, since $sr(\xi)>1$, $\forall \xi\in\overline{\mathcal Q}$, $\forall s\geq 2$.
\end{proof}

We can rewrite \eqref{error_map} as 
\begin{equation}
	(sr(\xi))^{-\deg(u)}b^s_\xi[u] = (-i)^{\deg(u)}\,\beta_{\deg(u)}[u]\big|_{\mathcal O_\xi}+\varepsilon^s_\xi[u]\, ,
\end{equation}
so that we get immediately from Proposition \ref{prop:error_conv}, the following:

\begin{corollary}\label{lemma:lim-t-1}
For every $u \in U(\mathfrak{sl}(3))$, we have 
\begin{equation}
    \lim_{s\to\infty}(sr(\xi))^{-\deg(u)}b^s_\xi[u] = (-i)^{\deg(u)}\,\beta_{\deg(u)}[u]\big|_{\mathcal O_\xi}\, .
\end{equation}
\end{corollary}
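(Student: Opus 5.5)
This follows directly from the rewriting of \eqref{error_map} together with the bound \eqref{unif-error} of Proposition \ref{prop:error_conv}. Fix $u\in U(\mathfrak{sl}(3))$ and $\xi\in\overline{\mathcal Q}$. For every $s\in\mathbb N$, the definition of the error map gives the identity
\begin{equation*}
(sr(\xi))^{-\deg(u)}b^s_\xi[u]-(-i)^{\deg(u)}\beta_{\deg(u)}[u]\big|_{\mathcal O_\xi}=\varepsilon^s_\xi[u] \, .
\end{equation*}
Both terms on the left lie in $Poly(\mathcal O_\xi)$, so the difference is a well-defined continuous function on $\mathcal O_\xi$. Since $b^s_\xi$ is defined on all of $U(\mathfrak{sl}(3))$, no issue arises from the image $\mathcal W^s_\xi$ depending on $s$.

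Next, I will apply Proposition \ref{prop:error_conv}. For $s\geq 2$ it gives
\begin{equation*}
\norm{\varepsilon^s_\xi[u]}_\xi\le \frac{M(u)}{sr(\xi)}\le \frac{1}{s}\sqrt{\frac{3}{2}}\,M(u) \, ,
\end{equation*}
where the constant $M(u)$ depends only on $u$. The right-hand side tends to $0$ as $s\to\infty$. Hence the limit holds in the supremum norm $\norm{\,}_\xi$ on $\mathcal O_\xi$. This is the sense of limit used in Definition \ref{PoissonOxi}, and uniform convergence also gives pointwise convergence.

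There is no real obstacle. The only point worth noting is that the convergence is uniform in $\xi$ as well, because $M(u)$ is independent of $\xi$ and $r(\xi)\ge\sqrt{2/3}$. This uniformity is not needed for the corollary itself, but I will record it for later use when the fuzzy orbits are glued along the coarse Poisson sphere.
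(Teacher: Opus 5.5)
Your proposal is correct and is exactly the paper's argument: rewrite the definition \eqref{error_map} as $(sr(\xi))^{-\deg(u)}b^s_\xi[u]=(-i)^{\deg(u)}\beta_{\deg(u)}[u]\big|_{\mathcal O_\xi}+\varepsilon^s_\xi[u]$ and invoke the bound \eqref{unif-error} of Proposition \ref{prop:error_conv}. Your additional remark that the convergence is uniform over $\overline{\mathcal Q}$ also matches the paper, which records it later as Lemma \ref{unif-Q-conv-beta}.
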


We shall now use the \emph{symmetrization map} $S$ given by \eqref{symm_def}. Restrictions of elements in $Poly(\mathfrak{su}(3))$ generate $Poly(\mathcal S^7)$ and $Poly(\mathcal O_\xi)$, and likewise, restrictions of $Poly_d(\mathfrak{su}(3))^{\vb*a}$ generate $Poly_d(\mathcal S^7)^{\vb*a}$ and $Poly_d(\mathcal O_\xi)^{\vb*a}$, cf. Notation \ref{not:sub-irrep}. On the other hand, for $f$ either in $Poly_d(\mathcal O_\xi)^{\vb*a}$ or $Poly_d(\mathcal S^7)^{\vb*a}$, we know there exists $\widetilde f \in Poly_d(\mathfrak{su}(3))$ that restricts to $f$. If $\widetilde f$ has any component in $Poly_d(\mathfrak{su}(3))^{\vb*b}$ for $\vb*b\ne \vb*a$, we can subtract it and the restriction still matches $f$, so such $f$ is always the restriction of some element of $Poly_d(\mathfrak{su}(3))^{\vb*a}$. 

Below, we will apply the symmetrization map $S$ on $f\in Poly_d(\mathcal O_\xi)^{\vb*a}$ (eventually also on $f\in Poly_d(\mathcal S^7)^{\vb*a}$), which will be a little abuse of notation for the application of $S$ on the respective $\widetilde f\in Poly_d(\mathfrak{sl}(3))^{\vb*a} \equiv Poly_d(\mathfrak{su}(3))^{\vb*a}$ that restricts to $f$.

Now, by adjoining basis of $Poly_d(\mathfrak{su}(3))^{\vb*a}$ for each $d \in \mathbb N$, we produce a basis of $Poly(\mathfrak{su}(3))^{\vb*a}$ comprised only by homogeneous polynomials. Restricting the elements of such basis to $\mathcal O_\xi$, we obtain a generating set for $Poly(\mathcal O_\xi)^{\vb*a}$, from which we can extract a basis $\{h_1,...,h_m\}$ with $h_j \in Poly_{d(j)}(\mathcal O_\xi)^{\vb*a}$.

From the above considerations and Proposition \ref{prop:S_equiv}, we can define
\begin{equation}
\begin{aligned}
u_j &= (-i)^{-d(j)}S(h_j)\in U_{\le d(j)}(\mathfrak{sl}(3))^{\vb*a} \ , \\
	h_j &= (-i)^{d(j)}\beta_{d(j)}[u_j]\big|_{\mathcal O_\xi}\in Poly_{d(j)}(\mathcal O_\xi)^{\vb*a}\, .
\end{aligned}    
\end{equation}
Let $u_1,...,u_m\in U(\mathfrak{sl}(3))^{\vb*a}$ be chosen in this way.

\begin{lemma}\label{lemma:b^s_basis}
There exists $s_0 \in \mathbb N$ such that
\begin{equation}\label{basis_b_u_j}
	\left\{(sr(\xi))^{-d(1)}b^s_\xi[u_1],...,(sr(\xi))^{-d(m)}b^s_\xi[u_m]\right\}
\end{equation}
is a basis of $Poly(\mathcal O_\xi)^{\vb*a}$ for every $s\geq s_0$.
\end{lemma}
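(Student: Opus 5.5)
The plan is a perturbation argument in the finite-dimensional space $Poly(\mathcal O_\xi)^{\vb*a}$. We combine Corollary \ref{lemma:lim-t-1} with the fact that linear independence is an open condition.

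\textbf{Step 1: membership in the right subspace.} Each $u_j$ lies in $U_{\le d(j)}(\mathfrak{sl}(3))^{\vb*a}$. Since $b^s_\xi$ is $SU(3)$-equivariant (it is a composition of the equivariant maps $\rho_{s\omega_\xi}$ and $B^{s\omega_\xi}$), $b^s_\xi[u_j]$ lies in $Poly(\mathcal O_\xi)^{\vb*a}$. This uses the fact that the image of $b^s_\xi$ lies in $Poly(\mathcal O_\xi)$ by Definition \ref{def:univ_corresp}. Hence all vectors in \eqref{basis_b_u_j} belong to the space $V:=Poly(\mathcal O_\xi)^{\vb*a}$. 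This space is finite dimensional of dimension $m$, because $Poly(\mathcal O_\xi)\subset C^\infty_{\mathbb C}(\mathcal O_\xi)$ carries each irrep with finite multiplicity (Proposition I.4.2 and its generic-orbit analogue).

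\textbf{Step 2: convergence of the rescaled symbols.} We need to control $\deg(u_j)$. By Proposition \ref{prop:S_equiv}, $u_j\in U_{\le d(j)}$. The degree could in principle be smaller than $d(j)$, but $\pi_{d(j)}(u_j)\neq 0$ since $\beta_{d(j)}[u_j]=(-i)^{-d(j)}h_j$ is a nonzero polynomial on $\mathfrak{su}(3)$. Hence $\deg(u_j)=d(j)$. Corollary \ref{lemma:lim-t-1} then gives
\begin{equation*}
\lim_{s\to\infty}(sr(\xi))^{-d(j)}b^s_\xi[u_j]=(-i)^{d(j)}\beta_{d(j)}[u_j]\big|_{\mathcal O_\xi}=h_j ,
\end{equation*}
in the norm $\norm{\,}_\xi$, with rate $O(1/s)$ by Proposition \ref{prop:error_conv}. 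Since $V$ is finite dimensional, all norms on $V$ are equivalent, so this is convergence in $V$ with any fixed norm.

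\textbf{Step 3: openness of the set of bases.} Fix the basis $\{h_1,\dots,h_m\}$ of $V$. Let $A(s)$ be the $m\times m$ matrix of coordinates of the vectors in \eqref{basis_b_u_j} with respect to this basis. By Step 2, $A(s)\to \mathds 1$ entrywise, so $\det A(s)\to 1$ by continuity of the determinant. Therefore there is $s_0$ with $\det A(s)\neq 0$ for all $s\ge s_0$. For such $s$, the $m$ vectors in \eqref{basis_b_u_j} are linearly independent in the $m$-dimensional space $V$, hence form a basis.

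The only delicate point is Step 2's degree identification together with the equivariance in Step 1. Both are routine, so I expect no serious obstacle; the main care is ensuring $\deg(u_j)=d(j)$, so that Corollary \ref{lemma:lim-t-1} applies with exactly the exponent $d(j)$ used in \eqref{basis_b_u_j}.
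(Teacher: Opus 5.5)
Your proposal is correct and follows essentially the same route as the paper. Both use Corollary \ref{lemma:lim-t-1} to get convergence of the rescaled Berezin symbols to $h_1,\dots,h_m$ and then invoke openness of nonsingular matrices; the paper encodes the vectors through the evaluation matrix $\big(h_k(\vb*\varsigma_j)\big)_{j,k}$ at $m$ suitably chosen points of $\mathcal O_\xi$ rather than through coordinates in the basis $\{h_j\}$, and your explicit checks that $\deg(u_j)=d(j)$ and that each $b^s_\xi[u_j]$ lies in $Poly(\mathcal O_\xi)^{\vb*a}$ are details the paper leaves implicit.
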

\begin{proof}
Since $\{h_1,...,h_m\}$ is l.i., there are $\vb*\varsigma_1,...,\vb*\varsigma_m\in \mathcal O_\xi$ such that the matrix $H$ with entries $(H)_{j,k} = h_k(\vb*\varsigma_j)$ is non singular. Let $B(s)$ be the matrix with entries $(B(s))_{j,k} = (sr(\xi))^{-d(k)}b^s_\xi[u_k](\vb*\varsigma_j)$. From Corollary \ref{lemma:lim-t-1}, $B(s)$ converges to $H$, so there exists $s_0\in \mathbb N$ such that $B(s)$ is non singular for every $s\geq s_0$, in other words, the set \eqref{basis_b_u_j} is l.i.~for $s\geq s_0$.
\end{proof}

The above Lemma guarantees that we can decompose any element of $Poly(\mathcal O_\xi)^{\vb*a}$ as a linear combination of symbols of fixed elements of the universal algebra, and this simplifies the writing of  twisted products.

\begin{proposition}\label{prop:decomp_b^s_basis}
	If $f\in Poly(\mathcal O_\xi)^{\vb*a}$, then there are $s_0\in \mathbb N$ and $\alpha_j(s)\in \mathbb C$ for $j\in \{1,...,m\}$ and $s\geq s_0$ such that
    \begin{equation}\label{decomp_b^s_basis}
		f = \sum_{j=1}^m\alpha_j(s)(sr(\xi))^{-d(j)}b^s_\xi[u_j] = \sum_{j=1}^m\alpha_j^\infty h_j \ , \ \ \ \alpha_j^\infty=\lim_{s\to\infty}\alpha_j(s)\in\mathbb C\, .
	\end{equation}
\end{proposition}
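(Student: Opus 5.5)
The plan is to use Lemma \ref{lemma:b^s_basis} together with Corollary \ref{lemma:lim-t-1} and a simple continuity argument for matrix inversion.

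First, Lemma \ref{lemma:b^s_basis} gives $s_0\in\mathbb N$ such that, for every $s\ge s_0$, the set \eqref{basis_b_u_j} is a basis of the finite-dimensional space $Poly(\mathcal O_\xi)^{\vb*a}$. Hence for $s\ge s_0$ there are unique coefficients $\alpha_j(s)$ with $f=\sum_j\alpha_j(s)(sr(\xi))^{-d(j)}b^s_\xi[u_j]$. Since $\{h_1,\dots,h_m\}$ is also a basis, there are unique $\alpha_j^\infty$ with $f=\sum_j\alpha^\infty_j h_j$. It then remains to show that $\alpha_j(s)\to\alpha_j^\infty$.

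For this, reuse the points $\vb*\varsigma_1,\dots,\vb*\varsigma_m$ and the matrices $H$ and $B(s)$ from the proof of Lemma \ref{lemma:b^s_basis}. Let $F$ be the vector $(f(\vb*\varsigma_1),\dots,f(\vb*\varsigma_m))^T$. Evaluating the two expansions of $f$ at these points gives the linear systems $B(s)\alpha(s)=F$ and $H\alpha^\infty=F$.

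Corollary \ref{lemma:lim-t-1}, with the identity $h_k=(-i)^{d(k)}\beta_{d(k)}[u_k]|_{\mathcal O_\xi}$, gives $B(s)\to H$ entrywise. Here $H$ is nonsingular, and $B(s)$ is nonsingular for $s\ge s_0$, possibly after enlarging $s_0$. Matrix inversion is continuous on $GL_m(\mathbb C)$, so $\alpha(s)=B(s)^{-1}F\to H^{-1}F=\alpha^\infty$, which is the claim.

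There is no real obstacle; the only subtle point is that convergence in the corollary is uniform (sup norm), so pointwise evaluation at finitely many points suffices.
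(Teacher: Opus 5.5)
Your proposal is correct and follows the paper's proof essentially step for step. The paper likewise expands $f$ in the basis from Lemma \ref{lemma:b^s_basis}, evaluates at the points $\vb*\varsigma_1,\dots,\vb*\varsigma_m$ to write $\alpha(s)=B(s)^{-1}(f(\vb*\varsigma_k))_k$ and $\alpha^\infty=H^{-1}(f(\vb*\varsigma_k))_k$, and concludes from $B(s)\to H$ (Corollary \ref{lemma:lim-t-1}) by continuity of matrix inversion.
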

\begin{proof}
	Of course, $f$ is a linear combination of any basis of $Poly(\mathcal O_\xi)^{\vb*a}$, thus we can write \eqref{decomp_b^s_basis} by Lemma \ref{lemma:b^s_basis} and the construction of $\{h_1,...,h_m\}$. Recalling the proof of Lemma \ref{lemma:b^s_basis},  let $A(s) = B(s)^{-1}$ for $s\geq s_0$, and $F = H^{-1}$, so that
	\begin{equation}
		\alpha_j(s) = \sum_{k=1}^m(A(s))_{j,k}f(\vb*\varsigma_k) \ , \ \ \alpha_j^\infty = \sum_{k=1}^m(F)_{j,k}f(\vb*\varsigma_k)\, .
	\end{equation}
	By continuity of the inversion map, we have $A(s)\to F$, implying $\alpha_j(s)\to\alpha_j^\infty$.
\end{proof}

\begin{theorem}\label{theo:prod_ber_conv} 
For $\xi\in\overline{\mathcal Q}$, let $(\star^s_\xi)_s$ be the sequence of twisted products of the Berezin fuzzy $\xi$-orbit $\mathfrak B(\mathcal O_\xi)$, cf. Definition \ref{Berezin-fuzzy}. 
As $s\to\infty$, the uniform convergences 
\begin{eqnarray}
    f_1\star^s_\xi f_2&\to& f_1f_2 \ , \label{f1f2} \\ 
    sr(\xi)[f_1,f_2]_{\star^s_\xi}&\to& i\big\{f_1,f_2\big\} \label{f1,f2} \ , 
\end{eqnarray}
cf. \eqref{Poisson-eq}-\eqref{Poisson-eq-2}, 
hold for every $f_1,f_2 \in Poly(\mathcal O_\xi)$. 
\end{theorem}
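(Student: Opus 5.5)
By bilinearity of $\star^s_\xi$, of the pointwise product and of the Poisson bracket, it suffices to prove \eqref{f1f2}-\eqref{f1,f2} for $f_1\in Poly(\mathcal O_\xi)^{\vb*a}$ and $f_2\in Poly(\mathcal O_\xi)^{\vb*b}$, for arbitrary irreps $\vb*a,\vb*b$. For each of these isotypic components, fix the bases $\{h_j\}$, $\{u_j\}$ and $\{h'_k\}$, $\{v_k\}$ as constructed before Lemma \ref{lemma:b^s_basis}, with degrees $d(j)$ and $d'(k)$. Proposition \ref{prop:decomp_b^s_basis} then gives, for $s$ large,
\begin{equation*}
f_1=\sum_j\alpha_j(s)(sr(\xi))^{-d(j)}b^s_\xi[u_j] \ , \ \ f_2=\sum_k\gamma_k(s)(sr(\xi))^{-d'(k)}b^s_\xi[v_k] \ ,
\end{equation*}
with $\alpha_j(s)\to\alpha_j^\infty$, $\gamma_k(s)\to\gamma_k^\infty$ and $f_1=\sum_j\alpha^\infty_jh_j$, $f_2=\sum_k\gamma^\infty_kh'_k$. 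Since $b^s_\xi[u]\star^s_\xi b^s_\xi[v]=b^s_\xi[uv]$, we get
\begin{equation*}
f_1\star^s_\xi f_2=\sum_{j,k}\alpha_j(s)\gamma_k(s)\,(sr(\xi))^{-d(j)-d'(k)}\,b^s_\xi[u_jv_k] \ .
\end{equation*}
This is a finite sum with fixed elements $u_jv_k\in U(\mathfrak{sl}(3))$ and convergent scalar coefficients, so the problem reduces to the asymptotics of $(sr(\xi))^{-D}b^s_\xi[w]$ for fixed $w$.

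For the product, set $w=u_jv_k$ and $D=d(j)+d'(k)$. Then $\deg(w)\le D$. If $\deg(w)=D$, Corollary \ref{lemma:lim-t-1} and Proposition \ref{prop:point_p_beta} give uniform convergence
\begin{equation*}
(sr(\xi))^{-D}b^s_\xi[w]\to(-i)^{D}\beta_{D}[w]|_{\mathcal O_\xi}=(-i)^{d(j)}\beta_{d(j)}[u_j]\,(-i)^{d'(k)}\beta_{d'(k)}[v_k]\big|_{\mathcal O_\xi}=h_jh'_k \ .
\end{equation*}
If $\deg(u_j)<d(j)$, I would not use $\deg$ at all. Instead, I would use the formula in the proof of Proposition \ref{prop:error_conv}, $b^s_\xi[w](Ad_g\xi)=\sum_{l}(sr(\xi))^l(-i)^l\beta[\pi_l(Ad_{g^{-1}}w)](\xi)$, directly with $w\in U_{\le D}$. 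The same operator-norm bound as there shows that $(sr(\xi))^{-D}b^s_\xi[w]=(-i)^D\beta_D[w]|_{\mathcal O_\xi}+O(1/s)$ uniformly, and $\beta_D$ is multiplicative in the sense $\beta_D[uv]=\beta_{d}[u]\beta_{d'}[v]$ for $u\in U_{\le d}$, $v\in U_{\le d'}$. So I would first restate Corollary \ref{lemma:lim-t-1} and Proposition \ref{prop:point_p_beta} with $\deg$ replaced by any upper bound $D$. Summing over $j,k$ with the convergent coefficients then gives $f_1\star^s_\xi f_2\to\sum\alpha^\infty_j\gamma^\infty_kh_jh'_k=f_1f_2$ uniformly.

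For the commutator, write
\begin{equation*}
sr(\xi)[f_1,f_2]_{\star^s_\xi}=\sum_{j,k}\alpha_j(s)\gamma_k(s)\,(sr(\xi))^{-(D-1)}\,b^s_\xi[u_jv_k-v_ku_j] \ .
\end{equation*}
Now $c=u_jv_k-v_ku_j\in U_{\le D-1}$ by PBW, since the commutator lowers the filtration degree. The refined version of Corollary \ref{lemma:lim-t-1} with bound $D-1$ gives
\begin{equation*}
(sr(\xi))^{-(D-1)}b^s_\xi[c]\to(-i)^{D-1}\beta_{D-1}[c]|_{\mathcal O_\xi}
\end{equation*}
uniformly. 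Proposition \ref{prop:pois_b_beta}, again in its filtration version, identifies $\beta_{D-1}[c]=\{\beta_{d(j)}[u_j],\beta_{d'(k)}[v_k]\}$. Since $(-i)^{D-1}=i\,(-i)^{d(j)}(-i)^{d'(k)}$, the limit is $i\{h_j,h'_k\}$. Here one must also check that the Poisson bracket of the ambient polynomials, restricted to $\mathcal O_\xi$, equals the bracket of the restrictions for $\Omega_\xi$. This holds because $\mathcal O_\xi$ is a symplectic leaf of $\Pi_{\mathfrak g}$, so the bracket of restrictions depends only on the restrictions. Summing over $j,k$ gives \eqref{f1,f2}.

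The main obstacle is the degree bookkeeping. The $u_j=(-i)^{-d(j)}S(h_j)$ need not have exact degree $d(j)$, and $\deg(uv)$ or $\deg[u,v]$ may drop below the nominal values. So I would first prove the filtration-level statements: an $O(1/s)$ error bound for $U_{\le D}$ with $\beta_D$, and the analogues of \eqref{point_prod_pi} and \eqref{pois_b_pi} with $\deg$ replaced by arbitrary upper bounds, where the identities are read as zero when the degree drops. These follow from the same proofs, using that $\beta_D$ vanishes on $U_{\le D-1}$ and that it is only the top-degree parts that multiply and commute. The norm bound in Proposition \ref{prop:error_conv} remains uniform because all elements involved are fixed.
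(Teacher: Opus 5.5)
Your proposal is correct and follows the paper's proof essentially step by step. You reduce to isotypic components, expand $f_1,f_2$ via Proposition \ref{prop:decomp_b^s_basis}, and pass to the limit termwise using Corollary \ref{lemma:lim-t-1} together with Propositions \ref{prop:point_p_beta} and \ref{prop:pois_b_beta}.

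Your extra degree bookkeeping is harmless but mostly unnecessary. Since $\beta_{d(j)}[u_j]\neq 0$, you get $\deg(u_j)=d(j)$ exactly. Since $Poly(\mathfrak{sl}(3))$ is a domain, $\deg(u_jv_k)=d(j)+d'(k)$. Only the commutator can drop degree, and Proposition \ref{prop:pois_b_beta} is already stated in the filtration form you want; the paper simply leaves implicit that the limit is then $0$.
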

\begin{proof}
We start by proving (\ref{f1f2}). By bilinearity of the products, it is sufficient to show the convergence for $f_j \in Poly(\mathcal O_\xi)^{\vb*a_j}$. Now, we use Proposition \ref{prop:decomp_b^s_basis} to write
\begin{equation}\label{Zk-decomp1}
	f_j = \sum_{k=1}^{m_j} \alpha^j_k(s)(sr(\xi))^{-d_j(k)}b^s_\xi[u_k^j] = \sum_{k=1}^{m_j}\alpha^j_k h_k^j \ , 
\end{equation}
for $s\geq s_0\in \mathbb N$, where 
\begin{eqnarray}
&\displaystyle{\lim_{s\to \infty}}\alpha^j_k(s) = (\alpha_k^j)^\infty\equiv \alpha^j_k\, , \\
&	\displaystyle{\lim_{s\to\infty}}(sr(\xi))^{-d_j(k)}b^s_\xi[u_k^j] = (-i)^{d_j(k)}\beta_{d_j(k)}[u_k^j] = h_k^j \, ,
\end{eqnarray}
cf. Corollary \ref{lemma:lim-t-1}. Therefore, 
\begin{equation}\label{convprod}
\begin{aligned}
     \lim_{s\to\infty}f_1\star^s_\xi f_2 &= \lim_{s\to\infty}\sum_{j=1}^{m_1}\sum_{k=1}^{m_2} \alpha^1_j(s) \alpha^2_k(s)(sr(\xi))^{-(d_1(j)+d_2(k))}b^s_\xi\left[u_j^1u_k^2\right] \\
     &=\sum_{j=1}^{m_1}\sum_{k=1}^{m_2}\alpha^1_j\alpha^2_kh_j^1 h_k^2  = \Bigg(\sum_{j=1}^{m_1}\alpha^1_j h_j^1\Bigg)\Bigg(\sum_{k=1}^{m_2}\alpha^2_k h_k^2\Bigg)
      = f_1f_2\, ,
\end{aligned}
\end{equation}
where we have used Proposition \ref{prop:point_p_beta}. Similarly for proving \eqref{f1,f2}, we have that
\begin{equation}\label{convpb}
	\begin{aligned}
		\lim_{s\to\infty}sr(\xi)\left[f_1,f_2\right]_{\star^s_\xi} & = \lim_{s\to\infty}\sum_{j=1}^{m_1}\sum_{k=1}^{m_2} \alpha^1_j(s)\alpha^2_k (s)\\
		& \hspace{4 em} \times (sr(\xi))^{-(d_1(j)+d_2(k)-1)}b^s_\xi\left[u^1_j u^2_k-u^2_k u^1_j\right]\\
		& = i \sum_{j=1}^{m_1}\sum_{k=1}^{m_2} \alpha_j^1 \alpha_k^2\left\{h_j^1,h_k^2\right\} = i\Big\{f_1,f_2\Big\}\, .
	\end{aligned}
\end{equation}
where we have used Proposition \ref{prop:pois_b_beta}.
\end{proof}

Therefore, according to Definition \ref{PoissonOxi}, we have:

\begin{corollary}\label{cor:ber_ray_pois}
For any $\xi\in \overline{\mathcal Q}$, the Berezin fuzzy $\xi$-orbit is of Poisson type,
\begin{equation}
    \mathfrak{B}(\mathcal O_\xi) \xrightarrow{ \ \sim \ } Poly(\mathcal O_\xi,\Omega_\xi) \ ,  
\end{equation}
or in other words, the $\xi$-ray of Berezin correspondences is of Poisson type.
\end{corollary}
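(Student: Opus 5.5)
The corollary follows by matching Theorem \ref{theo:prod_ber_conv} against Definition \ref{PoissonOxi}. Fix $\xi\in\overline{\mathcal Q}$ and consider the Berezin fuzzy $\xi$-orbit $\mathfrak B(\mathcal O_\xi)=(\mathcal W^s_\xi,\star^s_\xi)_{s\in\mathbb N}$ induced by the $\xi$-ray $(b^s_\xi)_{s\in\mathbb N}$, cf. Definition \ref{Berezin-fuzzy}. First we make sense of the expressions $f_1\star^s_\xi f_2$ and $[f_1,f_2]_{\star^s_\xi}$ for arbitrary $f_1,f_2\in Poly(\mathcal O_\xi)$. Each such $f_j$ is a finite sum of isotypic components $f_j^{\vb*a}\in Poly(\mathcal O_\xi)^{\vb*a}$. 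By Lemma \ref{lemma:f_image_orbit}, there is an $s_0$, independent of the chosen ray, such that $f_1,f_2\in\mathcal W^s_\xi$ for all $s\ge s_0$. Hence the twisted products and commutators in \eqref{Poisson-eq}--\eqref{Poisson-eq-2} are defined for all sufficiently large $s$, which is all that the limits require.

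Next we verify the two limits. Theorem \ref{theo:prod_ber_conv} gives exactly the uniform convergences \eqref{f1f2} and \eqref{f1,f2}, that is, convergence in the norm $\norm{\,}_\xi$, for every $f_1,f_2\in Poly(\mathcal O_\xi)$. These are verbatim the conditions \eqref{Poisson-eq} and \eqref{Poisson-eq-2}, including the rescaling factor $sr(\xi)$ in the commutator. So the $\xi$-ray of Berezin correspondences is of Poisson type, and we record this in the notation \eqref{WtoP}.

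The only point needing care is that the proof of Theorem \ref{theo:prod_ber_conv} reduces to isotypic components by bilinearity, and the decomposition of Proposition \ref{prop:decomp_b^s_basis} only holds for $s$ large. Both the bilinear reduction and the finite maximum of the finitely many thresholds $s_0$ are harmless, since each limit concerns only the tail of the sequence. I expect no genuine obstacle, because the corollary is a direct restatement of the theorem in the language of Definition \ref{PoissonOxi}.
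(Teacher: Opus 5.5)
Your proposal is correct and matches the paper's argument: the corollary is obtained by reading the two uniform convergences of Theorem \ref{theo:prod_ber_conv} as the conditions \eqref{Poisson-eq}--\eqref{Poisson-eq-2} of Definition \ref{PoissonOxi}. Your use of Lemma \ref{lemma:f_image_orbit} to check that the products are defined for large $s$ is harmless bookkeeping that the paper leaves implicit.
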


\subsection{First criterion for Poisson: convergence of symbols}

The thread from Corollary \ref{lemma:lim-t-1} to Proposition \ref{prop:decomp_b^s_basis} makes it clear that the proof of Theorem \ref{theo:prod_ber_conv} depends solely on Corollary \ref{lemma:lim-t-1} of Proposition \ref{prop:error_conv}, therefore any $\xi$-ray of universal correspondence satisfying Corollary \ref{lemma:lim-t-1} is of Poisson type. That is, we already have:

\begin{proposition}\label{thmprop1}
    For $\xi \in \overline{\mathcal Q}$, a $\xi$-ray of universal correspondences $(w^s_\xi)_{s\in\mathbb N}$ is of Poisson type if
    \begin{equation}\label{condition1}
        \lim_{s\to\infty}(sr(\xi))^{-\deg(u)}w^s_\xi[u] = (-i)^{\deg(u)}\beta_{\deg(u)}[u]\big|_{\mathcal O_\xi} \ , \ \  \forall u\in U(\mathfrak{sl}(3))\, .
    \end{equation}
\end{proposition}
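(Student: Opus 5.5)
The plan is to rerun the argument of Theorem \ref{theo:prod_ber_conv} verbatim, with $b^s_\xi$ replaced by $w^s_\xi$. Nothing in that argument used the specific Berezin form, only the limit of Corollary \ref{lemma:lim-t-1}, which is now exactly hypothesis \eqref{condition1}. One point needs care. Corollary \ref{lemma:lim-t-1} gives uniform convergence on $\mathcal O_\xi$, whereas \eqref{condition1} is stated only as a limit. The first step is therefore to upgrade it. Each $w^s_\xi[u]$ and the limit lie in the finite-dimensional invariant space $Poly_{\le\deg(u)}(\mathcal O_\xi)$. Indeed, $w^s_\xi$ is equivariant, so it maps the finite-dimensional invariant space $U_{\le\deg(u)}(\mathfrak{sl}(3))$ into the sum of the isotypic components occurring there. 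On a finite-dimensional space all norms are equivalent, so pointwise convergence (or any reasonable mode) is the same as convergence in $\norm{\,}_\xi$.

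Next, I would fix irreps $\vb*a$ and construct $h_j$ and $u_j=(-i)^{-d(j)}S(h_j)$ exactly as before. Lemma \ref{lemma:b^s_basis} then holds for $w^s_\xi$ with the same proof. The matrix $B(s)$ with entries $(sr(\xi))^{-d(k)}w^s_\xi[u_k](\vb*\varsigma_j)$ converges to the nonsingular matrix $H$ by \eqref{condition1}, so $B(s)$ is nonsingular for $s\ge s_0$. Proposition \ref{prop:decomp_b^s_basis} also carries over unchanged: any $f\in Poly(\mathcal O_\xi)^{\vb*a}$ can be written as $f=\sum_j\alpha_j(s)(sr(\xi))^{-d(j)}w^s_\xi[u_j]$ with $\alpha_j(s)\to\alpha_j^\infty$, and $f=\sum_j\alpha_j^\infty h_j$.

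For the product condition \eqref{Poisson-eq}, bilinearity reduces to $f_1,f_2$ in single isotypic components. Writing $w^s_\xi[u]\star^s_\xi w^s_\xi[v]=w^s_\xi[uv]$, I would expand $f_1\star^s_\xi f_2$ as in \eqref{convprod}. The exponent $-(d_1(j)+d_2(k))$ should be compared with $\deg(u^1_ju^2_k)$, which is at most $d_1(j)+d_2(k)$. If equality holds, \eqref{condition1} applied to $u^1_ju^2_k$ together with Proposition \ref{prop:point_p_beta} gives the limit $h^1_jh^2_k$. If the degree is strictly smaller, the term carries an extra negative power of $s$. It then tends to zero, consistent with $\beta_{d_1+d_2}[u^1_ju^2_k]=0$ in that case.

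For the commutator condition \eqref{Poisson-eq-2}, the same expansion applies with $u^1_ju^2_k-u^2_ku^1_j$, whose degree is at most $d_1(j)+d_2(k)-1$. The limit is $i\{h^1_j,h^2_k\}$ by \eqref{condition1} and Proposition \ref{prop:pois_b_beta}, with the factor $(-i)^{d_1+d_2-1}\cdot(-i)^{-d_1-d_2}=i$; lower-degree cases vanish as before. Summing with the coefficients $\alpha(s)\to\alpha^\infty$ yields $i\{f_1,f_2\}$. The main point to watch, and the only real obstacle, is this degree bookkeeping. Where degrees drop, one should check that the rescaled term still converges to $0$. It suffices to know that $(sr(\xi))^{-\deg(v)}w^s_\xi[v]$ is bounded for each $v$, which follows from \eqref{condition1}, since a convergent sequence is bounded.
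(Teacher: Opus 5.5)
Your proposal is correct and is essentially the paper's own argument, which simply observes that the proof of Theorem~\ref{theo:prod_ber_conv} (through Lemma~\ref{lemma:b^s_basis} and Proposition~\ref{prop:decomp_b^s_basis}) uses nothing but the limit of Corollary~\ref{lemma:lim-t-1}, now supplied by \eqref{condition1}. Your handling of degree drops in the commutator is a useful extra detail, while for products no drop ever occurs, since $\beta_{\deg(u)+\deg(v)}[uv]=\beta_{\deg(u)}[u]\,\beta_{\deg(v)}[v]\neq 0$ for $u,v\neq 0$. There is one small slip: the $s$-independent finite-dimensional space containing all $w^s_\xi[u]$ is not $Poly_{\le\deg(u)}(\mathcal O_\xi)$ itself. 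It is the sum of the components $Poly(\mathcal O_\xi)^{\vb*a}$ over the irreps $\vb*a$ occurring in $U_{\le\deg(u)}(\mathfrak{sl}(3))$, i.e.\ some $Poly_{\le d}(\mathcal O_\xi)$ with possibly $d>\deg(u)$, as in the proof of Lemma~\ref{ip_rho_beta}. This matters because on a generic orbit a general correspondence can send $\mathfrak{sl}(3)$ partly into the quadratic copy of $(1,1)$. Your equivariance argument already produces the correct space, though, so nothing else changes.
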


We now show that \eqref{condition1} is also a  necessary condition for Poisson type.

\begin{lemma}\label{lemma:pois_bound}
If a $\xi$-ray $(w^s_\xi)_{s\in\mathbb N}$ of universal correspondences is of Poisson type, then $((sr(\xi))^{-\deg(u)}w^s_\xi[u])_{s\in\mathbb N}$ is a bounded sequence in $Poly(\mathcal O_\xi)$, $\forall u\in U(\mathfrak{sl}(3))$.
\end{lemma}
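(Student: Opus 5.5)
Since every $u\in U(\mathfrak{sl}(3))$ is a linear combination of elements of $\mathcal B_\infty$, we split $u$ into PBW monomials. If $\deg(v)<\deg(u)$, then $(sr(\xi))^{-\deg(u)}w^s_\xi[v]=(sr(\xi))^{\deg(v)-\deg(u)}\,(sr(\xi))^{-\deg(v)}w^s_\xi[v]$, and the prefactor is at most $1$. Hence it suffices to prove boundedness of $(sr(\xi))^{-d}w^s_\xi[e_{j_1}\cdots e_{j_d}]$ for each PBW monomial of degree $d$, and we argue by induction on $d$. The case $d=0$ is trivial: $w^s_\xi[1]$ is the symbol of the identity, i.e.\ the constant function $1$, since symbol correspondences are unital.

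The base case $d=1$ carries the real content. For $X\in\mathfrak{sl}(3)$, each $w^s_\xi[X]$ lies in $Poly(\mathcal O_\xi)^{(1,1)}$ by equivariance. I would use the Poisson condition \eqref{Poisson-eq-2} with $f_1=w^s_\xi[X]$ replaced by a fixed function. Concretely, $w^s_\xi[X]\star^s_\xi f-f\star^s_\xi w^s_\xi[X]=w^s_\xi[Xu-uX]$ whenever $f=w^s_\xi[u]$. Because the inner action of $\rho(X)$ is the natural infinitesimal action, $[w^s_\xi[X],f]_{\star^s_\xi}=X\cdot f$ for all $f\in\mathcal W^s_\xi$. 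The same holds for the classical linear function $\ell_X:=-i\beta_1[X]|_{\mathcal O_\xi}$, with $i\{\ell_X,f\}=X\cdot f$ up to the paper's sign conventions. The Poisson limit then gives $sr(\xi)[\ell_X,f]_{\star^s_\xi}\to i\{\ell_X,f\}=[w^s_\xi[X],f]_{\star^s_\xi}$.

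Next, the $(1,1)$-isotypic component of $\mathcal W^s_\xi$ has bounded multiplicity: it is at most $2$ for generic orbits and $1$ for $\mathbb CP^2$, and it stabilizes by Lemma \ref{lemma:f_image_orbit}. So $w^s_\xi[X]=c_1(s)\ell_X+c_2(s)\ell'_X$, where $\ell'_X$ is a fixed quadratic equivariant copy, with coefficients given by the equivariant maps $\mathfrak{sl}(3)\to Poly(\mathcal O_\xi)^{(1,1)}$. Commutators are bilinear, so $[w^s_\xi[X],f]_\star=c_1(s)[\ell_X,f]_\star+c_2(s)[\ell'_X,f]_\star$. By \eqref{Poisson-eq-2}, both $sr(\xi)[\ell_X,f]_\star$ and $sr(\xi)[\ell'_X,f]_\star$ converge, to $i\{\ell_X,f\}$ and $i\{\ell'_X,f\}$ respectively. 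On the other side, this equals the $s$-independent function $i\{\ell_X,f\}$. Choosing $f$ and points of $\mathcal O_\xi$ so that $\{\ell_X,f\}$ and $\{\ell'_X,f\}$ are linearly independent (as functions of these data), we get that $c_j(s)/(sr(\xi))$ converges. The limit is $(1,0)$. In particular $(sr(\xi))^{-1}w^s_\xi[X]$ is bounded, and in fact converges to $\ell_X$.

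For the inductive step, write $u=u'e_{j_d}$ with $\deg u'=d-1$, so that $w^s_\xi[u]=w^s_\xi[u']\star^s_\xi w^s_\xi[e_{j_d}]$. The difficulty is that \eqref{Poisson-eq} only controls products of \emph{fixed} functions, while both factors here vary with $s$. I would handle this as in the base case by expanding in fixed bases. The normalized factors $(sr(\xi))^{-(d-1)}w^s_\xi[u']$ lie in a finite-dimensional invariant subspace $Poly_{\le D}(\mathcal O_\xi)$, since the isotypic types that occur are bounded by degree. On that subspace all norms are equivalent, so by induction these factors are bounded combinations $\sum\gamma_k(s)g_k$ of fixed basis functions $g_k$. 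The same holds for $(sr(\xi))^{-1}w^s_\xi[e_{j_d}]=\sum\delta_l(s)g'_l$. Then $(sr(\xi))^{-d}w^s_\xi[u]=\sum\gamma_k(s)\delta_l(s)\,g_k\star^s_\xi g'_l$. Each $g_k\star^s_\xi g'_l$ converges to $g_kg'_l$ by \eqref{Poisson-eq}, hence is bounded, and the coefficients are bounded, so the whole expression is bounded.

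The main obstacle is the base case, namely showing that $w^s_\xi[X]$ grows no faster than $s$. It requires the independence argument for the possibly two copies of $(1,1)$ on generic orbits, and care with the exact sign and scaling conventions relating $X\cdot f$, $\ell_X$ and the Poisson bracket.
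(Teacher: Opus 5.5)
Your proof is correct. Its inductive step is essentially the paper's: the paper applies the Uniform Boundedness Principle to the product operators $T^s_\xi$ on fixed finite-dimensional isotypic spaces, which is equivalent to your expansion in fixed bases with convergent products $g_k\star^s_\xi g'_l$.

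The base case is where you take a genuinely different route. The paper normalizes $h_s=N_sw^s_\xi[u]$ with $N_s=\norm{w^s_\xi[u]}_\xi^{-1}$, uses Schur's Lemma to keep $N_sw^s_\xi[v]$ and $N_sw^s_\xi[uv-vu]$ bounded, and applies the Uniform Boundedness Principle to the commutator operators $C^s_\xi$. This bounds $sr(\xi)N_s\norm{N_sw^s_\xi[uv-vu]}_\xi$ and gives $N_s\in O(1/s)$. Read literally, that is a \emph{lower} bound $\norm{w^s_\xi[u]}_\xi\gtrsim s$, while the lemma needs the upper bound $\norm{w^s_\xi[u]}_\xi=O(s)$. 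Indeed, that inequality is satisfied by, e.g., $w^s_\xi[X]=s^2\ell_X$ on $\mathbb CP^2$, which violates \eqref{Poisson-eq-2}.

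Your argument instead uses the actual limit in \eqref{Poisson-eq-2} together with the exact identity $[w^s_\xi[X],f]_{\star^s_\xi}=X\cdot f=i\{\ell_X,f\}$. This controls $c_j(s)/(sr(\xi))$ from both sides. It even yields $(sr(\xi))^{-1}w^s_\xi[X]\to\ell_X$, which is the content of the paper's next lemma in degree one; the paper proves that separately by a subsequence and Hamiltonian-vector-field argument.

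Two details should be made explicit. First, why data exist that make the limiting $2\times 2$ matrix invertible. If all vectors $\big(\{\ell_X,f\}(\varsigma),\{\ell'_X,f\}(\varsigma)\big)$, over all $f$ and $\varsigma\in\mathcal O_\xi$, lay on a line, then some nonzero $\alpha\ell_X+\beta\ell'_X$ would Poisson-commute with every polynomial. It would then be constant on the connected orbit, hence zero, since it lies in the $(1,1)$-component. This contradicts the linear independence of $\ell_X,\ell'_X$, which holds by Schur because they are values at $X\neq0$ of two independent equivariant maps. Second, take $s$ large enough, by Lemma \ref{lemma:f_image_orbit}, that $\ell_X,\ell'_X,f\in\mathcal W^s_\xi$, so the twisted commutators you use are defined.
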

\begin{proof}
We prove it by induction on the degree of $u$. First, take $u\in\mathfrak{sl}(3)\equiv U_1(\mathfrak{sl}(3))$ non null and let
\begin{equation}
    N_s = \norm{w^s_\xi[u]}_\xi^{-1}\, ,
\end{equation}
so the sequence $(h_s)$, with $h_s = N_sw^s_\xi[u]$, is in the unit sphere of $Poly(\mathcal O_\xi)^{(1,1)}$. For any $v\in U_1(\mathfrak{sl}(3))$ with
\begin{equation}
    \widetilde v = uv-vu\ne 0\, ,
\end{equation}
we have that $\widetilde v\in U_1(\mathfrak{sl}(3))$ as well and, by Schur's Lemma, the sequences $(f_s)$ and $(\widetilde f_s)$ given by $f_s = N_sw^s_\xi[v]$ and $\widetilde f_s = N_sw^s_\xi[\widetilde v]$ are never zero and bounded.

Now, let $(\star^s_\xi)$ be the sequence of twisted products induced by $(w^s_\xi)$. The Poisson condition implies that the sequence $(C^s_\xi)_s$ of operators
\begin{equation}
	C^s_\xi:Poly(\mathcal O_\xi)^{(1,1)}\wedge Poly(\mathcal O_\xi)^{(1,1)}\to Poly(\mathcal O_\xi):f\wedge h \mapsto sr(\xi)[f,h]_{\star^s_\xi}
\end{equation}
converges pointwisely to $f\wedge h\mapsto i\{f,h\}$. By the Uniform Boundedness Principle, $(C^s_\xi)_s$ is uniformly bounded. Thus
\begin{equation}
	\norm{C^s_\xi(f_s\wedge h_s)}_\xi = \norm{sr(\xi)N_s^2w^s_\xi[\widetilde v]}_\xi = sr(\xi)N_s\norm{\widetilde{f_s}}_\xi
\end{equation}
is bounded on $s$. Hence $N_s \in O(1/s)$. This shows the claim for $u\in U_1(\mathfrak{sl}(3))$.

To complete the induction, suppose the claim holds for every element of the universal enveloping algebra with degree $\le d$. If $v_j \in U_{\le d_j}(\mathfrak{sl}(3))^{\vb*a_j}$ for $j\in\{1,2\}$, with $1\le d_1,d_2\le d$, then $v_1v_2 \in U_{\le d_1+d_2}(\mathfrak{sl}(3))$. Again, the Poisson condition implies that the sequence $(T^s_\xi)_{s}$ of operators
\begin{equation}\label{tp_oper}
	T^s_\xi:Poly(\mathcal O_\xi)^{\vb*a_1}\otimes Poly(\mathcal O_\xi)^{\vb*a_2}\to Poly(\mathcal O_\xi): f\otimes h\mapsto f\star^s_\xi h
\end{equation}
converges pointwisely to $f\otimes h\to fh$, so $(T^s_\xi)_{s}$ is uniformly bounded. Therefore
\begin{equation}
	\norm{(sr(\xi))^{-(d_1+d_2)}T^s_\xi(w^s_\xi[v_1]\otimes w^s_\xi[v_2])}=(sr(\xi))^{-(d_1+d_2)}\norm{w^s_\xi[v_1v_2]}_\xi
\end{equation}
is bounded on $s$. By writing $u \in U_{\le d+1}(\mathfrak{sl}(3))$ as a linear combination of products of elements of degrees $\le d$, we conclude that the claim also holds for every element of $U_{\le d+1}(\mathfrak{sl}(3))$.
\end{proof}

\begin{lemma}
	If a $\xi$-ray $(w^s_\xi)_{s\in\mathbb N}$ of universal correspondences is of Poisson type, then
	\begin{equation}
		\lim_{s\to \infty}(sr(\xi))^{-1}w^s_\xi[u] = -i\beta_1[u]\big|_{\mathcal O_\xi} \ , \ \  \forall u\in U(\mathfrak{sl}(3))\, .
	\end{equation}
\end{lemma}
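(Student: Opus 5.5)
Throughout I read the statement for $u\in\mathfrak{sl}(3)\equiv U_1(\mathfrak{sl}(3))$. This is the only case where the exponent $1$ and $\beta_1$ are consistent with \eqref{condition1}, and higher degrees are presumably handled afterwards by induction, as in Lemma \ref{lemma:pois_bound}. The plan is a compactness argument followed by identification of the limit. Set $\phi_s:=(sr(\xi))^{-1}w^s_\xi|_{\mathfrak{sl}(3)}$. Each $\phi_s$ is an $SU(3)$-equivariant linear map $\mathfrak{sl}(3)\to Poly(\mathcal O_\xi)^{(1,1)}$, so $\phi_s$ lies in the finite-dimensional space $\mathcal L:=\mathrm{Hom}_{SU(3)}\big(\mathfrak{sl}(3),Poly(\mathcal O_\xi)^{(1,1)}\big)$. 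This space is one-dimensional for $\mathcal O_\xi\simeq\mathbb CP^2$ and two-dimensional for generic $\mathcal O_\xi$; in the generic case it is spanned by $\mu(u):=-i\beta_1[u]|_{\mathcal O_\xi}$ and by a second map $\nu$ coming from the degree-$2$ polynomials (the symmetric $d$-tensor copy of $(1,1)$). By Lemma \ref{lemma:pois_bound}, $(\phi_s)_s$ is bounded in $\mathcal L$. Hence every subsequence has a further convergent subsequence, and it suffices to show that every such limit $L\in\mathcal L$ equals $\mu$.

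Next I pass the Poisson conditions to the limit $L$. Since $w^s_\xi$ is multiplicative, $sr(\xi)[\phi_s(u),\phi_s(v)]_{\star^s_\xi}=\phi_s([u,v])$. The operators $C^s_\xi$ of the proof of Lemma \ref{lemma:pois_bound}, defined on the finite-dimensional space $Poly(\mathcal O_\xi)^{(1,1)}\wedge Poly(\mathcal O_\xi)^{(1,1)}$, converge pointwise, hence in operator norm, to $i\{\cdot,\cdot\}$. Applying them to the convergent arguments $\phi_s(u)\wedge\phi_s(v)$ therefore gives
\begin{equation*}
L([u,v])=i\{L(u),L(v)\}\, ,\quad \forall u,v\in\mathfrak{sl}(3)\, .
\end{equation*}
In the same way, using the product operators $T^s_\xi$ of \eqref{tp_oper}, the quadratic Casimir $C_2=\sum_j e_je_j^\dagger$ acts on $\mathcal H_{s\omega_\xi}$ as a scalar $c_2(s\omega_\xi)=(sr(\xi))^2\norm{\xi}^2+O(s)$. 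Hence $\sum_j\phi_s(e_j)\star^s_\xi\phi_s(e_j^\dagger)\to$ a nonzero constant, which yields $\sum_jL(e_j)L(e_j^\dagger)=\sum_j\mu(e_j)\mu(e_j^\dagger)\ne0$. In particular $L\ne0$, and $L$ has the same normalization as $\mu$.

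For $\mathcal O_\xi\simeq\mathbb CP^2$ we have $L=c\mu$. The bracket identity together with Proposition \ref{prop:pois_b_beta}, which gives $\mu([u,v])=i\{\mu(u),\mu(v)\}$, forces $c=c^2$, so $c=1$ since $L\neq0$. The main obstacle is the generic case, where $L=a\mu+b\nu$. There I will use equivariance, which gives $\{\mu(u),\nu(v)\}=-i\nu([u,v])$, and substitute into the bracket identity. This yields $a\mu+b\nu=a^2\mu+2ab\,\nu+b^2\,i\{\nu,\nu\}$, where I write $\{\nu,\nu\}$ for the map $(u,v)\mapsto\{\nu(u),\nu(v)\}$. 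To close the system I must compute $\{\nu,\nu\}$: by equivariance it lies in $\mathrm{Span}(\mu,\nu)$ modulo components in other isotypic pieces $(3,0),(0,3)$ and higher, which must cancel separately. I would compute $\{\nu(u),\nu(v)\}$ explicitly at the point $\xi$, using Proposition \ref{prop:pois_b_beta} on $\beta_2[S(\cdot)]$. Solving the resulting quadratic system together with the Casimir normalization should leave only $(a,b)=(1,0)$. If spurious solutions survive, I would add the analogous normalization from the cubic Casimir (whose symbol is $\tau$, cf.\ Proposition \ref{prop:c3_pol}) to eliminate them. Once the limit is unique, the whole sequence $\phi_s$ converges to $\mu$, which is the claim.
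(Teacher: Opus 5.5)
There is a genuine gap, and it sits in the main case. The earlier steps are fine: reducing to a bounded sequence in the finite-dimensional space $\mathcal L$, the limiting identity $L([u,v])=i\{L(u),L(v)\}$, the Casimir argument giving $L\neq 0$, and the $\mathbb CP^2$ case $c=c^2$. But for a generic orbit you never show that $L=a\mu+b\nu$ satisfying that identity must have $b=0$. You only say the system ``should'' leave $(a,b)=(1,0)$ and offer the cubic Casimir as a fallback.

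That fallback cannot settle the question on its own. For each Weyl group element $w$, the map $\psi_w(\mathrm{Ad}_g\xi)=\mathrm{Ad}_g(w\xi)$ is the restriction to $\mathcal O_\xi$ of a matrix polynomial of degree $\le 2$ (Lagrange interpolation in the eigenvalues). Hence $u\mapsto\mu(u)\circ\psi_w$ lies in $\mathcal L$ and satisfies both the quadratic and the cubic Casimir normalizations. Only the bracket identity can exclude these maps.

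The missing step is true and can be closed as follows. Write $L(u)(X)=-i(u,J(X))$ with $J\colon\mathcal O_\xi\to\mathfrak{sl}(3)$ equivariant, and set $\eta=J(\xi)$; then $\eta\in\mathfrak t_{\mathbb C}$. Evaluate the bracket identity at $\xi$ on root vectors $E_\alpha,E_{-\alpha}$. This gives $\alpha(\eta)^2=\alpha(\eta)\,\alpha(\xi)$ for every root $\alpha$. Since $\alpha_1(\xi),\alpha_2(\xi)\neq0$ and $\alpha_1+\alpha_2$ is also a root, this forces $\eta\in\{0,\xi\}$, i.e.\ $L\in\{0,\mu\}$.

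In your own framework the same conclusion reads as follows. One has $\mathfrak{sl}(3)\wedge\mathfrak{sl}(3)\simeq(1,1)\oplus(3,0)\oplus(0,3)$, so there is nothing ``higher''. Projecting the identity onto $(3,0)\oplus(0,3)$ leaves $b^2\{\nu,\nu\}_{(3,0)\oplus(0,3)}=0$, and that component is nonzero on a regular orbit.

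The paper avoids this nonlinear problem altogether. By equivariance of $w^s_\xi$, the rescaled commutator with $h_s=(sr(\xi))^{-1}w^s_\xi[u]$ implements the infinitesimal action on \emph{every} symbol, not only on the image of $\mathfrak{sl}(3)$: $sr(\xi)[h_s,w^s_\xi[v]]_{\star^s_\xi}=w^s_\xi[uv-vu]=X_u(w^s_\xi[v])$. Take a convergent subsequence $h_{s_n}\to h$ and a fixed linear $f$. Uniform boundedness of $\tilde h\mapsto s_nr(\xi)[\tilde h,f]_{\star^{s_n}_\xi}$, together with the Poisson property, then yields the \emph{linear} identity $i\{h,f\}=X_uf=i\{-i\beta_1[u]|_{\mathcal O_\xi},f\}$. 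Hence $h+i\beta_1[u]|_{\mathcal O_\xi}$ has vanishing Hamiltonian vector field, so it is constant on the connected orbit and therefore $0$ in the $(1,1)$ component.

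The paper's route needs no case split, no Casimir normalization and no computation of $\{\nu,\nu\}$. Your route, once completed, proves a more structural fact: $0$ and $\mu$ are the only elements of $\mathcal L$ that are Lie homomorphisms into $(Poly(\mathcal O_\xi),i\{\cdot,\cdot\})$. The price is the extra root-space computation.
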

\begin{proof}
	Let $h_s = (sr(\xi))^{-1}w^s_\xi[u]$. By the previous lemma, $(h_s)$ is a bounded sequence in $Poly(\mathcal O_\xi)^{(1,1)}$. Let $(\widetilde h_n)$, $\widetilde h_n = h_{s_n}$, be any convergent subsequence, $\widetilde h_n \to h$. We want to prove that $h = -i\beta_1[u]|_{\mathcal O_\xi}= -i\beta[u]|_{\mathcal O_\xi}$. To do so, we will prove that
	\begin{equation}
		i\{h,f\} = i\{-i\beta[u]|_{\mathcal O_\xi},f\}
	\end{equation}
    for every $f\in Poly_1(\mathcal O_\xi)$, which allows us to conclude that $h$ and $-i\beta[u]|_{\mathcal O_\xi}$ have the same Hamiltonian vector field, so they differ by a constant\footnote{Recall that $\mathcal O_\xi$ is connected.}; since both functions lies in $Poly(\mathcal O_\xi)^{(1,1)}$, whose only constant function is identically $0$, the functions must coincide. Thus, let $X_u$ be the vector field that represents the action of $u$ on $C^\infty(\mathcal O_\xi)$, naturally induced by the $SU(3)$-action, and take
    \begin{equation}
        \widetilde f = X_u(f)\, .
    \end{equation}
    Then, let $v_n \in \mathfrak{sl}(3) \equiv U_1(\mathfrak{sl}(3))$ be such that $f = w^{s_n}_\xi[v_n]$. By equivariance of $w^{s_n}_{\xi}$, we have that
    \begin{equation}
        \widetilde v_n = uv_n-v_nu
    \end{equation}
    satisfies $\widetilde f = w^{s_n}_\xi[\widetilde v_n]$. Then 
	\begin{equation}\label{[h_n,f]}
		s_nr(\xi)[h_n,f]_{\star^{s_n}_\xi} = w^{s_n}_\xi[uv_n-v_nu] = w^{s_n}_\xi[\widetilde v] = \widetilde f,
	\end{equation}
	where $\star^s_\xi$ is the twisted product induced by $w^s_\xi$ as usual.
    We can rewrite \eqref{[h_n,f]} as
    \begin{equation}
        \widetilde f = s_nr(\xi)[h_n-h,f]_{\star^{s_n}_\xi}+sr(\xi)[h,f]_{\star^{s_n}_\xi}\, .
    \end{equation}
    As we argued in the proof of the previous lemma, the Poisson hypothesis implies that the sequence of operators $(F_n)$,
    \begin{equation}
        F_n:Poly_1(\mathcal O_\xi) \to Poly_1(\mathcal O_\xi):\widetilde h\mapsto s_nr(\xi)[\widetilde h,f]_{\star^{s_n}_\xi}\, ,
    \end{equation}
    is uniformly bounded, so
	\begin{equation}
		\widetilde f = \lim_{n\to\infty}s_nr(\xi)[h_n-h,f]_{\star^{s_n}_\xi}+\lim_{n\to\infty}s_nr(\xi)[h,f]_{\star^{s_n}_\xi} = i\{h,f\}\, .
	\end{equation}
	Now, let $v = S(f)$ and $\widetilde v = S(\widetilde f)$, so $f = \beta_1[v]|_{\mathcal O_\xi}$, $\widetilde v = uv-vu$ and
    \begin{equation}
        \widetilde f = \beta_1[\widetilde v]|_{\mathcal O_\xi} = \beta_1[uv-vu]|_{\mathcal O_\xi}\, ,
    \end{equation}
    cf. Proposition \ref{prop:S_equiv}. From Proposition \ref{prop:pois_b_beta}, we have
	\begin{equation}
		i\{h,f\} = \widetilde f = \beta_1[uv-vu]|_{\mathcal O_\xi} = \{\beta_1[u],\beta_1[v]\}|_{\mathcal O_\xi} = i\{-i\beta[u]|_{\mathcal O_\xi},f\}\, .
	\end{equation}
	Therefore, every convergent subsequence of the bounded sequence $(h_s)$ converges to $-i\beta[u]|_{\mathcal O_\xi}$, which means the sequence itself converges to $-i\beta[u]|_{\mathcal O_\xi}$.
\end{proof}

\begin{proposition}\label{thmprop2}
	If a $\xi$-ray $(w^s_\xi)_{s\in\mathbb N}$ of universal correspondences is of Poisson type, then  \eqref{condition1} is satisfied. 
\end{proposition}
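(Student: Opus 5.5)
We argue by induction on $\deg(u)$, using the two preceding lemmas. The base cases are immediate. For $\deg(u)=0$, $u$ is a multiple of $1$, and $w^s_\xi[1]=1$ for every symbol correspondence (unitality). For $\deg(u)=1$, write $u=u_1+u_0$ with $u_1\in U_1(\mathfrak{sl}(3))$ and $u_0\in U_0$. Then $(sr(\xi))^{-1}w^s_\xi[u_0]\to 0$, and the previous lemma gives $(sr(\xi))^{-1}w^s_\xi[u_1]\to -i\beta_1[u_1]|_{\mathcal O_\xi}=-i\beta_1[u]|_{\mathcal O_\xi}$.

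For the inductive step, assume \eqref{condition1} holds for all elements of degree $\le d$ and let $\deg(u)=d+1$. By the PBW basis, write $u=\sum_k v_k v'_k+\tilde u$. Here each $v_k v'_k$ is an ordered monomial in $\mathcal B_{d+1}$ with $v_k\in\mathcal B_1$ and $v'_k\in\mathcal B_d$, and $\tilde u\in U_{\le d}(\mathfrak{sl}(3))$. By linearity it suffices to treat two kinds of terms.

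For $\tilde u$: by Lemma \ref{lemma:pois_bound}, $(sr(\xi))^{-\deg(\tilde u)}w^s_\xi[\tilde u]$ is bounded. Since $\deg(\tilde u)\le d<d+1$, multiplying by $(sr(\xi))^{\deg(\tilde u)-d-1}$ sends it to $0$, consistent with $\beta_{d+1}[\tilde u]=0$.

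For a product $vv'$ with $\deg v=1$ and $\deg v'=d$: set $f_s=(sr(\xi))^{-1}w^s_\xi[v]$ and $g_s=(sr(\xi))^{-d}w^s_\xi[v']$. By the inductive hypothesis, $f_s\to f:=-i\beta_1[v]|_{\mathcal O_\xi}$ and $g_s\to g:=(-i)^d\beta_d[v']|_{\mathcal O_\xi}$. Then
\[
(sr(\xi))^{-(d+1)}w^s_\xi[vv']=f_s\star^s_\xi g_s .
\]
We expand $f_s\star^s_\xi g_s-fg$ as
\[
(f_s-f)\star^s_\xi g_s+f\star^s_\xi(g_s-g)+\big(f\star^s_\xi g-fg\big).
\]
The last term tends to $0$ by \eqref{Poisson-eq}. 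For the first two terms, use the same uniform boundedness device as in Lemma \ref{lemma:pois_bound}. All functions involved lie in the fixed finite-dimensional spaces $Poly(\mathcal O_\xi)^{(1,1)}$ and $\bigoplus_{\vb*a}Poly(\mathcal O_\xi)^{\vb*a}$, where $\vb*a$ runs over the finitely many irreps occurring in $U_{\le d}(\mathfrak{sl}(3))$. On these spaces, the bilinear maps $T^s_\xi:(f,h)\mapsto f\star^s_\xi h$ converge pointwise by the Poisson hypothesis. Hence they are uniformly bounded: pointwise convergence of bilinear maps on finite-dimensional spaces already yields uniform bounds. Therefore both error terms tend to $0$.

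It follows that
\[
(sr(\xi))^{-(d+1)}w^s_\xi[vv']\to fg=(-i)^{d+1}\beta_1[v]\beta_d[v']|_{\mathcal O_\xi}=(-i)^{d+1}\beta_{d+1}[vv']|_{\mathcal O_\xi},
\]
using Proposition \ref{prop:point_p_beta}. Summing over $k$ gives \eqref{condition1} for $u$, since $\beta_{d+1}[u]=\sum_k\beta_{d+1}[v_kv'_k]$.

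The step needing the most care is this uniform boundedness of the twisted products applied to the moving sequences $f_s,g_s$, together with the requirement that they stay in a fixed finite-dimensional invariant subspace. Lemma \ref{lemma:f_image_orbit} guarantees those spaces lie in $\mathcal W^s_\xi$ for large $s$, so $T^s_\xi$ is defined there. Equivariance of $w^s_\xi$ ensures $w^s_\xi[U_{\le d}]$ lies in the sum of isotypic components for irreps appearing in $U_{\le d}$.
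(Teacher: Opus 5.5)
Your proposal is correct and follows essentially the same route as the paper: induction on $\deg(u)$, writing top-degree terms as products of lower-degree factors, then a triangle-inequality split in which the fixed term $f\star^s_\xi g\to fg$ is handled by the Poisson condition and the moving terms by uniform boundedness of the bilinear maps $T^s_\xi$ on fixed finite-dimensional isotypic spaces. Your version is slightly more careful than the paper's in two places. You restrict to PBW factorizations into a degree-$1$ factor times a degree-$d$ factor, and you explicitly dispose of both the degree-$0$ case and the lower-order remainder $\tilde u$, which the paper leaves implicit.
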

\begin{proof}
	We prove by induction on $\deg(u)$, supposing it holds   $\forall u\in U_{\leq d}(\mathfrak{sl}(3))$,  with the previous Lemma showing it holds for $\deg(u)=d=1$. As we did before, if $v_j \in U_{\le d_j}(\mathfrak{sl}(3))^{\vb*a_j}$ for $j\in\{1,2\}$, with $1\le d_1,d_2\le d$, then $v_1v_2 \in U_{\le d_1+d_2}(\mathfrak{sl}(3))$. Let $(f_s)$ and $(\widetilde f_s)$ be given by $f_s = (sr(\xi))^{-d_1}w^s_\xi[v_1]$ and $\widetilde f_s = (sr(\xi))^{-d_2}w^s_\xi[v_2]$. By the hypothesis of induction, $f = (-i)^{d_1}\beta_{d_1}[v_1]$ and $\widetilde f = (-i)^{d_2}\beta_{d_2}[v_2]$ are the limits of $(f_s)$ and $(\widetilde f_s)$, respectively. Since
	\begin{equation}
		\norm{f_s\star^s_\xi \widetilde f_s-f\widetilde f}_\xi \le \norm{f_s\star^s_\xi \widetilde f_s - f\star^s_\xi\widetilde f}_\xi+\norm{f\star^s_\xi \widetilde f - f\widetilde f}_\xi\, ,
	\end{equation}
	we just need to verify that both summands in the r.h.s.~converge to $0$ as $s\to \infty$. The convergence of second summand follows straightforwardly from the Poisson condition. For the first summand, we use again that the sequence of operators \eqref{tp_oper} is bounded, so the convergences $f_s\to f$ and $\widetilde f_s\to \widetilde f$ imply that the limit of the first summand vanishes.
\end{proof}

Therefore, combining Propositions \ref{thmprop1} and \ref{thmprop2}, we have obtained: 

\begin{theorem}\label{cor:pois_crit}
	For $\xi \in\overline{\mathcal Q}$, a $\xi$-ray of universal correspondences $(w^s_\xi)_{s\in\mathbb N}$ is of Poisson type, so that \  $\mathfrak{W}(\mathcal O_\xi)\xrightarrow{ \ \sim \ } Poly(\mathcal O_\xi,\Omega_\xi)\, ,$  if and only if \eqref{condition1} is satisfied. 
\end{theorem}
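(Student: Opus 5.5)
The statement is the conjunction of Proposition \ref{thmprop1} (sufficiency) and Proposition \ref{thmprop2} (necessity), so the proof amounts to verifying that each direction is genuinely established by the arguments above. The plan is to treat the two implications separately.

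For sufficiency, I would repeat the proof of Theorem \ref{theo:prod_ber_conv}, replacing the Berezin ray $(b^s_\xi)$ by an arbitrary ray $(w^s_\xi)$ that satisfies \eqref{condition1}.
\begin{enumerate}
\item Lemma \ref{lemma:b^s_basis} uses only the limit in Corollary \ref{lemma:lim-t-1}, and \eqref{condition1} is exactly that limit. Hence, for each isotypic block $Poly(\mathcal O_\xi)^{\vb*a}$, the rescaled symbols $(sr(\xi))^{-d(j)}w^s_\xi[u_j]$ form a basis for all large $s$.
\item Proposition \ref{prop:decomp_b^s_basis} then holds verbatim: the coefficients $\alpha_j(s)$ converge to $\alpha^\infty_j$.
\item The twisted product of two such expansions is $\sum\alpha^1_j(s)\alpha^2_k(s)(sr(\xi))^{-(d_1+d_2)}w^s_\xi[u^1_ju^2_k]$. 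Here $\deg(u^1_ju^2_k)\le d_1+d_2$, and the top-degree part is governed by Proposition \ref{prop:point_p_beta}.
\item If the degree drops strictly, the rescaled term tends to zero, since $(sr(\xi))^{-\deg}w^s_\xi[\cdot]$ is bounded by \eqref{condition1}. If it does not drop, \eqref{condition1} together with Proposition \ref{prop:point_p_beta} gives the limit $h^1_jh^2_k$.
\item The commutator is handled the same way: $u^1_ju^2_k-u^2_ku^1_j$ has degree $\le d_1+d_2-1$, and Proposition \ref{prop:pois_b_beta} yields $i\{f_1,f_2\}$.
\end{enumerate}
Uniform convergence on $\mathcal O_\xi$ is automatic, because each isotypic block $Poly(\mathcal O_\xi)^{\vb*a}$ is finite-dimensional, so all norms on it are equivalent.

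For necessity, I would follow the three-step chain already laid out.
\begin{enumerate}
\item Lemma \ref{lemma:pois_bound} gives boundedness of $(sr(\xi))^{-\deg(u)}w^s_\xi[u]$. The degree-one case uses the uniform boundedness principle applied to the commutator operators $C^s_\xi$ on the finite-dimensional space $\wedge^2 Poly(\mathcal O_\xi)^{(1,1)}$, together with Schur's lemma to compare the norms of $w^s_\xi[v]$ and $w^s_\xi[u]$. Higher degrees follow by induction, using the product operators $T^s_\xi$.
\item The next lemma identifies the degree-one limit by a subsequence argument. Any cluster point $h$ of $(sr(\xi))^{-1}w^s_\xi[u]$ generates the same Hamiltonian action on linear functions as $-i\beta_1[u]|_{\mathcal O_\xi}$. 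The two therefore differ by a constant, and the constant vanishes because the $(1,1)$ block contains no constants.
\item Proposition \ref{thmprop2} propagates the limit to all degrees by induction, splitting $f_s\star^s_\xi\widetilde f_s-f\widetilde f$ with the triangle inequality.
\end{enumerate}

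The step I expect to be the main obstacle is the degree-one boundedness, where one must extract the rate $N_s\in O(1/s)$ from the commutator. The key point is that equivariance forces $w^s_\xi[uv-vu]$ to have the same normalisation as $w^s_\xi[u]$, so the bound on $sr(\xi)N_s^2\|w^s_\xi[\widetilde v]\|$ yields $sr(\xi)N_s$ bounded. Uniform boundedness applies here precisely because the domains are finite-dimensional.

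A subtlety in the induction is that not every $u$ of degree $d+1$ is a single product. I would write each $u$ as a linear combination of products $v_1v_2$ of isotypic pieces of lower degree, plus terms of degree $\le d$, which are already handled; linearity then finishes the argument.
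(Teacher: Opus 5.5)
\textbf{Overall.} Your sufficiency half, and the architecture of your necessity half (boundedness, then identification in degree one, then induction via $T^s_\xi$), follow the paper's route exactly: the theorem is proved there as the conjunction of Propositions \ref{thmprop1} and \ref{thmprop2}.

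\textbf{The gap.} The step you flag as the main obstacle does not give what you need. With $N_s=\norm{w^s_\xi[u]}_\xi^{-1}$, the conclusion ``$sr(\xi)N_s$ is bounded'' says that $(sr(\xi))^{-1}\norm{w^s_\xi[u]}_\xi=(sr(\xi)N_s)^{-1}$ is bounded \emph{below}. In other words, $w^s_\xi[u]$ grows at least linearly in $s$.

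It gives no upper bound, and the upper bound is what you actually use later. The degree-one identification extracts convergent subsequences from $(sr(\xi))^{-1}w^s_\xi[u]$, and the $T^s_\xi$ induction propagates upper bounds. The failure comes from the test you chose: commuting the $s$-normalised elements $h_s,f_s$ with each other can only show that $w^s_\xi[u]$ is not too small. The written proof of Lemma \ref{lemma:pois_bound} makes the same slip, so citing it does not close the gap.

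\textbf{The fix.} Test against a fixed function and argue by contradiction, in the spirit of the identification lemma.
\begin{enumerate}
\item Suppose $\lambda_s:=\norm{(sr(\xi))^{-1}w^s_\xi[u]}_\xi$ is unbounded along a subsequence. Pass to a further subsequence along which $\hat h_s:=\lambda_s^{-1}(sr(\xi))^{-1}w^s_\xi[u]\to\hat h$ in the finite-dimensional space $Poly(\mathcal O_\xi)^{(1,1)}$, with $\norm{\hat h}_\xi=1$.
\item Fix $f\in Poly_1(\mathcal O_\xi)$; it lies in the image of $w^s_\xi$ for all $s$ by Lemma \ref{lemma:f_image_orbit}. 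Equivariance gives $[w^s_\xi[u],f]_{\star^s_\xi}=X_u(f)$ independently of $s$. Hence $sr(\xi)[\hat h_s,f]_{\star^s_\xi}=\lambda_s^{-1}X_u(f)\to 0$.
\item On the other hand, the linear maps $g\mapsto sr(\xi)[g,f]_{\star^s_\xi}$ on $Poly(\mathcal O_\xi)^{(1,1)}$ converge pointwise by the Poisson hypothesis. On a finite-dimensional space they are therefore uniformly bounded, so the same expression tends to $i\{\hat h,f\}$.
\item Thus $\{\hat h,f\}=0$ for every linear $f$. So $\hat h$ is constant on $\mathcal O_\xi$, and hence zero, because the $(1,1)$-block contains no nonzero constants. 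This contradicts $\norm{\hat h}_\xi=1$.
\end{enumerate}
Once this degree-one upper bound is in place, the Schur-lemma comparison is no longer needed. The rest of your plan then goes through as written: the identification of the degree-one limit, the $T^s_\xi$ induction for boundedness, and the triangle-inequality induction of Proposition \ref{thmprop2}.
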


\subsection{Second criterion for Poisson: characteristic matrices}

As presented in Paper I, every symbol correspondence for a quark system with dominant weight $p\,\varpi_1+q\,\varpi_2$ is uniquely determined by its characteristic matrices, cf. Theorems I.4.8 and I.5.9, and Remark I.5.13. Therefore, a natural question is how to write the Poisson condition for a $\xi$-ray of correspondences in terms of the sequence of their characteristic matrices, or characteristic numbers if $\xi=(1,0)$ or $(0,1)$. To answer this question more clearly, we  translate some notation used in this Paper II to the language of Paper I. 

For each $\xi \in \overline{\mathcal Q}$, let $\vb*p^1_\xi = (p_1,q_1)\in\mathbb N^2_0\setminus\{(0,0)\}$ be the first integral pair for $\xi$, 
\begin{equation}
    \omega_{\vb*p^1_\xi} = p_1\,\varpi_1+q_1\,\varpi_2 = \omega_\xi \ , 
\end{equation}
cf. Definition \ref{intrad-defn}. 
Then, fixed $\xi$,  for each $s \in \mathbb N$ we denote
\begin{equation}\label{norm_omega_ps}
\begin{aligned}
    \vb*p^s_\xi &= (sp_1,sq_1) =  s\vb*p^1_\xi \ , \\  \omega_{\vb*p^s_\xi}&\equiv s\,\omega_\xi \ , \ \ \norm{\omega_{\vb*p^s_\xi}}= sr(\xi)\, ,
\end{aligned} 
\end{equation}
so that  $\mathcal H_{\vb*p^s_\xi}\equiv \mathcal H_{s\omega_\xi}$ is the quantum quark system with irrep $\vb*p^s_\xi$, and $\rho_{\vb*p^s_\xi}\equiv\rho_{s\omega_\xi}$ is the (finite dimensional, cf.~(I.2.17)) representation of $U(\mathfrak{sl}(3))$ on $\mathcal H_{\vb*p^s_\xi}$. Also, consider the $\xi$-ray of symbol correspondences $(W^{\vb*p_\xi^s}\equiv W^{s\omega_\xi})_{s\in\mathbb N}$ generating the $\xi$-ray of universal correspondences  $(w^s_\xi)_{s\in\mathbb N}$ according to Definition \ref{def:univ_corresp}, that is, 
\begin{equation}
    w^s_\xi = W^{\vb*p^s_\xi}\circ \rho_{\vb*p^s_\xi} \ .
\end{equation}
Finally, denote by $\mathbf {C}^s_\xi(\vb*a)$ the characteristic matrices of $W^{\vb*p^s_\xi}\equiv W^{s\omega_\xi}$ (characteristic numbers as $1\times 1$ matrices if $p_1 = 0$ or $q_1 = 0$), cf.  Definition I.5.12. 

Recalling the normalized Hilbert-Schmidt inner product on each $\mathcal B(\mathcal H_{\vb*p^s_\xi})$, 
\begin{equation}\label{norm_hs_ip}
    \ip{A_1}{A_2}_{\vb*p^s_\xi} = \dfrac{1}{\dim(\vb*p^s_\xi)}\tr(A_1A_2) \ , 
\end{equation}
cf. (I.3.10), then based on what is known for spin systems, one should expect that $\xi$-rays of correspondences of Poisson type tend in some sense to an isometry with respect to the inner products $\ip{\cdot}_{\vb*p^s_\xi}$ and $\ip{\cdot}_\xi$ as $s\to\infty$, where the latter is the Haar inner product of functions on $\mathcal O_\xi$. We now show that this is indeed what happens, which will lead to an asymptotic condition for the characteristic matrices.

\begin{lemma}\label{ip_rho_beta}
    For any $u,v\in U(\mathfrak{sl}(3))$, we have
    \begin{equation}
		\begin{aligned}
		& \lim_{s\to \infty}\norm{\omega_{\vb*p_\xi^s}}^{-(\deg(u)+\deg(v))}\ip{\rho_{\vb*p_\xi^s}(u)}{\rho_{\vb*p_\xi^s}(v)}_{\vb*p_\xi^s}\\
		& \hspace{5 em} = (-i)^{\deg(v)-\deg(u)}\ip{\beta_{\deg(u)}[u]}{\beta_{\deg(v)}[v]}_\xi\, .
		\end{aligned}
	\end{equation}
\end{lemma}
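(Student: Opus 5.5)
The plan is to turn the normalized Hilbert--Schmidt inner product into an integral of a Berezin symbol over $\mathcal O_\xi$, and then apply the asymptotics of the universal Berezin correspondence from Corollary \ref{lemma:lim-t-1}. I take the left-invariant measure in \eqref{Haarip} to be normalized to total mass $1$. Then for any $A\in\mathcal B(\mathcal H_{\vb*p^s_\xi})$ we have
\begin{equation*}
\int_{\mathcal O_\xi}B^{s\omega_\xi}_A(\vb*\varsigma)\,d\vb*\varsigma=\int_{SU(3)}\tr(A\,\Pi_>^g)\,dg=\frac{1}{\dim(\vb*p^s_\xi)}\tr(A).
\end{equation*}
This holds because $\int_{SU(3)}\Pi_>^g\,dg$ commutes with the irreducible action, so by Schur's Lemma it equals $\mathds 1/\dim(\vb*p^s_\xi)$.

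\textbf{Step 1: express the inner product as a Berezin integral.} Introduce the conjugate-linear anti-involution $u\mapsto u^\ast$ on $U(\mathfrak{sl}(3))$, determined by $X^\ast=-X$ for $X\in\mathfrak{su}(3)$ and $(uv)^\ast=v^\ast u^\ast$. Since $\rho_{\vb*p^s_\xi}$ integrates to a unitary representation, it represents $\mathfrak{su}(3)$ by skew-Hermitian operators, so $\rho_{\vb*p^s_\xi}(u)^\dagger=\rho_{\vb*p^s_\xi}(u^\ast)$. I read the inner product (I.3.10) as $\tr(A_1^\dagger A_2)/\dim$. The trace formula then gives
\begin{equation*}
\ip{\rho_{\vb*p^s_\xi}(u)}{\rho_{\vb*p^s_\xi}(v)}_{\vb*p^s_\xi}=\int_{\mathcal O_\xi}b^s_\xi[u^\ast v](\vb*\varsigma)\,d\vb*\varsigma ,
\end{equation*}
where $b^s_\xi$ is the universal Berezin correspondence for $s\omega_\xi$.

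\textbf{Step 2: pass to the limit.} Put $D=\deg(u)+\deg(v)$ and recall $\norm{\omega_{\vb*p^s_\xi}}=sr(\xi)$ by \eqref{norm_omega_ps}. If $\deg(u^\ast v)=D$, Proposition \ref{prop:error_conv} gives uniform convergence
\begin{equation*}
(sr(\xi))^{-D}\,b^s_\xi[u^\ast v]\;\longrightarrow\;(-i)^D\beta_D[u^\ast v]\big|_{\mathcal O_\xi}
\end{equation*}
on the compact orbit, so the limit may be taken under the integral. If $\deg(u^\ast v)<D$, the left-hand side is $O(s^{-1})$ by the same corollary, while $\beta_D[u^\ast v]=0$; so in both cases the limit equals $(-i)^D\int\beta_D[u^\ast v]$. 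Proposition \ref{prop:point_p_beta}, together with $\deg(u^\ast)=\deg(u)$, then factors this as $\beta_D[u^\ast v]=\beta_{\deg(u)}[u^\ast]\,\beta_{\deg(v)}[v]$.

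\textbf{Step 3: sign and conjugation bookkeeping.} It remains to relate $\beta_{\deg(u)}[u^\ast]$ to $\overline{\beta_{\deg(u)}[u]}$ on $\mathfrak{su}(3)$. For $u\in\mathfrak{sl}(3)$ and $X\in\mathfrak{su}(3)$ I compute directly
\begin{equation*}
(u^\ast,X)=-\tr(u^\dagger X),\qquad \overline{(u,X)}=\tr(X^\dagger u^\dagger)=-\tr(Xu^\dagger).
\end{equation*}
Hence $(u^\ast,X)=\overline{(u,X)}$, so $\beta_1[u^\ast]=\overline{\beta_1[u]}$ on $\mathfrak{su}(3)$. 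Because $\beta_d$ only sees the top-degree symmetric part, this extends multiplicatively to $\beta_d[u^\ast]=\overline{\beta_d[u]}$ on $\mathfrak{su}(3)$. The factor $(-i)^D$ then combines as
\begin{equation*}
(-i)^{\deg(u)+\deg(v)}\,\overline{\beta_{\deg(u)}[u]}\,\beta_{\deg(v)}[v]=(-i)^{\deg(v)-\deg(u)}\,\overline{(-i)^{\deg(u)}\beta_{\deg(u)}[u]}\,(-i)^{\deg(v)}\beta_{\deg(v)}[v]\cdot(\text{unit factor}).
\end{equation*}
Tracking this unit factor against the conjugate-linear first slot of $\ip{\cdot}{\cdot}_\xi$ is what should yield the stated prefactor $(-i)^{\deg(v)-\deg(u)}$.

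\textbf{Main obstacle.} I expect the difficulty to be exactly this bookkeeping: the convention in (I.3.10), whether or not there is a dagger, the normalization of the measure in \eqref{Haarip}, and the sign conventions for $u^\ast$ and for $\beta$ on $\mathfrak{su}(3)$ versus $-i\omega$. If the signs do not come out as stated, I would first recheck the case $u,v\in\mathfrak{sl}(3)$ of degree one by direct computation, and fix the conventions from there.
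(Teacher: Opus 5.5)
Your strategy works, and once repaired it is a little leaner than the paper's. As written, though, Step~3 has a sign error, and the unspecified ``unit factor'' at the end hides it: with the identity you state, the argument lands on the wrong prefactor.

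Your anti-involution is conjugate-linear with $X^\ast=-X=X^\dagger$ for $X\in\mathfrak{su}(3)$. So for $u\in\mathfrak{sl}(3)$ you have $u^\ast=u^\dagger$ as a matrix, and hence $(u^\ast,X)=\tr(u^\dagger X)$, not $-\tr(u^\dagger X)$. Combined with your (correct) $\overline{(u,X)}=-\tr(Xu^\dagger)$, this gives
\[
\beta_1[u^\ast]=-\overline{\beta_1[u]}\ \ \text{on } \mathfrak{su}(3), \qquad\text{hence}\qquad \beta_d[u^\ast]=(-1)^d\,\overline{\beta_d[u]}\ \ \text{on } \mathfrak{su}(3).
\]
With your version $\beta_d[u^\ast]=\overline{\beta_d[u]}$, Steps~2--3 give $(-i)^{\deg(u)+\deg(v)}\ip{\beta_{\deg(u)}[u]}{\beta_{\deg(v)}[v]}_\xi$. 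This is off by $(-1)^{\deg(u)}$, and nothing remains to absorb it: the conjugate-linearity of the first slot of $\ip{\cdot}{\cdot}_\xi$ is already fully accounted for by $\overline{\beta_{\deg(u)}[u]}$.

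The case $u=v\in\mathfrak{su}(3)$ shows the problem at once. The left-hand side is a normalized $\tr(\rho(u)^\dagger\rho(u))\ge 0$, whereas your identity gives $-\int_{\mathcal O_\xi}|\beta_1[u]|^2\le 0$. With the corrected sign, $(-i)^{\deg(u)+\deg(v)}(-1)^{\deg(u)}=i^{\deg(u)}(-i)^{\deg(v)}=(-i)^{\deg(v)-\deg(u)}$ exactly, and your proof closes. (The case $\deg(u^\ast v)<D$ in Step~2 never occurs, since the associated graded of $U(\mathfrak{sl}(3))$ is a polynomial ring, but treating it does no harm.)

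For comparison, the paper never forms $u^\ast v$. It writes the normalized trace as $\int_{\mathcal O_\xi}\overline{b^s_\xi[u]}\star^s_\xi b^s_\xi[v]$, using that the Berezin symbol of $\rho(u)^\dagger$ is $\overline{b^s_\xi[u]}$. It then notes that all these symbols lie in a fixed finite-dimensional $Poly_{\le d}(\mathcal O_\xi)$, and combines Corollary~\ref{lemma:lim-t-1} for each factor with the product convergence of Corollary~\ref{cor:ber_ray_pois}. There the prefactor comes out automatically from conjugating the limit $(-i)^{\deg(u)}\beta_{\deg(u)}[u]$, so no identity for $\beta_d[u^\ast]$ is needed. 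Your route, once fixed, uses only the single-element asymptotics of Proposition~\ref{prop:error_conv} and Proposition~\ref{prop:point_p_beta}, and avoids the twisted-product convergence entirely. The cost is exactly the conjugation bookkeeping where the error crept in.
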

\begin{proof}
For the $\xi$-ray $(b^s_\xi)_s$ of universal Berezin correspondences,
\begin{equation}\label{ip_rho_u_rho_v}
\begin{aligned}
    & \norm{\omega_{\vb*p_\xi^s}}^{-(\deg(u)+\deg(v))}\ip{\rho_{\vb*p_\xi^s}(u)}{\rho_{\vb*p_\xi^s}(v)}_{\vb*p_\xi^s} \\
    & \hspace{5 em} = \int_{\mathcal O_\xi}\norm{\omega_{\vb*p_\xi^s}}^{-(\deg(u)+\deg(v))}\overline{b^s_\xi[u]}\star b^s_\xi[v](\vb*\varsigma)d\vb*\varsigma\, .
\end{aligned}
\end{equation}
By decomposing $U_{\le \deg(u)}(\mathfrak{sl}(3))$ and $U_{\le \deg(v)}(\mathfrak{sl}(3))$ into irreps, it is possible to find some $d \in \mathbb N$ such that $Poly_{\le d}(\mathcal O_\xi)$ contains $\overline{b^s_\xi[u]}$ and $b^s_\xi[v]$ for every $s\in \mathbb N$. Since $Poly_{\le d}(\mathcal O_\xi)$ is finite dimensional,  Corollaries \ref{lemma:lim-t-1} and \ref{cor:ber_ray_pois} imply
\begin{equation}
\begin{aligned}
    & \lim_{s\to\infty} \norm{\omega_{\vb*p_\xi^s}}^{-(\deg(u)+\deg(v))}\overline{b^s_\xi[u]}\star b^s_\xi[v]\\
    & \hspace{5 em} = (-i)^{\deg(v)-\deg(u)}\overline{\beta_{\deg(u)}[u]}\beta_{\deg(v)}[v]\, ,
\end{aligned}
\end{equation}
cf.~\eqref{norm_omega_ps}. This convergence is uniform, so taking the integral we get the equation of the statement from \eqref{ip_rho_u_rho_v}.
\end{proof}

\begin{corollary}\label{prop:asymp_sw}
	If $(W^{\vb*p_\xi^s})$ is of Poisson type, then, for every $u,v\in U(\mathfrak{sl}(3))$, 
	\begin{equation}
		\begin{aligned}
		& \lim_{s\to \infty}\norm{\omega_{\vb*p_\xi^s}}^{-(\deg(u)+\deg(v))}\ip{W^{\vb*p_\xi^s}\circ \rho_{\vb*p_\xi^s}(u)}{W^{\vb*p_\xi^s}\circ \rho_{\vb*p_\xi^s}(v)}_\xi\\
        & \hspace{5 em} = (-i)^{\deg(v)-\deg(u)}\ip{\beta_{\deg(u)}[u]}{\beta_{\deg(v)}[v]}_\xi \\
		& \hspace{8 em} = \lim_{s\to \infty}\norm{\omega_{\vb*p_\xi^s}}^{-(\deg(u)+\deg(v))}\ip{\rho_{\vb*p_\xi^s}(u)}{\rho_{\vb*p_\xi^s}(v)}_{\vb*p_\xi^s} \ . 
		\end{aligned}
	\end{equation}
\end{corollary}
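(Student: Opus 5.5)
The second equality is exactly Lemma \ref{ip_rho_beta}, so only the first equality needs proof. The plan is to rewrite $\ip{w^s_\xi[u]}{w^s_\xi[v]}_\xi$ as the integral of a product of two rescaled symbols, and then pass to the limit using the first criterion.

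Write $w^s_\xi=W^{\vb*p_\xi^s}\circ\rho_{\vb*p_\xi^s}$ and recall from \eqref{norm_omega_ps} that $\norm{\omega_{\vb*p_\xi^s}}=sr(\xi)$. By \eqref{Haarip},
\begin{equation*}
\norm{\omega_{\vb*p_\xi^s}}^{-(\deg(u)+\deg(v))}\ip{w^s_\xi[u]}{w^s_\xi[v]}_\xi=\int_{\mathcal O_\xi}\overline{g^s_u}\,g^s_v\,d\vb*\varsigma \ ,
\end{equation*}
where $g^s_u:=(sr(\xi))^{-\deg(u)}w^s_\xi[u]$, and likewise for $g^s_v$. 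Since the ray is of Poisson type, Theorem \ref{cor:pois_crit} (specifically Proposition \ref{thmprop2}) gives the limits $g^s_u\to(-i)^{\deg(u)}\beta_{\deg(u)}[u]|_{\mathcal O_\xi}$ and $g^s_v\to(-i)^{\deg(v)}\beta_{\deg(v)}[v]|_{\mathcal O_\xi}$.

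These limits are in the sup norm $\norm{\,}_\xi$. If the statement of \eqref{condition1} is only read as convergence in some unspecified topology, I would first note that all $g^s_u$ lie in the finite-dimensional space $Poly_{\le d}(\mathcal O_\xi)$ for a fixed $d$. This follows by equivariance: $w^s_\xi$ maps $U_{\le\deg(u)}(\mathfrak{sl}(3))$ into the isotypic components it contains, and these are finite-dimensional, exactly as in the proof of Lemma \ref{ip_rho_beta}. In finite dimension all norms agree, so the convergence is uniform.

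Uniform convergence of both factors on the compact orbit gives uniform convergence of the product $\overline{g^s_u}\,g^s_v$, because the sequences are bounded. The integral against the finite Haar measure therefore converges to
\begin{equation*}
\int_{\mathcal O_\xi}\overline{(-i)^{\deg(u)}\beta_{\deg(u)}[u]}\,(-i)^{\deg(v)}\beta_{\deg(v)}[v]\,d\vb*\varsigma \ .
\end{equation*}
Conjugating $(-i)^{\deg(u)}$ gives $i^{\deg(u)}=(-i)^{-\deg(u)}$, so this equals $(-i)^{\deg(v)-\deg(u)}\ip{\beta_{\deg(u)}[u]}{\beta_{\deg(v)}[v]}_\xi$, which is the first equality.

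No real obstacle is expected. The only care needed is to justify uniform, rather than merely pointwise, convergence of the symbols, and the finite-dimensionality argument borrowed from Lemma \ref{ip_rho_beta} handles that.
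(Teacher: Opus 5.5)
Your proposal is correct and follows the paper's route. The paper's proof simply says the result is immediate from Theorem~\ref{cor:pois_crit}, which gives convergence of the rescaled symbols and hence of their Haar inner products, together with Lemma~\ref{ip_rho_beta} for the second equality; your finite-dimensionality argument and the phase bookkeeping $\overline{(-i)^{\deg(u)}}=(-i)^{-\deg(u)}$ just make explicit what the paper leaves implicit.
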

\begin{proof}
It is immediate from Theorem \ref{cor:pois_crit} and Lemma \ref{ip_rho_beta}.
\end{proof}

\begin{theorem}\label{theo:pois_char_m_iso}
    If $(W^{\vb*p_\xi^s})$ is of Poisson type, then the characteristic matrices satisfy
    \begin{equation}
        \lim_{s\to\infty}(\mathbf {C}^s_\xi(\vb*a))^\dagger\mathbf {C}^s_\xi(\vb*a)= \mathds 1\, .
    \end{equation}
\end{theorem}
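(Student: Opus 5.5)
We need to recall what the characteristic matrices are in Paper I. For a symbol correspondence $W^{\vb*p}$, the operator space $\mathcal B(\mathcal H_{\vb*p})$ decomposes into isotypic components $\mathcal B(\mathcal H_{\vb*p})^{\vb*a}$, with multiplicity $\mathfrak m(\vb*p;\vb*a)$. Similarly, $Poly(\mathcal O_\xi)^{\vb*a}$ has multiplicity $m(\vb*a)$, and we fix orthonormal bases for both. The characteristic matrix $\mathbf C(\vb*a)$ is then the matrix of the equivariant map $W^{\vb*p}$ restricted to the $\vb*a$-component, written in these bases with respect to the normalized Hilbert–Schmidt inner product \eqref{norm_hs_ip} and the Haar inner product \eqref{Haarip}. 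Schur's lemma reduces $W$ to this multiplicity-space matrix. Consequently, for operators $A_1,A_2$ in the same copy-indexed component, $\ip{W(A_1)}{W(A_2)}_\xi$ is expressed through $\mathbf C(\vb*a)^\dagger\mathbf C(\vb*a)$. Explicitly, if $A=\sum_k c_k E_k$ in the orthonormal multiplicity basis of a fixed weight vector, then
\begin{equation*}
\ip{W(A)}{W(A')}_\xi=\overline{c}^{\,T}\,\mathbf C(\vb*a)^\dagger\mathbf C(\vb*a)\,c' .
\end{equation*}
The plan is to show that the Gram form of $W^{\vb*p^s_\xi}$ asymptotically agrees with the Hilbert–Schmidt one on every vector of the $\vb*a$-component. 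For fixed $\vb*a$ and $s$ large, the multiplicity satisfies $\mathfrak m(s\vb*p^1_\xi;\vb*a)=m(\vb*a)$ (cf. the proof of Lemma \ref{lemma:f_image_orbit}), so the matrices have a fixed size $m\times m$.

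The key step is to produce, for each $s$, an orthonormal-ish basis of the $\vb*a$-multiplicity space of $\mathcal B(\mathcal H_{\vb*p^s_\xi})$ coming from fixed elements of $U(\mathfrak{sl}(3))$. Take $u_1,\dots,u_m\in U(\mathfrak{sl}(3))^{\vb*a}$ as constructed before Lemma \ref{lemma:b^s_basis}. Each $u_j$ is a highest-weight vector of type $\vb*a$ with degree $d(j)$, and the restrictions $h_j=(-i)^{d(j)}\beta_{d(j)}[u_j]|_{\mathcal O_\xi}$ are linearly independent. Set $A_j(s)=\norm{\omega_{\vb*p^s_\xi}}^{-d(j)}\rho_{\vb*p^s_\xi}(u_j)$. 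By Lemma \ref{ip_rho_beta}, the Gram matrix $G^{HS}(s)_{jk}=\ip{A_j(s)}{A_k(s)}_{\vb*p^s_\xi}$ converges to $G_{jk}=\ip{h_j}{h_k}_\xi$ (the $(-i)$ phases are absorbed consistently). This limit is positive definite since the $h_j$ are independent. By Corollary \ref{prop:asymp_sw}, the Gram matrix $G^{W}(s)_{jk}=\ip{W(A_j(s))}{W(A_k(s))}_\xi$ converges to the same $G$.

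Now write $A_j(s)=\sum_k P(s)_{kj}E_k(s)$ in the chosen orthonormal multiplicity basis. Then $G^{HS}(s)=P(s)^\dagger P(s)$ and $G^W(s)=P(s)^\dagger\,\mathbf C^s_\xi(\vb*a)^\dagger\mathbf C^s_\xi(\vb*a)\,P(s)$. Because $G^{HS}(s)\to G$ is invertible, $P(s)$ is invertible for large $s$, and both $P(s)$ and $P(s)^{-1}$ are bounded: their singular values are square roots of the eigenvalues of $G^{HS}(s)$, which converge to those of $G$. We then obtain
\begin{equation*}
\mathbf C^\dagger\mathbf C-\mathds 1=P(s)^{-\dagger}\big(G^W(s)-G^{HS}(s)\big)P(s)^{-1}\to0 .
\end{equation*}
This relation is invariant under the choice of orthonormal bases, so it is independent of the conventions used for the $\mathbf C$.

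The main obstacle is bookkeeping rather than analysis. First, one must match the precise normalization of $\mathbf C^s_\xi(\vb*a)$ in Definition I.5.12, including the factor relating the Haar inner product on one weight vector to the $\vb*a$-row, to the Gram identity above. Second, one must check that using a single weight component of $u_j$ (for example the highest-weight vector, via Notation \ref{not:sub-irrep} and equivariance) is legitimate. If the normalization involves $\dim(\vb*a)$ factors, they cancel identically on both sides, since both inner products are invariant.
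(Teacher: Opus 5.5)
Your proposal is correct and follows essentially the paper's proof. Your change-of-basis matrix $P(s)$ is the paper's $Z(s)$ and your $P(s)^{-1}$ is its $C(s)$, while the identity $\mathbf C^\dagger\mathbf C-\mathds 1=P^{-\dagger}(G^W-G^{HS})P^{-1}$ is exactly the paper's comparison of \eqref{ip_simb_A_jA_k} with \eqref{ip_A_jA_k}; the only cosmetic difference is that the paper normalizes the $u_\gamma$ so that the limiting Gram matrix is $\mathds 1$. The one fix needed is that the $u_j$ built before Lemma \ref{lemma:b^s_basis} span the whole isotypic component, not its highest-weight space, so you should instead take $h_1,\dots,h_m$ to be a basis of homogeneous highest-weight vectors of $Poly(\mathcal O_\xi)^{\vb*a}$ and set $u_j=(-i)^{-d(j)}S(h_j)$, which are highest-weight vectors by the equivariance of $S$ (Proposition \ref{prop:S_equiv}).
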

\begin{proof}
    Let $s$ be large enough so that the dimension of the highest weight space of
    \begin{equation}
        \mathcal B(\vb*p_\xi^s;\vb*a) =\mathcal B(\mathcal H_{\vb*p_\xi^s})^{\vb*a}
    \end{equation}
    is constant $m\equiv m(\vb*a)$, cf. Notation I.5.8. Take $u_1,...,u_m \in U(\mathfrak{sl}(3))^{\vb*a}$ as highest weight vectors of degrees $\deg(u_\gamma) = d(\gamma)$ such that
    \begin{equation}
        (-i)^{d(\gamma_1)-d(\gamma_2)}\ip{\beta_{d(\gamma_1)}[u_{\gamma_1}]}{\beta_{d(\gamma_2)}[u_{\gamma_2}]} = \delta_{\gamma_1,\gamma_2}\, .
    \end{equation}
    By the previous corollary, for $s$ large enough, the set
    \begin{equation}
        \left\{\norm{\omega_{\vb*p_\xi^s}}^{-d(1)}\rho_{\vb*p_\xi^s}(u_1),...,\norm{\omega_{\vb*p_\xi^s}}^{-d(m)}\rho_{\vb*p_\xi^s}(u_m)\right\}
    \end{equation}
    is a basis of the highest weight space of $\mathcal B(\vb*p_\xi^s;\vb*a)$. 
    
    Now, for $\sigma \in \{1,...,m\}$, take
    \begin{equation}\label{A_sigma^s}
        A_\sigma^s = \sqrt{\dim(\vb*p_\xi^s)}\, \vb*e((\vb*a;\sigma);>_{\vb*a}) \, ,
    \end{equation}
    where $>_{\vb*a}$ stands for the highest weight. Denoting 
    \begin{equation}\label{fssigma}
        f_{\sigma}^s= W^{\vb*p_\xi^s}_{A_\sigma^s} \ , 
    \end{equation}
    the $j\times k$ entry of $(\mathbf {C}^s_\xi(\vb*a))^\dagger \mathbf {C}^s_\xi(\vb*a)$ is $\ip{f_j^s}{f_k^s}_\xi$, cf. Definition I.5.12 and Remark I.5.13, so we want to show
    \begin{equation}\label{lim_jxk_entry}
        \lim_{s\to\infty}\ip{f^s_j}{f^s_k}_\xi = \delta_{j,k}\, .
    \end{equation}
    
    Let $Z(s)$ be the complex square matrix with entries $(Z(s))_{\sigma,\gamma} = z^\sigma_\gamma(s)$ such that
    \begin{equation}\label{rho(u)=zA}
        \norm{\omega_{\vb*p_\xi^s}}^{-d(\gamma)}\rho_{\vb*p_\xi^s}(u_\gamma) = \sum_\sigma z^\sigma_\gamma(s) A_\sigma^s\, .
    \end{equation}
    From Corollary \ref{prop:asymp_sw}, we have that
    \begin{equation}
        \lim_{s\to\infty}Z(s)^\dagger Z(s)=\mathds 1\, .
    \end{equation}
    Therefore, for $C(s) = Z(s)^{-1}$, we have
    \begin{equation}
        \lim_{s\to\infty} C(s)^\dagger C(s) = \mathds 1\, .
    \end{equation}
    Also, the entries $(C(s))_{\gamma,\sigma} = c_\sigma^\gamma(s)$ are bounded on $s$ and satisfy
    \begin{equation}\label{A=crho(u)}
        A_\sigma^s = \sum_\gamma c^\gamma_\sigma(s)\norm{\omega_{\vb*p_\xi^s}}^{-d(\gamma)}\rho_{\vb*p_\xi^s}(u_\gamma)\, .
    \end{equation}
    Hence, $\ip{f_j^s}{f_k^s}_\xi$ is given by
    \begin{equation}\label{ip_simb_A_jA_k}
         \sum_{\gamma_1,\gamma_2}\overline{c^{\gamma_1}_j(s)}c^{\gamma_2}_k(s)\norm{\omega_{\vb*p_\xi^s}}^{-(d(\gamma_1)+d(\gamma_2))}\ip{W^{\vb*p_\xi^s}\circ \rho_{\vb*p_\xi^s}(u_{\gamma_1})}{W^{\vb*p_\xi^s}\circ \rho_{\vb*p_\xi^s}(u_{\gamma_2})}_\xi\, ,
    \end{equation}
    and $\ip{A_j^s}{A_k^s}_{\vb*p_\xi^s}$ is given by
    \begin{equation}\label{ip_A_jA_k}
        \begin{aligned}
            & \sum_{\gamma_1,\gamma_2}\overline{c^{\gamma_1}_j(s)}c^{\gamma_2}_k(s)\norm{\omega_{\vb*p_\xi^s}}^{-(d(\gamma_1)+d(\gamma_2))}\ip{\rho_{\vb*p_\xi^s}(u_{\gamma_1})}{\rho_{\vb*p_\xi^s}(u_{\gamma_2})}_{\vb*p_\xi^s}\, ,
            \end{aligned}
    \end{equation}
    Since $\ip{A_j^s}{A_k^s}_{\vb*p_\xi^s} = \delta_{j,k}$, applying Corollary \ref{prop:asymp_sw} on \eqref{ip_simb_A_jA_k}, we get  \eqref{lim_jxk_entry}.
\end{proof}

The last theorem states that the characteristic matrices of a $\xi$-ray of symbol correspondences of Poisson type are asymptotically unitary, that is to say, the $\xi$-ray of symbol correspondences needs to satisfy a weak \emph{asymptotic Stratonovich-Weyl condition} to be of Poisson type. Nonetheless, in order to get a statement of equivalence between Poisson property and the convergence of characteristic matrices to specific unitary matrices, in the spirit of what happens for spin systems and their characteristic numbers \cite{RS}, we need to fix a method for Clebsch-Gordan decompositions of spaces of operators. We can avoid such choice-dependent classification by comparing with Berezin correspondences instead. 

\begin{theorem}\label{theo:pois_char_m_ber}
	A $\xi$-ray of symbol correspondences $(W^{\vb*p_\xi^s})_{s\in\mathbb N}$ is of Poisson type, so that \eqref{WtoP} is satisfied,  if and only if its sequence of characteristic matrices satisfy
    \begin{equation}\label{lim(C(a)-B(s))}
        \lim_{s\to\infty}\left(\mathbf {C}^s_\xi(\vb*a)-\mathbf{B}^s_\xi(\vb*a)\right)= 0 \ ,
    \end{equation}
    where $\big(\mathbf{B}^s_\xi(\vb*a)\big)_{\!s\in\mathbb N}$ is the sequence of characteristic matrices for the $\xi$-ray $(B^{\vb*p_\xi^s})_{s\in\mathbb N}$ of Berezin symbol correspondences, cf. \eqref{ber_corresp} and Definition I.5.12. 
\end{theorem}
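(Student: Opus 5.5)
We will reduce the statement to the first criterion, Theorem \ref{cor:pois_crit}, by comparing the $\xi$-ray $(w^s_\xi)$ with the Berezin $\xi$-ray $(b^s_\xi)$, which is of Poisson type by Corollary \ref{cor:ber_ray_pois}. Fix an irrep $\vb*a$ and take $s$ large, so that the multiplicity $m=m(\vb*a)$ is constant. With the normalized highest weight operators $A^s_\sigma$ of \eqref{A_sigma^s}, define $f^s_\sigma=W^{\vb*p^s_\xi}_{A^s_\sigma}$ and $g^s_\sigma=B^{\vb*p^s_\xi}_{A^s_\sigma}$. By Definition I.5.12 and Remark I.5.13, the characteristic matrices $\mathbf C^s_\xi(\vb*a)$ and $\mathbf B^s_\xi(\vb*a)$ are the coefficient matrices of $f^s_\sigma$, respectively $g^s_\sigma$, in a fixed orthonormal basis of highest weight vectors of $Poly(\mathcal O_\xi)^{\vb*a}$. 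For $s$ large this basis does not depend on $s$, by Lemma \ref{lemma:f_image_orbit}. Hence \eqref{lim(C(a)-B(s))} is equivalent to
\[
\lim_{s\to\infty}\norm{f^s_\sigma-g^s_\sigma}_\xi=0 \quad \text{for all } \sigma,
\]
since all norms agree on this fixed finite-dimensional space.

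Next we express $\rho_{\vb*p^s_\xi}(u_\gamma)$ in the $A^s_\sigma$. We choose highest weight elements $u_1,\dots,u_m\in U(\mathfrak{sl}(3))^{\vb*a}$ with $\beta_{d(\gamma)}[u_\gamma]$ orthonormal, up to phases, as in the proof of Theorem \ref{theo:pois_char_m_iso}. Then
\[
\norm{\omega_{\vb*p^s_\xi}}^{-d(\gamma)}\rho_{\vb*p^s_\xi}(u_\gamma)=\sum_\sigma z^\sigma_\gamma(s)A^s_\sigma .
\]
Lemma \ref{ip_rho_beta} gives $Z(s)^\dagger Z(s)\to\mathds 1$, and this holds independently of the correspondence. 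Consequently $Z(s)$ and $C(s)=Z(s)^{-1}$ are bounded, and
\[
(sr(\xi))^{-d(\gamma)}\big(w^s_\xi[u_\gamma]-b^s_\xi[u_\gamma]\big)=\sum_\sigma z^\sigma_\gamma(s)\big(f^s_\sigma-g^s_\sigma\big),
\]
together with the inverse relation through $C(s)$. Boundedness of $Z(s)$ and $C(s)$ then shows that $f^s_\sigma-g^s_\sigma\to0$ for all $\sigma$ if and only if
\[
(sr(\xi))^{-d(\gamma)}\big(w^s_\xi[u_\gamma]-b^s_\xi[u_\gamma]\big)\to0 \quad \text{for all } \gamma .
\]

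For the forward direction, assume the ray is Poisson. Theorem \ref{cor:pois_crit} applied to both $w$ and $b$ gives that both rescaled symbols converge to $(-i)^{d(\gamma)}\beta_{d(\gamma)}[u_\gamma]|_{\mathcal O_\xi}$, so the difference tends to $0$ and \eqref{lim(C(a)-B(s))} follows.

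For the converse, assume \eqref{lim(C(a)-B(s))}. We must verify \eqref{condition1} for an arbitrary $u\in U(\mathfrak{sl}(3))$, not only for the chosen $u_\gamma$; this is the main obstacle. By equivariance and linearity, we decompose $u$ into isotypic components and then into lowered highest weight vectors. Each component of $\rho_{\vb*p^s_\xi}(u)$ lies in $\mathcal B(\vb*p^s_\xi;\vb*a)$ for finitely many $\vb*a$, with the relevant $\vb*a$ bounded in terms of $\deg(u)$, so $\norm{\omega_{\vb*p^s_\xi}}^{-\deg(u)}\rho(u)$ expands in the lowered $A^s_\sigma$ with coefficients bounded in $s$. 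For this bound we again use Lemma \ref{ip_rho_beta}: it shows that $\norm{\omega}^{-\deg(u)}\rho(u)$ has bounded normalized Hilbert–Schmidt norm, and the $A$'s are orthonormal. Applying $W$ and $B$, and using $f^s_\sigma-g^s_\sigma\to0$ (extended to lowered vectors by equivariance, with the same norm), we get
\[
(sr(\xi))^{-\deg(u)}\big(w^s_\xi[u]-b^s_\xi[u]\big)\to0 .
\]
Combined with Corollary \ref{lemma:lim-t-1} this yields \eqref{condition1}, and Theorem \ref{cor:pois_crit} concludes that the ray is of Poisson type. Because of the $\deg(u)$ versus $d(\gamma)$ mismatch, we work directly with $u$, whose degree bounds its symbols through the Hilbert–Schmidt estimate, rather than with the basis $u_\gamma$.
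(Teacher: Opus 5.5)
Your proposal is correct and follows essentially the paper's route. You reduce \eqref{lim(C(a)-B(s))} to $\norm{f^s_\sigma-g^s_\sigma}_\xi\to0$, pass to rescaled symbol differences $w^s_\xi[u]-b^s_\xi[u]$ through the bounded change of basis $Z(s)$, $C(s)=Z(s)^{-1}$ controlled by Lemma \ref{ip_rho_beta}, and conclude with Corollary \ref{cor:ber_ray_pois} and Theorem \ref{cor:pois_crit}. The paper leaves two steps implicit that you make explicit: passing to an arbitrary $u$ via the Hilbert--Schmidt bound, and noting that $Z(s)^\dagger Z(s)\to\mathds 1$ needs only Lemma \ref{ip_rho_beta}, not the Poisson hypothesis.
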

\begin{proof}
    Recalling \eqref{A_sigma^s}, again let $f_\sigma^s$ be the symbol of $A_\sigma^s$ via $W^{\vb*p_\xi^s}$,
    cf. \eqref{fssigma}, and now let $\widetilde f_\sigma^s$ be the symbol of $A_\sigma^s$ via $B^{\vb*p_\xi^s}$. Since $Poly(\mathcal O_\xi)^{\vb*a}$ is finite dimensional, the limit \eqref{lim(C(a)-B(s))} holds if and only if
    \begin{equation}
        \lim_{s\to\infty}\norm{f_\sigma^s-\widetilde f_\sigma^s}_\xi = 0 \ , 
    \end{equation}
    for every $\sigma \in\{1,...,m\}$, which in turn is equivalent to
    \begin{equation}
        \lim_{s\to\infty}\norm{\omega_{\vb*p_\xi^s}}^{-\deg(u)}\norm{W^{\vb*p_\xi^s}\circ \rho_{\vb*p_\xi^s}(u)-B^{\vb*p_\xi^s}\circ \rho_{\vb*p_\xi^s}(u)}_\xi = 0
    \end{equation}
    for every $u\in U(\mathfrak{sl}(3))^{\vb*a}$, cf.~Corollary \ref{prop:asymp_sw} and \eqref{rho(u)=zA}-\eqref{A=crho(u)}. Then the statement is a consequence of Corollary \ref{cor:ber_ray_pois} and Theorem \ref{cor:pois_crit}. 
\end{proof}

For pure-quark systems, with $\xi=(1,0)$ or $\xi=(0,1)$ and $\xi$-rays of representations  $((p,0))_{p\in\mathbb N}$ or $((0,p))_{p\in\mathbb N}$, respectively, the characteristic matrices are $1\times 1$ matrices and are just called characteristic numbers, $c_n^p\in\mathbb R^\times$, cf. Definition I.5.12 and Remark I.5.13. From Theorem \ref{theo:pois_char_m_iso}, a ray of pure-quark correspondences is of Poisson type only if 
\begin{equation}
    \lim_{p\to\infty}|c_n^p|=1 \ , \ \ \forall n\in\mathbb N \ . 
\end{equation}

However, by choosing a decomposition of the space of operators in such a way that the symmetric Berezin correspondences have only positive characteristic numbers, $b_n^p\in\mathbb R^+$, Theorem \ref{theo:pois_char_m_ber} provides a finer criterion by means of the sequence of its characteristic numbers $(c_n^p)_{p\in\mathbb N}$ for a ray of pure-quark correspondences to be of Poisson type, which is analogous to the criterion for spin systems \cite{RS}. 

We illustrate this for $\xi=(1,0)$ with sequence of representations $(\vb*p = (p,0))_{p\in\mathbb N}$, for which we have the following: 

\begin{proposition}\label{ThmBCN}
For every $\vb*p = (p,0)$, $p\in\mathbb N$, taking
\begin{equation}\label{pq_CG_choice}
\begin{aligned}
    & \vb*e(n;(2n,n,0);n/2) = \dfrac{1}{\mu_n(p)}V_+^n \in \mathcal B(\mathcal H_{\vb*p})\, , \\
    & \hspace{2 em} \mu_n(p) = \dfrac{n!}{\sqrt{(2n+2)!}}\sqrt{\dfrac{(p+n+2)!}{(p-n)!}}\, ,
\end{aligned}
\end{equation}
the characteristic numbers $(b_n^p)_{n\leq p}$ of the symmetric Berezin correspondence are all positive and satisfy \ $|b_n^p-1| \in O(1/p)$ as $p\to\infty$, for every fixed $ n\in\mathbb N$.
\end{proposition}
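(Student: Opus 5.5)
The computation reduces to a single explicit Berezin symbol. By Definition I.5.12 and Remark I.5.13, the characteristic number $b_n^p$ is the coefficient relating the Berezin symbol of the normalized highest weight operator $\sqrt{\delta(p)}\,\vb*e(n;(2n,n,0);n/2)$ of the $(n,n)$-component of $\mathcal B(\mathcal H_{\vb*p})$ to a fixed normalized highest weight harmonic $Y_n\in Poly(\mathbb CP^2)^{(n,n)}$, with $\norm{Y_n}_{\xi,2}=1$. Since $(n,n)$ is multiplicity free, $b_n^p$ can be read off from the value of the symbol at any single point where $Y_n$ does not vanish.

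\textbf{Step 1: compute the symbol.} By \eqref{ber_corresp}, the Berezin symbol of $V_+^n$ at $Ad_g(\xi)$ is $\ip{\vb*e_>}{\rho(g^{-1})V_+^n\rho(g)\vb*e_>}$. For the pure-quark representation $(p,0)$ we realize $\mathcal H_{\vb*p}$ as $\mathrm{Sym}^p(\mathbb C^3)$, with highest weight vector $e_1^{\otimes p}$, so that $\rho(g)\vb*e_>=(g e_1)^{\otimes p}$. The operator $V_+$ acts as the derivation induced by the elementary matrix $E_{13}$. It follows that
\[
\ip{(ge_1)^{\otimes p}}{V_+^n (ge_1)^{\otimes p}}=\frac{p!}{(p-n)!}\,\big(\overline{z_1}z_3\big)^n ,
\]
where $z=ge_1\in\mathcal S^5$. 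Hence
\[
B^{\vb*p}_{V_+^n}=\frac{p!}{(p-n)!}\,(\overline{z_1}z_3)^n .
\]
The function $(\overline{z_1}z_3)^n$ is exactly a highest weight vector of $(n,n)$ in $Poly(\mathbb CP^2)$, and it is independent of $p$. We therefore fix $Y_n=\kappa_n(\overline{z_1}z_3)^n$ with $\kappa_n>0$ chosen so that $\norm{Y_n}_{\xi,2}=1$. Using the Haar integral on $\mathcal S^5$, namely $\int|z_1|^{2n}|z_3|^{2n}=\frac{2\,(n!)^2}{(2n+2)!}$, we get $\kappa_n=\sqrt{(2n+2)!/(2\,(n!)^2)}$. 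I would double-check the normalization of the measure, since it must match the one used in Paper I.

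\textbf{Step 2: assemble $b_n^p$.} Putting the pieces together,
\[
b_n^p=\frac{\sqrt{\delta(p)}}{\mu_n(p)\,\kappa_n}\cdot\frac{p!}{(p-n)!}.
\]
This is manifestly positive, and the choice of signs in \eqref{pq_CG_choice} is precisely what ensures this. Substituting $\mu_n(p)$ and $\delta(p)=(p+1)(p+2)/2$ gives
\[
b_n^p=\sqrt{\frac{(p+1)(p+2)}{2}}\cdot\frac{\sqrt{(2n+2)!}}{n!}\sqrt{\frac{(p-n)!}{(p+n+2)!}}\cdot\frac{p!}{(p-n)!}\cdot\frac{n!\sqrt2}{\sqrt{(2n+2)!}}
=\frac{p!\,\sqrt{(p+1)(p+2)}}{\sqrt{(p-n)!\,(p+n+2)!}}.
\]

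\textbf{Step 3: asymptotics.} Squaring, we obtain
\[
(b_n^p)^2=\prod_{k=1}^{n}\frac{p-n+k}{p+k+2}.
\]
Each factor equals $1-\frac{n+2}{p+k+2}$, so each factor is $1+O(1/p)$. Since $n$ is fixed, the whole product is $1+O(1/p)$, and hence $|b_n^p-1|\in O(1/p)$.

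\textbf{Main obstacle.} The delicate point is Step 1 together with its normalization conventions. We need the exact relationship, fixed in Definition I.5.12, between the characteristic number and the symbol of the normalized operator, including the factor $\sqrt{\dim}$ coming from \eqref{norm_hs_ip}. We also need the Haar normalization of the integral on $\mathcal O_\xi$, and we must confirm that $\mu_n(p)$ really is the normalized Hilbert–Schmidt norm of $V_+^n$. The cleanest verification of the last point is to compute $\tr\big((V_+^n)^\dagger V_+^n\big)$ directly in the monomial basis of $\mathrm{Sym}^p$ using a Vandermonde-type sum; the factor $(p+n+2)!/(p-n)!$ should then emerge.
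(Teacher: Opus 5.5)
Your argument is correct, and your closed form agrees with the paper's: squaring it gives $\frac{p!\,(p+2)!}{(p-n)!\,(p+n+2)!}=\binom{p}{n}\big/\binom{p+n+2}{n}$, which is exactly \eqref{b_n}. The route is different, though.

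The paper never computes a symbol. It starts from Proposition I.4.17, which expresses $b_n^p$ through the Clebsch--Gordan coefficient of $(p,0)\otimes(0,p)\to(n,n)$ at the components $(p,0,0)$, $(0,p,p)$, $(\vb*0_n,0)$. It then evaluates that coefficient by applying $U_-^n$ and then $T_-^n$ to $V_+^n/\mu_n(p)$, using the ladder formulas (I.2.26).

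You instead realize $(p,0)$ as $\mathrm{Sym}^p(\mathbb C^3)$ and compute the coherent-state symbol of $V_+^n=E_{13}^n$ directly. This exhibits the symbol as the fixed harmonic $(\bar z_1z_3)^n$ times the scalar $p!/(p-n)!$, so all the $p$-dependence is isolated at once. It avoids the Clebsch--Gordan machinery and carries over verbatim to $(SU(N),\mathbb CP^{N-1})$. The price is that you must check by hand the conventions the paper inherits automatically from Paper I.

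The normalization you flagged does close. In the occupation-number basis,
$\tr(V_-^nV_+^n)=\sum_{a+b+c=p,\,c\ge n}\frac{(a+n)!}{a!}\frac{c!}{(c-n)!}=(n!)^2\binom{p+n+2}{2n+2}=\mu_n(p)^2$.
Hence $\sqrt{\delta(p)}\,\vb*e(n;(2n,n,0);n/2)$ is indeed the unit vector for \eqref{norm_hs_ip} that enters the characteristic matrices. Your $\mathcal S^5$ moment also uses the right (normalized Haar) measure.

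The sign is not settled by your computation. Positivity is a statement relative to Paper I's fixed normalized highest-weight harmonic of $(n,n)$, not relative to a $Y_n$ you choose. So you must verify that this harmonic equals $+\kappa_n(\bar z_1z_3)^n$. For $n=1$ it does: with the coordinates of Appendix \ref{app:pq-asymp}, $X^1_{(2,1,0),1/2}\equiv 2v_+=2\sqrt3\,\bar z_1z_3$ on $\mathcal O_{(1,0)}$.

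This check matters for the asymptotic claim as well, since the opposite sign would give $|b_n^p-1|\to 2$. It is precisely what the paper's passage through Proposition I.4.17 takes care of. Your proof should establish it rather than stipulate $\kappa_n>0$.
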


The proof of this Proposition is a straightforward computation using various results and notations from Paper I, so it is placed in Appendix \ref{proofThmBerCharNumbers}. Combining Proposition \ref{ThmBCN} with Theorem \ref{theo:pois_char_m_ber}, we obtain immediately:

\begin{corollary}\label{cor:pois_cp2}
    Let $(W^p)_{p\in \mathbb N}$ be a ray of pure-quark symbol correspondences 
    \begin{equation}\label{seq_symb_corresp_pq}
        \Big(W^p:\mathcal B(\mathcal H_{(p,0)})\to Poly(\mathcal O_{(1,0)})\Big)_{p\in\mathbb N} \ ,
    \end{equation}
    with characteristic numbers $(c_n^p)_{n\leq p}$, for each $p\in\mathbb N$. Assuming \eqref{pq_CG_choice}, $(W^p)_{p\in \mathbb N}$ is of Poisson type if  and only if the characteristic numbers satisfy 
    \begin{equation}
        \lim_{p\to\infty} c_n^p=1 \ , \ \ \forall n\in \mathbb N \ . 
    \end{equation}
\end{corollary}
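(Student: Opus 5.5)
The statement is Corollary \ref{cor:pois_cp2}, and the plan is to combine Theorem \ref{theo:pois_char_m_ber} with Proposition \ref{ThmBCN}. For $\xi=\xi_{(1,0)}$ we have $\omega_\xi=\varpi_1$, $r(\xi)=\sqrt{2/3}$, and $\vb*p^s_\xi=(s,0)$. So the ray $(W^p)_{p\in\mathbb N}$ in \eqref{seq_symb_corresp_pq} is exactly a $\xi$-ray of symbol correspondences in the sense of Definition \ref{def:pois_ray}, with $s=p$.

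First, identify the characteristic matrices in this case. Each irrep $\vb*a$ appearing in $\mathcal B(\mathcal H_{(p,0)})$ is of the form $(n,n)$ with $n\le p$, and it appears with multiplicity one, since $(p,0)\otimes(0,p)=\bigoplus_{n=0}^p(n,n)$. Hence $\mathbf C^p_\xi((n,n))$ and $\mathbf B^p_\xi((n,n))$ are $1\times1$ matrices, namely the characteristic numbers $c_n^p$ and $b_n^p$. These are computed with respect to the same fixed normalized highest weight operator of $\mathcal B(\mathcal H_{(p,0)})^{(n,n)}$, which we fix by the Clebsch--Gordan choice \eqref{pq_CG_choice}. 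Irreps $\vb*a$ not of this form have no component in $Poly(\mathcal O_\xi)$, so the limit \eqref{lim(C(a)-B(s))} is vacuous for them. Also, for each fixed $n$, the entry $c_n^p$ is defined for all $p\ge n$, which is all that is needed for a limit as $p\to\infty$.

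Second, apply Theorem \ref{theo:pois_char_m_ber}. It says that $(W^p)$ is of Poisson type if and only if $\lim_{p\to\infty}(c_n^p-b_n^p)=0$ for every $n\in\mathbb N$. By Proposition \ref{ThmBCN}, under the choice \eqref{pq_CG_choice} we have $b_n^p\to1$ for each fixed $n$, since $|b_n^p-1|\in O(1/p)$. Therefore $c_n^p-b_n^p\to0$ holds if and only if $c_n^p\to1$, and this gives both directions at once.

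The only delicate point is that the characteristic numbers of $W^p$ and of $B^p$ must be defined with respect to the same basis choice. Otherwise a $p$-dependent sign or phase could appear, and $c_n^p-b_n^p\to0$ would no longer be equivalent to $c_n^p\to1$. The proof of Theorem \ref{theo:pois_char_m_ber} uses the common operators $A^s_\sigma$ from \eqref{A_sigma^s} for both correspondences. Fixing these operators by \eqref{pq_CG_choice} removes the ambiguity, which is precisely why the hypothesis ``assuming \eqref{pq_CG_choice}'' appears in the statement.
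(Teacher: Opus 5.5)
Your proposal is correct and is essentially the paper's own argument: the corollary follows immediately by applying Theorem \ref{theo:pois_char_m_ber} with the $1\times 1$ characteristic matrices $c_n^p$ and $b_n^p$, and then using Proposition \ref{ThmBCN}, which under the choice \eqref{pq_CG_choice} gives $b_n^p\to 1$. Your remark that both sets of characteristic numbers must be taken with respect to the same fixed highest-weight operators, so that no $p$-dependent sign creeps in, is exactly why the hypothesis \eqref{pq_CG_choice} appears; the paper also sketches an independent Clebsch--Gordan proof in Appendix \ref{app:pq-asymp}.
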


As mentioned before, in Appendix \ref{app:pq-asymp} we provide an alternative (summary of the) proof of this corollary using a method which, although quite more cumbersome, is analogous to the method used in the case of spin systems, cf. \cite{RS}. 

\section{Universal correspondences on the coarse Poisson sphere}\label{sec:corresp_sphere}

In this section, we develop a method for extending the rays of correspondences defined for each rational orbit $\mathcal O_\xi\subset\mathcal S^7$ to a pencil of correspondence rays defined on the full coarse Poisson sphere $\{\mathcal S^7,\widehat{\Pi}_{\mathfrak g}\}$, such that an extended $SU(3)$-invariant noncommutative algebra constructed by this method, with product induced from the universal enveloping algebra $U(\mathfrak{sl}(3))$, restricts to a fuzzy $\xi$-orbit, for each $\xi\in\overline{\mathcal Q}$, as in Definition \ref{def:pois_ray}. Then we investigate if/how such extended algebras can recover the Poisson algebra of polynomials on $(\mathcal S^7,\widehat \Pi_{\mathfrak g})$ in some asymptotic limit.

\subsection{Pencils of correspondence rays: Magoo spheres}

Before starting the construction of pencils of correspondence rays on the coarse Poisson sphere $\{\mathcal S^7,\widehat\Pi_{\mathfrak g}\}$, we recall that each rational orbit $\mathcal O_\xi$ is a symplectic leaf in a singular foliation of the smooth Poisson sphere $(\mathcal S^7,\widehat \Pi_{\mathfrak g})$ and that this orbit is the preimage  of a fixed number $\chi_\xi\in\mathbb R$ by the restriction of the cubic polynomial $\tau$ to $\mathcal S^7$, cf. \eqref{taupol} in Proposition \ref{prop:c3_pol}, so that, for each $\xi\in\overline{\mathcal Q}$,  we can define a polynomial function $\check\tau_\xi$ vanishing on $\mathcal O_\xi$ and only on this orbit in $\mathcal S^7$, cf. \eqref{delta_func_xy} in Remark \ref{checktau}.

Now, consider the ideal of polynomials vanishing on $\mathcal O_\xi$, 
\begin{equation}
	I_\xi := \{f\in Poly(\mathcal S^7):f|_{\mathcal O_\xi}\equiv 0\}\trianglelefteq Poly(\mathcal S^7)\, ,
\end{equation}
in terms of which we can set the natural isomorphism
\begin{equation}\label{poly_orb_iso}
    Poly(\mathcal O_\xi) \simeq Poly(\mathcal S^7)/I_\xi \ ,
\end{equation}
so that we can write universal correspondences as maps
\begin{equation}\label{wxiIxi}
    w_\xi:
    U(\mathfrak{sl}(3))\to Poly(\mathcal O_\xi) \simeq Poly(\mathcal S^7)/I_\xi\, .
\end{equation}

This can be extended simultaneously for any finite set of rational orbits. Let \begin{equation}\label{mathcalP} 
\mathcal P\subset \overline{\mathcal Q} \ , \ \ \norm{\mathcal P}\in\mathbb N \ ,
\end{equation} 
be a finite subset of rational orbits and, for each $\xi \in \mathcal P$, take a universal correspondence \eqref{wxiIxi}. In order to ``glue together'' such correspondences, we invoke the invariant polynomial $\tau|_{\mathcal S^7}\in Poly(\mathcal S^7)$, given by \eqref{taupol}, and its restriction complements $\check\tau_{\xi}\in I_\xi$, given by  \eqref{delta_func_xy}. Thus, for each $\xi \in \mathcal P$, let
\begin{equation}\label{deltaP}
	\mathfrak{d}^\xi_{\mathcal P} := \dfrac{1}{M_{\mathcal P}^\xi}\prod_{\xi' \in \mathcal P\setminus\{\xi\}}\!\check\tau_{\xi'} \ \ \in Poly(\mathcal S^7)\ \ \ , \ \ \ \ 
	M_{\mathcal P}^\xi = \prod\limits_{\xi'\in \mathcal P\setminus\{\xi\}}\!\check\tau_{\xi'}(\xi) \ ,
\end{equation}
so that $\mathfrak{d}^\xi_{\mathcal P}$ works like a delta-$\xi$ function on $\mathcal P$, that is, for $\xi, \xi'\in \mathcal P$, we have
\begin{equation}\label{deltaP2}
	\mathfrak{d}^\xi_{\mathcal P}\big|_{\mathcal O_{\xi'}} \equiv \begin{cases}
		1 \ , \ \ \mbox{if} \ \ \xi = \xi'\\
		0 \ , \ \ \mbox{if} \ \ \xi\ne \xi'
	\end{cases}\, .
\end{equation}
Then, taking
\begin{equation}\label{ideal_P}
	I_{\mathcal P} := \bigcap_{\xi \in \mathcal P} I_\xi \ ,
\end{equation}
we have\footnote{Since $\mathcal P$ is finite, one could use the more usual notion of direct sum instead of product. However, we chose the product in anticipation of an infinite product that will take place ahead.}
\begin{equation}\label{Poly(P_n)}
    Poly\Big(\bigcup_{\xi\in\mathcal P} \mathcal O_\xi\,\Big) \equiv \prod_{\xi\in\mathcal P}Poly(\mathcal O_\xi)\simeq  Poly(\mathcal S^7)/I_{\mathcal P} \ , 
\end{equation}
and we use $\mathfrak{d}^\xi_{\mathcal P}$ given by \eqref{deltaP} to ``glue'' a finite family of correspondences and also a finite family of correspondence rays, as follows. 

\begin{definition}\label{def:magoo_corresp-rays}
Given a finite subset $\mathcal P\subset \overline{\mathcal Q}$, a \emph{finite pencil of universal correspondence rays for $\mathcal P$}, or simply a \emph{pencil of correspondence rays for $\mathcal P$}, is a sequence of maps $(w_{\mathcal P}^s)_{s\in\mathbb N}$, where each 
\begin{equation}\label{Magoo-eq-s}
	w_{\mathcal P}^s:U(\mathfrak{sl}(3))\to Poly(\mathcal S^7)/I_{\mathcal P}:u\mapsto w_{\mathcal P}^s[u] = \sum_{\xi \in \mathcal P} \mathfrak{d}^\xi_{\mathcal P}w_\xi^s[u] \ ,
\end{equation}
and where, for each $\xi \in \mathcal P$,  $(w_\xi^s)_{s\in\mathbb N}$ is a $\xi$-ray of universal correspondences, such that $w_\xi^1:U(\mathfrak{sl}(3))\to Poly(\mathcal O_\xi)$ is a universal correspondence w.r.t.~its first  dominant weight $\omega_\xi=r(\xi)\xi$, cf. \eqref{intrad}-\eqref{defr}.
\end{definition}

With the above definition, we naturally obtain a sequence of twisted products on the sequence of images of $(w^s_{\mathcal P})_{s\in\mathbb N}$. Since these products are not commutative, their commutators act as derivations on their algebras. But as noted in Section 3, cf. Remark \ref{rmk:sp_resc} w.r.t.~\eqref{Poisson-eq-2} in Definition \ref{def:pois_ray}, a \emph{weighted} derivation is more suitable for the semiclassical limit.

\begin{definition}\label{def:penc_magoo-p}
Given a pencil of correspondence rays $(w_{\mathcal P}^s)_{s\in\mathbb N}$, denote by $(\mathcal W^s_{\mathcal P})_{s\in\mathbb N}$ its sequence of images $\mathcal W_{\mathcal P}^s\subset Poly(\mathcal S^7)/I_{\mathcal P}$. Then, its induced \emph{twisted product sequence} $(\star_{\mathcal P}^s)_{s\in\mathbb N}$ on  $(\mathcal W_{\mathcal P}^s)_{s\in\mathbb N}$ is given by 
    \begin{equation}\label{starPs}
    \begin{aligned}
        w_{\mathcal P}^s[u]\star_{\mathcal P}^sw_{\mathcal P}^s[v]&=w_{\mathcal P}^s[uv]  \\ 
        &= \sum_{\xi \in \mathcal P}\mathfrak{d}^\xi_{\mathcal P} w_\xi^s[u]\star_\xi^s w_\xi^s[v] \ , \ \ \forall u,v\in U(\mathfrak{sl}(3)) \ , 
    \end{aligned}     
    \end{equation}
    and its \emph{$r$-weighted bracket} sequence $\big([\cdot,\cdot]_{\star^s_{\mathcal P}}^r\big)_{s\in\mathbb N}$ on $(\mathcal W_{\mathcal P}^s,\star_{\mathcal P}^s)_{s\in\mathbb N}$ is given by 
  \begin{equation}\label{bracket-s}
        \big[ w_{\mathcal P}^s[u],w_{\mathcal P}^s[v]\big]^r_{\star^s_{\mathcal P}}=\sum_{\xi \in \mathcal P}\mathfrak{d}^\xi_{\mathcal P} r(\xi)\big[w_\xi^s[u], w_\xi^s[v]\big]_{\star_\xi^s} \ ,  \ \ \forall u,v\in U(\mathfrak{sl}(3)) \ . 
    \end{equation}  
\end{definition}

However, any finite family of leaves is far from sufficient to determine the Poisson algebra of polynomials on the sphere, as $s\to\infty$. Thus, we now consider an increasing family of nested finite subsets $\mathcal P\subset \overline{\mathcal Q}$ whose limit is the entire set of rational orbits $\overline{\mathcal Q}$ (so that, in particular, $I_{\mathcal P}\to 0$). Any chain of finite subsets of $\overline{\mathcal Q}$ is countable because $\overline{\mathcal Q}$ is countable, so we can define chain  sequences of the form
\begin{equation}\label{chainCn}
    \mathcal C = (\mathcal P_n)_{n \in \mathbb N} \ , \ \  \mathcal P_n\subset \overline{\mathcal Q} \ \ \mbox{s.t.} \ \ |\mathcal P_n|<\infty \ , \ \  \mathcal P_n\subsetneq \mathcal P_{n+1}\, , \ \lim_{n\to\infty}\mathcal P_n=\overline{\mathcal Q} \, .
\end{equation}
Furthermore, on each $\mathcal P_n$ as above, let's denote, for convenience, 
\begin{equation}
    \mathfrak{d}_n^\xi\equiv\mathfrak{d}_{\mathcal P_n}^\xi \ , \ \ \ \mbox{cf. \eqref{deltaP}}\, ,   
\end{equation}
and also, cf. \eqref{Poly(P_n)},  
\begin{equation}\label{PPn}
   \mathfrak{P}_n\equiv Poly(\mathcal S^7)/I_{\mathcal P_n} \simeq \prod_{\xi\in\mathcal P_n}Poly(\mathcal O_\xi) \  . 
\end{equation}

\begin{definition}\label{def:penc_magoo}
Let $\mathcal C$ be a chain as in \eqref{chainCn} and, for each $\xi \in \overline{\mathcal Q}$, let
\begin{equation}\label{Pxi}
    n_\xi := \min\{n\in\mathbb N:\xi \in \mathcal P_n\}\, .
\end{equation}
A \emph{Magoo pencil of correspondence rays for $\mathcal C$} is a bi-sequence 
\begin{equation}\label{w_C}
    \begin{aligned} 
    w_{\mathcal C}=(w_n^s)_{n,s\in \mathbb N} \ &, \ \ \mbox{with} \\
    w^s_n:U(\mathfrak{sl}(3))\to \mathfrak{P}_n   \, &, \ u\mapsto w^s_n[u] = \sum_{\xi \in \mathcal P_n} \mathfrak{d}_{n}^\xi w^s_\xi[u]\, ,
    \end{aligned}
\end{equation}
where $(w^s_\xi)_{s\in\mathbb N}$ is a $\xi$-ray of universal correspondences, cf. Definition \ref{def:magoo_corresp-rays}, such that, for every $\xi\in \overline{\mathcal Q}$ and every $u\in U(\mathfrak{sl}(3))$,
\begin{equation}\label{notseq-n}
 w^s_n[u]|_{\mathcal O_\xi}=w^s_{n_\xi}[u]|_{\mathcal O_\xi} \ , \ \ \forall    n\geq n_\xi\, , \ \forall s\in\mathbb N\, .
\end{equation}

Then, denoting by $\mathcal W_{\mathcal C}=(\mathcal W_n^s)_{n,s\in \mathbb N}$ the bi-sequence of images of $U(\mathfrak{sl}(3))$ by $w_{\mathcal C}$, its induced \emph{Magoo product} on $\mathcal W_{\mathcal C}$ is the bi-sequence of products 
    \begin{equation}
        \star_{\mathcal C}=(\star^s_n)_{n,s\in\mathbb N} \ , \ \ \star^s_n\equiv \star^s_{\mathcal P_n}:\mathcal W_n^s\times\mathcal W_n^s\to\mathcal W_n^s \ , 
    \end{equation}
    cf. \eqref{starPs}, so that  \ $\forall u,v\in U(\mathfrak{sl}(3))$, 
    \begin{equation}\label{prodC}
      w_\mathcal C[u]\star_{\mathcal C}w_\mathcal C[v]  :=(w^s_n[u]\star^s_n w^s_n[v])_{n,s\in\mathbb N} \ , 
    \end{equation}
    and its \emph{Magoo bracket} on $(\mathcal W_{\mathcal C},\star_{\mathcal C})$ is the bi-sequence of $r$-weighted brackets 
    \begin{equation}
        [\cdot ,\cdot]_{\star_\mathcal C}^r=\big([\cdot,\cdot]_{\star_n^s}^r\big)_{n,s\in\mathbb N} \ , \ \ [\cdot,\cdot]_{\star_n^s}^r\equiv[\cdot,\cdot]_{\star_{\mathcal P_n}^s}^r:\mathcal W_n^s\times\mathcal W_n^s\to\mathcal W_n^s \ , 
    \end{equation}
    cf. \eqref{bracket-s}, so that \ $\forall u,v\in U(\mathfrak{sl}(3))$, 
    \begin{equation}
        \big[w_{\mathcal C}[u],w_{\mathcal C}[v]\big]_{\star_\mathcal C}^r = \Big(\sum_{\xi\in\mathcal P_n} \mathfrak d_n^\xi r(\xi)\big[w^s_n[u],w^s_n[v]\big]_{\star^s_n}\Big)_{\!n,s\in\mathbb N} \ .
    \end{equation}

    In this way, $\big(\mathcal W_{\mathcal C},\star_{\mathcal C},[\cdot ,\cdot]_{\star_\mathcal C}^r\big)$ as above shall be called a \emph{Magoo sphere}, denoted
    \begin{equation}\label{notation-Ms}
        \mathfrak{W}\{\mathcal S^7,\widehat\Pi_{\mathfrak g}\}=\big(\mathcal W_{\mathcal C},\star_{\mathcal C},[\cdot ,\cdot]_{\star_\mathcal C}^r\big) \ . 
    \end{equation}
\end{definition}

Henceforth, let $\mathfrak W\{\mathcal S^7,\widehat\Pi_{\mathfrak g}\}$ be a Magoo sphere as just defined above for the coarse Poisson sphere $\{\mathcal S^7,\widehat\Pi_{\mathfrak g}\}$. If we denote the Poisson algebra of complex polynomials on the smooth Poisson sphere $(\mathcal S^7,\widehat\Pi_{\mathfrak g})$ by  
\begin{equation}\label{smPalg}
Poly(\mathcal S^7,\widehat\Pi_{\mathfrak g}) \ , 
\end{equation}
we want to study if/when/how $\mathfrak{W}\{\mathcal S^7,\widehat\Pi_{\mathfrak g}\}$ converges to $Poly(\mathcal S^7,\widehat\Pi_{\mathfrak g})$ in some asymptotic limit. In particular, we are concerned with asymptotics of Magoo product and Magoo bracket of polynomials, so the first thing to study is if/when the product and the bracket are well defined for general polynomials on $\mathcal S^7$. 

\begin{lemma}\label{lem:magoo_prod_poly}
    Given $f \in Poly(\mathcal S^7)$, there exists $s_0\in\mathbb N$ such that, for every $s\ge s_0$, $f|_{\mathcal O_\xi}$ is in the image of $w^s_\xi$ for every $\xi \in \overline{\mathcal Q}$.
\end{lemma}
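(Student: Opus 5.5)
The plan is to reduce to the case of a single isotypic homogeneous component and then apply the uniform bound implicit in the proof of Lemma \ref{lemma:f_image_orbit}, checking that the threshold $s_0$ found there depends only on the irrep type of $f$ and not on $\xi$. Since $f\in Poly(\mathcal S^7)$ is the restriction of a polynomial $\widetilde f\in Poly_{\le d}(\mathfrak{su}(3))$ for some $d$, and $Poly_{\le d}(\mathfrak{su}(3))$ is a finite-dimensional $SU(3)$-invariant space, we decompose
\begin{equation*}
\widetilde f=\sum_{\vb*a\in\mathcal A_d}\widetilde f^{\vb*a} \ , \ \ \widetilde f^{\vb*a}\in Poly_{\le d}(\mathfrak{su}(3))^{\vb*a} \ ,
\end{equation*}
where $\mathcal A_d$ is the finite set of irreps $\vb*a=(a,b)$ occurring in $Poly_{\le d}(\mathfrak{su}(3))$. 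Restriction to any orbit $\mathcal O_\xi$ is equivariant, so $f|_{\mathcal O_\xi}=\sum_{\vb*a}\widetilde f^{\vb*a}|_{\mathcal O_\xi}$ with $\widetilde f^{\vb*a}|_{\mathcal O_\xi}\in Poly(\mathcal O_\xi)^{\vb*a}$. Since the image $\mathcal W^s_\xi$ is a linear space, it suffices to treat each component.

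Next, for fixed $\vb*a=(a,b)$ and any $\xi\in\overline{\mathcal Q}$, I reread the proof of Lemma \ref{lemma:f_image_orbit}. The image of $w^s_\xi$ contains all of $Poly(\mathcal O_\xi)^{(a,b)}$ as soon as the multiplicity of $(a,b)$ in $\mathcal B(\mathcal H_{s\omega_\xi})$ equals its multiplicity $m(a,b)$ in $Poly(\mathcal O_\xi)$. This holds once $s\min\{p_1,q_1\}\ge\max\{a,b\}$ for $\xi\in\mathcal Q$, and once $s\ge\max\{a,b\}$ for the two nongeneric $\xi$. The key observation is that for $\xi\in\mathcal Q$ we have $p_1,q_1\ge1$, so $\min\{p_1,q_1\}\ge1$ and $s\ge\max\{a,b\}$ suffices in every case. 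The threshold \eqref{suff-s0}, $s_0(\vb*a)=\max\{a,b\}$, is therefore independent of $\xi$ and of the chosen rays.

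Finally, set $s_0=\max_{(a,b)\in\mathcal A_d}\max\{a,b\}$, which is finite because $\mathcal A_d$ is finite. Then for every $s\ge s_0$ and every $\xi\in\overline{\mathcal Q}$, each component $\widetilde f^{\vb*a}|_{\mathcal O_\xi}$ lies in $\mathcal W^s_\xi$, and hence so does $f|_{\mathcal O_\xi}$.

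The only delicate point is the uniformity in $\xi$. The integral radius $r(\xi)$ is unbounded (Proposition \ref{unboundedr}), so one must make sure the multiplicity criterion (I.5.17)--(I.5.18) depends on $\min\{p_1,q_1\}$ only through the inequality $s\min\{p_1,q_1\}\ge\max\{a,b\}$. A large first weight only helps here, since it makes this inequality easier to satisfy.
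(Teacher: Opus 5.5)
Your proposal is correct and follows the paper's proof: reduce to an isotypic component $(a,b)$, then reuse the threshold $s_0=\max\{a,b\}$ from \eqref{suff-s0} in the proof of Lemma \ref{lemma:f_image_orbit}, which does not depend on $\xi$. Your explicit observations that $\min\{p_1,q_1\}\ge 1$ on $\mathcal Q$ and that only finitely many isotypic components occur simply spell out what the paper leaves implicit.
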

\begin{proof}
Without loss of generality, we can assume that $f \in Poly(\mathcal S^7)^{(a,b)}$, then $f|_{\mathcal O_\xi}\in Poly(\mathcal O_\xi)^{(a,b)}$, $\forall \xi\in\mathcal Q$ (and $\forall\xi\in\overline{\mathcal Q}$ when $a=b$). Then, we proceed as in the proof of Lemma \ref{lemma:f_image_orbit}, obtaining $s_0$ given by \eqref{suff-s0}.
\end{proof}

\begin{lemma}\label{prop:magoo_prod_poly}
    Given $f \in Poly(\mathcal S^7)$, let $f_{|n}$ denote its quotient in $\mathfrak{P}_n$, cf.  \eqref{PPn}, and let $s_0\in\mathbb N$ be as in Lemma \ref{lem:magoo_prod_poly}. Then, 
    \begin{equation}
        s\geq s_0\implies f_{|n} \in \mathcal W^s_n \ , \ \ \forall n \in \mathbb N \ . 
    \end{equation}
\end{lemma}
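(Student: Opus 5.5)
The plan is to build an explicit preimage in $U(\mathfrak{sl}(3))$ of $f_{|n}$ under $w^s_n$. The key tool is the fact that each $w^s_\xi$ is an equivariant map which, for $s\ge s_0$, hits $f|_{\mathcal O_\xi}$ (Lemma \ref{lem:magoo_prod_poly}). The nontrivial point is that $w^s_n[u]=\sum_{\xi\in\mathcal P_n}\mathfrak d^\xi_n w^s_\xi[u]$ uses the \emph{same} $u$ on every orbit, while different $\xi$ may require different preimages. So a single element $u$ must be found whose images restrict correctly on all orbits of $\mathcal P_n$ simultaneously.

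\textbf{Step 1: identify $f_{|n}$ orbitwise.} Fix $n$ and $s\ge s_0$. By \eqref{Poly(P_n)} and \eqref{deltaP2}, an element of $\mathfrak P_n$ is determined by its restrictions to the orbits $\mathcal O_\xi$, $\xi\in\mathcal P_n$. Moreover $w^s_n[u]|_{\mathcal O_\xi}=w^s_\xi[u]$. Hence it suffices to find $u$ with
\[
w^s_\xi[u]=f|_{\mathcal O_\xi}\quad\text{for all } \xi\in\mathcal P_n .
\]

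\textbf{Step 2: separate the orbits inside $U(\mathfrak{sl}(3))$.} For each $\xi\in\mathcal P_n$ choose $u_\xi$ with $w^s_\xi[u_\xi]=f|_{\mathcal O_\xi}$, using Lemma \ref{lem:magoo_prod_poly}. Because $w^s_\xi=W^{s\omega_\xi}\circ\rho_{s\omega_\xi}$, we need central elements $z_\xi\in U(\mathfrak{sl}(3))$ with $\rho_{s\omega_{\xi'}}(z_\xi)=\delta_{\xi\xi'}\mathds 1$ for $\xi,\xi'\in\mathcal P_n$. Then $u=\sum_{\xi}z_\xi u_\xi$ works, since
\[
\rho_{s\omega_{\xi'}}(u)=\rho_{s\omega_{\xi'}}(u_{\xi'}) .
\]
To construct the $z_\xi$, note that the irreps $s\omega_\xi$ are pairwise distinct. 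Indeed, distinct $\xi\in\overline{\mathcal F}$ give non-proportional dominant weights, so $s\,\omega_\xi\neq s\,\omega_{\xi'}$. Distinct irreps have distinct central characters on the center $Z(U(\mathfrak{sl}(3)))$, which is generated by the quadratic and cubic Casimirs; by Harish-Chandra, these central characters separate dominant weights. Lagrange interpolation in the Casimirs then gives the $z_\xi$, in the same spirit as the construction of $\mathfrak d^\xi_{\mathcal P}$ in \eqref{deltaP}.

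\textbf{Alternative via Jacobson density.} More simply, the map $U(\mathfrak{sl}(3))\to\prod_{\xi\in\mathcal P_n}\mathcal B(\mathcal H_{s\omega_\xi})$ is surjective, since the $\mathcal H_{s\omega_\xi}$ are pairwise non-isomorphic irreps (Burnside/Jacobson density). So one can directly pick $u$ mapping to $\bigl(\rho_{s\omega_\xi}(u_\xi)\bigr)_\xi$.

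The main obstacle is Step 2, namely justifying this simultaneous surjectivity. I would cite the density theorem for finitely many pairwise inequivalent finite-dimensional irreducible modules, after checking pairwise inequivalence via the highest weights $s\omega_\xi$. Condition \eqref{notseq-n} is not needed here, since $n$ is arbitrary but fixed.
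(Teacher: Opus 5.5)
Your proposal is correct and is essentially the paper's proof. The paper also picks orbitwise preimages $u^s_\xi$ of $f|_{\mathcal O_\xi}$ and glues them as $u^s_n=\sum_{\xi\in\mathcal P_n}c^s_\xi u^s_\xi$, using central elements $c^s_\xi\in Z(U(\mathfrak{sl}(3)))$ with $w^s_{\xi}[c^s_{\xi'}]=\delta_{\xi,\xi'}$, which exist because Casimir eigenvalues separate the pairwise distinct irreps $s\,\omega_\xi$. Your Harish-Chandra/interpolation argument, and the Jacobson-density alternative, simply spell out the existence of these central elements, which the paper asserts in one line.
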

\begin{proof}
By hypothesis, $f|_{\mathcal O_\xi}$ lies in the image of $w^s_\xi$ for every $\xi \in \overline{\mathcal Q}$ and every $s\ge s_0$. Thus, for any fixed $n\in\mathbb N$ and $s\ge s_0$, we need to exhibit $u_n^s \in U(\mathfrak{sl}(3))$ such that $f_{|n} = w^s_n[u_n^s]$. For each $\xi\in\overline{\mathcal Q}$, let $u_\xi^s\in U(\mathfrak{sl}(3))$ be such that $f|_{\mathcal O_\xi} = w^s_\xi[u_\xi^s]$ for $s\ge s_0$. Since the eigenvalues of the Casimir operators separate the representations, cf. (I.B.3), there are $c_\xi^s \in Z(U(\mathfrak{sl}(3))$, $\forall \xi \in \mathcal P_n$, s.t. 
    \begin{equation}\label{newdelta}
        \forall \xi,\xi' \in \mathcal P_n\, , \ \ w^s_{\xi}[c^s_{\xi'}] = \delta_{\xi,\xi'}\implies w^s_{\xi}[c^s_{\xi'}u^s_{\xi'}] = \delta_{\xi,\xi'}w^s_{\xi}[u^s_\xi] \, .
    \end{equation}
    Therefore, from \eqref{w_C} and \eqref{newdelta}, 
    \begin{equation}
        u_n^s = \sum_{\xi\in\mathcal P_n}c_\xi^s u^s_\xi
    \implies 
        w^s_n[u^s_n] = \sum_{\xi\in \mathcal P_n} \mathfrak{d}_n^\xi w^s_\xi[u^s_n] = \sum_{\xi\in \mathcal P_n} \mathfrak{d}_n^\xi w^s_\xi[u^s_{\xi}] = f_{|n}\, ,
    \end{equation}
    and this $u^s_n\in U(\mathfrak{sl}(3))$ is as claimed.
\end{proof}

\begin{remark}\label{rem-s0}
We highlight that $s_0$ as above depends only on $f\in Poly(\mathcal S^7)$, and it works for any Magoo sphere, obtained from any pencil of correspondence rays $(w^s_n)_{n,s\in \mathbb N}$ for any chain $\mathcal C=(\mathcal P_n)_{n\in\mathbb N}$ as in \eqref{chainCn}, once $f$ is fixed. Thus, in light of Lemma \ref{prop:magoo_prod_poly}, given $f_1,f_2\in  Poly(\mathcal S^7)$, for $s_0=\max\{s_0^1,s_0^2\}$, with $s_0^j$ as in Lemma \ref{lem:magoo_prod_poly} w.r.t.~$f_j$, we can make sense of $f_1\star_{\mathcal C}f_2$ as a bi-sequence $({f_1}_{|n}\star^s_n{f_2}_{|n})_{n\in\mathbb N,s\ge s_0}\, ,$ and similarly for the Magoo bracket $[f_1,f_2]^r_{\star_{\mathcal C}}\, .$ 
\end{remark}

Thus, to explore the asymptotics of $\mathfrak W\{\mathcal S^7, \widehat\Pi_{\mathfrak g}\}$, we first need to establish the meaning of limits of sequences $(f^s)_{s\geq s_0}$ and $(h_n)_{n\in\mathbb N}\, ,$ where $s$ is the semiclassical asymptotic parameter and $n$ indexes the nested finite subsets $\mathcal P_n\subset\overline{\mathcal Q}$ in $\mathcal C$, cf. \eqref{chainCn}. That these two limits are of different nature can be inferred by construction and is implied by \eqref{notseq-n}. Hence, to establish these limits, we first invoke
\begin{equation}
    \mathfrak P = \prod_{\xi\in\overline{\mathcal Q}}Poly(\mathcal O_\xi)
\end{equation}
as an ambient space, for which we have an inclusion
\begin{equation}
    Poly(\mathcal S^7)\hookrightarrow \mathfrak P:f\mapsto (f|_{\mathcal O_\xi})_{\xi\in\overline{\mathcal Q}}\equiv (f_\xi)_{\xi\in\overline{\mathcal Q}} \, ,
\end{equation}
and projections
\begin{equation}\label{proj_P_Pn}
    \mathfrak P\to \mathfrak P_n:f = (f_\xi)_{\xi\in\overline{\mathcal Q}}\mapsto f_{|n} = (f_\xi)_{\xi\in\mathcal P_n}
\end{equation}
for every $n\in \mathbb N$, cf. \eqref{PPn}.\footnote{In accordance with notation of Lemma \ref{prop:magoo_prod_poly}.} Then, we consider two distinct types of convergence:

\ 

\noindent {\bf type (i)}: \  for $(f^s)_{s\geq s_0}$, with each $f^s = (f^s_\xi)_{\xi\in\mathcal P_n}\in \mathfrak P_n$ and $f = (f_\xi)_{\xi\in\mathcal P_n} \in \mathfrak P_n$,
\begin{equation}\label{lim_s}
    \lim_{s\to\infty}f^s = f \iff \forall \epsilon>0 , \exists s_\epsilon\in \mathbb N:s\ge s_\epsilon \implies \norm{f^s_\xi-f_\xi}_\xi <\epsilon \ \ \forall \xi \in \mathcal P_n\, ,
\end{equation}
that is, $(f^s)_{s\geq s_0}$ converges to $f$ in $\mathfrak P_n$ iff  $f^s_\xi \xrightarrow{s\to\infty}  f_\xi$ uniformly over $\mathcal P_n$.

\

\noindent {\bf type (ii)}: \  for $(h_n \in \mathfrak P_n)_{n\in\mathbb N}$ and $h \in \mathfrak P$,
\begin{equation}\label{lim_n}
    \lim_{n\to\infty}h_n = h \iff h_n = h_{|n} \ \ \forall n \in \mathbb N\, ,
\end{equation}
cf. \eqref{proj_P_Pn}, that is, $(h_n\in \mathfrak P_n)_{n\in\mathbb N}$ converges to $h\in\mathfrak P$ iff $h$ is a common extension to every $h_n\in \mathfrak P_n$.

\  

Convergence types (i) and (ii) above induce two different kinds of asymptotics for a Magoo sphere, depending on which order of iterated limits we take for  $f_1\star_{\mathcal C}f_2$ and $[f_1,f_2]^r_{\star_\mathcal C}$. We begin by exploring the ordering given by (i) first, then (ii).

\begin{definition}\label{MagooAsymp-defn}
We say that $\mathfrak W\{\mathcal S^7,\widehat\Pi_{\mathfrak g}\}$ is of \emph{Poisson type} if its Magoo product and Magoo bracket satisfy, for any $f_1,f_2 \in Poly(\mathcal S^7)$, 
\begin{equation}\label{magoo_prod_lim}
\begin{aligned} 
    &\lim_{n\to\infty}\lim_{s\to\infty} f_1\star_{\mathcal C}f_2 = f_1f_2\in \mathfrak P \ \iff \\ &({f_1}_{|n}\star^s_n {f_2}_{|n})_{s\ge s_0}\xrightarrow{s\to\infty} (f_1f_2)_{|n}\, , \ \forall n \in \mathbb N\, , 
\end{aligned}     
\end{equation}
\begin{equation}\label{magoo_bracket_lim}
\begin{aligned}
    &\lim_{n\to\infty}\lim_{s\to\infty} s[f_1,f_2]^r_{\star_{\mathcal C}} = i\{f_1,f_2\}\in \mathfrak P \ \iff \\ 
    &(s[{f_1}_{|n},{f_2}_{|n}]^r_{\star^s_n})_{s\ge s_0}\xrightarrow{s\to\infty} i{\{f_1,f_2\}}_{|n} \, , \ \forall n \in \mathbb N\, , 
\end{aligned}    
\end{equation}
 where $n$ indexes $\mathcal P_n\in \mathcal C$, cf.~Remark \ref{rem-s0} and \eqref{lim_s}-\eqref{lim_n}. In this case, we write
\begin{equation}\label{nprecs}
    \mathfrak {W}\{\mathcal S^7,\widehat\Pi_{\mathfrak g}\} \xrightarrow{n\prec s} Poly(\mathcal S^7,\widehat\Pi_{\mathfrak g}) \, , 
\end{equation}
where the superscript $n\prec s$  refers to the order of the limits in \eqref{magoo_prod_lim}-\eqref{magoo_bracket_lim}.
\end{definition}

\begin{theorem}\label{theo:Magoo_pois}
    A Magoo sphere is of Poisson type if and only if all of its $\xi$-rays of universal correspondences are of Poisson type.
\end{theorem}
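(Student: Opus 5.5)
The argument reduces the Magoo limits to the orbitwise limits of Definition \ref{PoissonOxi}. For fixed $n$, convergence of type (i) in $\mathfrak P_n$ means uniform convergence over the \emph{finite} set $\mathcal P_n$, so it is equivalent to convergence in $\norm{\,}_\xi$ for each $\xi\in\mathcal P_n$ separately. The plan is to unwind \eqref{magoo_prod_lim}--\eqref{magoo_bracket_lim} orbit by orbit, using \eqref{deltaP2} and \eqref{starPs}. These give, for $s\ge s_0$ as in Remark \ref{rem-s0},
\begin{equation*}
\big({f_1}_{|n}\star^s_n{f_2}_{|n}\big)\big|_{\mathcal O_\xi}=f_1|_{\mathcal O_\xi}\star^s_\xi f_2|_{\mathcal O_\xi}\, ,\quad
s\big[{f_1}_{|n},{f_2}_{|n}\big]^r_{\star^s_n}\big|_{\mathcal O_\xi}=sr(\xi)\big[f_1|_{\mathcal O_\xi},f_2|_{\mathcal O_\xi}\big]_{\star^s_\xi}\, ,
\end{equation*}
for every $\xi\in\mathcal P_n$. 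The point to check here is that the twisted product on $\mathcal W^s_n$ really restricts on each leaf to $\star^s_\xi$. This holds because $w^s_n[u]|_{\mathcal O_\xi}=w^s_\xi[u]$ by \eqref{deltaP2}, and because ${f_j}_{|n}=w^s_n[u^s_n]$ with $w^s_\xi[u^s_n]=f_j|_{\mathcal O_\xi}$, by the construction in Lemma \ref{prop:magoo_prod_poly}. By \eqref{notseq-n}, the restriction is also independent of $n\ge n_\xi$. We also need that the Poisson bracket on $\mathcal S^7$ restricts to the leafwise bracket, $\{f_1,f_2\}|_{\mathcal O_\xi}=\{f_1|_{\mathcal O_\xi},f_2|_{\mathcal O_\xi}\}$. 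This is the standard fact that $\widehat\Pi_{\mathfrak g}$ is tangent to its symplectic leaves, cf. \eqref{foliS7}.

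($\Leftarrow$) Assume every $\xi$-ray is of Poisson type, and fix $f_1,f_2\in Poly(\mathcal S^7)$ and $n$. For each $\xi\in\mathcal P_n$, the restrictions $f_j|_{\mathcal O_\xi}$ lie in $Poly(\mathcal O_\xi)$. By \eqref{Poisson-eq}--\eqref{Poisson-eq-2}, the leafwise product and the rescaled commutator converge in $\norm{\,}_\xi$ to $(f_1f_2)|_{\mathcal O_\xi}$ and $i\{f_1,f_2\}|_{\mathcal O_\xi}$ respectively. Since $\mathcal P_n$ is finite, we can take the maximum of the finitely many thresholds $s_\epsilon$, which gives the uniform convergence of \eqref{lim_s}. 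This holds for every $n$, so \eqref{magoo_prod_lim}--\eqref{magoo_bracket_lim} follow.

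($\Rightarrow$) Assume the Magoo sphere is of Poisson type, and fix $\xi\in\overline{\mathcal Q}$ and $g_1,g_2\in Poly(\mathcal O_\xi)$. By definition \eqref{Poly(O)}, each $g_j$ extends to some $f_j\in Poly(\mathcal S^7)$. Take $n=n_\xi$, so that $\xi\in\mathcal P_n$. Projecting the type (i) convergence in $\mathfrak P_n$ onto the $\xi$-component and using the displayed identities gives $g_1\star^s_\xi g_2\to g_1g_2$ and $sr(\xi)[g_1,g_2]_{\star^s_\xi}\to i\{g_1,g_2\}$ uniformly on $\mathcal O_\xi$. Hence the $\xi$-ray is of Poisson type.

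The only delicate step is the bookkeeping that identifies the $\xi$-component of the Magoo product and bracket with the fuzzy-orbit product $\star^s_\xi$ and the bracket $r(\xi)[\cdot,\cdot]_{\star^s_\xi}$, independently of the chosen preimages $u^s_n$. Well-definedness follows because $\star^s_n$ is defined through $w^s_n$ via \eqref{starPs}, and $\ker w^s_n=\bigcap_{\xi\in\mathcal P_n}\ker w^s_\xi$ since the $\mathfrak d^\xi_n$ separate leaves. Everything else is the finiteness of $\mathcal P_n$.
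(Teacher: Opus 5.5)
Your proof is correct and follows the paper's argument. The paper simply notes that the orbitwise Poisson convergences extend uniformly over the finite sets $\mathcal P_n$, so the equivalence follows directly from the definitions of type (i) and type (ii) convergence. What you add is the bookkeeping the paper leaves implicit: the $\xi$-components of $\star^s_n$ and of $s[\cdot,\cdot]^r_{\star^s_n}$ are exactly $\star^s_\xi$ and $sr(\xi)[\cdot,\cdot]_{\star^s_\xi}$ (via \eqref{deltaP2}, \eqref{starPs}, \eqref{bracket-s}); the products are well defined through $\ker w^s_n=\bigcap_{\xi\in\mathcal P_n}\ker w^s_\xi$; and the Poisson bracket on $\mathcal S^7$ restricts to the leafwise bracket.
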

\begin{proof}
Since the Poisson-type uniform convergences for each $\xi\in\overline{\mathcal Q}$ (cf.~\eqref{Poisson-eq}-\eqref{Poisson-eq-2}) trivially  extend uniformly over finite sets $\mathcal P_n\subset\overline{\mathcal Q}$, the statement follows immediately from the definitions, cf.~\eqref{lim_s}-\eqref{lim_n} and \eqref{magoo_prod_lim}-\eqref{magoo_bracket_lim}. 
\end{proof}

\begin{corollary}\label{cor:magoo_asymp}
    For $j=1,2$, let $\mathfrak{W}_j\{\mathcal S^7,\widehat\Pi_{\mathfrak g}\}$ be Magoo spheres constructed from the same $\xi$-rays of universal correspondences $(w_\xi^s)_{s\in\mathbb N,\xi\in\overline{\mathcal Q}}\, ,$  but from two distinct chains $\mathcal C_1$ and $\mathcal C_2$ of finite subsets of $\overline{\mathcal Q}$ satisfying \eqref{chainCn}. Then, 
    \begin{equation}
    \mathfrak{W}_1\{\mathcal S^7,\widehat\Pi_{\mathfrak g}\} \xrightarrow{n\prec s} Poly(\mathcal S^7,\widehat\Pi_{\mathfrak g})\iff \mathfrak{W}_2\{\mathcal S^7,\widehat\Pi_{\mathfrak g}\} \xrightarrow{n\prec s} Poly(\mathcal S^7,\widehat\Pi_{\mathfrak g})\,.
\end{equation}
\end{corollary}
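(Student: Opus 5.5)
The plan is to reduce both sides of the equivalence to the same chain-independent condition, using Theorem \ref{theo:Magoo_pois}. By hypothesis, $\mathfrak{W}_1\{\mathcal S^7,\widehat\Pi_{\mathfrak g}\}$ and $\mathfrak{W}_2\{\mathcal S^7,\widehat\Pi_{\mathfrak g}\}$ are built from the same family of $\xi$-rays $(w^s_\xi)_{s\in\mathbb N}$, one for each $\xi\in\overline{\mathcal Q}$. Only the chain $\mathcal C_j=(\mathcal P^j_n)_{n\in\mathbb N}$ differs, and with it the delta functions $\mathfrak d^\xi_{\mathcal P^j_n}$ used for gluing. Theorem \ref{theo:Magoo_pois} says that $\mathfrak W_j\{\mathcal S^7,\widehat\Pi_{\mathfrak g}\}\xrightarrow{n\prec s}Poly(\mathcal S^7,\widehat\Pi_{\mathfrak g})$ holds if and only if every $\xi$-ray $(w^s_\xi)_{s\in\mathbb N}$, $\xi\in\overline{\mathcal Q}$, is of Poisson type in the sense of Definition \ref{PoissonOxi}. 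That condition does not mention the chain at all. The two statements are therefore equivalent to the same condition, and hence to each other.

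The one thing to check is that the family of $\xi$-rays "of" each Magoo sphere really is the full family indexed by $\overline{\mathcal Q}$, the same for both chains. First, each chain satisfies $\lim_n\mathcal P^j_n=\overline{\mathcal Q}$ by \eqref{chainCn}, so every $\xi\in\overline{\mathcal Q}$ lies in some $\mathcal P^j_n$. Second, by \eqref{w_C} and \eqref{deltaP2}, the restriction of $w^s_n[u]$ to $\mathcal O_\xi$ is exactly $w^s_\xi[u]$ for $n\ge n_\xi$. This is consistent with \eqref{notseq-n}. Consequently, the restriction of $\star^s_n$ to the $\xi$-component of $\mathfrak P_n$ is $\star^s_\xi$, and the restriction of the $r$-weighted bracket is $r(\xi)[\cdot,\cdot]_{\star^s_\xi}$. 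These restrictions do not depend on the chain.

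If more detail is wanted, I would also spell out why the two limits in \eqref{magoo_prod_lim}--\eqref{magoo_bracket_lim} decouple. The type (i) convergence \eqref{lim_s} over the finite set $\mathcal P_n$ holds if and only if it holds componentwise for each $\xi\in\mathcal P_n$. Componentwise, it is exactly \eqref{Poisson-eq}--\eqref{Poisson-eq-2}, using the factor $s\,r(\xi)$ in the bracket. The type (ii) condition then only requires this for all $n$, which by exhaustion of $\overline{\mathcal Q}$ means all $\xi$.

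There is no real obstacle here. The only point needing care is the bookkeeping with $s_0$: Remark \ref{rem-s0} and Lemma \ref{prop:magoo_prod_poly} guarantee that $s_0$ depends only on $f_1,f_2$ and not on the chain, so the sequences being compared are defined on the same range $s\ge s_0$ for both $\mathcal C_1$ and $\mathcal C_2$.
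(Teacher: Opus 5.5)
Your proposal is correct and follows the paper's route. The corollary is immediate from Theorem \ref{theo:Magoo_pois}: for either chain, Poisson type of the Magoo sphere is equivalent to the chain-independent condition that every $\xi$-ray $(w^s_\xi)_{s\in\mathbb N}$, $\xi\in\overline{\mathcal Q}$, be of Poisson type, and the extra bookkeeping you supply (exhaustion of $\overline{\mathcal Q}$, componentwise restriction via \eqref{deltaP2}, chain-independence of $s_0$) just makes explicit what the paper leaves implicit.
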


Hence, although there are infinitely many different  chains $\mathcal C$ of $\overline{\mathcal Q}$ satisfying \eqref{chainCn}, the Poisson condition for a Magoo sphere is independent of their choice, thus we can restrict ourselves to a canonical choice, as follows. Recall there exits a well defined function $r:\overline{\mathcal Q}\to\mathbb R^+$, $\xi\mapsto r(\xi)$, the integral radius of $\xi$, cf. \eqref{intrad} in Definition \ref{intrad-defn}, so we take the index $n$ in \eqref{chainCn} to be an increasing function of $r$,
\begin{equation}\label{nxi}
    n_\xi=n(r(\xi)) \ , \ \ r(\xi)< r(\xi')\iff n_\xi< n_{\xi'} \ .  
\end{equation}

\begin{definition}\label{Prn-defn}
The \emph{radial chain}  $\mathcal C^r=(\mathcal R_n)_{n\in\mathbb N}$ is the chain as in \eqref{chainCn} such that, $\forall \xi,\xi'\in \overline{\mathcal Q}$, \eqref{nxi} holds.
\end{definition}

In other words, for any given $n$, $\mathcal R_n$  is  the union of all rational orbits whose integral radius $r(\xi)$ is such that $n_\xi=n(r(\xi))\leq n$. Thus, $\mathcal R_1=\{\xi_{(1,0)},\xi_{(0,1)}\}$, $\mathcal R_2=\mathcal R_1\cup\{\xi_{(1,1)}\}$, $\mathcal R_3=\mathcal R_2\cup\{\xi_{(2,1)},\xi_{(1,2)}\}$ and so on, so that as $n$ increases we add up  orbits  $\xi\in\mathcal Q$ to $\mathcal R_n$ in increasing order of integral radius.

\begin{remark}\label{eisenstein}
A systematic way for determining all integral orbits of a given radius is as follows.\footnote{Note that the radius of an integral orbit $\mathcal O$ is always a natural multiple of the integral radius of $\mathcal O_\xi \subset\mathcal S^7$ for which $\mathcal O \sim \mathcal O_\xi$, cf. \eqref{eqOrb} and Definition \ref{intrad-defn}.} Let $\mathcal S^7(\rho)\subset \mathfrak{su}(3)$ be the $7$-sphere of radius $\sqrt{2}\,\rho/\sqrt{3}$ centered at the origin, so that the intersection of $\mathcal S^7(\rho)$ with the closed principal Weyl chamber is given by the points
\begin{equation}
	\Xi_{(X,Y)}^\rho = X\,\varpi_1+Y\,\varpi_2 \ , \ \ \begin{cases}
		X,Y\ge 0 \\
		X^2+XY+Y^2 = \rho^2
	\end{cases}.
\end{equation}
The integral orbits of $\mathcal S^7(\rho)$ are given by the integer solutions of
\begin{equation}
	X^2+XY+Y^2-\rho^2 = 0 \ , \ \ X,Y\in\mathbb N_0 \ . 
\end{equation}
For $X,Y \in \mathbb Z$, the quantity $X^2+XY+Y^2$ is the norm of the Eisenstein integer $X-Y\phi$, where $\phi = e^{i2\pi/3}$, so the problem becomes how to factorize $\rho^2$ in prime factors on $\mathbb Z[\phi]$, which is an UFD with units $\{\pm 1, \pm \phi, \pm \phi^2\}$.\footnote{We refer to \cite{cox} for a very nice description of the ring $\mathbb Z[\phi]$.} 
\end{remark}

\begin{example}
As an example for Remark \ref{eisenstein}, take $\rho^2 = 13^2\cdot 43$. The integral orbits of $\mathcal S^7(\rho)$ are solutions of
	\begin{equation}\label{int_orbit_ex}
		(X-Y\phi)\overline{(X-Y\phi)} = 13^2\cdot 43
	\end{equation}
    in $\mathbb Z[\phi]$. The prime factorization of $13$ and $43$ in the ring of Eisenstein integers are, up to the units $\{\pm 1,\pm \phi, \pm \phi^2\}$,
    \begin{equation}
        13 = (3-\phi)(4+\phi) \ \ , \ \ \ 43 = (6-\phi)(7+\phi) \  ,
    \end{equation}
    thus the set of solutions of \eqref{int_orbit_ex} is
    \begin{equation}
    \begin{aligned}
        & \Big\{\alpha\,13(6-\phi)\ , \ \ \alpha\,13(7+\phi) \ , \ \ \alpha(3-\phi)^2(6-\phi)\ ,\ \  \alpha(3-\phi)^2(7+\phi) \, ,\\
        &  \hspace{4 em} \alpha (4+\phi)^2(6-\phi)\ ,\ \  \alpha (4+\phi)^2(7+\phi)\ :\ \alpha = \pm 1,\pm \phi, \pm \phi^2\Big\}\, .
    \end{aligned}
    \end{equation}
    Some solutions represent the same orbit in different Weyl chambers. Restricting to the principal Weyl chamber, given by $X-Y\phi$ with $X,Y\ge 0$, we get the solutions
	\begin{equation}\label{sol_int_orb_ex}
		\Big\{13-78\phi\ ,\ \  78-13\phi \ , \ \ 41-57\phi \ , \ \ 57-41\phi\Big\}
	\end{equation}
    corresponding to the set of rational orbits 
    \begin{equation*}
        \small{\bigg\{\left(\dfrac{1}{\sqrt{43}},\dfrac{6}{\sqrt{43}}\right) \, , \ \left(\dfrac{6}{\sqrt{43}},\dfrac{1}{\sqrt{43}}\right)\, , 
        \left(\dfrac{41}{13\sqrt{43}},\dfrac{57}{13\sqrt{43}}\right)\, ,\ \left(\dfrac{57}{13\sqrt{43}},\dfrac{41}{13\sqrt{43}}\right)\bigg\}}\!\subset\!{\mathcal Q}\, .
    \end{equation*}
\end{example}

\ 

\begin{definition}\label{RadMagooSphere-defn}
    A \emph{radial Magoo sphere} is a Magoo sphere constructed using the radial chain $\mathcal C^r$, cf. Definitions \ref{def:penc_magoo} and \ref{Prn-defn}.
\end{definition}
\begin{remark}\label{canCchoice}
    However, because the radial chain $\mathcal C^r=(\mathcal R_n)_{n\in\mathbb N}$ is a canonical choice and in light of Corollary \ref{cor:magoo_asymp}, from now on we shall always assume this choice \ $\mathcal C=\mathcal C^r$, by default, when we refer to a Magoo sphere in general.    
\end{remark}

Now, we proceed to reverse the order of the iterated limits in Definition \ref{MagooAsymp-defn}.

\begin{definition}\label{unifPoiss}
We say that a Magoo sphere $\mathfrak W\{\mathcal S^7,\widehat\Pi_{\mathfrak g}\}$ of Poisson type  is of \emph{uniform Poisson type} if its Magoo product and Magoo bracket satisfy
\begin{equation}\label{unif_magoo_prod_lim}
    \begin{aligned}
    &\lim_{s\to\infty}\lim_{n\to\infty} f_1\star_{\mathcal C}f_2 = f_1f_2\in \mathfrak P \ \iff \\
      &\big((f_1|_{\mathcal O_\xi}\star^s_\xi f_2|_{\mathcal O_\xi})_{\xi\in \overline{\mathcal Q}}\big)_{s\ge s_0}\xrightarrow{s\to\infty} (f_1f_2|_{\mathcal O_\xi})_{\xi\in \overline{\mathcal Q}}\, , 
    \end{aligned}
\end{equation}
\begin{equation}\label{unif_magoo_bracket_lim}
    \begin{aligned}
      &\lim_{s\to\infty}\lim_{n\to\infty} s[f_1,f_2]^r_{\star_{\mathcal C}} = i\{f_1,f_2\}\in \mathfrak P \ \iff \\  
      &\big((sr(\xi)[f_1|_{\mathcal O_\xi},f_2|_{\mathcal O_\xi}]_{\star^s_\xi})_{\xi\in \overline{\mathcal Q}}\big)_{s\ge s_0}\xrightarrow{s\to\infty}  (i\{f_1,f_2\}|_{\mathcal O_\xi})_{\xi\in \overline{\mathcal Q}}\, , 
\end{aligned}
\end{equation}
for any $f_1,f_2 \in Poly(\mathcal S^7)$, cf. \eqref{lim_s}. In this case, we write 
\begin{equation}\label{WsimP}
    \mathfrak {W}\{\mathcal S^7,\widehat\Pi_{\mathfrak g}\} \xrightarrow{ \ \sim \ } Poly(\mathcal S^7,\widehat\Pi_{\mathfrak g}) \, . 
\end{equation}
\end{definition}

The term \emph{uniform} and notation \eqref{WsimP} are justified by the following: 

\begin{proposition}\label{charUnifPoisson}
$\mathfrak W\{\mathcal S^7,\widehat\Pi_{\mathfrak g}\}$ is of uniform Poisson type if and only if, for every $f_1,f_2 \in Poly(\mathcal S^7)$, we have both
\begin{equation}
    \norm{f_1|_{\mathcal O_\xi}\star^s_\xi f_2|_{\mathcal O_\xi}-f_1f_2|_{\mathcal O_\xi}}_\xi \ \ \mbox{and} \ \ \norm{sr(\xi)[f_1|_{\mathcal O_\xi},f_2|_{\mathcal O_\xi}]_{\star^s_\xi}-i\{f_1,f_2\}|_{\mathcal O_\xi}}_\xi
\end{equation}
converging to $0$ uniformly over $\overline{\mathcal Q}$, as $s\to\infty$.
\end{proposition}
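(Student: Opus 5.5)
This is essentially an unpacking of definitions. The plan is to rewrite the convergence in \eqref{unif_magoo_prod_lim}--\eqref{unif_magoo_bracket_lim} in the sense of type (i), cf.~\eqref{lim_s}, now taken over the whole index set $\overline{\mathcal Q}$ rather than a finite $\mathcal P_n$. That is, take $f^s=(f^s_\xi)_{\xi\in\overline{\mathcal Q}}\in\mathfrak P$ and $f=(f_\xi)_{\xi\in\overline{\mathcal Q}}\in\mathfrak P$. Then $f^s\to f$ means: for every $\epsilon>0$ there is $s_\epsilon$ such that $\norm{f^s_\xi-f_\xi}_\xi<\epsilon$ for all $\xi\in\overline{\mathcal Q}$ whenever $s\geq s_\epsilon$. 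This is exactly $\sup_{\xi\in\overline{\mathcal Q}}\norm{f^s_\xi-f_\xi}_\xi\to0$.

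First I will justify that the inner limit $n\to\infty$ of the bi-sequences $f_1\star_{\mathcal C}f_2$ and $s[f_1,f_2]^r_{\star_{\mathcal C}}$ is the element of $\mathfrak P$ displayed in \eqref{unif_magoo_prod_lim}--\eqref{unif_magoo_bracket_lim}, for each fixed $s\geq s_0$ (with $s_0$ as in Remark \ref{rem-s0}).

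For the product, by \eqref{starPs} and \eqref{deltaP2}, the component of ${f_1}_{|n}\star^s_n{f_2}_{|n}$ on $\mathcal O_\xi$ is $f_1|_{\mathcal O_\xi}\star^s_\xi f_2|_{\mathcal O_\xi}$ for every $\xi\in\mathcal P_n$. By \eqref{notseq-n} this component does not depend on $n\geq n_\xi$. Hence the element $h^s=(f_1|_{\mathcal O_\xi}\star^s_\xi f_2|_{\mathcal O_\xi})_{\xi\in\overline{\mathcal Q}}$ is a common extension, and the sequence in $n$ converges to it in the sense of type (ii), cf.~\eqref{lim_n}. It is also unique, because $\bigcup_n\mathcal P_n=\overline{\mathcal Q}$.

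For the bracket, the same argument applied to \eqref{bracket-s} gives components $s\,r(\xi)[f_1|_{\mathcal O_\xi},f_2|_{\mathcal O_\xi}]_{\star^s_\xi}$. The weight $r(\xi)$ enters here through the delta functions $\mathfrak d_n^\xi$.

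Next, I note that the targets $f_1f_2$ and $i\{f_1,f_2\}$, viewed in $\mathfrak P$ via the inclusion $Poly(\mathcal S^7)\hookrightarrow\mathfrak P$, have components $f_1f_2|_{\mathcal O_\xi}$ and $i\{f_1,f_2\}|_{\mathcal O_\xi}$. For the bracket this uses that the Poisson bracket on $\mathcal S^7$ restricts to the leafwise bracket of $\Omega_\xi$, by \eqref{foliS7} and Proposition \ref{prop:pois_b_beta}.

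Then the outer limit $s\to\infty$ in $\mathfrak P$, read as the type (i) convergence over $\overline{\mathcal Q}$, is literally the uniform-in-$\xi$ vanishing of the two sup-norm differences in the statement. This gives both directions of the equivalence at once.

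The only point needing care, which I regard as the main (though mild) obstacle, is making explicit that type (i) convergence in $\mathfrak P$ means uniform over $\overline{\mathcal Q}$ and not merely componentwise. Componentwise convergence would only recover Definition \ref{MagooAsymp-defn}, by Theorem \ref{theo:Magoo_pois}. So I would state that \eqref{lim_s} is extended verbatim from $\mathcal P_n$ to $\overline{\mathcal Q}$, which is the reading intended in Definition \ref{unifPoiss}.

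I would also remark that uniform Poisson type implies Poisson type, since uniformity over $\overline{\mathcal Q}$ restricts to uniformity over each finite $\mathcal P_n$. This is consistent with the requirement ``of Poisson type'' built into the definition.
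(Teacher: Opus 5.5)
Your proposal is correct and follows essentially the same route as the paper: the paper also argues that, because $\mathcal P_n\to\overline{\mathcal Q}$ and $f_1,f_2\in Poly(\mathcal S^7)$ already furnish common extensions, taking $n\to\infty$ first turns the type (i) limit into the uniform-over-$\overline{\mathcal Q}$ convergence \eqref{lim_s-Q}, which is exactly the stated condition. Your extra remark is a small but worthwhile addition that the paper leaves implicit: uniform convergence over $\overline{\mathcal Q}$ restricts to uniform convergence on each finite $\mathcal P_n$, so the ``of Poisson type'' presupposition in Definition \ref{unifPoiss} holds automatically in the converse direction.
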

\begin{proof}
This is immediate from the definitions, cf.~\eqref{lim_s}-\eqref{lim_n} and \eqref{unif_magoo_prod_lim}-\eqref{unif_magoo_bracket_lim}, once we note that, in this case, taking the limit $n\to\infty$ first is equivalent to replacing  \eqref{lim_s}-\eqref{lim_n} by just 
\begin{equation}\label{lim_s-Q}
   \lim_{s\to\infty}f^s = f \iff \forall \epsilon>0 , \exists s_\epsilon\in\mathbb N:\forall s\ge s_\epsilon \implies \norm{f^s_\xi-f_\xi}_\xi <\epsilon \ \ \forall \xi \in  \overline{\mathcal Q}\, ,
\end{equation} 
since $\displaystyle{\lim_{n\to\infty}}\mathcal P_n=\overline{\mathcal Q}$ and we  start with common extensions $f_1,f_2\in Poly(S^7)\hookrightarrow \mathfrak P$ to ${f_1}_{|n},{f_2}_{|n}\in \mathfrak P_n$, $\forall n\in\mathbb N$.
\end{proof}

\begin{remark}
From Proposition \ref{charUnifPoisson}, the uniform Poisson property only depends on the pencil of rays of universal correspondences, not on any chain $\mathcal C$ satisfying \eqref{chainCn} which is used to construct the Magoo sphere, just as in  Corollary \ref{cor:magoo_asymp}. So, again, because $\mathcal C^r$ is a canonical choice  we assume  \ $\mathcal C=\mathcal C^r$, by default. 
\end{remark}

However, the relevant question is whether there exists any Magoo sphere of uniform Poisson type. In the next subsection, we start investigating this question for the paradigmatic Magoo sphere of Poisson type:

\begin{definition}\label{BerMagSph}
The \emph{Berezin Magoo sphere} is the Magoo sphere such that,  $\forall \xi\in\overline{\mathcal Q}$, $(w^s_\xi)_{s\in\mathbb N}\equiv(b^s_\xi)_{s\in\mathbb N}$ is the $\xi$-ray of Berezin universal correspondences, cf. Definition \ref{defn:univBerezin} and Proposition \ref{prop:B=b}, and Definition \ref{def:penc_magoo}. We denote it by
\begin{equation}
    \mathfrak{B}\{\mathcal S^7, \widehat\Pi_{\mathfrak g}\}=\big(b_{\mathcal C},\star_{\mathcal C},[\cdot ,\cdot]_{\star_\mathcal C}^r\big) \ . 
\end{equation}
\end{definition}

\subsection{On the asymptotics of the Berezin Magoo sphere}

In this subsection, we prove the following result:

\begin{theorem}\label{theo:unif_conv_compact_ber}
Let $\mathcal K\subset \mathcal F$ be any compact, and denote $\mathcal Q_{\mathcal K}=\mathcal Q\cap\mathcal K$. Then, for the Berezin Magoo sphere, cf. Definition \ref{BerMagSph}, and for any $f_1,f_2 \in Poly(\mathcal S^7)$, 
\begin{equation}\label{ber_magoo_prod_brack_norm}
    \norm{f_1|_{\mathcal O_\xi}\star^s_\xi f_2|_{\mathcal O_\xi}-f_1f_2|_{\mathcal O_\xi}}_\xi \ \ \mbox{and} \ \ \norm{sr(\xi)[f_1|_{\mathcal O_\xi},f_2|_{\mathcal O_\xi}]_{\star^s_\xi}-i\{f_1,f_2\}|_{\mathcal O_\xi}}_\xi
\end{equation}
converge to $0$  uniformly over $\mathcal Q_{\mathcal K}$, as $s\to \infty$. 
\end{theorem}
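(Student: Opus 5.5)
The plan is to re-run the proof of Theorem \ref{theo:prod_ber_conv}, now keeping every constant uniform in $\xi$. The key uniform input is already available. Proposition \ref{prop:error_conv} gives $\norm{\varepsilon^s_\xi[u]}_\xi\le \sqrt{3/2}\,M(u)/s$ with $M(u)$ independent of $\xi$. So for each fixed $u$, the convergence in Corollary \ref{lemma:lim-t-1} is uniform over all of $\overline{\mathcal Q}$.

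The difficulty is that the proof of Theorem \ref{theo:prod_ber_conv} first writes $f_j|_{\mathcal O_\xi}$ in a basis $\{(sr(\xi))^{-d(k)}b^s_\xi[u_k]\}$, via Lemma \ref{lemma:b^s_basis} and Proposition \ref{prop:decomp_b^s_basis}. There the elements $u_k$ and the coefficients $\alpha_k(s)$ depend on $\xi$. I would remove this dependence by working on $\mathcal S^7$ rather than on each orbit.

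By bilinearity, it suffices to take $f_1,f_2$ homogeneous, say $f_j\in Poly_{d_j}(\mathcal S^7)^{\vb*a_j}$, restricted from $\widetilde f_j\in Poly_{d_j}(\mathfrak{su}(3))^{\vb*a_j}$. Set $u_j=(-i)^{-d_j}S(\widetilde f_j)$, which is independent of $\xi$. Then Corollary \ref{lemma:lim-t-1} with the uniform bound gives
\begin{equation*}
g^s_{j,\xi}:=(sr(\xi))^{-d_j}b^s_\xi[u_j]=f_j|_{\mathcal O_\xi}+\varepsilon^s_\xi[u_j], \qquad \norm{\varepsilon^s_\xi[u_j]}_\xi\le C_j/s .
\end{equation*}
Similarly, Propositions \ref{prop:point_p_beta} and \ref{prop:pois_b_beta} give
\begin{equation*}
g^s_{1,\xi}\star^s_\xi g^s_{2,\xi}\to f_1f_2|_{\mathcal O_\xi}, \qquad sr(\xi)\,[g^s_{1,\xi},g^s_{2,\xi}]_{\star^s_\xi}\to i\{f_1,f_2\}|_{\mathcal O_\xi},
\end{equation*}
at rate $O(1/s)$ uniformly in $\xi\in\overline{\mathcal Q}$. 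This is because these quantities are $(sr(\xi))^{-\deg}b^s_\xi$ applied to the fixed elements $u_1u_2$ and $u_1u_2-u_2u_1$. For the commutator, I use $\deg(u_1u_2-u_2u_1)\le d_1+d_2-1$; in the degenerate case where the degree drops further, both sides are $O(1/s)$.

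It remains to replace $g^s_{j,\xi}$ by $f_j|_{\mathcal O_\xi}$ inside the twisted product. This is the main obstacle. One must show that the error terms $e^s_{j,\xi}:=\varepsilon^s_\xi[u_j]$ satisfy
\begin{equation*}
\norm{e^s_{1,\xi}\star^s_\xi g\,}_\xi\to 0, \qquad \norm{sr(\xi)[e^s_{1,\xi},g]_{\star^s_\xi}}_\xi\to 0
\end{equation*}
uniformly over $\mathcal Q_{\mathcal K}$, for bounded $g$ in the relevant finite-dimensional isotypic pieces. The commutator carries an extra factor $s$ against an $O(1/s)$ error, so a bare operator-norm bound is not enough.

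My first attempt is to expand the error to next order. From the proof of Proposition \ref{prop:error_conv},
\begin{equation*}
\varepsilon^s_\xi[u]=\sum_{j<\deg u}(-i)^j(sr(\xi))^{j-\deg u}\,\beta[\pi_j(Ad_{g^{-1}}u)](\xi).
\end{equation*}
Each term is the $\beta_j$-symbol of a fixed, $\xi$-independent element of lower degree, namely the lower-order PBW components of $u$, which are equivariantly defined. So I would write $e^s_{j,\xi}=(sr(\xi))^{-d_j}b^s_\xi[v^s_{j,\xi}]$ with $v$ in the finite-dimensional space $U_{\le d_j-1}$. Then I would iterate the degree argument: a product or commutator of Berezin symbols of fixed elements of degree $<d_j$ carries an extra factor $1/(sr(\xi))$, which kills the factor $s$.

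To make this honest, I need the twisted product and bracket to be uniformly bounded, as operators between the fixed finite-dimensional isotypic spaces, on the families of rescaled symbols that appear. The natural tool is the finite-dimensional equivalence of norms on $Poly_{\le d}(\mathcal O_\xi)^{\vb*a}$ relative to restrictions from $Poly_{\le d}(\mathcal S^7)^{\vb*a}$. The restriction map $Poly_{\le d}(\mathcal S^7)^{\vb*a}\to Poly(\mathcal O_\xi)^{\vb*a}$ has kernel and multiplicity that jump only at the nongeneric orbits: for $\xi\in\mathcal F$ the multiplicity equals $m(\vb*a)$, and it degenerates as $\xi\to\partial\mathcal F$. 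By continuity in $\xi$ and compactness of $\mathcal K\subset\mathcal F$, the inverse-matrix constants, analogous to $A(s)=B(s)^{-1}$ in Proposition \ref{prop:decomp_b^s_basis}, are then bounded uniformly over $\mathcal Q_{\mathcal K}$.

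This compactness step, controlling the conditioning of the symbol bases away from the $\mathbb CP^2$ orbits, is exactly where $\mathcal K\subset\mathcal F$ is used. I expect it to be the delicate part of the argument.
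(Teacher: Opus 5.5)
Your preliminary steps are correct. Proposition~\ref{prop:error_conv} makes the convergence of $(sr(\xi))^{-\deg u}b^s_\xi[u]$ uniform on all of $\overline{\mathcal Q}$ for each \emph{fixed} $u$. With $u_j=(-i)^{-d_j}S(\widetilde f_j)$ you do get $g^s_{1,\xi}\star^s_\xi g^s_{2,\xi}\to f_1f_2|_{\mathcal O_\xi}$ and $sr(\xi)[g^s_{1,\xi},g^s_{2,\xi}]_{\star^s_\xi}\to i\{f_1,f_2\}|_{\mathcal O_\xi}$ uniformly.

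The content of the theorem, however, is the step you leave open: replacing $g^s_{j,\xi}$ by $f_j|_{\mathcal O_\xi}$ under $sr(\xi)[\cdot,\cdot]_{\star^s_\xi}$ uniformly on $\mathcal Q_{\mathcal K}$. The mechanism you sketch for it fails. The lower-order PBW components $\pi_j(u)$, $j<\deg u$, are \emph{not} equivariantly defined; only $\beta_{\deg u}\circ\pi_{\deg u}$ is (Proposition~\ref{equivarbeta}). In general $\varepsilon^s_\xi[u_j]$ is not of the form $(sr(\xi))^{-d_j}b^s_\xi[v]$ with $v\in U_{\le d_j-1}(\mathfrak{sl}(3))$ at all. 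For example, $u=V_+^2$ is a highest weight vector of type $(2,2)$, so $\varepsilon^s_\xi[u]\in Poly(\mathcal O_\xi)^{(2,2)}$, and it is nonzero. Write $Ad_{g^{-1}}V_+=aV_++bV_3+cV_-$ for $g$ in the $V$-spin subgroup. The $(sr(\xi))^{-1}$-term of $\varepsilon^s_\xi[u]$ at $Ad_g(\xi)$ is proportional to $ac\,\beta[V_3](\xi)$, coming from reordering $V_+V_-$. Yet $b^s_\xi[U_{\le 1}(\mathfrak{sl}(3))]$ has only $(0,0)$ and $(1,1)$ components.

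Moreover, once $v^s_{j,\xi}$ is allowed to depend on $(s,\xi)$, the claim that products of symbols of lower-degree elements gain a factor $1/(sr(\xi))$ needs uniform bounds on these $v$'s. That is the original difficulty again. Your last paragraph correctly names what is needed: uniform bounds over $\mathcal Q_{\mathcal K}$ for $\star^s_\xi$ and for $sr(\xi)[\cdot,\cdot]_{\star^s_\xi}$ on the isotypic pieces. But it does not construct them.

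The missing idea, which is how the paper argues, is never to split off an error term. Around each $\xi_0\in\mathcal Q_{\mathcal K}$ one fixes $u_1,\dots,u_m\in U(\mathfrak{sl}(3))^{\vb*a}$ and group elements $g_1,\dots,g_m$. Consider the matrices of values, at the $g_k$-translates of $\xi$, of $(-i)^{d(j)}\beta_{d(j)}[u_j]$ and of $(sr(\xi))^{-d(j)}b^s_\xi[u_j]$. These all lie in one compact ball inside $GL_m(\mathbb C)$ for $\xi$ in a neighborhood $\mathcal V(\xi_0)$ and $s\ge s(\xi_0)$ (Lemma~\ref{lemma:b^s_xi_basis}). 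This uses the uniform convergence for fixed $u$ together with the constancy of multiplicities on $\mathcal F$.

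Then $f|_{\mathcal O_\xi}=\sum_j\alpha_j(s,\xi)(sr(\xi))^{-d(j)}b^s_\xi[u_j]$ holds \emph{exactly}, with $\alpha_j(s,\xi)\to\alpha^\infty_j(\xi)$ uniformly on $\mathcal V(\xi_0)$ (Lemma~\ref{lemma-alpha-infty}). Hence $sr(\xi)[f_1,f_2]_{\star^s_\xi}=\sum_{j,k}\alpha^1_j\alpha^2_k\,(sr(\xi))^{-(d_1(j)+d_2(k)-1)}b^s_\xi[u^1_ju^2_k-u^2_ku^1_j]$. The factor $sr(\xi)$ is absorbed by the degree drop of a \emph{fixed} commutator, no convergence rate for the coefficients is needed, and a finite subcover of $\mathcal K$ finishes.

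These same local bases would also give the uniform bilinear bound $\norm{sr(\xi)[f,h]_{\star^s_\xi}}_\xi\le C\norm{f}_\xi\norm{h}_\xi$ on the isotypic pieces. With it, your route closes directly via $\norm{e^s_{j,\xi}}_\xi=O(1/s)$, with no next-order expansion. As written, though, that bound is asserted, not proved, so the proposal is incomplete at its key step.
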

\begin{proof}
    The proof uses a series of lemmas. The first one is immediate from \eqref{unif-error} in Proposition \ref {prop:error_conv}.

\begin{lemma}\label{unif-Q-conv-beta}
    For any $u \in U(\mathfrak{sl}(3))$, the limit 
    \begin{equation}
	\lim_{s\to \infty}(sr(\xi))^{-\deg(u)}b^s_\xi[u] = (-i)^{\deg(u)}\beta_{\deg(u)}[u]|_{\mathcal O_\xi}
\end{equation}
holds uniformly over $\overline{\mathcal Q}$.
\end{lemma}

The following two lemmas will be used to show that we can ensure the validity of a decomposition similar to Proposition \ref{prop:decomp_b^s_basis} on all orbits in $\mathcal Q_{\mathcal K}$ simultaneously. 

\begin{lemma}\label{lemma:b^s_xi_basis}
    Let $m = \dim(Poly(\mathcal O_{\xi_0})^{\vb*a})$, $\xi_0 \in \mathcal Q$. There are $u_1,...,u_m\in U(\mathfrak{sl}(3))^{\vb*a}$ and $s(\xi_0) \in \mathbb N$, as well as  an open neighborhood $\mathcal U(\xi_0)$ of $\xi_0$ in $\mathcal F$,
    such that 
    \begin{equation}\label{m-basisPoly}
 		\left\{(-i)^{d(1)}\beta_{d(1)}[u_1],...,(-i)^{d(m)}\beta_{d(m)}[u_m]\right\} \ , \ \ d(j)\equiv\deg(u_j)\, , 
 	\end{equation}
    is a basis of $Poly(\mathcal O_{\xi})^{\vb*a}$, for every $\xi\in \mathcal U(\xi_0)\subset \mathcal F$, and in addition, 
	\begin{equation}\label{b^s_xi-basis}
	s\geq s(\xi_0)\implies	\left\{(sr(\xi))^{-(d(1))}b^s_\xi[u_1],...,(sr(\xi))^{-(d(m))}b^s_\xi[u_m]\right\}
	\end{equation}
    is also a basis of $Poly(\mathcal O_{\xi})^{\vb*a}$, for every $\xi\in\mathcal U(\xi_0)\cap \mathcal Q=:\mathcal V(\xi_0)$.
\end{lemma}
\begin{proof}
	Take $\{h_1,...,h_m\}\subset Poly(\mathfrak{su}(3))^{\vb*a}$ such that each $h_j$ is a homogeneous polynomial of degree $d(j)$ and $\{h_1|_{\mathcal O_{\xi_0}},...,h_m|_{\mathcal O_{\xi_0}}\}$ is a basis of $Poly(\mathcal O_{\xi_0})^{\vb*a}$. There are $g_1,...,g_m\in SU(3)$ for which the matrix $H(\xi_0)$ with entries 
    \begin{equation}\label{Hmatrix}  
      (H(\xi_0))_{j,k} = h_k^{g_j}(\xi_0)
    \end{equation}
    is non singular. Consider its extension to the matrix valued function 
    \begin{equation}\label{HmatrixX}
        H:\mathfrak{t}\to M_\mathbb C(m)\, , \ X\mapsto (H(X))_{j,k} = h_k^{g_j}(X) \ , 
    \end{equation}
    from which we have the polynomial $\varphi \in Poly_{d(1)+...+d(m)}(\mathfrak t)$, given by
	\begin{equation}
		\varphi(X) = \det(H(X)) \ . 
	\end{equation}

    By construction, $\varphi(\xi_0)\ne 0$, thus $\mathcal Z = \varphi^{-1}(0)\cap \mathcal F$ is finite, and $\{h_1|_{\mathcal O_{\xi}},...,h_m|_{\mathcal O_{\xi}}\}$ is l.i.~for every $\xi \in \mathcal F\setminus \mathcal Z$. Since $\dim(Poly(\mathcal O_\xi)^{\vb*a})$ is constant on $\mathcal F$, we conclude that $\{h_1|_{\mathcal O_{\xi}},...,h_m|_{\mathcal O_{\xi}}\}$ is a basis of $Poly(\mathcal O_\xi)^{\vb*a}$ for every $\xi \in \mathcal F\setminus \mathcal Z$. 
    
    Therefore, there exists $\epsilon(\xi_0)>0$ such that the closed ball $\overline{B}_{H(\xi_0)}(\epsilon(\xi_0))\subset M_\mathbb C(m)$ of radius $\epsilon(\xi_0)$ centered at $H(\xi_0)$ contains only non singular matrices, that is,
    \begin{equation}\label{Ballepsilonxi0}
        \overline{B}_{H(\xi_0)}(\epsilon(\xi_0))\subset GL_m(\mathbb C) \ , \ \ \exists\,  \epsilon(\xi_0)> 0\, .
    \end{equation} 
    Then, taking
	\begin{equation}\label{Uxi0}
		\mathcal U(\xi_0) = \{\xi \in \mathcal F: \norm{H(\xi)-H(\xi_0)}<\epsilon(\xi_0)/2\}\, ,
	\end{equation}
	we have that
    \begin{equation}\label{HinB}
        H(\xi)\in \overline{B}_{H(\xi_0)}(\epsilon(\xi_0))\subset GL_m(\mathbb C) \ , \ \ \forall \xi\in \mathcal U(\xi_0) \ . 
    \end{equation}

	Now, take $u_j = (-i)^{-d(j)}S(h_j)\in U_{\le d(j)}(\mathfrak{sl}(3))^{\vb*a}$, so that $h_j = (-i)^{d(j)}\beta_{d(j)}[u_j]$. By Lemma \ref{unif-Q-conv-beta}, each $(sr(\xi))^{-d(j)}b^s_\xi[u_j]$ converges to $h_j$ uniformly on $\overline{\mathcal Q}$ as $s\to\infty$. Thus, for the sequence of matrix-valued functions $B(s,\cdot)_{s\in\mathbb N}$, where 
    \begin{equation}
        B(s,\cdot):\overline{\mathcal Q}\to M_{\mathbb C}(m)\, , \ \xi\mapsto  (B(s,\xi))_{j,k} = (sr(\xi))^{-d(k)}b^s_\xi[u_k]^{g_j} \ , 
    \end{equation}
    we have that $B(s,\cdot)_{s\in\mathbb N}$ also converges uniformly to $H$ on $\overline{\mathcal Q}$ as $s\to\infty$, that is,
    \begin{equation}\label{Bs0}
    \forall\epsilon> 0, \exists s_\epsilon\in\mathbb N: \   \forall s\geq s_\epsilon\implies\norm{B(s,\xi)-H(\xi)}< \epsilon/2 \ , \ \ \forall \xi\in \overline{\mathcal Q} \ . 
    \end{equation}
    
    Combining \eqref{Uxi0} and \eqref{Bs0},  \  $\exists\,  s(\xi_0)\in\mathbb N$  \ such that, $\forall \xi \in \mathcal V(\xi_0)$\,,
    \begin{equation}\label{unif-norms-to-0}
    \begin{aligned}
	s\ge s(\xi_0) \implies	\norm{B(s,\xi)-H(\xi_0)}  &\le \norm{B(s,\xi)-H(\xi)}+\norm{H(\xi)-H(\xi_0)}\\
    & < \dfrac{\epsilon(\xi_0)}{2}+\dfrac{\epsilon(\xi_0)}{2}\, ,
    \end{aligned}
	\end{equation}
	which implies $B(s,\xi)$ is non singular too, that is, 
    \begin{equation}\label{BinB}
    B(s,\xi)\in \overline{B}_{H(\xi_0)}(\epsilon(\xi_0))\subset GL_m(\mathbb C) \ , \ \ \forall \xi\in  \mathcal V(\xi_0), \ \forall s\ge s(\xi_0) \ .
    \end{equation}
    
    Since $Poly(\mathcal O_\xi)^{\vb*a}\simeq Poly(\mathcal O_{\xi_0})^{\vb*a}$, $\forall \xi \in \mathcal F$, we conclude from \eqref{HinB} that \eqref{m-basisPoly} is a basis for $Poly(\mathcal O_\xi)^{\vb*a}$, $\forall\xi\in\mathcal U(\xi_0)$, and from \eqref{BinB} that the set in \eqref{b^s_xi-basis} is also a basis of $Poly(\mathcal O_\xi)^{\vb*a}$, $\forall\xi\in\mathcal V(\xi_0)$, $\forall s\geq s(\xi_0)$.
\end{proof}

\begin{lemma}\label{lemma-alpha-infty}
	Let $u_1,...,u_m \in U(\mathfrak{sl}(3))^{\vb*a}$, $\mathcal V(\xi_0)$ and $s(\xi_0)$ be as in the previous lemma. For $f\in Poly(S^7)^{\vb*a}$, $\xi \in \mathcal V(\xi_0)$ and $s\ge s(\xi_0)$, there are $\alpha_j(s,\xi)\in \mathbb C$ for $j\in\{1,...,m\}$, such that
	\begin{equation}
		f = \sum_{j=1}^m\alpha_j(s,\xi)(sr(\xi))^{-d(j)}b^s_\xi[u_j] = \sum_{j=1}^m\alpha_j^\infty(\xi)(-i)^{d(j)}\beta_{d(j)}[u_j]|_{\mathcal O_\xi}\, ,
	\end{equation}
	where
	\begin{equation}
		\lim_{s\to \infty}\alpha_j(s,\xi) = \alpha_j^\infty(\xi)\in\mathbb C
	\end{equation}
	holds uniformly on $\mathcal V(\xi_0)$.
\end{lemma}

\begin{proof}
For $H(\xi)$ and $B(s,\xi)$ as in the previous lemma, let $F(\xi)=H(\xi)^{-1}$ and $A(s,\xi) = B(s,\xi)^{-1}$, for any $\xi\in\mathcal V(\xi_0)$ and $s\geq s(\xi_0)$, cf. \eqref{HinB} and \eqref{BinB}. Then, from \eqref{Bs0}, $B(s,\xi)\xrightarrow[]{s\to\infty}H(\xi)$ implies
	\begin{equation}\label{alphatoalpha}
		\alpha_j(s,\xi) = \sum_{k=1}^m(A(s,\xi))_{j,k}f^{g_k}(\xi)\ \xrightarrow[]{s\to\infty} \ \alpha_j^\infty(\xi) = \sum_{k=1}^m(F(\xi))_{j,k}f^{g_k}(\xi)\, , 
	\end{equation}
so it only remains to show that the convergence in \eqref{alphatoalpha} is uniform over $\mathcal V(\xi_0)$. 
	
    Recall from the proof of the previous lemma that there is $\epsilon(\xi_0)> 0$ for which $\overline{B}_{H(\xi_0)}(\epsilon(\xi_0))\subset GL_m(\mathbb C)$ as in \eqref{Ballepsilonxi0} is compact and both $H(\xi)$ and $B(s,\xi)$ lie in its interior, $\forall\xi\in\mathcal V(\xi_0)$, $\forall s\geq s(\xi_0)$, cf. \eqref{HinB} and \eqref{BinB}. Hence, by continuity of the inversion map on $GL_m(\mathbb C)$, there is $C>0$ such that, if $s\ge s(\xi_0)$ and $\xi \in \mathcal V(\xi_0)$, then $\norm{A(s,\xi)}$ and $\norm{F(\xi)}$ are both bounded by $C$, giving
    \begin{equation}\label{normAF}
    \begin{aligned}
        \norm{A(s,\xi)-F(\xi)} & = \norm{A(s,\xi)(H(\xi) - B(s,\xi))F(\xi)}\\
        & \le C^2\norm{H(\xi) -B(s,\xi)}\, ,
    \end{aligned}
    \end{equation}
    with this last line converging to $0$ uniformly on $\mathcal V(\xi_0)$, cf. \eqref{Bs0}. Therefore, from 
	\begin{equation}
		|\alpha_j(s,\xi)-\alpha_j^\infty(\xi)| \le \sum_{k=1}^m\big|(A(s,\xi))_{j,k}-(F(\xi))_{j,k}\big||f^{g_k}(\xi)| \ , 
	\end{equation}
    \eqref{Bs0} and \eqref{normAF} imply that the convergence in \eqref{alphatoalpha} is uniform over $\mathcal V(\xi_0)$. 
\end{proof}

We now proceed to finish the proof of the theorem. 

Again, by bilinearity of the operations, it is sufficient to show the result for $f_j \in Poly(S^7)^{\vb*a_j}$. Now, for any $\xi_0 \in \mathcal Q_{\mathcal K}$, and $\mathcal U(\xi_0)\subset \mathcal F$ as in Lemma \ref{lemma:b^s_xi_basis}, we have $u_1^j,...,u^j_{m_j}\in U(\mathfrak{sl}(3))^{\vb*a_j}$, with $\deg(u_k^j) = d_j(k)$, such that
	\begin{equation}\label{h^j_k-ber_unif}
		\{h_1^j|_{\mathcal O_\xi},...,h_{m_j}^j|_{\mathcal O_\xi}\} \ , \ \ h_k^j = (-i)^{d_j(k)}\beta_{d_j(k)}[u_k^j] \ , 
	\end{equation}
	is a basis of $Poly(\mathcal O_{\xi})^{\vb*a_j}$ for every $\xi \in \mathcal U(\xi_0)$,  and also exists $s(\xi_0) \in \mathbb N$ such that
	\begin{equation}\label{b[u^j_k]-ber_unif}
		\left\{(sr(\xi))^{-(d_j(1))}b^s_\xi[u_1^j],...,(sr(\xi))^{-(d_j(m_j))}b^s_\xi[u^j_{m_j}]\right\}
	\end{equation}
	is a basis of $Poly(\mathcal O_\xi)^{\vb*a_j}$ for every $\xi \in \mathcal V(\xi_0)$ and $s\ge s(\xi_0)$. 
    
    Hence, from Lemma \ref{lemma-alpha-infty}, there are $\alpha_k^j(s,\xi)\in \mathbb C$ for $k\in\{1,...,m_j\}$ such that
	\begin{equation}
		\lim_{s\to \infty}\alpha_k^j(s,\xi)=:(\alpha_k^j)^\infty_\xi  \in \mathbb C
	\end{equation}
	holds uniformly over $\mathcal V(\xi_0)$, and
	\begin{equation}
		f_j|_{\mathcal O_{\xi}} = \sum_{k=1}^{m_j}\alpha_k^j(s,\xi)(sr(\xi))^{-d_j(k)}b^s_\xi[u_k^j] = \sum_{k=1}^{m_j}(\alpha_k^j)^\infty_\xi h_k^j|_{\mathcal O_\xi} \, , 
	\end{equation}
    for every $\xi\in\mathcal V(\xi_0)$. Therefore, 
	\begin{equation}
		f_1|_{\mathcal O_\xi}\star^s_\xi f_2|_{\mathcal O_\xi} = \sum_{j,k}\alpha_j^1(s,\xi)\alpha_k^2(s,\xi)(sr(\xi))^{-(d_1(j)+ d_2(k))}b^s_\xi[u_j^1u_k^2]
	\end{equation}
	converges to (cf. \eqref{convprod} in Theorem \ref{theo:prod_ber_conv})
	\begin{equation}
		\sum_{j,k}(\alpha_j^1)^\infty_\xi(\alpha_k^2)^\infty_\xi h_j^1|_{\mathcal O_\xi} h_k^2|_{\mathcal O_\xi} = f_1f_2|_{\mathcal O_\xi}
	\end{equation}
	uniformly on $\mathcal V(\xi_0)$. Analogously,
	\begin{equation}
    \begin{aligned}
		& sr(\xi)[f_1|_{\mathcal O_\xi},f_2|_{\mathcal O_\xi}]_{\star^s_\xi}\\
        & \hspace{2 em} = \sum_{j,k} \alpha_j^1(s,\xi) \alpha_k^2(s,\xi)(sr(\xi))^{-(d_1(j)+d_2(k)-1)}b^s_\xi[u_j^1u_k^2-u_k^2u_j^1]
    \end{aligned}
	\end{equation}
	converges uniformly on $\mathcal V(\xi_0)$ to (cf. \eqref{convpb} in Theorem \ref{theo:prod_ber_conv})
	\begin{equation}
		\sum_{j,k}(\alpha_j^1)^\infty_\xi(\alpha_k^2)^\infty_\xi i\{h_j^1,h_k^2\}|_{\mathcal O_\xi} = i\{f_1,f_2\}|_{\mathcal O_\xi} \ . 
	\end{equation}
	
	To finish, by compactness, there exists a finite set $\{\xi_1,...,\xi_k\}\subset \mathcal Q_{\mathcal K}$ such that the open sets $\mathcal U(\xi_1),...,\mathcal U(\xi_k)\subset\mathcal F$ (from which we write the basis \eqref{h^j_k-ber_unif} and \eqref{b[u^j_k]-ber_unif}) cover $\mathcal K\subset\mathcal F$, and therefore $\mathcal V(\xi_1),...,\mathcal V(\xi_k) \subset \mathcal Q$ cover $\mathcal Q_{\mathcal K}$. In the previous paragraph, we have proved that, for any $\epsilon> 0$, there is $s_\epsilon(\xi_j)\in \mathbb N$ such that
    \begin{equation}
    s\geq s_\epsilon(\xi_j)\implies\begin{cases}\norm{f_1|_{\mathcal O_\xi}\star^s_\xi f_2|_{\mathcal O_\xi}-f_1f_2|_{\mathcal O_\xi}}_\xi<\epsilon \\
    \norm{sr(\xi)[f_1|_{\mathcal O_\xi},f_2|_{\mathcal O_\xi}]_{\star^s_\xi}-i\{f_1,f_2\}|_{\mathcal O_\xi}}_\xi<\epsilon
    \end{cases} , \ \ \forall \xi \in \mathcal V(\xi_j)\, .
    \end{equation} 
    Then, taking
    \begin{equation}
        s_\epsilon = \max\{s_\epsilon(\xi_1),...,s_\epsilon(\xi_k)\}\in \mathbb N\, ,
    \end{equation}
    we get
    \begin{equation}
    s\geq s_\epsilon\implies\begin{cases}\norm{f_1|_{\mathcal O_\xi}\star^s_\xi f_2|_{\mathcal O_\xi}-f_1f_2|_{\mathcal O_\xi}}_\xi<\epsilon \\
    \norm{sr(\xi)[f_1|_{\mathcal O_\xi},f_2|_{\mathcal O_\xi}]_{\star^s_\xi}-i\{f_1,f_2\}|_{\mathcal O_\xi}}_\xi<\epsilon
        \end{cases} , \ \ \forall \xi \in \mathcal Q_{\mathcal K}\, .
    \end{equation} 
\end{proof}

\begin{remark}\label{Berspecial}
    We emphasize that the uniform convergence established in Lemma \ref{unif-Q-conv-beta} is a special property of the Berezin Magoo sphere which does not hold for general Magoo spheres of Poisson type. 
    
    For example, for any $\xi \in \overline{\mathcal Q}$, consider the $\xi$-ray $(w^s_\xi)_s$ of universal correspondences given by the following rule: for every $u \in U(\mathfrak{sl}(3))^{(a,b)}$,
    \begin{equation}
        w^s_\xi[u] = \begin{cases}
            
        \left(1+\dfrac{r(\xi)}{s}\right)b^s_\xi[u] \ \ \ \  \mbox{if} \ \ \ \ (a,b)\ne (0,0)\\
        \ \ \ \ \ \ \ \ b^s_\xi[u] \ \ \ \ \ \ \ \ \ \ \  \mbox{otherwise}
        \end{cases}.
    \end{equation}
    Then, for any $u \in U(\mathfrak{sl}(3))$, we have
    \begin{equation}\label{w-non_unif}
        \lim_{s\to\infty}(sr(\xi))^{-\deg(u)}w^s_\xi[u] = (-i)^{\deg(u)} \beta_{\deg(u)}[u] \ , \ \ \forall \xi \in \overline{\mathcal Q}\, ,
    \end{equation}
    which means each $\xi$-ray $(w^s_\xi)$ is of Poisson type. 
    
    However, if $u\in U(\mathfrak{sl}(3))$ lies in any non trivial irrep, then
    \begin{equation}
        (sr(\xi))^{-\deg(u)}w^s_\xi[u] - (-i)^{\deg(u)} \beta_{\deg(u)}[u] = \varepsilon^s_\xi[u]+\dfrac{r(\xi)}{s}b^s_\xi[u]\, ,
    \end{equation}
    where $\varepsilon^s_\xi$ is the error function of $b^s_\xi$. Since the integral radius function $r$ is unbounded on any neighborhood of any $\xi \in \overline{\mathcal Q}$, cf. Proposition \ref{unboundedr}, the convergence \eqref{w-non_unif} is not uniform anywhere.
    \end{remark}

    In view of the previous remark, we have the following: 

    \begin{proposition}\label{Berspecial-prop}
        For a general Magoo sphere of Poisson type, the uniform Poisson property may not be satisfied for any neighborhood of any $\xi\in\overline{\mathcal Q}$. 
    \end{proposition}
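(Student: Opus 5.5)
The plan is to exhibit the Magoo sphere built from the modified rays of Remark \ref{Berspecial}, namely $w^s_\xi[u]=(1+r(\xi)/s)\,b^s_\xi[u]$ for $u$ in a nontrivial isotypic component and $w^s_\xi[u]=b^s_\xi[u]$ for $u$ invariant, and then to show that uniform convergence fails on every neighborhood of every $\xi\in\overline{\mathcal Q}$.

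The first step is to check that each $w^s_\xi$ is a legitimate universal correspondence for $s\omega_\xi$. It is the composition of $b^s_\xi$ with a scalar rescaling on each nontrivial isotypic component. This rescaling is equivariant and invertible, and it fixes the trivial component, so it preserves the constant symbol of the identity. Hence $W^{s\omega_\xi}$ rescaled in this way is again a symbol correspondence in the sense of Paper I; in the language of the characteristic matrices of Theorems I.4.8, I.5.9, they are multiplied by the positive factor $1+r(\xi)/s$. Since $(1+r(\xi)/s)\to1$ for fixed $\xi$, Corollary \ref{lemma:lim-t-1} gives \eqref{w-non_unif}. Theorem \ref{cor:pois_crit} then shows every $\xi$-ray is of Poisson type, and Theorem \ref{theo:Magoo_pois} shows the Magoo sphere is of Poisson type.

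Next, to show failure of the uniform property, I would use the product rather than the bracket. Take $f_1=f_2=f\in Poly_1(\mathcal S^7)^{(1,1)}$ nonzero, for instance a linear coordinate $x_8$, and write $f|_{\mathcal O_\xi}=(sr(\xi))^{-1}w^s_\xi[u]$ with $u=i\,S(f)$ suitably normalized. For the Berezin ray itself, $f\star^s_{b,\xi}f-f^2$ tends to $0$ uniformly over $\overline{\mathcal Q}$. This follows from Proposition \ref{prop:error_conv}, since $f=(sr(\xi))^{-1}b^s_\xi[u]-\varepsilon^s_\xi[u]$, the error is uniformly $O(1/s)$, and the Berezin product operators on a fixed finite-dimensional space are uniformly bounded. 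The last point is the delicate one, and I would instead compute directly with $u^2$ to avoid it. Now $w^s_\xi[u]\star^s_\xi w^s_\xi[u]=w^s_\xi[u^2]$, and $u^2$ has components in nontrivial irreps (the $(2,2)$ and $(1,1)$ parts of $f^2$) on which the factor is $1+r(\xi)/s$. On the other side, $f|_{\mathcal O_\xi}$ must be re-expressed as $(1+r(\xi)/s)^{-1}(sr(\xi))^{-1}w^s_\xi[u]$. The product of $f$ with itself therefore carries the factor $(1+r(\xi)/s)^{-2}(1+r(\xi)/s)=(1+r(\xi)/s)^{-1}$ on the nontrivial part of $f^2$, and the factor $(1+r(\xi)/s)^{-2}$ on the trivial part. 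Up to uniformly small Berezin errors, the discrepancy from $f^2|_{\mathcal O_\xi}$ is therefore comparable to $\frac{r(\xi)/s}{1+r(\xi)/s}$ times the sup norm of a nonzero component of $f^2|_{\mathcal O_\xi}$.

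The main obstacle is ensuring that this component is bounded below uniformly near a given $\xi_0$. For that I choose $f$ so that the nontrivial part of $f^2$ restricted to every orbit is nonzero; a continuity argument on $\overline{\mathcal F}$ then gives a positive lower bound on a neighborhood. Finally, by Proposition \ref{unboundedr}, every neighborhood of $\xi_0$ contains $\xi$ with $r(\xi)\ge s$ for any given $s$, where the discrepancy is at least half that lower bound. So $\sup_\xi$ of the error does not tend to $0$ on any neighborhood, contradicting Proposition \ref{charUnifPoisson} restricted to that neighborhood.
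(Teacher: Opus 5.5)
Your proposal is correct and follows the paper's proof. It uses the same rescaled Berezin rays of Remark~\ref{Berspecial} and the same test $f\star^s_\xi f$ for a linear $f=-i\beta_1[u]$. That $f$ equals $(sr(\xi))^{-1}b^s_\xi[u]$ exactly on every orbit, since $\varepsilon^s_\xi$ vanishes in degree one, so the ``delicate point'' you raise does not arise. It then compares the uniformly small error from Proposition~\ref{prop:error_conv} with the factor $r(\xi)/(s+r(\xi))$, which for every $s$ exceeds $1/2$ somewhere in every neighborhood by Proposition~\ref{unboundedr}.

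The only difference is the choice of $f$. The paper takes $u$ to be a highest weight vector, so $u^2$ lies entirely in the $(2,2)$ component and the single factor $(1+r(\xi)/s)^{-1}$ multiplies all of $f^2$. Your choice $f=x_8$ creates a constant component carrying $(1+r(\xi)/s)^{-2}$. To make your ``comparable'' precise, note that the two parts are $L^2(\mathcal O_\xi)$-orthogonal and $\norm{\cdot}_\xi$ dominates the normalized Haar $L^2$ norm. Alternatively, simply average over $\mathcal O_\xi$, which isolates the nonzero constant part with coefficient $(1+r(\xi)/s)^{-2}-1$.
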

\begin{proof}
    Since it is enough to show this non-uniformity in a single example, we show it explicitly for a single polynomial $\star$-product  in the example of Remark \ref{Berspecial}. 

    Thus, let $u \in U_1(\mathfrak {sl}(3))\equiv \mathfrak{sl}(3)$ be a highest weight vector, so that $u^2 \in U_2(\mathfrak{sl}(3))$ is a highest weight vector for a representation $(2,2)$. Then,
    \begin{equation}
    \begin{aligned}
    f &= -i\beta_1[u] \implies \\
        f|_{\mathcal O_\xi} = (sr(\xi))^{-1}b^s_\xi[u] &=\left(sr(\xi)\left(1+\dfrac{r(\xi)}{s}\right)\right)^{\!-1}\!\!w^s_\xi[u]\, \ , \ \  \forall \, \xi \in \overline{\mathcal Q}\,,
    \end{aligned}
    \end{equation}
    and, for the twisted product $\star^s_\xi$ induced by $w^s_\xi$, we have
    \begin{equation}
    \begin{aligned}
        f|_{\mathcal O_\xi}\star^s_\xi f|_{\mathcal O_\xi} & = (sr(\xi))^{-2}\left(1+\dfrac{r(\xi)}{s}\right)^{-1}b^s_\xi[u^2]\\&  = \left(1+\dfrac{r(\xi)}{s}\right)^{-1}\left(f^2|_{\mathcal O_\xi}+\varepsilon^s_\xi[u^2]\right)\,.
    \end{aligned}
    \end{equation}
    By the triangular inequality,
    \begin{equation}\label{triang-ineq}
    \begin{aligned}
        \norm{f|_{\mathcal O_\xi}\star^s_\xi f|_{\mathcal O_\xi}- f^2|_{\mathcal O_\xi}}_\xi & \ge\Bigg|\,\norm{\left(1+\dfrac{r(\xi)}{s}\right)^{-1}f^2|_{\mathcal O_\xi}-f^2|_{\mathcal O_\xi}}_\xi\\
        & \hspace{3 em} - \norm{\left(1+\dfrac{r(\xi)}{s}\right)^{-1}\varepsilon^s_\xi[u^2]}_\xi\Bigg| \ .
    \end{aligned}
    \end{equation}
    For the last term in the r.h.s.~of \eqref{triang-ineq}, from Proposition \ref{prop:error_conv}, we have 
    \begin{equation}\label{rhsto0}
        \norm{\left(1+\dfrac{r(\xi)}{s}\right)^{-1}\varepsilon^s_\xi[u^2]}_\xi \le \dfrac{M(u^2)}{r(\xi)(s+r(\xi))}\, ,
    \end{equation}
    and hence this vanishes uniformly over $\overline{\mathcal Q}$. But on the other hand, 
    \begin{equation}
        \norm{\left(1+\dfrac{r(\xi)}{s}\right)^{-1}f^2|_{\mathcal O_\xi}-f^2|_{\mathcal O_\xi}}_\xi = \dfrac{r(\xi)}{s+r(\xi)}\norm{ f^2|_{\mathcal O_\xi}}_\xi \ , 
    \end{equation}
    and this does not vanish uniformly anywhere, since $r$ is unbounded on any  neighborhood of $\overline{\mathcal Q}$, cf. Proposition \ref{unboundedr}. 
    Hence, although the l.h.s.~of \eqref{triang-ineq} vanishes as $s\to\infty$,  $\forall\xi\in\overline{\mathcal Q}$, it does not vanish uniformly in any neighborhood of any $\xi\in\overline{\mathcal Q}$.
\end{proof}

Thus, from the bijection $\overline{\mathcal F}\ni\xi\leftrightarrow \mathcal O_\xi\subset \mathcal S^7$, Theorem \ref{theo:unif_conv_compact_ber} states that we have Poisson uniformity for any compact  Berezin Magoo ``cylinder'', that is, we have 
\begin{equation}
    \mathfrak{B}\{\mathcal S^7|_{\mathcal K}, \widehat\Pi\}\xrightarrow[]{\sim}Poly(\mathcal S^7|_{\mathcal K}, \widehat\Pi) \ , 
\end{equation}
cf. \eqref{WsimP} in Definition \ref{unifPoiss}, where $\mathcal S^7|_{\mathcal K}$ is the compact ``cylinder''
\begin{equation}
  \mathcal S^7|_{\mathcal K}=  \bigcup_{\xi\in\mathcal K}\mathcal O_\xi \ \subset \ \mathcal S^7\, . 
\end{equation}

\begin{remark}\label{open-unifPoisson}
However, we haven't yet been able to prove or disprove Poisson uniformity of the whole Berezin Magoo sphere, that is, for the whole $\overline{\mathcal Q}$. Thus, the question of whether there is a Magoo sphere of uniform Poisson type remains open. 
\end{remark}

\section{Concluding remarks}\label{sec:conc}

In this series of two papers on quark systems, we explored the properties and results for $SU(3)$ in detail, which allowed us to paint a clear and detailed picture of quantum and classical quark systems and their relationship via symbol correspondences and semiclassical asymptotics. However, a lot of what has been done for $SU(3)$ generalizes to other compact symmetry groups. So, here we conclude this series by highlighting what can be generalized to other groups and commenting on some peculiarities of $SU(3)$. We shall proceed by decreasing order of generality,  summarizing the main arguments, and refer to \cite{Thesis} for a more complete analysis. 

In Remark I.3.4, we indicated that the material of section I.3 holds for any compact Lie group. Indeed, let $G$ be a connected compact Lie group with Lie algebra $\mathfrak g$. If $\rho$ is a unitary $G$-irrep on $\mathcal H$, then it is finite dimensional \cite{nach2}, hence the space $\mathcal B(\mathcal H)$ of all operators on $\mathcal H$ is also finite dimensional and carries a unitary (with respect to the trace inner product) $G$-representation. Also, given a Hamiltonian $G$-space $P$, we can use the isomorphism $P \simeq G/G_0$, where $G_0$ is the isotropy subgroup of some point $\vb*\varsigma_0\in P$, to descend the Haar measure of $G$ to $P$ so that $C^\infty_{\mathbb C}(P)\subset L^2(P)$. Thus, defining symbol correspondences from $\mathcal B(\mathcal H)$ to $C^\infty_{\mathbb C}(P)$ analogously to Definition I.3.1, everything done in section I.3 follows.

Besides that, the representation on $\mathcal B(\mathcal H)$ is completely reducible because it is a unitary representation on a finite dimensional space.\footnote{Note that the natural isomorphism $\mathcal B(\mathcal H) \simeq \mathcal H\otimes \mathcal H^\ast$ allows us to write this representation as the tensor product of $\rho$ with its dual representation, so the decomposition of $\mathcal B(\mathcal H)$ into irreps is an instance of Clebsch-Gordan series.} Also, by the Peter-Weyl Theorem and the already stated isomorphism $P\simeq G/G_0$, the space $L^2(P)$ inherits a decomposition into irreps from $L^2(G)$, with orthonormal basis comprised by smooth harmonic functions \cite{foll}. These decompositions of operators and functions lead to the characterization of symbol correspondences by characteristic matrices (characteristic numbers for highest symmetry) in the sense of sections I.4 and I.5.

Moreover, $P$ covers a coadjoint orbit $\mathcal O\subset \mathfrak g^\ast$ via the momentum map, so the coadjoint orbits are of particular interest as models of Hamiltonian $G$-spaces and there are only finitely many types of them \cite{kir}. For the methods of Paper II, the argument used to identify the space of polynomials on an orbit with the linear span of harmonic functions works for general compact Lie groups, so one may reason it's fairer to restrict the codomain of symbol correspondences to space of polynomials $Poly(\mathcal O)$ defined as in \eqref{Poly(O)}, but now replacing $\mathfrak{su}(3)$ by $\mathfrak g^\ast$. 

Henceforth we make the further assumptions that compact $G$ is semisimple (so the Killing form provides an identification $\mathfrak g \leftrightarrow \mathfrak g^\ast$ and it doesn't matter whether we work with coadjoint or adjoint action \cite{hump}) and simply connected (which implies that the irreps of $G$ are all determined by the Theorem of Highest Weight \cite{hump} and that the (co)adjoint $G$-orbits are simply connected, so they are the unique Hamiltonian $G$-spaces \cite{arms}). Therefore, the irreps obtained from dominant weights and the (co)adjoint orbits exhausts all the possibilities of quantum and classical systems, respectively, for which there are symbol correspondences. 

A general result due to Wildberger \cite{figueroa, wild} (that we specialized for quark systems in Theorem I.5.24) says even more: let $\omega$ be a dominant weight of $\mathfrak g$ and $\xi = \omega/\norm{\omega}$, so that we write $\mathcal H_\omega$ for an irrep with highest weight $\omega$ and $\mathcal O_\xi$ for the orbit of $\xi$, then the set of symbol correspondence from $\mathcal B(\mathcal H_\omega)$ to $Poly(\mathcal O_\xi)$ is not empty, it contains a Berezin correspondence (defined via highest weight $\omega$).

Furthermore, the arguments in section II.\ref{sec:basic_frame}, about the inadequacy of formally deforming the algebra of $C^\infty_{\mathbb C}(\mathcal O)$ and proceeding instead by looking at sequences of twisted algebras of increasing finite dimensions, apply in this more general context, since Proposition \ref{CstarE} generalizes to any pair $(G,\mathcal O)$, where $G$ is a compact simply connected semisimple Lie group and $\mathcal O$ any of its (co)adjoint orbits. 

Then, similarly to section II.\ref{sec:basic_frame}, for the complexification $\mathfrak g_{\mathbb C}$ of $\mathfrak g$, we get an isomorphism $\beta_{\mathfrak g_{\mathbb C}}:U(\mathfrak{g}_{\mathbb C})\to Poly(\mathfrak{g}_{\mathbb C})$ from the PBW Theorem in the same vein of \eqref{beta_basis_B_infty}, and $Poly(\mathfrak g_{\mathbb C})$ can be properly identified with $Poly(\mathfrak g)$ so that the pointwise product and the Poisson bracket on $Poly(\mathfrak g)$ are given by expressions analogous to \eqref{point_prod_pi} and \eqref{pois_b_pi}, respectively. Pullbacks of symbol correspondences to the universal enveloping algebra are available as well, so universal Berezin correspondences (recall  Wildberger's argument) are given as in Proposition \ref{prop:B=b}, now using $\beta_{\mathfrak g_{\mathbb C}}$.

Thus, everything points to generalizing the definitions of rays of universal correspondences, cf. Definition \ref{def:pois_ray}, and the ones of Poisson type, cf. Definition \ref{PoissonOxi}, in this larger context, wherein the proof of Theorem \ref{theo:prod_ber_conv} suits well -- we refer again to \cite{karab}. Hence, it should be clear that the criteria in Theorems \ref{cor:pois_crit}, \ref{theo:pois_char_m_iso} and \ref{theo:pois_char_m_ber}
hold in the context of any semisimple simply connected compact Lie group. 

For the unit sphere $\mathcal S\subset \mathfrak g$, we still have a countable dense subset of the orbit space $\mathcal S/G$ comprised by orbits that are equivalent to highest weight orbits in $\mathfrak g$, in the sense of Definition \ref{ratorbdef}, leading to generalizations of the integral radius and the coarse Poisson sphere, cf. Definitions \ref{intrad-defn} and \ref{def:rat_coars}. To properly extend the notion of Magoo sphere, we need invariant polynomials satisfying \eqref{deltaP2}, and they can be constructed using the Harish-Chandra Theorem and the Chevalley Theorem. Then, results analogous to Theorem \ref{theo:Magoo_pois} and Corollary \ref{cor:magoo_asymp} are available. 

Besides, a similar version of Theorem \ref{theo:unif_conv_compact_ber} holds for any compact simply connected semisimple group $G$, that is, the (highest weight) Berezin correspondences for $G$ satisfy the Poisson property uniformly on compact sets of the regular stratum of the symplectic foliation of $\mathcal S$,  because the fundamental premise of such result is the fact that the error maps of Berezin correspondences vanish uniformly, as asserted in Proposition \ref{prop:error_conv}, whose statement holds in this greater generality.

Now, for some peculiarities from $SU(3)$. Although not necessary for the main argument in subsection II.\ref{main-arguments}, we suspect that Theorem \ref{theo:const_quant_cp2} for $(SU(n), \mathbb C P^{n-1})$ can be generalized from  $n=2,3$ to $n>3$, but we still don't know if this is true.

Also, for spin systems the relation between Berezin and Stratonovich-Weyl symbol correspondences is rather direct, something we lost for mixed quark systems, cf. Remark I.5.27. But since Stratonovich-Weyl correspondences  and, more generally, semi-conformal correspondences are also special, it would be interesting to investigate their relation to Berezin correspondences in more detail, still in the case of $SU(3)$, and then see how much more complex this relation can get as we move to $SU(4)$ and beyond to other compact Lie groups. 

In particular, a pertinent question to be answered, still in the context of $SU(3)$, is whether there exists a Magoo sphere constructed from Stratonovich-Weyl correspondences, or semi-conformal correspondences, which is of uniform Poisson type, cf. Definition \ref{unifPoiss}, or at least satisfies the uniform Poisson property on compacts of the regular part of the foliation of the unit sphere, as  proved for the Berezin Magoo sphere in Theorem \ref{theo:unif_conv_compact_ber}. Because, although we have not yet answered the question of Poisson uniformity for the whole Berezin Magoo sphere, cf. Remark \ref{open-unifPoisson}, the missing part is the one containing the singularities of the symplectic foliation of $\mathcal S^7$. But moving forward to $SU(4)$, and beyond to $SU(n)$, this question could get harder, since the singular foliation of the unit sphere by (co)adjoint orbits is stratified and has deeper singularities.\footnote{We refer to \cite{guillemin} for a description of the (co)adjoint  orbits and their foliation of $\mathfrak{su}(n)$.} So, while for $SU(3)$ the singular foliation of the Poisson unit sphere has only two isolated singular orbits and the 
singularities are of the simplest possible type, Morse-Bott type, already in the case of $SU(4)$ the intersection of the principal Weyl chamber with the unit sphere is a closed spherical triangle, with its interior mapping to the regular stratum of the symplectic foliation and its edges to the singular strata, wherein the vertices map to the deeper singular orbits which are isomorphic to $\mathbb C P^3$. Thus, it is conceivable that this more elaborate singular structure, with qualitatively different ways of reaching the deeper singularities starting from the regular stratum,  could play a role in the question of Poisson uniformity of Magoo spheres  for $\mathfrak{su}(4)$. And so on for $\mathfrak{su}(n)$.

On the other hand, for any compact semisimple Lie group $G$ of rank $2$ the symplectic foliation of the unit sphere in $\mathfrak{g}$ is parameterized by a closed arc of circumference and the stratification of singular orbits is trivial. Besides $SU(3)$, there are two other 
such groups that are simply connected, namely: $SU(2)\times SU(2)\simeq Spin(4)$ and $Sp(2)\simeq Spin(5)$.\footnote{For $n\geq 3$, the group $Spin(n)$ is the (universal) double cover of the special orthogonal group $SO(n)$, but for $n=3,4,5$, we have the isomorphisms $Spin(3)\simeq SU(2)\simeq Sp(1)$, $Spin(4)\simeq SU(2)\times SU(2)$, $Spin(5)\simeq Sp(2)$, where $Sp(n)$ is the group of $n\times n$ unitary matrices over the quaternions, also called the compact symplectic group.} In the former case, the generic (co)adjoint orbits are isomorphic to $\mathcal S^2 \times \mathcal S^2$, whereas the degenerate ones are isomorphic to $\mathcal S^2$, with Morse-Bott singularities for the symplectic foliation of $\mathcal S^5 \subset \mathfrak{su}(2)\oplus \mathfrak{su}(2)\simeq \mathfrak{so}(4)$. However, we lack a similar understanding of the orbit foliation in the latter case. Thus, it could be interesting to work both cases in full details.

Furthermore, in the case of $SU(2)$ there is more freedom for the signs of the characteristic numbers of Berezin correspondences, than in the case of $SU(3)$ (compare Section I.4.3 to \cite[Section 6.2.3]{RS}). Now, sign changing is an involution, but two standard involutions present in $SU(2)$ do not generalize in form for $SU(3)$. First, $-\mathds 1$ is a central involution of $SU(2)$, but $-\mathds 1\notin SU(3)$. 
Also, the longest element of the Weyl group $W$ of a semisimple Lie group $G$ is always an involution of $W$, but it is not always a central element of $W$. This is so for $SU(2)$, but not for $SU(3)$. 
Hence, it could also be interesting to see if we get more freedom for signs of the characteristic numbers and matrices of Berezin correspondences, for other groups for which one or both of these central involutions are present.\footnote{Both of these central involutions are present for  $Spin(2n+1)$ and $Spin(4n)$, for instance.}

Finally, it could be interesting to expand on the investigations of asymptotic localization, in a general and systematic way as was done in \cite{AR} for spin systems, now in the context of quark systems. In the same vein, one could try working out the formalism of sequential quantizations, in a complete and detailed way as was done for $\mathcal S^2$ in \cite{AR}, now for the (co)adjoint orbits of $\mathfrak{su}(3)$, and eventually, perhaps, joining them together along the coarse Poisson sphere, if possible.

\bibliographystyle{plain}
\bibliography{main}

\

\appendix

\section{A proof of Proposition \ref{ThmBCN}}\label{proofThmBerCharNumbers}
From Proposition I.4.17,
	\begin{equation}\label{berezin-cn}
		b_n^p = (-1)^p\sqrt{\dfrac{(p+1)(p+2)}{2(n+1)^3}}\cg{(p,0)}{(0,p)}{n}{(p,0,0)}{(0,p,p)}{(n,n,n),0} \, ,
	\end{equation}
	so we just need to compute these CG coefficients. Let $a_0,...,a_n \in \mathbb R$ be such that
	\begin{equation}
		T_-^n(\vb*e(n;(2n,0,n),n/2)) = \sum_{J = 0}^n a_J \vb*e(n;\vb*0_n, J)\, .
	\end{equation}
	We know that
	\begin{equation}
		\ip{\vb*e(\vb*p;(p,0,0))\otimes \widecheck{\vb*e}(\widecheck{\vb*p};(0,p,p))}{\vb*e(n;\vb*0_n, J)} \ne 0 \iff J = 0\, .
	\end{equation}
	From (I.2.26), we have
	\begin{equation}
		a_0 = \dfrac{\sqrt{(2n+1)!}}{n+1}\, .
	\end{equation}
	Applying $U_-^n$ to \eqref{pq_CG_choice}, we obtain
	\begin{equation}
		\begin{aligned}
			& \vb*e(n;(2n,0,n),n/2) = \dfrac{(-1)^n}{\mu_n(p)}T_+^n\\
            & \hspace{3 em} = \dfrac{(-1)^p n!}{\mu_n(p)}\sqrt{p\choose n}\vb*e(\vb*p;(p,0,0))\otimes \widecheck{\vb*e}(\widecheck{\vb*p};(n,p-n,p))\\
			& \hspace{4 em} + \sum_{\substack{j+k+l = p\\j\ne p}}c_{j,k,l}\,\vb*e(\vb*p;(j,k,l))\otimes \widecheck{\vb*e}(\widecheck{\vb*p};(p-j+n,p-k-n,p-l))\, .
		\end{aligned}
	\end{equation}
	Again from (I.2.26), we have\footnote{Note that $\widecheck{\vb*e}(\widecheck{\vb*p};(j,k,l))$ has weight $(j-k)/2$ for the subrepresentation $(2p-l)/2$ of $t$-standard $SU(2)$.}
	\begin{equation}
		T_-^n(\widecheck{\vb*e}(\widecheck{\vb*p};(n,p-n,p))) = n!\sqrt{p\choose n}\widecheck{\vb*e}(\vb*p;(0,p,p))\, ,
	\end{equation}
	then
	\begin{equation}
		\begin{aligned}
			& \dfrac{\sqrt{(2n+1)!}}{n+1}\cg{(p,0)}{(0,p)}{n}{(p,0,0)}{(0,p,p)}{(n,n,n),0} \\
            & \hspace{5 em} = \ip{\vb*e(\vb*p;(p,0,0))\otimes \widecheck{\vb*e}(\widecheck{\vb*p};(0,p,p))}{T_-^n(\vb*e(n;(2n,0,n),n/2))}\\
			& \hspace{10 em} = \dfrac{(-1)^p}{\mu_n(p)}\left(n!\sqrt{p\choose n}\right)^2\, .
		\end{aligned}
	\end{equation}
	Using the expression for $\mu_n(p)$ in \eqref{pq_CG_choice}, we get
	\begin{equation}
		\cg{(p,0)}{(0,p)}{n}{(p,0,0)}{(0,p,p)}{(n,n,n),0} = (-1)^p\sqrt{\dfrac{2(n+1)^3}{(p+1)(p+2)}}\sqrt{\dfrac{{p\choose n}}{{p+n+2 \choose n}}}\, .
	\end{equation}
    
	Therefore,
	\begin{equation}\label{b_n}
			b_n^p = \sqrt{\dfrac{{p\choose n}}{{p+n+2 \choose n}}} = \prod_{m=1}^n\sqrt{\dfrac{1-(m-1)/p}{1+(m+2)/p}} > 0\, .
	\end{equation}
    Since the function
    \begin{equation}
        f(x)= \prod_{m=1}^n\sqrt{\dfrac{1-(m-1)x}{1+(m+2)x}}
    \end{equation}
    is analytic around $0$, we have that
    \begin{equation}
        \lim_{p\to\infty}p(b_n^p-1) = f'(0) = -\dfrac{n(n+2)}{2}\, ,
    \end{equation}
    that is, $|b_n^p-1| \in O(1/p)$, $\forall n\in\mathbb N$.

\section{Alternative proof of Corollary \ref{cor:pois_cp2}}\label{app:pq-asymp}

In this appendix, our main goal is to indicate an alternative approach to prove Corollary \ref{cor:pois_cp2} using the symmetries of Clebsch-Gordan coefficients established by Theorem I.2.16. We won't present full proofs for the statements in this appendix, but we outline all the arguments and refer to \cite{Thesis} for a complete treatment.

For $(x_1,...,x_8)$ the coordinates w.r.t.~the orthonormal basis $\{E_j:j=1,...,8\}$, we resort to the following helpful coordinates:
\begin{equation}\label{(1,1)_poly}
\begin{aligned}
    t_+=x_1+ix_2 \ ,& \ \ 
        t_-=x_1-ix_2 \ , \ \ 
        v_+=x_4+ix_5 \ , \\
        v_- =x_4-ix_5 \ , & \ \
        u_+=x_6+ix_7 \ , \ \
        u_- = x_6-ix_7 \ , \\  
         t=x_3 & \ , \ \ 
        u = (\sqrt{3}x_8-x_3)/2\, .
\end{aligned}
\end{equation}
Indeed, using these coordinates, we have
\begin{equation}\label{pois_bi_vec_op}
    \begin{aligned}
        i\,\Pi_{\mathfrak g} & = \sqrt{2}\Big(\partial_{t_+}\otimes T_++\partial_{t_-}\otimes T_-+\partial_{v_+}\otimes V_++\partial_{v_-}\otimes V_-\\
        &\ \ \ \ \ +\partial_{u_+}\otimes U_++\partial_{u_-}\otimes U_-+\partial_t\otimes T_3+\partial_u\otimes U_3\Big)\, ,
        \end{aligned}
    \end{equation}
and, for the harmonic functions,
\begin{equation}\label{X^1_poly}
	\begin{aligned}
		X^1_{(2,1,0),1/2} \equiv 2 v_+ &\ , \ \ X^1_{(2,0,1),1/2} \equiv - 2t_+ \ , \ \ X^1_{(1,2,0),1} \equiv 2u_+ \ ,\\
		X^1_{(1,0,2),1} \equiv 2u_- &\ , \ \ X^1_{(0,2,1),1/2} \equiv 2t_- \ , \ \ X^1_{(0,1,2),1/2} \equiv 2v_-\ ,\\
		X^1_{\vb*0_1,1} &\equiv -2\sqrt{2} u\ , \ \ X^1_{\vb*0_1,0} \equiv 2\sqrt{\dfrac{2}{3}}(2t+u) \, ,
	\end{aligned}
\end{equation}
so $X^n_{\vb*\nu, J}\in Poly_n(\mathcal O_{(1,0)})$ for every $n$. Thus, $\mathcal X_p = Poly_{\le p}(\mathcal O_{(1,0)})$ is the image of $W^p$, cf. Corollary I.4.10. Furthermore, we set $\mathcal X = Poly(\mathcal O_{(1,0)})$.

Now, let $(W^p)$ be a sequence of symbol correspondences as in \eqref{seq_symb_corresp_pq}, with characteristic numbers $c_n^p$. Then each $W^p$ induces a twisted product $\star^p$ on $\mathcal X_p$. The route for the alternative semiclassical analysis is summarized in the following steps:
\begin{enumerate}[1.]
    \item Verify that
    \begin{equation}\label{prod_conv}
        f_1\star^pf_2\to f_1f_2
    \end{equation}
    for every $f_1 \in \mathcal X_1$ and $f_2 \in \mathcal X$ if $c_n^p\to 1$ as $p\to \infty$ for every $n\ge 1$. In addition, Poisson condition and $c_1^p\to 1$ together give that $c_n^p\to 1$, for every $n\ge 1$. 
    
    \item Apply induction to conclude that \eqref{prod_conv} holds
    for every $f_1,f_2 \in \mathcal X$ if $c_n^p\to 1$ as $p\to \infty$, for every $n\ge 1$.

    \item Show that, if $c_n^p\to 1$ as $p\to \infty$, for every $n\ge 1$, then $\norm{[f_1,f_2]_{\star^p}}\in O(1/p)$ for every $f_1,f_2 \in \mathcal X$.

    \item Prove that the convergence $c_1^p\to 1$ as $p\to \infty$ is equivalent to
    \begin{equation}\label{comm_conv}
        p[f_1,f_2]_{\star^p}\to i\sqrt{\dfrac{3}{2}}\{f_1,f_2\}
    \end{equation}
    for every $f_1 \in \mathcal X_1$ and every $f_2\in \mathcal X$.

    \item By induction again, based on the previous two steps, show that $c_n^p\to 1$ as $p\to \infty$, for every $n\ge 1$, also gives \eqref{comm_conv} for every $f_1,f_2 \in \mathcal X$.
\end{enumerate}

Therefore, if $(W^p)$ is of Poisson type, then Steps 1 and 4 together imply that the characteristic numbers satisfy $c_n^p\to 1$ as $p\to \infty$ for all $n\ge 1$; on the other hand if all the characteristic numbers converge to $1$, then Steps 2 and 5 show that $(W^p)$ is of Poisson type. This proves Corollary \ref{cor:pois_cp2}.

We now analyze each of the Steps $1$ through $5$, as stated above. 

\ 

\noindent\underline{\bf Step 1}. \ 
From Lemma I.2.18 and Theorem I.2.20, the star product of harmonic functions on $\mathcal O_{(1,0)}\simeq \mathbb CP^2$ can be straightforwardly seen to satisfy
\begin{equation}\label{X1_starXn}
\begin{aligned}
    X^1_{\vb*\nu_1,J_1}\star^p X^n_{\vb*\nu_2,J_2}& = \sqrt{\delta(p)}\sum_{m=n-1}^{n+1}\sum_{\sigma}\sum_{\substack{\vb*\mu, I\\ \vb*\nu, J}}(-1)^p\dfrac{c_m^p}{c_1^pc_n^p}\cg{1}{n}{(m;\sigma)}{\vb*\nu_1 J_1}{\vb*\nu_2 J_2}{\vb*\nu J}\\
    & \hspace{5 em} \times\cg{1}{n}{(m; \sigma)}{\vb*\mu I}{\widecheck{\overline{\vb*\mu}} I}{\vb*0_m\, 0}\begin{bmatrix}
    1 & n & m\\
    \vb*\mu, I & \widecheck{\overline{\vb*\mu}}, I & \vb*0_m, 0 
    \end{bmatrix}\!\![\vb* p]X^m_{\vb*\nu J}\, ,
\end{aligned}
\end{equation}
where
\begin{equation}
    \begin{aligned}
    \overline{(2,1,0)} = (n+1,n,n-1) \ , \ \ \overline{(2,0,1)} = (n+1,n-1,n) \, , \\ \overline{(1,2,0)} = (n,n+1,n-1)\ , \ \
    \overline{(0,1,2)} = (n-1,n,n+1)  \, , \\ \overline{(0,2,1)} = (n-1,n+1,n) \ , \ \ \overline{(1,0,2)} = (n,n-1,n+1)\, .
    \end{aligned}
\end{equation}
By determining a proportionality
\begin{equation}
\begin{aligned}
    &\sum_{\substack{\vb*\mu, I}}\cg{1}{n}{(m;\sigma)}{\vb*\mu I}{\widecheck{\overline{\vb*\mu}} I}{\vb*0_m\, 0}\begin{bmatrix}
    1 & n & m\\
    \vb*\mu, I & \widecheck{\overline{\vb*\mu}}, I & \vb*0_m, 0 
    \end{bmatrix}\!\![\vb* p]\\
    & \hspace{7 em}\propto \cg{1}{n}{(m;\sigma)}{\vb*0_1 0}{\vb*0_n 0}{\vb*0_m 0}\begin{bmatrix}
    1 & n & m\\
    \vb*0_1,0 & \vb*0_n, 0 & \vb*0_m, 0 
    \end{bmatrix}\!\![\vb* p]\, ,
\end{aligned}
\end{equation}
up to order $O(1/p^2)$, we get the following key result.

\begin{proposition}\label{prop:step1}
For $n\ge 1$, we have
\begin{equation}
\begin{aligned}
    X^1_{\vb*\nu_1,J_1}\star^p X^n_{\vb*\nu_2, J_2} & = \sum_{m = n-1}^{n+1}\sum_{\sigma}\sum_{\vb*\nu, J}\dfrac{c_m^p}{c_1^pc_n^p}f_{n,m}(p)\cg{1}{n}{(m;\sigma)}{\vb*\nu_1 J_1}{\vb*\nu_2 J_2}{\vb*\nu J}\\
    & \hspace{8 em}\times \cg{1}{n}{(m; \sigma)}{\vb*0_1 0}{\vb*0_n 0}{\vb*0_m 0}X^m_{\vb*\nu, J}\,+O((c_1p)^{-1})\, ,
\end{aligned}
\end{equation}
where
\begin{equation}
    f_{n,n}(p) = (-1)^p\sqrt{\delta(p)}\dfrac{(2n+1)(2n+3)}{n(n+2)}\begin{bmatrix}
    1 & n & n\\
    \vb*0_1,0 & \vb*0_n, 0 & \vb*0_n, 0 
    \end{bmatrix}\!\![\vb* p]
\end{equation}
and, for $m\in\{n-1,n+1\}$,
\begin{equation}
    f_{n,m}(p) =
        (-1)^p\sqrt{\delta(p)}\dfrac{4(m+n+2)(n+1)}{3(m+1)^2}\begin{bmatrix}
    1 & n & m\\
    \vb*0_1,0 & \vb*0_n, 0 & \vb*0_m, 0 
    \end{bmatrix}\!\![\vb* p]\, .
\end{equation}
Also, the contribution $O((c_1^pp)^{-1})$ comes from $m=n$.
\end{proposition}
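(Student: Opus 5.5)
The plan is to start from the exact formula \eqref{X1_starXn} and isolate the only part that depends on $p$ in a complicated way: the inner sum over $(\vb*\mu,I)$,
\[
\Sigma^{\sigma}_{n,m}(p):=\sum_{\vb*\mu,I}\cg{1}{n}{(m;\sigma)}{\vb*\mu I}{\widecheck{\overline{\vb*\mu}} I}{\vb*0_m\, 0}\begin{bmatrix}
    1 & n & m\\
    \vb*\mu, I & \widecheck{\overline{\vb*\mu}}, I & \vb*0_m, 0
    \end{bmatrix}\!\![\vb* p]\, .
\]
The pairs $(\vb*\mu,I)$ range over the eight GT labels of the adjoint irrep $(1,1)$: the six root labels, which pair with the conjugate labels $\overline{\vb*\mu}$ listed after \eqref{X1_starXn}, and the two Cartan labels $(\vb*0_1,1)$ and $(\vb*0_1,0)$. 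I would show that $\Sigma^\sigma_{n,m}(p)$ equals
\[
\kappa_{n,m}\cg{1}{n}{(m;\sigma)}{\vb*0_1 0}{\vb*0_n 0}{\vb*0_m 0}\begin{bmatrix}
    1 & n & m\\
    \vb*0_1,0 & \vb*0_n, 0 & \vb*0_m, 0
    \end{bmatrix}\!\![\vb* p]
\]
up to a relative error $O(1/p^2)$, with a constant $\kappa_{n,m}$ independent of $p$ and $\sigma$. Once this is established, inserting it into \eqref{X1_starXn} absorbs $(-1)^p\sqrt{\delta(p)}\,\kappa_{n,m}$ times the zero-weight bracket into $f_{n,m}(p)$, and this yields the stated form of the product.

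\textbf{Step 1: compare each bracket with the zero-weight one.} The brackets $[\cdots][\vb*p]$ are built from Clebsch--Gordan coefficients of $(p,0)\otimes(0,p)$ (Lemma I.2.18, Theorem I.2.20). I would use the symmetries of Theorem I.2.16, in particular the Weyl-reflection and conjugation symmetries, to express each bracket with weight $\vb*\mu\neq\vb*0_1$ as the zero-weight bracket times an explicit ratio. These ratios are built from the ladder factors of (I.2.26) acting on the pure-quark states $\vb*e(\vb*p;(j,k,l))$. Expanding them in $1/p$ should give
\[
\text{ratio}=\frac{\cg{1}{n}{(m;\sigma)}{\vb*\mu I}{\widecheck{\overline{\vb*\mu}} I}{\vb*0_m 0}}{\cg{1}{n}{(m;\sigma)}{\vb*0_1 0}{\vb*0_n 0}{\vb*0_m 0}}\,\big(1+a_{\vb*\mu}/p\big)+O(1/p^2)\, .
\]

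\textbf{Step 2: sum and compute the constants.} For $m=n\pm1$, I would sum the leading terms using the orthogonality relation $\sum_{\vb*\mu,I}\big|C\big|^2=\dim(m)/\dim(n)$-type for the coupling to the invariant $\vb*0_m$. The first-order corrections $a_{\vb*\mu}$ should cancel in pairs $\vb*\mu\leftrightarrow\widecheck{\vb*\mu}$ by the conjugation symmetry, which is what gives the $O(1/p^2)$ accuracy. Evaluating the remaining explicit CG coefficients, namely those of $(1,1)\otimes(n,n)\to(m,m)$ at zero weight, should produce $\kappa_{n,m}=\tfrac{4(m+n+2)(n+1)}{3(m+1)^2}$. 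The analogous computation for $m=n$ should give $\tfrac{(2n+1)(2n+3)}{n(n+2)}$.

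\textbf{Step 3: the $m=n$ error term.} The case $m=n$ has two copies $\sigma$. For one of them, the antisymmetric (``$f$-type'') coupling, the zero-weight CG coefficient vanishes, so the leading order cancels. What survives there is a term of relative order $1/p$. Combined with the prefactor $c_n^p/(c_1^p c_n^p)=1/c_1^p$ and the boundedness of $\sqrt{\delta(p)}\,[\cdots][\vb*p]$, this produces exactly the $O((c_1^pp)^{-1})$ contribution attributed to $m=n$.

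\textbf{Main obstacle.} I expect the main obstacle to be Step 1: obtaining uniform $1/p$ expansions of the eight brackets, and verifying that the first-order terms cancel, except in the antisymmetric $m=n$ channel. I would first try doing this by direct ladder computations on the highest-weight component, following the pattern of the proof of Proposition \ref{ThmBCN}. I would then check the resulting constants against the known $SU(2)$ subcase, the $T$-spin subalgebra.
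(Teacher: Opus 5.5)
\textbf{There is a genuine gap in Step~1, on which your whole plan rests.} Your skeleton matches the paper: isolate the inner $(\vb*\mu,I)$-sum of \eqref{X1_starXn}, show it is a $p$-independent multiple of the zero-weight term, absorb the constant into $f_{n,m}(p)$, and treat the second copy of $(n,n)$ separately. The problem is the mechanism. Weyl reflections and conjugation fix the zero weight; they only permute the six root labels and mix the two Cartan labels. So no symmetry of Theorem I.2.16 can express a root-weight bracket through the $(\vb*0_1,0)$ bracket. The expansions $(1+a_{\vb*\mu}/p)$ and their pairwise cancellation are asserted, not derived.

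\textbf{The missing idea is equivariance of operator multiplication}, which is what the paper's ladder-operator manipulation of \eqref{uncoupled-m0} exploits. The bracket $[1\,n\,m;\vb*\mu I,\widecheck{\overline{\vb*\mu}}I,\vb*0_m0][\vb*p]$ is a matrix coefficient of an invariant trilinear form on $(1,1)\times(n,n)\times(m,m)$. Write $C^{(m;\sigma)}_{\vb*\mu I}$ for the coefficients in \eqref{uncoupled-m0}. By Wigner--Eckart the bracket equals $\sum_{\sigma'}R_{\sigma'}C^{(m;\sigma')}_{\vb*\mu I}$, with $R_{\sigma'}$ independent of $\vb*\mu$, and the inner sum is just $R_\sigma$.

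For $m=n\pm1$ this makes the proportionality \emph{exact}. Write $C^{(m)}_{\vb*0}$ for the coefficient at $(\vb*\mu,I)=(\vb*0_1,0)$. Then $\kappa_{n,m}=(C^{(m)}_{\vb*0})^{-2}$, which follows from the normalization $\sum_{\vb*\mu,I}|C^{(m)}_{\vb*\mu I}|^2=1$ of the coupled state. That normalization, not a $\dim(m)/\dim(n)$ relation, is the orthogonality you need. For $n=1$ it gives $(C^{(0)}_{\vb*0})^2=\tfrac18$ and $(C^{(2)}_{\vb*0})^2=1-\tfrac18-\tfrac15=\tfrac{27}{40}$, matching $8$ and $\tfrac{40}{27}$ from the stated formula. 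No $1/p$-expansion enters, and there is nothing to cancel.

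\textbf{For $m=n$, your Step~3 lacks the key input.} The same argument gives exactness in the channel orthogonal to the $f$-type coupling, with constant $(C^{d}_{\vb*0})^{-2}=\tfrac{(2n+1)(2n+3)}{n(n+2)}$ (e.g.\ $5$ for $n=1$). In that channel the deviations of the individual brackets are $R_fC^{f}_{\vb*\mu I}$, and they drop out by orthogonality of the two couplings. So the entire remainder is the $f$-channel term $C^f R_f$.

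This also means a ``relative $O(1/p^2)$ error with $\sigma$-independent $\kappa$'' cannot hold in the $f$-channel. The paper's $O(1/p^2)$ is absolute, in the normalization where $x_n$ is itself $O(1/p)$.

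Your Step~3 simply asserts that $R_f$ is of relative order $1/p$. A priori it is an $O(1)$ reduced coefficient of the product in $\mathcal B(\mathcal H_{(p,0)})$. What makes it small is that the $\vb*e(1;\vb*\mu,I)$ are generators rescaled by $1/\mu_1(p)\sim 1/p$. Hence commutators with them are ladder actions divided by $\mu_1(p)$. This is the identity
\[
T_+(\vb*e(n;\overline{(021)},1/2))=-\mu_1(p)\big[\vb*e(1;(201),1/2),\vb*e(n;\overline{(021)},1/2)\big],
\]
which the paper invokes to obtain the $O((c_1^pp)^{-1})$ contribution, and it is missing from your plan.
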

\begin{skproof}
The statement follows from exhaustive application of ladder operators $U_-$ and $T_-$ on
\begin{equation}\label{uncoupled-m0}
    \vb*e((m;\sigma);\vb*0_m, 0) = \sum_{\vb*\mu, I}\cg{1}{n}{(m;\sigma)}{\vb*\mu I}{\widecheck{\overline{\vb*\mu}} I}{\vb*0_m 0} \vb*e(1;\vb*\mu, I)\otimes \vb*e(n;\widecheck{\overline{\vb*\mu}}, I)\, .
\end{equation}
This is, however, more subtle when $m = n$, where we need
\begin{equation}
    T_+(\vb*e(n;\overline{(021)}, 1/2)) = - \mu_1(p)\Big[\vb*e(1;(201),1/2),(\vb*e(n;\overline{(021)}, 1/2))\Big]\, ,
\end{equation}
cf. \eqref{pq_CG_choice}, to obtain the contribution of order $O(1/p)$.
\end{skproof}

Thereby we conclude Step 1 if we evaluate $\displaystyle{\lim_{p\to\infty}}f_{n,m}(p)$. For the sake of readability, we set
\begin{eqnarray}
    x_n\equiv x_n[p] &:=& \begin{bmatrix}
    1 & n & n\\
    \vb*0_1,0 & \vb*0_n, 0 & \vb*0_n, 0 
    \end{bmatrix}\!\![\vb* p] \ , \\
    y_n \equiv y_n[p] &:=& \begin{bmatrix}
    1 & n & n+1\\
    \vb*0_1,0 & \vb*0_n, 0 & \vb*0_{n+1}, 0 
    \end{bmatrix}\!\![\vb* p] \ .
\end{eqnarray}
By taking Hermitian cojugate, we get
\begin{equation}
    y_{n-1}[p] = \begin{bmatrix}
    1 & n & n-1\\
    \vb*0_1,0 & \vb*0_n, 0 & \vb*0_{n-1}, 0 
    \end{bmatrix}\!\![\vb* p]\, ,
\end{equation}
so we only need to determine the values of $(x_n[p])_{n< p}$ and $(y_n[p])_{n<p}$.

\begin{proposition}\label{prop:y_n-expr}
For any $p,n\in \mathbb N$ with $n< p$, the following holds
\begin{equation}
    y_n[p] = (-1)^p\,3\,\dfrac{\sqrt{(n+1)(n+2)}}{2n+3}\sqrt{\dfrac{(p+n+3)(p-n)}{p(p+1)(p+2)(p+3)}}\, .
\end{equation}
\end{proposition}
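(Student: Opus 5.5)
The plan is to compute the coupling symbol $y_n[p]$ directly. We pick a convenient pair of uncoupled operator states and read off $y_n[p]$ from their product, using only the ladder-operator normalizations of Paper I and the choice \eqref{pq_CG_choice}.

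From the definition of the bracket symbols $\left[\begin{smallmatrix}\cdot\end{smallmatrix}\right][\vb*p]$ (Theorem I.2.20 and Lemma I.2.18), $y_n[p]$ is the reduced coefficient governing the product of operators $\vb*e(1;\cdot)\,\vb*e(n;\cdot)$ in $\mathcal B(\mathcal H_{(p,0)})$ projected onto the $(n+1)$-component. Since this coefficient does not depend on the weight labels, we evaluate it on the extremal ones. The highest weight of $(n+1,n+1)$ appears in $(1,1)\otimes(n,n)$ with multiplicity one. Its highest weight vector is proportional to $\vb*e(1;(2,1,0),1/2)\otimes \vb*e(n;(2n,n,0),n/2)$, and the corresponding coupling CG coefficient is $1$ up to sign. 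By \eqref{pq_CG_choice}, $\vb*e(n;(2n,n,0);n/2)=V_+^n/\mu_n(p)$. Hence
\[
\vb*e(1;(2,1,0),1/2)\,\vb*e(n;(2n,n,0),n/2)=\frac{V_+^{n+1}}{\mu_1(p)\mu_n(p)}=\frac{\mu_{n+1}(p)}{\mu_1(p)\mu_n(p)}\,\vb*e(n+1;(2n+2,n+1,0),(n+1)/2).
\]
Up to the normalization conventions of Theorem I.2.20 (the factors $\sqrt{\delta(p)}$, $(-1)^p$, and the ratio between the extremal-weight CG coefficient and $\cg{1}{n}{n+1}{\vb*0_1 0}{\vb*0_n 0}{\vb*0_{n+1} 0}$), this gives $y_n[p]$.

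The ratio is routine. Using $\mu_n(p)=\frac{n!}{\sqrt{(2n+2)!}}\sqrt{\frac{(p+n+2)!}{(p-n)!}}$, we get
\[
\frac{\mu_{n+1}}{\mu_1\mu_n}=\frac{(n+1)\sqrt{(2n+2)!}\,\sqrt{4!}}{\sqrt{(2n+4)!}}\sqrt{\frac{(p+n+3)(p-n)}{p(p+1)(p+2)(p+3)}},
\]
which already produces the $p$-dependence in the statement. The remaining $n$-dependent constant has the form $\sqrt{24}\,(n+1)/\sqrt{(2n+3)(2n+4)}$.

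To reach the claimed prefactor $3\sqrt{(n+1)(n+2)}/(2n+3)$, we divide by the CG coefficient relating the zero-weight vector $\vb*e((n+1);\vb*0_{n+1},0)$ to $\vb*e(1;\vb*0_1,0)\otimes\vb*e(n;\vb*0_n,0)$, and by the matching extremal one. We compute these by applying $U_-$ and $T_-$ exhaustively to the highest weight vector, as in the sketch proof of Proposition~\ref{prop:step1}, with (I.2.26) supplying the ladder matrix elements, just as in Appendix~\ref{proofThmBerCharNumbers}. The sign $(-1)^p$ comes from the dual basis convention, exactly as in \eqref{berezin-cn}.

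I expect the main obstacle to be this last bookkeeping: getting the exact ratio of CG coefficients between the highest and zero weights, together with the precise normalization in Theorem I.2.20, so that the $n$-dependent constant comes out as $3\sqrt{(n+1)(n+2)}/(2n+3)$. The $p$-dependence is robust, since it comes only from the ratio of the $\mu$'s. As a consistency check on the constant, we would verify that $\sqrt{\delta(p)}\,y_n[p]$ has a finite nonzero limit matching the CG asymptotics needed in Step 1.
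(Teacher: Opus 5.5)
Your reduction is sound as far as it goes. Since $(n+1,n+1)$ occurs once in $(1,1)\otimes(n,n)$, the $(n+1,n+1)$-component of $\vb*e(1;a)\,\vb*e(n;b)$ is $K\sum_c C^{1,n,n+1}_{a,b,c}\,\vb*e(n+1;c)$ for a single constant $K$. The identity $V_+V_+^n=V_+^{n+1}$ then gives $K=\mu_{n+1}(p)/(\mu_1(p)\mu_n(p))$, and this carries all of the $p$-dependence.

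\textbf{The gap.} The non-routine part of the statement, the factor $3\sqrt{(n+1)(n+2)}/(2n+3)$, is left unproved. It requires the zero-weight coefficient to be exactly
\[
C_{\vb*0}:=\cg{1}{n}{n+1}{\vb*0_1 0}{\vb*0_n 0}{\vb*0_{n+1} 0}=\frac{\sqrt{3}\,(n+2)}{2\sqrt{(n+1)(2n+3)}}\,.
\]
Saying ``apply $U_-$ and $T_-$ exhaustively'' does not derive this. The zero weight of $(n,n)$ has multiplicity $n+1$, so this is a genuine $SU(3)$ computation.

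\textbf{The correction goes the wrong way.} By definition, $y_n$ is the coefficient in $\vb*e(1)\vb*e(n)=(-1)^p y_n\,\vb*e(n+1)+(\dots)$, where $\vb*e(m)\equiv\vb*e(m;\vb*0_m,0)$ is trace-normalized. Wigner--Eckart therefore gives $(-1)^p y_n=K\,C_{\vb*0}$: you must multiply $K$ by $C_{\vb*0}$, not divide by it. For $n=1$, $p=2$ one has $K=\mu_2/\mu_1^2=2/5$ and $C_{\vb*0}=\sqrt{27/40}$. A direct computation gives $y_1[2]=\tfrac25\sqrt{27/40}$, as the statement predicts, so dividing would give the wrong constant. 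Likewise, no $\sqrt{\delta(p)}$ enters $y_n$; that factor belongs to the twisted-product formula \eqref{X1_starXn}.

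\textbf{How the paper closes it.} The paper avoids $SU(3)$ Clebsch--Gordan coefficients altogether. It applies $(n+1)$-fold ladder operators to both sides of the defining identity, using (I.2.26) and \eqref{pq_CG_choice}. In the conventions of Appendix~\ref{proofThmBerCharNumbers}, apply $(\mathrm{ad}\,T_+)^{n+1}$:
\begin{itemize}
\item On the right, the $(n,n)$ and $(n-1,n-1)$ components are killed, since $T_3=n+1$ is not among their weights.
\item On the left, the Leibniz rule leaves only $(n+1)\,(\mathrm{ad}\,T_+)\vb*e(1)\,(\mathrm{ad}\,T_+)^n\vb*e(n)$.
\item Each factor is explicit: $(\mathrm{ad}\,T_+)^k\vb*e(k)=a_0(k)\,\vb*e(k;(2k,0,k),k/2)=a_0(k)(-1)^kT_+^k/\mu_k(p)$, with $a_0(k)=\sqrt{(2k+1)!}/(k+1)$.
\end{itemize}
Comparing coefficients of $T_+^{n+1}$ gives
\[
(-1)^p y_n=\frac{(n+1)\,a_0(1)\,a_0(n)}{a_0(n+1)}\,\frac{\mu_{n+1}(p)}{\mu_1(p)\,\mu_n(p)}\,,\qquad \frac{(n+1)\,a_0(1)\,a_0(n)}{a_0(n+1)}=\frac{\sqrt{3}\,(n+2)}{2\sqrt{(n+1)(2n+3)}}\,.
\]
The prefactor is exactly your missing $C_{\vb*0}$. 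Multiplying it by your $\mu$-ratio gives the statement, sign included.

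This is your extremal-weight idea, but the Leibniz rule does the work of the $SU(3)$ coupling coefficient, so only already-known $SU(2)$-type ladder coefficients are needed. If you prefer to stay at the highest weight, $\mathrm{ad}\,V_+$ works the same way. The $2\leftrightarrow 3$ Weyl element of the $U$-spin $SU(2)$ fixes the $U$-spin singlets $\vb*e(k)$ and sends $T_+$ to $-V_+$.
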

\begin{skproof}
By definition,
\begin{equation}
    \vb*e(1)\vb*e(n) = (-1)^py_n\vb*e(n+1)+ (...)\, ,
\end{equation}
where we are using the shorthand notation
\begin{equation}
    \vb*e(m)\equiv \vb*e(m;\vb*0_m, 0) \, ,
\end{equation}
and where $(...)$ includes components on $\vb*e(m)$ for $m\in \{n-1,n\}$. Then, applying $T_-^{n+1}$ on both sides of the above expressions, the expression for $y_n$ is obtained using (I.2.26) and \eqref{pq_CG_choice}.
\end{skproof}

\begin{proposition}\label{prop:x_n-expr}
For any $p,n\in \mathbb N$ with $n< p$, the following holds
\begin{equation}
    x_n[p] = (-1)^p\dfrac{2n(n+2)}{(2n+1)(2n+3)}\dfrac{2p+3}{\sqrt{p(p+1)(p+2)(p+3)}}\, .
\end{equation}
\end{proposition}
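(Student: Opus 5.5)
We mimic the sketch of Proposition \ref{prop:y_n-expr}. In the shorthand $\vb*e(m)\equiv\vb*e(m;\vb*0_m,0)\in\mathcal B(\mathcal H_{\vb*p})$, the coefficient $x_n[p]$ is, up to the sign $(-1)^p$, the component of $\vb*e(1)\vb*e(n)$ along $\vb*e(n)$:
\begin{equation*}
\vb*e(1)\vb*e(n)=(-1)^p y_{n-1}\,\vb*e(n-1)+(-1)^p x_n\,\vb*e(n)+(-1)^p y_n\,\vb*e(n+1)\, .
\end{equation*}
A single $T_-^{n+1}$ argument does not isolate $x_n$, since $\vb*e(n)$ is annihilated by $T_-^{n+1}$ only after suitable projection. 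We would therefore use operator identities instead of ladder applications. The element $\vb*e(1)$ is, by \eqref{pq_CG_choice} with $n=1$ and the figure giving the GT basis of $\mathfrak{sl}(3)$, a fixed multiple $\mu_1(p)^{-1}$ of $\rho_{\vb*p}(e_8)=\sqrt{2/3}\,(2T_3+U_3)$ on $\mathcal H_{(p,0)}$. Hence $\vb*e(1)\vb*e(n)$ is left multiplication of $\vb*e(n)$ by a Cartan element.

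\textbf{Step 1.} Take the trace pairing: $x_n$ equals $(-1)^p\dim(\vb*p)\,\ip{\vb*e(n)}{\vb*e(1)\vb*e(n)}_{\vb*p}$, with the normalization \eqref{norm_hs_ip}. This becomes $\mu_1(p)^{-1}\tr\big(\vb*e(n)^\dagger H\,\vb*e(n)\big)$ with $H=\rho(e_8)$.

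\textbf{Step 2.} Symmetrize. Because $\vb*e(n)$ has zero weight and is real in the Gelfand–Tsetlin basis, $\tr(\vb*e(n)^\dagger H\vb*e(n))=\tfrac12\tr\big(\vb*e(n)^\dagger\{H,\vb*e(n)\}\big)$. The commutator part $[H,\vb*e(n)]$ vanishes, since $\vb*e(n)$ is a weight-zero vector for the adjoint action. Only the anticommutator contributes. Its $(n,n)$-component can be computed by writing $\vb*e(n)$ as the isotypic component of $V_+^n$ moved to zero weight. Concretely, apply $T_-^n U_-^n$-type lowering to $\mu_n(p)^{-1}V_+^n$, as in Appendix \ref{proofThmBerCharNumbers}. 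Then evaluate the diagonal matrix elements in the weight basis $\vb*e(\vb*p;(j,k,l))$, $j+k+l=p$, where $\vb*e(n)$ is diagonal. Zero-weight operators on $(p,0)$ are diagonal in this basis, so $\vb*e(n)=\mathrm{diag}(\phi_n(j,k,l))$ for a polynomial $\phi_n$ of degree $n$, orthogonal to lower degrees.

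\textbf{Step 3.} The trace then reduces to a finite sum over the simplex,
\begin{equation*}
x_n\propto \mu_1(p)^{-1}\sum_{j+k+l=p}(2j-k-l)\,\phi_n(j,k,l)^2\, .
\end{equation*}
Here $\phi_n$ are discrete orthogonal polynomials (Hahn-type in two variables, restricted to the $SU(3)$-invariant-under-the-$U(2)$ part, so effectively one-variable in $l$ or $j$). We would use the three-term recurrence of these one-variable Hahn polynomials, with the known recurrence coefficients. The $\vb*e(n)$-coefficient of $(2j-k-l)\phi_n$ then gives the diagonal recurrence coefficient. Normalize it with $\mu_1(p)=\tfrac{1}{\sqrt{24}}\sqrt{(p+3)!/(p-1)!}$, which produces the factor $\sqrt{p(p+1)(p+2)(p+3)}$ in the denominator. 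The numerator $2n(n+2)(2p+3)/((2n+1)(2n+3))$ should come from the standard Hahn diagonal coefficient.

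\textbf{Main obstacle.} The hard step is identifying $\phi_n$ precisely as a Hahn polynomial with the correct parameters $(\alpha,\beta,N)=(1,0,p)$ or similar, and fixing the sign conventions so that the overall factor $(-1)^p$ appears. As a fallback, we would check the formula for $n=1$ directly from $\tr(H^3)$ and $\tr(H^2)$ on $(p,0)$. We would then get general $n$ by cross-checking with the Hermitian relation between $y_{n-1}$ and $y_n$ and the identity $\sum_m|\text{coeff}|^2=\norm{\vb*e(1)\vb*e(n)}^2$, computed via $\tr(\vb*e(n)^\dagger H^2\vb*e(n))$. This second moment gives $x_n^2$ from the already known $y_n$, $y_{n-1}$, up to sign. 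The sign is then fixed by the $p\to\infty$ limit.
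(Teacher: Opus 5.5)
Your approach is correct, and it takes a genuinely different route from the paper's. The paper argues by induction. It reads off the coefficient $(-1)^{np}G(n)S(n)$ of $\vb*e(n)$ in $\vb*e(1)^{n+1}$, where $G(n)=\prod_{m<n}y_m$ is taken from Proposition~\ref{prop:y_n-expr} and $S(n)=\sum_{m\le n}x_m$. It isolates $S(n)$ by applying $T_-^n$ and pairing with $\vb*e(n;(0,2n,n),n/2)$. The resulting traces $\tr((2T_3+U_3)T_-^{n-m}T_+^{n-m}T_+^mT_-^m)$ are evaluated through ${}_2F_1$'s and Vandermonde's formula, and finally $x_n=S(n)-S(n-1)$.

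You instead treat left multiplication by $\vb*e(1)$ on zero-weight $U$-spin singlets as a Jacobi matrix, and this works. Two points should be stated explicitly to make it airtight:
\begin{itemize}
\item $J$ is the $U$-spin label, so $\vb*e(m;\vb*0_m,0)$ is diagonal and depends only on $j$ (not $l$), with multiplicity weight $p-j+1$.
\item Since $\rho_{\vb*p}(U_{\le m}(\mathfrak{sl}(3)))\subseteq\bigoplus_{m'\le m}(m',m')$ and $2T_3+U_3$ acts as $(3j-p)/2$, a dimension count for $n\le p$ gives $\vb*e(n)=\pm\phi_n(j)$. Here $\phi_n$ is the orthonormal Hahn polynomial with $(\alpha,\beta,N)=(0,1,p)$ in the variable $j$.
\end{itemize}

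The computation you left as ``should come from the standard Hahn diagonal coefficient'' does come out right. The recurrence gives $A_n+C_n=\frac{(2n^2+4n+1)p+n(n+2)}{(2n+1)(2n+3)}$, hence $3(A_n+C_n)-p=\frac{n(n+2)(2p+3)}{(2n+1)(2n+3)}$. With $\vb*e(1)=\frac{2}{\sqrt{p(p+1)(p+2)(p+3)}}\,\mathrm{diag}(3j-p)$, this is exactly the claimed $x_n$. The off-diagonal entries $\frac{2}{\sqrt{p(p+1)(p+2)(p+3)}}\,3\sqrt{A_nC_{n+1}}$ also reproduce Proposition~\ref{prop:y_n-expr} up to sign.

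Your route is shorter, non-inductive, and yields both propositions at once. The paper's route needs $y_n$ as input, but it stays inside the ladder-operator/GT machinery of Paper I, where phases are fixed automatically by \eqref{pq_CG_choice}.

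Your ``main obstacle'' is smaller than you think. $x_n$ is quadratic in $\vb*e(n)$, so the sign of $\vb*e(n)$ is irrelevant, and $(-1)^p$ is part of the definition of the symbol. Only the sign of $\vb*e(1)$ relative to $2T_3+U_3$ matters, and it is positive (cf.\ \eqref{ip_trace}).

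Two parts of the proposal should be dropped. The symmetrization/lowering detour in Step 2 is unnecessary. The fallback is circular: $\tr(\vb*e(n)^\dagger H^2\vb*e(n))$ requires $\phi_n$ just as much as the direct computation, and since $x_n\to0$ the $p\to\infty$ limit cannot fix a sign. You do not need either.
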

\begin{skproof}
It goes by induction\footnote{It is possible to calculate $y_n[p]$ in a similar manner.}. We have
\begin{equation}\label{e(1)^{n+1}}
    \vb*e(1)^{n+1} = (-1)^{np}G(n)S(n)\vb*e(n)+(...)\, ,
\end{equation}
where $(...)$ includes only components on $\vb*e(m)$ for $m\ne n$ and
\begin{equation}
    G(n) = \prod\limits_{m=1}^{n-1}y_m \ , \ \ S(n) = \sum\limits_{m=1}^nx_m\, .
\end{equation}
From Proposition \ref{prop:y_n-expr}, we get an explicit expression for $G(n)$. By applying $T_-^n$ to \eqref{e(1)^{n+1}} then taking an inner product with $\vb*e(n;(0,2n,n),n/2)$, we obtain that $S(n)$ is proportional to
\begin{equation}\label{S(n)-expr-1}
\begin{aligned}
    \sum_{m=0}^n\ip{\vb*e(n;(0,2n,n),n/2)}{\vb*e(1;(0,2,1),1/2)^m\vb*e(1)\vb*e(1;(0,2,1),1/2)^{n-m}}\, ,
\end{aligned}
\end{equation}
with coefficient of proportionality determined by $G(n)$ and (I.2.26). Each term in the sum is
\begin{equation}\label{ip_trace}
    \begin{aligned}
        & \ip{\vb*e(n;(0,2n,n),n/2)}{\vb*e(1;(0,2,1),1/2)^m\vb*e(1)\vb*e(1;(0,2,1),1/2)^{n-m}}\\
        & \hspace{5 em} = \sqrt{\dfrac{2}{3}}\dfrac{\tr((2T_3+U_3)T_-^{n-m}T_+^{n-m}T_+^mT_-^m)}{\mu_n(p)(\mu_1(p))^{n+1}}\, .
    \end{aligned}
\end{equation}
The basis given in (I.D.1) diagonalizes the operators $2T_3+U_3$, $T_-^{n-m}T_+^{n-m}$ and $T_+^mT_-^m$, so it can be used to calculate the trace above more easily. Explicitly, we obtain
\begin{equation}
\begin{aligned}
& \tr((2T_3+U_3)T_-^{n-m}T_+^{n-m}T_+^mT_-^m)\\
& \hspace{3 em} = \dfrac{1}{2(n+1)}\Bigg(3\dfrac{(n+1)!p!}{(p-n-1)!}{}_2F_1(n+2,n+1-p;-p;1)\\
& \hspace{9 em} +\dfrac{n!(p+1)!}{(p-n)!}(3m-p){}_2F_1(n+1,n-p;-p-1;1)\Bigg)\, ,
\end{aligned}
\end{equation}
where ${}_2F_1$ is the hypergeometric function. By the Vandermonde's formula \cite{slater},
\begin{equation}\label{sum_trace}
\sum_{m=0}^n\tr((2T_3+U_3)T_-^{n-m}T_+^{n-m}T_+^mT_-^m) = \dfrac{n}{4}\dfrac{n!(n+1)!}{(2n+3)!}\dfrac{(p+n+2)!}{(p-n)!}(2p+3)\, .
\end{equation}
Putting \eqref{ip_trace} and \eqref{sum_trace} together, we get the desired expression for the summation in \eqref{S(n)-expr-1}. Therefore
\begin{equation}
     S(n) = (-1)^p\dfrac{n(n+1)}{2n+3}\dfrac{2p+3}{\sqrt{p(p+1)(p+2)(p+3)}}\, .
\end{equation}
To finish, we just need the expression for $x_1$, which can be inferred from \eqref{pq_CG_choice}.
\end{skproof}

Propositions \ref{prop:step1}-\ref{prop:x_n-expr} lead straightforwardly to the following lemma.

\begin{lemma}\label{lemma:f_n,m_conv}
The limit
\begin{equation}
    \lim_{p\to\infty}f_{n,m}(p) = \left(\dfrac{2(n+1)}{(m+1)}\right)^{3/2}
\end{equation}
holds for every $n\in\mathbb N$ and $m\in\{n-1,n,n+1\}$ with order $O(1/p)$.
\end{lemma}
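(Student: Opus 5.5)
The plan is a direct substitution. Propositions \ref{prop:x_n-expr} and \ref{prop:y_n-expr} give closed forms for $x_n[p]$ and $y_n[p]$, and $\delta(p)=(p+1)(p+2)/2$ by \eqref{d(p)}. Inserting these into the definitions of $f_{n,m}(p)$ from Proposition \ref{prop:step1}, each $f_{n,m}(p)$ becomes an explicit algebraic function of $p$. In each case the sign $(-1)^p$ in the definition of $f_{n,m}$ meets the sign $(-1)^p$ in $x_n$ or $y_n$, so it squares to $1$. I will treat the three cases $m=n$, $m=n+1$ and $m=n-1$ separately. The last uses $y_{n-1}[p]$, via the Hermitian-conjugate identity stated just before Proposition \ref{prop:y_n-expr}.

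Case $m=n$. The combinatorial prefactor $(2n+1)(2n+3)/(n(n+2))$ cancels exactly against the $n$-dependent factor of $x_n$. This leaves
\begin{equation*}
f_{n,n}(p)=\sqrt{\tfrac{(p+1)(p+2)}{2}}\cdot\frac{2(2p+3)}{\sqrt{p(p+1)(p+2)(p+3)}}=\sqrt2\,\frac{2p+3}{\sqrt{p(p+3)}}\, ,
\end{equation*}
which tends to $2^{3/2}=\big(2(n+1)/(n+1)\big)^{3/2}$.

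Case $m=n+1$. The factor $2n+3$ cancels, and the prefactor collapses to
\begin{equation*}
f_{n,n+1}(p)=2\sqrt2\Big(\frac{n+1}{n+2}\Big)^{3/2}\sqrt{\frac{(p+n+3)(p-n)}{p(p+3)}}\, .
\end{equation*}

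Case $m=n-1$. Replacing $n$ by $n-1$ in $y_n$ gives
\begin{equation*}
f_{n,n-1}(p)=2\sqrt2\Big(\frac{n+1}{n}\Big)^{3/2}\sqrt{\frac{(p+n+2)(p-n+1)}{p(p+3)}}\, .
\end{equation*}
In both of these cases the square-root factor tends to $1$, so the limit is $\big(2(n+1)/(m+1)\big)^{3/2}$, as claimed.

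For the rate, I will write each $p$-dependent factor as $g(1/p)$, with $g$ analytic near $0$ and $g(0)=1$ after normalization, exactly as in the proof of Proposition \ref{ThmBCN} in Appendix \ref{proofThmBerCharNumbers}. For instance, $\sqrt{(1+(n+3)x)(1-nx)/(1+3x)}$ at $x=1/p$. Then $|f_{n,m}(p)-\lim f_{n,m}|\le C_n/p$ follows from $g(x)-g(0)=O(x)$, and for $m=n$ the error is in fact $O(1/p^2)$. The only real obstacle is bookkeeping: one must check that the prefactors in Proposition \ref{prop:step1} cancel the rational $n$-factors of $x_n$ and $y_n$ exactly, and that for $m=n-1$ the index shift in $y_{n-1}$ is applied consistently. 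No analytic difficulty arises, since everything is explicit.
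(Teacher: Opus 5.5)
Your proposal is correct and is exactly the paper's argument: the paper only says the lemma follows straightforwardly from Propositions \ref{prop:step1}--\ref{prop:x_n-expr}. Your closed forms for $f_{n,n}$, $f_{n,n+1}$ and $f_{n,n-1}$ check out, and since $(p+n+3)(p-n)=p(p+3)-n(n+3)$ and $(p+n+2)(p-n+1)=p(p+3)-(n-1)(n+2)$, the error is in fact $O(1/p^2)$ in all three cases, not only for $m=n$. The one case your substitution does not literally cover is $n=1$, $m=0$, which needs $y_0$, outside the range $n\in\mathbb N$ of Proposition \ref{prop:y_n-expr}. There you should compute $y_0$ directly from the degree-$0$ (normalized identity) operator and check that it equals $(-1)^p/\sqrt{\delta(p)}$, the $n=0$ value of that formula; this gives $f_{1,0}\equiv 8=(4/1)^{3/2}$.
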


A simple examination of Theorem I.4.5 and \eqref{X1_starXn} in view of Lemma \ref{lemma:f_n,m_conv} gives the following theorems.

\begin{theorem}\label{theo:step1}
    If $c_n^p\to 1$ as $p\to \infty$ for all $n\ge 1$, then the uniform convergence $f_1\star^p f_2\to f_1f_2$ holds for every pair $f_1\in \mathcal X_1$ and $f_2\in \mathcal X$.
\end{theorem}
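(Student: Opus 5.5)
By bilinearity of $\star^p$ and of the pointwise product, I would reduce the statement to basis elements: $f_1=X^1_{\vb*\nu_1,J_1}$ and $f_2=X^n_{\vb*\nu_2,J_2}$ for a fixed $n\ge 1$. The case $n=0$ is trivial, since $X^0$ is constant and hence a multiple of the unit for every $\star^p$. Because $f_2$ is fixed, only finitely many basis terms occur, so uniform convergence on $\mathcal O_{(1,0)}$ follows from convergence of the finitely many coefficients in the basis $\{X^m_{\vb*\nu,J}\}$. Each $X^m_{\vb*\nu,J}$ is a fixed continuous function on the compact orbit, so coefficient convergence implies sup-norm convergence.

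The main step is to read off these coefficients from Proposition \ref{prop:step1}. The product $X^1\star^p X^n$ equals a sum over $m\in\{n-1,n,n+1\}$. Its coefficients are
\[
\frac{c_m^p}{c_1^pc_n^p}\,f_{n,m}(p)\,\cg{1}{n}{(m;\sigma)}{\vb*\nu_1 J_1}{\vb*\nu_2 J_2}{\vb*\nu J}\cg{1}{n}{(m;\sigma)}{\vb*0_1 0}{\vb*0_n 0}{\vb*0_m 0},
\]
plus a remainder of order $O((c_1^pp)^{-1})$. Under the hypothesis $c_k^p\to1$ for all $k\ge1$, the prefactor $c_m^p/(c_1^pc_n^p)\to 1$. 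For $m=n-1=0$, $c_0^p=1$ by the normalization of Theorem I.4.5, since the trace is preserved. The remainder tends to $0$ because $c_1^p$ is bounded away from zero. By Lemma \ref{lemma:f_n,m_conv}, $f_{n,m}(p)\to (2(n+1)/(m+1))^{3/2}$. Hence the limit of $X^1\star^pX^n$ is the $p$-independent expression
\[
\sum_{m,\sigma,\vb*\nu,J}\Big(\tfrac{2(n+1)}{m+1}\Big)^{3/2}\cg{1}{n}{(m;\sigma)}{\vb*\nu_1 J_1}{\vb*\nu_2 J_2}{\vb*\nu J}\cg{1}{n}{(m;\sigma)}{\vb*0_1 0}{\vb*0_n 0}{\vb*0_m 0}X^m_{\vb*\nu,J}.
\]

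It remains to identify this limit with the pointwise product $X^1_{\vb*\nu_1,J_1}X^n_{\vb*\nu_2,J_2}$. I expect this to be the main obstacle. I would try to avoid a direct Clebsch–Gordan computation and instead use a comparison argument. The limit expression depends only on the Clebsch–Gordan data and not on the particular ray $(W^p)$. So it suffices to exhibit a single ray with $c_n^p\to1$ for all $n$ whose product is known to converge to the pointwise product. The Berezin ray serves this purpose: by Proposition \ref{ThmBCN}, $b_n^p\to1$ for every $n$ (with the Clebsch–Gordan choice \eqref{pq_CG_choice}), and by Corollary \ref{cor:ber_ray_pois} its product satisfies $f_1\star^pf_2\to f_1f_2$ uniformly. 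Applying the computation of the previous paragraph to the Berezin ray shows that the displayed limit equals $X^1_{\vb*\nu_1,J_1}X^n_{\vb*\nu_2,J_2}$. Therefore the same limit holds for any ray satisfying $c_n^p\to1$.

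The only point needing care in this comparison is that both rays use the same decomposition conventions, namely \eqref{pq_CG_choice}, so that the characteristic numbers are compared in the same basis. Assuming that, uniformity over the orbit follows from the finite-dimensionality of the span of the $X^m$ with $m\le n+1$.
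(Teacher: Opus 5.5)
Your argument is correct, but it closes the proof by a different route than the paper.

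Up to the limit expression you follow the paper exactly: Proposition~\ref{prop:step1} and Lemma~\ref{lemma:f_n,m_conv}, under the hypothesis $c_n^p\to 1$. The paper then identifies that limit directly with the pointwise product $X^1X^n$, by a direct examination of Theorem~I.4.5, the Clebsch--Gordan expansion of products of $\mathbb CP^2$ harmonics. There the value $\big(2(n+1)/(m+1)\big)^{3/2}=\sqrt{\dim(1,1)\dim(n,n)/\dim(m,m)}$ is recognized as the dimension factor of that expansion.

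You instead bootstrap from the Berezin ray, using Proposition~\ref{ThmBCN} and Corollary~\ref{cor:ber_ray_pois}. This is logically sound, with no circularity, since neither result depends on Appendix~\ref{app:pq-asymp}. It also needs only that $f_{n,m}(p)$ converges (even boundedness suffices), not its value.

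The price is that your argument stops being the independent, Clebsch--Gordan-based alternative the appendix is meant to supply. It imports the main-text PBW/Karabegov result for Berezin rays. Once you allow that import, Proposition~\ref{prop:step1} is superfluous. By Schur's lemma $W^p=\Lambda^p\circ B^p$, where $\Lambda^p$ multiplies the $(n,n)$-component by $c_n^p/b_n^p$. Hence $f\star^p g=\Lambda^p\big((\Lambda^p)^{-1}f\star_B^p(\Lambda^p)^{-1}g\big)$. By finite-dimensionality this converges to $fg$ for all $f,g\in\mathcal X$ at once, not just $f\in\mathcal X_1$. That is essentially the main-text proof of Corollary~\ref{cor:pois_cp2} via Theorem~\ref{theo:pois_char_m_ber}.

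The paper's direct identification keeps the appendix self-contained. It also yields convergence rates, as in Theorem~\ref{theo:step1_sw}.
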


\begin{theorem}\label{theo:step1_sw}
The twisted products $(\ast^p_{\!S})$ induced by the symmetric Stratonovich-Weyl correspondences are such that $\norm{f_1\ast^p_{\!S} f_2-f_1f_2}\in O(1/p)$ as $p\to\infty$,  for every pair $f_1\in \mathcal X_1$ and $f_2\in \mathcal X$.
\end{theorem}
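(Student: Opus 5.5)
The statement to prove is Theorem \ref{theo:step1_sw}: for the symmetric Stratonovich-Weyl correspondences on $\mathcal O_{(1,0)}$ one has $\norm{f_1\ast^p_{\!S} f_2-f_1f_2}\in O(1/p)$ for $f_1\in\mathcal X_1$, $f_2\in\mathcal X$. I would deduce this from the explicit product formula of Proposition \ref{prop:step1} specialized to the Stratonovich-Weyl characteristic numbers. For the symmetric Stratonovich-Weyl correspondence, Theorem I.4.5 (the characterization of symbol correspondences by characteristic numbers, together with the isometry condition) gives $c_n^p=1$ for all $n\le p$. So the prefactors $c_m^p/(c_1^pc_n^p)$ in Proposition \ref{prop:step1} are identically $1$, and the error term $O((c_1^pp)^{-1})$ becomes $O(1/p)$.

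By bilinearity and finite dimensionality of $\mathcal X_1$, it suffices to treat $f_1=X^1_{\vb*\nu_1,J_1}$ and $f_2=X^n_{\vb*\nu_2,J_2}$ for each fixed $n$. Here the sup norm and any other norm are equivalent on the finite-dimensional space $Poly_{\le n+1}(\mathcal O_{(1,0)})$, so bounds on coefficients give bounds on $\norm{\cdot}$. Proposition \ref{prop:step1} then writes $X^1\ast^p_{\!S}X^n$ as
\[
\sum_{m}\sum_\sigma\sum_{\vb*\nu,J} f_{n,m}(p)\,C_{m,\sigma}\,X^m_{\vb*\nu,J}+O(1/p),
\]
where $C_{m,\sigma}$ denotes the product of the two Clebsch-Gordan coefficients, which is independent of $p$. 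By Lemma \ref{lemma:f_n,m_conv}, $f_{n,m}(p)=\left(2(n+1)/(m+1)\right)^{3/2}+O(1/p)$. Substituting these limits, the product differs by $O(1/p)$ from the $p$-independent expression
\[
\sum_{m,\sigma,\vb*\nu,J}\left(\tfrac{2(n+1)}{m+1}\right)^{3/2}C_{m,\sigma}\,X^m_{\vb*\nu,J}.
\]

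It remains to identify this limiting expression with the pointwise product $X^1X^n$. I would not recompute it directly; instead I would use the identity already implicit in Theorem \ref{theo:step1}. That theorem says that, under $c_n^p\to1$ (which holds trivially here), $X^1\star^pX^n\to X^1X^n$. Since the limit of the formula in Proposition \ref{prop:step1} is exactly the expression above, it must equal $X^1X^n$. Consequently
\[
\norm{X^1\ast^p_{\!S}X^n-X^1X^n}\le \sum |f_{n,m}(p)-\lim f_{n,m}|\,|C_{m,\sigma}|\,\norm{X^m_{\vb*\nu,J}}+O(1/p)=O(1/p).
\]
The sum is finite because $m\in\{n-1,n,n+1\}$ and the multiplicity index $\sigma$ and weight labels range over finite sets.

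The main obstacle is making sure the $O(1/p)$ claims are genuinely uniform, i.e.\ that the $O(1/p)$ remainder in Proposition \ref{prop:step1} is a finite combination of harmonics with coefficients bounded by $C/p$. This includes the subtle $m=n$ contribution coming from the $T_+$ commutator with $\mu_1(p)$. I would handle it by inspecting the sketch of Proposition \ref{prop:step1}: the remainder is produced by expanding the exact coefficients $\begin{bmatrix}\cdot\end{bmatrix}[\vb*p]$, which by Propositions \ref{prop:y_n-expr} and \ref{prop:x_n-expr} are explicit algebraic functions of $p$ that are analytic in $1/p$ near $0$. A Taylor expansion in $1/p$, as in Appendix \ref{proofThmBerCharNumbers}, then gives the order-$1/p$ bound.
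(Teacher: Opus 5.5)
Your proposal is correct and follows essentially the paper's route: specialize \eqref{X1_starXn} (Proposition \ref{prop:step1}) to $c_n^p\equiv 1$, use the $O(1/p)$ rate in Lemma \ref{lemma:f_n,m_conv} and the $O(1/p)$ remainder, and recognize the limiting expression as $X^1X^n$. The one difference is that you get this last identification indirectly through Theorem \ref{theo:step1}, which is legitimate since that theorem does not depend on this one; the paper instead reads it off directly from the pointwise product formula for $\mathbb CP^2$ harmonics, which it cites as Theorem I.4.5 and whose coefficients are exactly $\bigl(2(n+1)/(m+1)\bigr)^{3/2}$ times the same Clebsch--Gordan products. Consequently Theorem I.4.5 is not the source of $c_n^p=1$; that comes from the characteristic-number description of correspondences (cf.\ Theorem I.4.8).
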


Now, let $>_m$ denote the highest weight of $(m,m)$, so $X^1_{>_1}X^n_{>_n}$ is a non zero multiple of $X^{n+1}_{>_{n+1}}$

\begin{theorem}\label{theo:tp_c1_conv_cn_conv}
    Suppose the uniform convergence $f_1\star^p f_2\to f_1f_2$ holds for every pair $f_1\in \mathcal X_1$ and $f_2\in \mathcal X$. If $c_1^p\to 1$ as $p\to \infty$, then $c_n^p\to 1$ for all $n\ge 1$.
\end{theorem}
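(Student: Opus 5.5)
The plan is an induction on $n$, based on the product of highest weight vectors $X^1_{>_1}\star^p X^n_{>_n}$. The case $n=1$ is the hypothesis $c_1^p\to 1$. Assume $c_k^p\to 1$ for all $k\le n$; we show $c_{n+1}^p\to 1$.

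By the paragraph preceding the statement, $X^1_{>_1}X^n_{>_n}=\kappa_n X^{n+1}_{>_{n+1}}$ with $\kappa_n\neq 0$ independent of $p$.

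On the other hand, Proposition~\ref{prop:step1} expands $X^1_{>_1}\star^p X^n_{>_n}$ over the irreps $(m,m)$ with $m\in\{n-1,n,n+1\}$. Take the component in the highest weight of $(n+1,n+1)$. Only $m=n+1$ contributes there, since the weight of $>_1$ plus the weight of $>_n$ is the highest weight $>_{n+1}$, which does not occur in $(m,m)$ for $m\le n$. That component equals
\begin{equation*}
\frac{c_{n+1}^p}{c_1^pc_n^p}\,f_{n,n+1}(p)\,\Gamma_n\,X^{n+1}_{>_{n+1}}+O\big((c_1^pp)^{-1}\big)\, ,
\end{equation*}
where $\Gamma_n$ is a product of Clebsch-Gordan coefficients independent of $p$. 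The error term is attached to $m=n$, so it does not even enter this component.

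The hypothesis gives uniform convergence $X^1_{>_1}\star^pX^n_{>_n}\to X^1_{>_1}X^n_{>_n}$. Since the projection onto the highest weight line of $Poly(\mathcal O_{(1,0)})^{(n+1,n+1)}$ is a bounded linear map, we obtain
\begin{equation*}
\frac{c_{n+1}^p}{c_1^pc_n^p}\,f_{n,n+1}(p)\,\Gamma_n\to\kappa_n\, .
\end{equation*}

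Next I identify the limiting constant. Lemma~\ref{lemma:f_n,m_conv} gives $f_{n,n+1}(p)\to\big(2(n+1)/(n+2)\big)^{3/2}$. The cleanest way to get $\kappa_n=\big(2(n+1)/(n+2)\big)^{3/2}\Gamma_n$ without computing it directly is to run the same identity for the Berezin ray. Its characteristic numbers $b_k^p\to1$ by Proposition~\ref{ThmBCN}, and Berezin is of Poisson type by Corollary~\ref{cor:ber_ray_pois}, so its products converge. Passing to the limit in the identity for Berezin yields exactly $\kappa_n=\lim_p f_{n,n+1}(p)\,\Gamma_n$, which is nonzero. The same computation also shows $\Gamma_n\neq0$.

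Dividing, we get $c_{n+1}^p/(c_1^pc_n^p)\to1$. The induction hypothesis then gives $c_{n+1}^p\to1$, completing the induction.

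The main obstacle is justifying that the $O\big((c_1^pp)^{-1}\big)$ term and the $m=n-1,n$ terms really do not contaminate the $>_{n+1}$ component. This rests on a weight argument, which I will check carefully using multiplicity-freeness of $(m,m)$ in $Poly(\mathcal O_{(1,0)})$. A secondary point is that I rely on the Berezin comparison rather than an explicit Clebsch-Gordan evaluation of $\kappa_n$ and $\Gamma_n$.
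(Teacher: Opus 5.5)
Your proof is correct, and its skeleton is the paper's: induction on $n$ through the identity $X^1_{>_1}\star^pX^n_{>_n}=\frac{c^p_{n+1}}{c^p_1c^p_n}f_{n,n+1}(p)\,\Gamma_n\,X^{n+1}_{>_{n+1}}$.

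Your ``main obstacle'' is immediate. By equivariance and multiplicity-freeness of the $(m,m)$ in $\mathcal X$, the product is a vector of weight $>_{n+1}$ in $\mathcal X_{n+1}$. It is therefore an exact multiple of $X^{n+1}_{>_{n+1}}$, which is how the paper writes it, with no error term.

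Where you differ is in pinning down the limit constant. The paper reads $\kappa_n=\big(2(n+1)/(n+2)\big)^{3/2}\Gamma_n$ off the pointwise-product formula for harmonics (Theorem I.4.5) and matches it with the explicit limit in Lemma \ref{lemma:f_n,m_conv}. You instead calibrate against the Berezin ray via Proposition \ref{ThmBCN} and Corollary \ref{cor:ber_ray_pois}.

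That calibration is valid and not circular, since neither result depends on this appendix. It does require that $b^p_n$ and $c^p_n$ both be taken relative to \eqref{pq_CG_choice}. In fact it makes Lemma \ref{lemma:f_n,m_conv} unnecessary. The factor $f_{n,n+1}(p)\Gamma_n$ is built from operator structure constants and Clebsch--Gordan coefficients, so it is the same for every correspondence. The Berezin case alone then gives $f_{n,n+1}(p)\Gamma_n\to\kappa_n\neq0$.

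The price is that you import the main-text Berezin analysis into an appendix meant as an independent Clebsch--Gordan proof of Corollary \ref{cor:pois_cp2}. Once that analysis is admitted, Corollary \ref{cor:pois_cp2} already follows from Proposition \ref{ThmBCN} and Theorem \ref{theo:pois_char_m_ber}. So the paper's route is the one that keeps the appendix self-contained, even though it needs the explicit computations of Propositions \ref{prop:y_n-expr} and \ref{prop:x_n-expr}.
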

\begin{skproof}
    For $n\in \mathbb N$,
    \begin{equation}
    X^1_{>_1}\star^p X^n_{>_n} = \dfrac{c_{n+1}}{c_1c_n}f_{n,n+1}(p)\cg{1}{n}{n+1}{>_1}{>_n}{>_{n+1}}\cg{1}{n}{n+1}{\vb*0_1 0}{\vb*0_n 0}{\vb*0_{n+1} 0}X^{n+1}_{>_{n+1}}\, .
    \end{equation}
    The statement follows by induction, using Theorem I.4.5 and Lemma \ref{lemma:f_n,m_conv}.
\end{skproof}

\begin{corollary}\label{cor:pois_and_c_1->1}
    If the characteristic numbers $(c_n^p)$ define a sequence of correspondences of Poisson type and $c_1^p\to 1$ as $p\to\infty$, then $c_n^p \to 1$ for every $n\ge 1$.
\end{corollary}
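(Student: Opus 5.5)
The plan is to derive Corollary \ref{cor:pois_and_c_1->1} directly from Theorem \ref{theo:tp_c1_conv_cn_conv}. That theorem needs two inputs: the uniform convergence $f_1\star^p f_2\to f_1f_2$ for every $f_1\in\mathcal X_1$ and $f_2\in\mathcal X$, and the convergence $c_1^p\to 1$. The second input is assumed in the corollary. So the only work is to obtain the first input from the Poisson hypothesis.

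For this we note that $\mathcal X_1\subset\mathcal X=Poly(\mathcal O_{(1,0)})$. By Definition \ref{PoissonOxi} with $\xi=\xi_{(1,0)}$, the Poisson property gives $\lim_{p\to\infty}\norm{f_1\star^p f_2-f_1f_2}_\xi=0$ for \emph{all} $f_1,f_2\in\mathcal X$, in the supremum norm on the orbit. This is exactly uniform convergence, and in particular it holds on the subset $\mathcal X_1\times\mathcal X$.

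Two bookkeeping points need care:
\begin{itemize}
\item The ray here is indexed by $p$, with $\omega_\xi=\varpi_1$ and $s=p$. The product condition \eqref{Poisson-eq} does not involve the rescaling by $r(\xi)$, so no normalization issue arises.
\item The product $f_1\star^p f_2$ is defined only once $f_1,f_2\in\mathcal X_p$. By Lemma \ref{lemma:f_image_orbit} this holds for all sufficiently large $p$, which is enough for statements about limits.
\end{itemize}

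Applying Theorem \ref{theo:tp_c1_conv_cn_conv} then yields $c_n^p\to1$ for every $n\ge1$, which is the claim. I expect no real obstacle here; the substance lies in Theorem \ref{theo:tp_c1_conv_cn_conv}, which is taken as given. The one point to check is that the sign and normalization conventions \eqref{pq_CG_choice} behind the characteristic numbers are the same ones used in Theorem \ref{theo:tp_c1_conv_cn_conv}.
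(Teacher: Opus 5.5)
Your proposal is correct and is the argument the paper intends (the corollary is stated without proof, directly after Theorem~\ref{theo:tp_c1_conv_cn_conv}): the product half of Definition~\ref{PoissonOxi}, with $\xi=\xi_{(1,0)}$ and $s=p$, gives uniform convergence $f_1\star^p f_2\to f_1f_2$ for all $f_1,f_2\in\mathcal X$, hence on $\mathcal X_1\times\mathcal X$, and Theorem~\ref{theo:tp_c1_conv_cn_conv} then yields $c_n^p\to 1$ for all $n\geq 1$. Your concern about the convention \eqref{pq_CG_choice} does not arise, since the whole appendix works under that single fixed choice.
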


\ 

\noindent\underline{\bf Step 2}. \ 
We'll proceed by induction from Theorem \ref{theo:step1}. Given an harmonic function $X^n_{\vb*\nu, J}$ and $p,m\in \mathbb N$ with $p>\max\{n,m\}$, let
\begin{equation}
\begin{aligned}
    & \hspace{5 em}L^{n,m}_{\vb*\nu, J}[p],R^{n,m}_{\vb*\nu, J}[p]:\mathcal X_{m}\to \mathcal X_{n+m}\, ,\\
    & L^{n,m}_{\vb*\nu, J}[p](f) = X^n_{\vb*\nu, J}\star^p f \ , \ \ R^{n,m}_{\vb*\nu, J}[p](f) = f\star^p X^n_{\vb*\nu, J}\, ,
\end{aligned}
\end{equation}
be the left and right star product operators, respectively.

\begin{lemma}\label{lemma:L-R-bound}
If all characteristic numbers converge to $1$ as $p\to \infty$, then the families of operators $(L^{n,m}_{\vb*\nu, J}[p])_p$ and $(R^{n,m}_{\vb*\nu, J}[p])_p$ are uniformly bounded for every $n,m\ge 1$.
\end{lemma}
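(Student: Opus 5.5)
Since $\mathcal X_m$ and $\mathcal X_{n+m}$ are finite dimensional, every linear map between them is bounded. Uniform boundedness in $p$ therefore reduces to showing that the matrix entries of $L^{n,m}_{\vb*\nu,J}[p]$ and $R^{n,m}_{\vb*\nu,J}[p]$ stay bounded as $p\to\infty$. We use the bases of harmonic functions $\{X^k_{\vb*\mu,I}\}$ of $\mathcal X_m$ and $\mathcal X_{n+m}$. By Lemma I.2.18 and Theorem I.2.20, the product $X^n_{\vb*\nu,J}\star^p X^k_{\vb*\mu,I}$ is a finite sum over $l$ with $|n-k|\le l\le n+k$ and over multiplicity labels $\sigma$. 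Each coefficient has the form
\[
\sqrt{\delta(p)}\,(-1)^p\,\frac{c^p_l}{c^p_n c^p_k}
\]
times two $SU(3)$ Clebsch--Gordan coefficients, which are independent of $p$ and bounded by $1$, times a contracted sum of the symbols $\left[\begin{smallmatrix} n & k & l\\ \cdot & \cdot & \cdot\end{smallmatrix}\right]\!\![\vb*p]$, generalizing \eqref{X1_starXn}. Because all characteristic numbers converge to $1$, the ratio $c^p_l/(c^p_nc^p_k)$ is bounded for large $p$, and for finitely many small $p$ boundedness is trivial. The whole task thus reduces to showing
\[
\sqrt{\delta(p)}\left[\begin{matrix} n & k & l\\ \cdot&\cdot&\cdot\end{matrix}\right]\!\![\vb*p]=O(1)\qquad\text{as } p\to\infty,
\]
uniformly over the finitely many weight labels involved.

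We would prove this by induction on $n$. The base case $n=1$ is Proposition \ref{prop:step1} together with Lemma \ref{lemma:f_n,m_conv}. There the coefficients $f_{k,l}(p)$ converge to finite limits, and the remainder is $O((c_1^pp)^{-1})$, which is bounded. So $L^{1,m}_{\vb*\nu,J}[p]$ is uniformly bounded for every $m$. For the inductive step we write $X^{n}_{\vb*\nu,J}$ as a polynomial in degree-one harmonics, using $Poly_{\le n}(\mathcal O_{(1,0)})=\mathcal X_n$. Concretely, $X^1_{>_1}\star^p X^{n-1}_{>_{n-1}}$ equals a nonzero coefficient $\kappa_{n}(p)$ times $X^{n}_{>_{n}}$ plus components in $\mathcal X_{n-1}$, as in the proof of Theorem \ref{theo:tp_c1_conv_cn_conv}. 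The coefficient $\kappa_n(p)$ converges to a nonzero limit by Lemma \ref{lemma:f_n,m_conv} and the hypothesis $c^p_n\to 1$. Associativity then gives
\[
L^{n,m}_{>_n}[p]=\kappa_n(p)^{-1}\Big(L^{1,m+n-1}_{>_1}[p]\circ L^{n-1,m}_{>_{n-1}}[p]-\textstyle\sum(\text{lower } L^{k,m}[p],\ k\le n-1)\Big).
\]
The coefficients of the lower terms are bounded by the base case applied to the bounded sequence $X^1\star^pX^{n-1}$. Hence $L^{n,m}_{>_n}[p]$ is bounded.

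To pass from highest weight vectors to all $X^n_{\vb*\nu,J}$ we use $SU(3)$-equivariance. Lowering operators act on $\mathcal X$ by $p$-independent bounded maps, and they commute with $\star^p$ in the derivation sense, $T_-(f\star^pg)=T_-f\star^pg+f\star^pT_-g$. Applying them inductively on weight depth gives bounds for every basis vector. The right multiplication operators $R$ follow either by the same argument with the order of factors reversed, or by complex conjugation, since $\overline{f\star^pg}=\bar g\star^p\bar f$ for correspondences with real characteristic numbers. The conjugation maps $X^n_{\vb*\nu,J}$ to a multiple of a harmonic in the same isotypic space.

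The main obstacle we expect is the inductive identity. We must control the lower-degree remainder in $X^1\star^pX^{n-1}$ and make sure $\kappa_n(p)$ stays away from zero. This requires the exact asymptotics of $y_n[p]$ and $x_n[p]$ from Propositions \ref{prop:y_n-expr} and \ref{prop:x_n-expr}: $\sqrt{\delta(p)}\,y_n[p]\to$ a nonzero constant, while $\sqrt{\delta(p)}\,x_n[p]=O(1/p)$. These asymptotics should be checked first.
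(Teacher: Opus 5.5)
Your proof is correct, but it follows a different route from the paper's. The paper proves the lemma in one line by citing Theorem~I.C.3 and equation~(I.C.6) of Paper~I. That is, it imports general results from Paper~I's appendix, which give the bound for harmonics of every degree $n$ directly, with no induction and no use of Step~1.

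You instead bootstrap from degree one:
\begin{itemize}
\item Proposition~\ref{prop:step1} with Lemma~\ref{lemma:f_n,m_conv} gives the base case.
\item The identity $X^1_{>_1}\star^p X^{n-1}_{>_{n-1}}=\kappa_n(p)X^n_{>_n}$ and associativity give $\|L^{n,m}_{>_n}[p]\|\le|\kappa_n(p)|^{-1}\|L^{1,m+n-1}_{>_1}[p]\|\,\|L^{n-1,m}_{>_{n-1}}[p]\|$.
\item For lowering operators $D$, whose norms on the fixed spaces $\mathcal X_m$, $\mathcal X_{n+m}$ do not depend on $p$, the derivation identity $L_{Df}[p]=D\circ L_f[p]-L_f[p]\circ D$ spreads the bound from $X^n_{>_n}$ to every $X^n_{\nu,J}$.
\item The relation $\overline{f\star^p g}=\bar g\star^p\bar f$ transfers everything to $R$.
\end{itemize}
Your route stays inside Appendix~\ref{app:pq-asymp} and needs only the degree-one asymptotics. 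It does, however, rely on the merely sketched Proposition~\ref{prop:step1} and on $\lim_p\kappa_n(p)\neq 0$. The paper's citation is shorter and uniform in the degree, but it depends on Paper~I's appendix.

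Two side remarks in your last paragraph are wrong, though neither damages the argument.
\begin{itemize}
\item There is no lower-degree remainder to control. The product $X^1_{>_1}\star^pX^{n-1}_{>_{n-1}}$ has weight $n(\varpi_1+\varpi_2)$, which is not a weight of $(k,k)$ for $k<n$, so the identity is exact. This is the form displayed in the proof of Theorem~\ref{theo:tp_c1_conv_cn_conv}.
\item $\sqrt{\delta(p)}\,x_n[p]$ is not $O(1/p)$. By Proposition~\ref{prop:x_n-expr}, its modulus tends to $2\sqrt2\,n(n+2)/\big((2n+1)(2n+3)\big)\neq 0$, which is exactly why $f_{n,n}(p)\to 2^{3/2}$ in Lemma~\ref{lemma:f_n,m_conv}.
\end{itemize}
Your induction only uses boundedness of the $f_{n,m}(p)$ and $\lim_p f_{n-1,n}(p)\neq 0$, so nothing breaks.

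The cleanest justification of $\lim_p\kappa_n(p)\neq0$ is Theorem~\ref{theo:step1}: $X^1_{>_1}\star^pX^{n-1}_{>_{n-1}}$ converges to the pointwise product $X^1_{>_1}X^{n-1}_{>_{n-1}}$, a nonzero multiple of $X^n_{>_n}$. This also covers the zero-weight Clebsch--Gordan factor, which you did not check.
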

\begin{skproof}
It follows from Theorem I.C.3 and equation (I.C.6).
\end{skproof}

\begin{theorem}\label{theo:tp-conv}
If $c_n^p\to 1$ as $p\to \infty$ for all $n\ge 1$, then the uniform convergence
$f_1\star^p f_2\to f_1f_2$ holds for every pair $f_1,f_2 \in \mathcal X$.
\end{theorem}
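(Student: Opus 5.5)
We argue by induction on $n$, the degree of the harmonic component of $f_1$. By bilinearity of $\star^p$ and of the pointwise product, it suffices to treat $f_1 = X^n_{\vb*\nu,J}$ and $f_2 = X^m_{\vb*\mu,I}$ for arbitrary $n,m$. Since $\mathcal X_n = Poly_{\le n}(\mathcal O_{(1,0)})$ is spanned by products of elements of $\mathcal X_1$, we write $X^n_{\vb*\nu,J}$ as a finite linear combination $\sum_k a_k\, g_k h_k$, with $g_k\in\mathcal X_1$ and $h_k\in\mathcal X_{n-1}$ (the $a_k$ independent of $p$). The base case $n=1$ is exactly Theorem \ref{theo:step1}.

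For the inductive step, assume the uniform convergence $f\star^p f_2\to ff_2$ for all $f\in\mathcal X_{n-1}$ and all $f_2\in\mathcal X$. The key manipulation replaces pointwise products by twisted ones. Set
\begin{equation*}
e_k^p := g_k h_k - g_k\star^p h_k ,
\end{equation*}
which tends to $0$ uniformly by Theorem \ref{theo:step1}. Then, using associativity of $\star^p$,
\begin{equation*}
X^n_{\vb*\nu,J}\star^p f_2=\sum_k a_k\, g_k\star^p(h_k\star^p f_2)+\sum_k a_k\, e_k^p\star^p f_2 .
\end{equation*}
For the first sum we write $h_k\star^p f_2 = h_kf_2 + \delta_k^p$ with $\delta_k^p\to 0$ uniformly, by the inductive hypothesis. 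Theorem \ref{theo:step1} gives $g_k\star^p(h_kf_2)\to g_kh_kf_2$, so that term converges to $X^n_{\vb*\nu,J}f_2$. What remains are the error terms $g_k\star^p\delta_k^p$ and $e_k^p\star^p f_2$, whose factors vanish but with $p$-dependent arguments.

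These error terms are controlled by Lemma \ref{lemma:L-R-bound}. All functions involved lie in fixed finite-dimensional spaces $\mathcal X_N$ with $N\le n+m$, independent of $p$. The operators $L^{1,N}[p]$ (left multiplication by a degree-one harmonic) are uniformly bounded in $p$, and applying them to $\delta_k^p\to 0$ gives $g_k\star^p\delta_k^p\to 0$. For $e_k^p\star^p f_2$, expand $f_2$ in harmonics $X^m_{\vb*\mu,I}$ and use the uniformly bounded right operators $R^{m,N}[p]$ applied to $e_k^p\to0$. Together these give $X^n_{\vb*\nu,J}\star^p f_2\to X^n_{\vb*\nu,J}f_2$ uniformly, which closes the induction.

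The main obstacle is the bookkeeping of which norm is used and the uniformity of the operator bounds. On the finite-dimensional spaces $\mathcal X_N$ all norms are equivalent, so uniform boundedness in the sense of Lemma \ref{lemma:L-R-bound} does imply sup-norm control. One must also check that the degree parameter $N$ stays fixed, in particular that $\delta_k^p$ and $e_k^p$ remain in $\mathcal X_{N}$ for large $p$, which holds by the Clebsch--Gordan degree bounds. Finally, $p$ must exceed $n+m$ so that all the operators are defined.
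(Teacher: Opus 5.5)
Your proof is correct and follows essentially the same route as the paper: induction on degree starting from Theorem \ref{theo:step1}, writing an element of the next degree through pointwise products with a degree-one factor, inserting twisted products via associativity of $\star^p$, and controlling the cross terms with the uniformly bounded left/right multiplication operators of Lemma \ref{lemma:L-R-bound}. The only difference is cosmetic. You put the degree-one factor on the left, so the base case handles $g_kh_k-g_k\star^p h_k$ and you only need degree-one left multipliers. The paper instead writes $X^n_{\vb*\nu,J}X^1_{\vb*\mu,I}$ and uses the inductive hypothesis for that splitting.
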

\begin{skproof}
    Assume that, for $n\in \mathbb N$, $f_1\star^p f_2\to f_1f_2$ whenever $f_1 \in \mathcal X_n$ and $f_2 \in \mathcal X$. Every element of $\mathcal X_{n+1}$ is a linear combination of an element of $\mathcal X_n$ and pointwise products of the form $X^n_{\vb*\nu, J}X^1_{\vb*\mu, I}$, so it is sufficient to prove
\begin{equation}
    (X^n_{\vb*\nu, J}X^1_{\vb*\mu, I})\star^p X^{n'}_{\vb*\nu', J'}\to X^n_{\vb*\nu, J}X^1_{\vb*\mu, I} X^{n'}_{\vb*\nu', J'}\, .
\end{equation}
The idea is to sum and subtract $X^n_{\vb*\nu, J}\star^p (X^1_{\vb*\mu, I}X^{n'}_{\vb*\nu',J'})$ and $X^n_{\vb*\nu, J}\star^p X^1_{\vb*\mu, I}\star^p X^{n'}_{\vb*\nu',J'}$, then use triangular inequality and Lemma \ref{lemma:L-R-bound} to conclude what we want.
\end{skproof}

\ 

\noindent\underline{\bf Step 3}. \ 
To estimate the rate of convergence of $\norm{[f_1,f_2]_{\star^p}}$ when the characteristic numbers all go to $1$, the symmetric Stratonovich-Weyl correspondence is a suitable reference. So let $(\ast^p_{\!S})$ be the twisted products induced by the symmetric Stratonovich-Weyl correspondences.

\begin{theorem}\label{theo:sw_comm_O(1/p)}
For every $f_1,f_2 \in \mathcal X$, we have $\norm{[f_1,f_2]_{\ast^p_S}} \in O(1/p)$.
\end{theorem}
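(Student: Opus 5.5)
We must show that for the symmetric Stratonovich-Weyl twisted products $\ast^p_S$ on $\mathcal X=Poly(\mathcal O_{(1,0)})$ one has $\norm{[f_1,f_2]_{\ast^p_S}}\in O(1/p)$ for all $f_1,f_2\in\mathcal X$. The plan is to reduce to generators, bound the commutator with one factor in $\mathcal X_1$, and then induct on degree using the Leibniz rule for commutators and uniform boundedness of multiplication operators.

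\textbf{Step A: reduce to harmonic basis elements.} By bilinearity it suffices to take $f_1=X^n_{\vb*\nu_1,J_1}$ and $f_2=X^{n'}_{\vb*\nu_2,J_2}$. For the symmetric Stratonovich-Weyl correspondence all characteristic numbers satisfy $c^p_m=1$ (with the sign convention of \eqref{pq_CG_choice}). Hence Lemma \ref{lemma:L-R-bound} applies, and the left and right multiplication operators $L^{n,m}[p]$ and $R^{n,m}[p]$ are uniformly bounded in $p$.

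\textbf{Step B: the case $f_1\in\mathcal X_1$.} We use Proposition \ref{prop:step1} with $c_m^p=1$. The product $X^1\ast^p_S X^n$ is a sum over $m\in\{n-1,n,n+1\}$ of the terms $f_{n,m}(p)$ times CG products times $X^m$, plus an $O(1/p)$ remainder coming from $m=n$.

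To get $X^n\ast^p_S X^1$ we use the symmetry of the CG coefficients (Theorem I.2.16). Swapping the two factors multiplies the $(1,n;m,\sigma)$ coupling by a sign $\pm1$. For $m=n\pm1$ the coupling is symmetric, so these terms cancel in the commutator exactly up to the difference of the $f_{n,m}$ factors. By Lemma \ref{lemma:f_n,m_conv} both $f_{n,m}(p)$ factors have the same limit with rate $O(1/p)$, so this difference is $O(1/p)$.

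The $m=n$ terms correspond to the antisymmetric coupling. This is the one that survives in the commutator, and it carries the $O(1/p)$ factor, since $x_n[p]\sim p^{-1}$ by Proposition \ref{prop:x_n-expr} while $\sqrt{\delta(p)}\sim p$. Combining these, we expect
\[
\norm{[X^1,X^n]_{\ast^p_S}}\in O(1/p),
\]
uniformly in the finitely many basis indices at fixed $n$. This matches the later Step 4 statement that the limit $p[\cdot,\cdot]$ is the Poisson bracket.

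\textbf{Step C: induction on the degree of $f_1$.} Assume the bound holds for $f_1\in\mathcal X_n$. As in the proof of Theorem \ref{theo:tp-conv}, every element of $\mathcal X_{n+1}$ is a combination of elements of $\mathcal X_n$ and pointwise products $gh$ with $g\in\mathcal X_n$, $h\in\mathcal X_1$.

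Write $gh=g\ast^p_S h+(gh-g\ast^p_S h)$. By the Leibniz rule,
\[
[g\ast^p_S h,f_2]=g\ast^p_S[h,f_2]+[g,f_2]\ast^p_S h ,
\]
and both terms are $O(1/p)$ by the induction hypothesis, Step B, and Lemma \ref{lemma:L-R-bound}. It remains to bound the commutator of $\delta_p:=gh-g\ast^p_S h$ with $f_2$. Theorem \ref{theo:step1_sw} gives $\norm{\delta_p}\in O(1/p)$ for $h\in\mathcal X_1$. Moreover $\delta_p$ lies in the fixed finite-dimensional space $\mathcal X_{n+1}$, on which $[\cdot,f_2]_{\ast^p_S}$ is uniformly bounded by Lemma \ref{lemma:L-R-bound}. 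Therefore $[\delta_p,f_2]\in O(1/p)$, which closes the induction.

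\textbf{Main obstacle.} The hardest part is Step B: showing that the symmetric ($m=n\pm1$) contributions cancel to order $1/p$. This needs the $O(1/p)$ rate in Lemma \ref{lemma:f_n,m_conv} together with the exact CG symmetry signs. For the Stratonovich-Weyl case I would verify this directly from the explicit formulas for $x_n[p]$ and $y_n[p]$ in Propositions \ref{prop:y_n-expr} and \ref{prop:x_n-expr}.
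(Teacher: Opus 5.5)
\textbf{Gap in Step B.} You rightly call Step B the crux, but the argument you give for it fails. You assign the $m=n$ part of $X^1\ast^p_S X^n$ to the antisymmetric coupling and say it is $O(1/p)$ because $x_n[p]\sim p^{-1}$ while $\sqrt{\delta(p)}\sim p$. These two rates cancel, so $f_{n,n}(p)$ is of order one; indeed Lemma~\ref{lemma:f_n,m_conv} gives $f_{n,n}(p)\to 2^{3/2}\neq 0$. If that term really were antisymmetric, it would survive in the commutator at order one and the statement would be false.

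The correct bookkeeping is that $(n,n)$ occurs twice in $(1,1)\otimes(n,n)$. The $\sigma$-sum in Proposition~\ref{prop:step1} is weighted by the Clebsch--Gordan coefficient at the zero-weight vectors. That weight kills the copy of $(n,n)$ given by the Lie-algebra action (the antisymmetric one), since a Cartan element annihilates a zero-weight vector. So the order-one $f_{n,n}$ term lives in the symmetric copy and drops out of the commutator. The antisymmetric copy is what produces the separate $O((c_1^pp)^{-1})$ remainder mentioned in Proposition~\ref{prop:step1}. Your cancellation of the $m=n\pm1$ terms ``up to the difference of the $f_{n,m}$ factors'' also presupposes an expansion of $X^n\ast^p_S X^1$ that you never derive. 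As written, Step B is a heuristic, not a proof.

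\textbf{How to repair it.} Step B follows in one line from a result you already invoke in Step C, and this is the paper's proof (``follows straightforwardly from Theorem~\ref{theo:step1_sw}''). Since the pointwise product commutes,
\[
[f_1,f_2]_{\ast^p_S}=\big(f_1\ast^p_S f_2-f_1f_2\big)-\big(f_2\ast^p_S f_1-f_2f_1\big).
\]
For $f_1\in\mathcal X_1$ both brackets are $O(1/p)$ by Theorem~\ref{theo:step1_sw}. The second bracket needs $\overline{f\ast^p_S g}=\bar g\ast^p_S\bar f$, which follows from the reality axiom $W_{A^\dagger}=\overline{W_A}$, together with the fact that $\mathcal X_1$ is closed under conjugation. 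Alternatively, Proposition~\ref{prop:comm-X1-Xn-asymp} with $c_1^p=1$ gives $[X^1,X^n]_{\ast^p_S}$ exactly as $\frac{i}{p\sqrt{1+3/p}}\sqrt{3/2}\,\{X^1,X^n\}$.

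With Step B replaced in this way, your Step C is sound. The Leibniz identity, the uniform bounds of Lemma~\ref{lemma:L-R-bound} on the fixed finite-dimensional spaces, and $\norm{\delta_p}\in O(1/p)$ close the induction. For $\delta_p=gh-g\ast^p_S h$, where the degree-one factor sits on the right, either use the conjugation symmetry again or write $gh=hg$ and compare with $h\ast^p_S g$. The paper leaves the passage from $f_1\in\mathcal X_1$ to general $f_1$ implicit. Your commutator-level induction is a valid way to supply it, equivalent to first propagating the $O(1/p)$ product rate to all of $\mathcal X$ as in Step 2.
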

\begin{skproof}
    It follows straightforwardly from Theorem \ref{theo:step1_sw}.
\end{skproof}

\begin{theorem}\label{theo:comm_O(1/p)}
If $c_n^p\to 1$ as $p\to \infty$ for every $n\ge 1$, then $\norm{[f_1,f_2]_{\star^p}}\in O(1/p)$ for every $f_1,f_2\in \mathcal X$.
\end{theorem}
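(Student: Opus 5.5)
The plan is to compare the twisted product $\star^p$ with the symmetric Stratonovich--Weyl product $\ast^p_S$, for which Theorem \ref{theo:sw_comm_O(1/p)} already gives $\norm{[f_1,f_2]_{\ast^p_S}}\in O(1/p)$. By bilinearity it suffices to treat harmonic functions $f_1=X^n_{\vb*\nu_1,J_1}$, $f_2=X^{n'}_{\vb*\nu_2,J_2}$. For $p>n+n'$, the general product formula for twisted products in terms of characteristic numbers (Theorem I.4.5, generalizing \eqref{X1_starXn}) expresses both products in the same form:
\begin{equation*}
X^n_{\vb*\nu_1,J_1}\star^p X^{n'}_{\vb*\nu_2,J_2}=\sum_{m}\sum_{\vb*\nu,J}\frac{c_m^p}{c_n^pc_{n'}^p}\,\Gamma^{p}_{n,n',m}(\vb*\nu_1J_1,\vb*\nu_2J_2;\vb*\nu J)\,X^m_{\vb*\nu,J}\, ,
\end{equation*}
where the coefficients $\Gamma^p$ (products of Clebsch--Gordan coefficients and the $p$-dependent symbols $[\cdots][\vb*p]$) are those of the SW product, which has all characteristic numbers equal to $1$ in the normalization \eqref{pq_CG_choice}. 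The sum over $m$ is finite ($m\le n+n'$), independently of $p$.

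Next I split the commutator as
\begin{equation*}
[f_1,f_2]_{\star^p}=\sum_{m,\vb*\nu,J}\frac{c_m^p}{c_n^pc_{n'}^p}\Big(\Gamma^p(\vb*\nu_1J_1,\vb*\nu_2J_2;\vb*\nu J)-\Gamma^p_{\mathrm{rev}}(\vb*\nu_2J_2,\vb*\nu_1J_1;\vb*\nu J)\Big)X^m_{\vb*\nu,J}\, .
\end{equation*}
The key observation is that the prefactor $c_m^p/(c_n^pc_{n'}^p)$ is symmetric in the two factors, so it factors out of each $m$-block of the commutator. Hence the $m$-th isotypic component of $[f_1,f_2]_{\star^p}$ equals $c_m^p/(c_n^pc_{n'}^p)$ times the $m$-th isotypic component of $[f_1,f_2]_{\ast^p_S}$. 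Since $\mathcal X_{n+n'}$ is finite dimensional, the sup norm of each component is bounded by a constant (depending only on $n,n'$) times the sup norm of the whole, via the orthogonal projections onto each $Poly(\mathcal O_{(1,0)})^{(m,m)}$. So each component of the SW commutator is $O(1/p)$.

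Finally, since $c_k^p\to1$ for the finitely many $k\le n+n'$, the ratios $c_m^p/(c_n^pc_{n'}^p)$ are bounded for large $p$, giving $\norm{[f_1,f_2]_{\star^p}}\in O(1/p)$. The main obstacle is to justify that the reversed product carries the same prefactor and that only the Clebsch--Gordan symmetry (Theorem I.2.16) distinguishes $f_1\star f_2$ from $f_2\star f_1$. I would check this directly from the structure of Theorem I.4.5, including the multiplicity index $\sigma$ for $m=n$ in the $(m,m)$ decomposition: the characteristic number for $(m,m)$ in the pure-quark case is scalar, so it commutes with the multiplicity mixing.
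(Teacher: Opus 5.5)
Your proposal is correct and follows essentially the paper's argument. You compare with the symmetric Stratonovich--Weyl commutator of Theorem \ref{theo:sw_comm_O(1/p)}, observe that the harmonic coordinates of the two commutators differ blockwise by the factor $c_m^p/(c_{n_1}^pc_{n_2}^p)$, and conclude using equivalence of norms on the fixed finite-dimensional space $\mathcal X_{n_1+n_2}$ together with the boundedness of these ratios. The only difference is that the paper uses the max-of-coordinates norm where you use isotypic projections.

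The point you flag as the main obstacle is actually immediate. Since $[f_1,f_2]_{\star^p}=W^p\big([(W^p)^{-1}f_1,(W^p)^{-1}f_2]\big)$, and $W^p$ restricted to the multiplicity-free $(m,m)$-block of operators is $c_m^p$ times a fixed correspondence-independent map, the blockwise factor follows directly. No Clebsch--Gordan symmetry is needed.
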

\begin{skproof}
For $n_1,n_2 \in \mathbb N$, the idea is to compare
\begin{equation*}
\norm{[X^{n_1}_{\vb*\nu_1,J_1},X^{n_2}_{\vb*\nu_2, J_2}]_{\ast^p_{\!S}}} \ \ \ \ \mbox{and} \ \ \ \  \norm{[X^{n_1}_{\vb*\nu_1,J_1},X^{n_2}_{\vb*\nu_2, J_2}]_{\star^p}}
\end{equation*}
using the norm given by the maximum of coordinates with respect to the basis of harmonic functions as intermediate. Just note that any two norms on $\mathcal X_{n_1+n_2}$ are equivalent since it is finite dimensional, and the hypothesis on the characteristic numbers implies that there is $C(n_1,n_2)>0$ such that
\begin{equation}
    \left|\dfrac{c_n^p}{c_{n_1}^pc_{n_2}^p}\right|\le C(n_1,n_2)
\end{equation}
for every $n\le n_1+n_2$.
\end{skproof}

\ 

\noindent\underline{\bf Step 4}. \ 
The commutator $[X^1_{\vb*\mu, I},X^n_{\vb*\nu, J}]_{\star^p}$ can be explicitly computed.

\begin{proposition}\label{prop:comm-X1-Xn-asymp}
For any two $\mathbb CP^2$ harmonics $X^1_{\vb*\mu, I},X^n_{\vb*\nu, J}\in \mathcal X$, we have
	\begin{equation}
		\left[X^1_{\vb*\mu, I},X^n_{\vb*\nu, J}\right]_{\star^p} = \dfrac{1}{p\sqrt{1+3/p}}\dfrac{i}{c_1^p}\sqrt{\dfrac{3}{2}}\{X^1_{\vb*\mu, I}, X^n_{\vb*\nu, J}\}\, .
	\end{equation}
In particular, $p[f_1,f_2]_{\star^p}\to i\sqrt{3/2}\{f_1,f_2\}$ uniformly for every $f_1\in \mathcal X_1$ and $f_2 \in \mathcal X$ if and only if $c_1^p \to 1$ as $p\to \infty$.
\end{proposition}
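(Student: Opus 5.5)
The plan is to reduce the commutator $[X^1_{\vb*\mu,I},X^n_{\vb*\nu,J}]_{\star^p}$ to the expression for $X^1\star^p X^n$ in \eqref{X1_starXn}, and then show that only the $m=n$ channel survives antisymmetrization. First, I would write the reversed product $X^n_{\vb*\nu,J}\star^p X^1_{\vb*\mu,I}$ by the same formula, with the order of the Clebsch--Gordan coefficients swapped. The symmetry of Clebsch--Gordan coefficients under exchange of the two factors (Theorem I.2.16) then produces a relative sign $\pm1$ for each coupled irrep $(m;\sigma)$. For $m=n\pm1$, the couplings $(1,1)\otimes(n,n)\to(m,m)$ are symmetric: they correspond to the pointwise product of harmonics, and the pointwise product is commutative. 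Those terms therefore cancel in the commutator. For $m=n$ there are two copies, $\sigma$ symmetric and antisymmetric. Only the antisymmetric copy contributes, and it is exactly the channel realizing the $\mathfrak{su}(3)$-action, i.e.\ the map $X^n\mapsto\{X^1,X^n\}$, because of \eqref{pois_bi_vec_op} together with \eqref{X^1_poly}.

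Second, rather than evaluate the $6j$-type brackets $\begin{bmatrix}\cdot\end{bmatrix}[\vb*p]$ directly, I would use an exact shortcut. The operator $A=\rho_{\vb*p}(e)$ for $e\in\mathfrak{sl}(3)$ spans $\mathcal B(\mathcal H_{\vb*p})^{(1,1)}$, and $[A,B]=\rho_{\vb*p}(e)\cdot B$ is the infinitesimal adjoint action. By equivariance of $W^p$, $W^p_{[A,B]}$ equals the natural $\mathfrak{su}(3)$-action on $W^p_B$, and by \eqref{su3actPb} that action is a Poisson bracket with the linear function $\widehat a_e$. Hence
\[
[W^p_A,W^p_B]_{\star^p}=W^p_{[A,B]}=\kappa\,\{\widehat a_e,W^p_B\},
\]
with $\kappa$ a constant fixed by normalizations. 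This identity holds \emph{exactly} for every $B$, not just asymptotically, which is why the proposition has no error term. I would then take $B$ with $W^p_B=X^n_{\vb*\nu,J}$ and $A$ with $W^p_A=X^1_{\vb*\mu,I}$. The latter requires $A=(c_1^p)^{-1}\sqrt{\delta(p)}$ times the normalized operator $\vb*e(1;\vb*\mu,I)$, so the factor $1/c_1^p$ appears and nothing else depends on the characteristic numbers.

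Third, I would compute $\kappa$ by comparing the normalized basis element $\vb*e(1;\cdot)=\mu_1(p)^{-1}\rho(e)$ from \eqref{pq_CG_choice} with $X^1$ in \eqref{X^1_poly}. Using $\mu_1(p)=\frac{1}{\sqrt{24}}\sqrt{(p+3)(p+2)(p+1)p}$ and $\delta(p)=(p+1)(p+2)/2$, we get $\sqrt{\delta(p)}/\mu_1(p)\propto 1/\sqrt{p(p+3)}=\frac{1}{p\sqrt{1+3/p}}$. The remaining numerical factor $i\sqrt{3/2}$ comes from the $\sqrt2$ in \eqref{pois_bi_vec_op}, the factor $2$ in \eqref{X^1_poly}, and the normalization of the Haar measure. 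I expect this constant bookkeeping, especially the sign conventions relating $W^p$, $X^1$ and $\Pi_{\mathfrak g}$, to be the main obstacle. I would handle it by checking a single case, the highest-weight $X^1_{(2,1,0),1/2}$ against $X^1_{\vb*0_1,0}$, where both sides can be computed directly.

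Finally, the ``in particular'' statement follows at once. The formula gives
\[
p[f_1,f_2]_{\star^p}=\frac{1}{c_1^p\sqrt{1+3/p}}\,i\sqrt{3/2}\,\{f_1,f_2\}
\]
for $f_1\in\mathcal X_1$, $f_2\in\mathcal X$, extended by linearity. Convergence uniform in $f_2$ (on the finite-dimensional pieces) holds if and only if the scalar $1/(c_1^p\sqrt{1+3/p})\to1$, which is equivalent to $c_1^p\to1$. One still needs a pair with $\{f_1,f_2\}\neq0$ for the converse direction, which is clearly available.
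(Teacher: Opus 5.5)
Your proposal is correct and is essentially the paper's own argument. The paper likewise writes $X^1_{\vb*\mu,I}$ as the symbol of $\sqrt{\delta(p)}\,(c_1^p\mu_1(p))^{-1}\rho_{\vb*p}(e)$ and uses that $[\rho_{\vb*p}(e),\cdot\,]$ is the infinitesimal action, so by equivariance the commutator becomes $\frac{1}{c_1^p}\frac{2\sqrt3}{\sqrt{p(p+3)}}\,e(X^n_{\vb*\nu,J})$; it then converts this into $\frac{i}{2\sqrt2}\{X^1_{\vb*\mu,I},X^n_{\vb*\nu,J}\}$ via \eqref{pois_bi_vec_op} and \eqref{X^1_poly}, and your first-paragraph Clebsch--Gordan channel analysis (whose $m=n\pm1$ cancellation argument is only heuristic) and the Haar-measure factor are not needed for this.
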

\begin{skproof}
Let $A = \vb*e(1;\vb*\mu, I)$. By definition of twisted product, and with a little abuse of notation,
\begin{equation}
	\begin{aligned}
		\left[X^1_{\vb*\mu, I},X^n_{\vb*\nu, J}\right]_{\star^p} & \overset{W^p}{\longleftrightarrow} \dfrac{\dim(\vb*p)}{c_1^pc_n^p\mu_1(p)}[A,\vb*e(n;\vb*\nu, J)]\\
		& \overset{W^p}{\longleftrightarrow} \dfrac{1}{c_1^p}\dfrac{2\sqrt{3}}{\sqrt{p(p+3)}} A(X^n_{\vb*\nu, J})
	\end{aligned}
\end{equation}
The result follows from \eqref{pois_bi_vec_op} and \eqref{X^1_poly} by straightforward calculation.
\end{skproof}

\ 

\noindent\underline{\bf Step 5}. \ 
Once more, it goes by induction, where now the base step is Proposition \ref{prop:comm-X1-Xn-asymp}. The next proposition contains the inductive step.

\begin{proposition}\label{prop:tc_ind}
Suppose $f\star^p g\to fg$ uniformly for every $f,g \in \mathcal X$. For $n\in \mathbb N$, if the uniform convergence $p[f,g]_{\star^p}\to\,i\sqrt{3/2}\{f,g\}$ holds for every pair $f\in \mathcal X_n$ and $g\in \mathcal X$, then $p[f,g]_{\star^p}\to \,i\sqrt{3/2}\{f,g\}$ for every $f \in \mathcal X_{n+1}$ and $g \in \mathcal X$.
\end{proposition}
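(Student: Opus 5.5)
We argue by a Leibniz-rule splitting, mirroring the proof of Theorem \ref{theo:tp-conv} in Step 2. Since every element of $\mathcal X_{n+1}$ is a linear combination of an element of $\mathcal X_n$ and pointwise products $X^n_{\vb*\nu,J}X^1_{\vb*\mu,I}$, and the case $f\in\mathcal X_n$ is the hypothesis, by bilinearity it suffices to treat $f=ab$ with $a=X^n_{\vb*\nu,J}\in\mathcal X_n$, $b=X^1_{\vb*\mu,I}\in\mathcal X_1$ and $g\in\mathcal X$ arbitrary. The target is $i\sqrt{3/2}\{ab,g\}=i\sqrt{3/2}\left(a\{b,g\}+\{a,g\}b\right)$.

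The key decomposition replaces the pointwise product $ab$ by $a\star^p b$ and uses the derivation property of the $\star^p$-commutator:
\begin{equation*}
p[ab,g]_{\star^p}=p[ab-a\star^p b,g]_{\star^p}+a\star^p\big(p[b,g]_{\star^p}\big)+\big(p[a,g]_{\star^p}\big)\star^p b\, .
\end{equation*}
For the second term we use Proposition \ref{prop:comm-X1-Xn-asymp} (or the hypothesis with $n\geq1$) to get $p[b,g]_{\star^p}\to i\sqrt{3/2}\{b,g\}$ uniformly. All these commutators lie in a fixed finite-dimensional space $\mathcal X_{N}$ with $N=n+1+\deg g$, since commutators of elements of $\mathcal X_{k}$ and $\mathcal X_{l}$ land in $\mathcal X_{k+l}$. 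Writing $a\star^p h_p - a\{\cdot\}$ as $a\star^p(h_p-h)+(a\star^p h-ah)$, the first piece goes to zero by uniform boundedness of the left-multiplication operators $L[p]$ on $\mathcal X_N$ (Lemma \ref{lemma:L-R-bound}, which applies since the uniform product convergence in the hypothesis forces the characteristic numbers to behave; alternatively invoke the Uniform Boundedness Principle on the finite-dimensional spaces as in Lemma \ref{lemma:pois_bound}). The second piece vanishes by the assumed product convergence. The third term is treated identically, using the induction hypothesis for $a\in\mathcal X_n$ and the bounded right-multiplication operators.

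The main obstacle is the first term, $p[ab-a\star^p b,g]_{\star^p}$. Here $ab-a\star^p b\to0$, but it is multiplied by $p$. We plan to bound it using Step 3: $\norm{[h,g]_{\star^p}}\le C\norm{h}/p$ uniformly for $h$ in the finite-dimensional space $\mathcal X_{n+1}$, with $C$ independent of $p$. This is the content of Theorem \ref{theo:comm_O(1/p)} made uniform over a basis of $\mathcal X_{n+1}$, because the map $h\mapsto p[h,g]_{\star^p}$ is linear on a finite-dimensional space and is bounded on each basis element. Then $\norm{p[ab-a\star^p b,g]_{\star^p}}\le C\norm{ab-a\star^p b}\to0$.

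The delicate point is justifying that the $O(1/p)$ constant is uniform as an operator bound. It would require that Theorem \ref{theo:comm_O(1/p)} applies under our hypotheses, namely $c_n^p\to1$. We would obtain this either from the product-convergence hypothesis via Step 1, or more directly by applying the Uniform Boundedness Principle to $h\mapsto p[h,g]_{\star^p}$ restricted to $\mathcal X_n\subset\mathcal X_{n+1}$, using the inductive hypothesis. Collecting the three terms gives the claimed uniform convergence.
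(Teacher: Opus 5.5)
Your Leibniz splitting
\[ p[ab,g]_{\star^p}=p[ab-a\star^p b,g]_{\star^p}+a\star^p\big(p[b,g]_{\star^p}\big)+\big(p[a,g]_{\star^p}\big)\star^p b \]
is sound, and it is essentially the paper's route. The paper also reduces to $f=X^1X^n$ and then adds and subtracts terms, using Theorems~\ref{theo:tp_c1_conv_cn_conv} and \ref{theo:comm_O(1/p)}, Lemma~\ref{lemma:L-R-bound} and the uniform boundedness principle. Your treatment of the second and third terms is fine. The uniform boundedness principle on the finite-dimensional spaces $\mathcal X_N$ bounds the left and right multiplication operators directly from the product hypothesis, so you can drop the vague claim that product convergence ``forces the characteristic numbers to behave''.

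The gap is in the first term. It needs $\sup_p\|p[\,\cdot\,,g]_{\star^p}\|<\infty$ as an operator on all of $\mathcal X_{n+1}$, which is Theorem~\ref{theo:comm_O(1/p)} and therefore requires $c_n^p\to1$ for every $n$. Neither of your two justifications delivers this.

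\begin{enumerate}
\item Applying the uniform boundedness principle to $h\mapsto p[h,g]_{\star^p}$ on $\mathcal X_n$ controls only $\mathcal X_n$. But $ab-a\star^p b$ generally has a nonzero component in the $(n+1,n+1)$ summand of $\mathcal X_{n+1}$. This is the $m=n+1$ term of Proposition~\ref{prop:step1}, whose coefficient involves $c^p_{n+1}/(c^p_1c^p_n)$ and tends to its limit at an unknown rate. The inductive hypothesis says nothing about $p[\,\cdot\,,g]_{\star^p}$ on that summand, and supplying exactly this bound is what Theorem~\ref{theo:comm_O(1/p)} is for.
\item ``From product convergence via Step 1'' is incomplete as stated. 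Theorem~\ref{theo:tp_c1_conv_cn_conv}, the Step 1 result that yields $c_n^p\to1$, assumes $c_1^p\to1$ in addition to product convergence, and you never say where that comes from.
\end{enumerate}

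The missing link is short. Since $\mathcal X_1\subset\mathcal X_n$, your hypothesis contains the case $f\in\mathcal X_1$, and by Proposition~\ref{prop:comm-X1-Xn-asymp} this is equivalent to $c_1^p\to1$. Theorem~\ref{theo:tp_c1_conv_cn_conv} then gives $c_n^p\to1$ for all $n$. Theorem~\ref{theo:comm_O(1/p)}, applied to a basis of $\mathcal X_{n+1}$, then gives your uniform operator bound. With this inserted your argument is complete and uses the same results the paper cites.
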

\begin{skproof}
Analogously to Theorem \ref{theo:tp-conv}, it is sufficient to prove
\begin{equation}
p\left[X^1_{\vb*\mu, I}X^n_{\vb*\nu, J},X^{n'}_{\vb*\nu', J'}\right]_{\star^p}\to i\sqrt{\dfrac{3}{2}}\left\{X^1_{\vb*\mu, I}X^n_{\vb*\nu, J},X^{n'}_{\vb*\nu', J'}\right\}\, ,
\end{equation}
and it can be done by a serial sum and subtraction of suitable terms, resorting to Theorems \ref{theo:tp_c1_conv_cn_conv} and \ref{theo:comm_O(1/p)}, Lemma \ref{lemma:L-R-bound} and the uniform boundedness principle.
\end{skproof}

Now, putting the above proposition together with Theorem \ref{theo:tp-conv} and Proposition \ref{prop:comm-X1-Xn-asymp}, we finally obtain:

\begin{theorem}\label{theo:c_n->1_pois}
If $c_n^p\to 1$ as $p\to \infty$ for all $n\ge 1$, then $p\left[f,g\right]_{\star^p}\to i \sqrt{3/2}\left\{f,g\right\}$ uniformly for every $f,g \in \mathcal X$.
\end{theorem}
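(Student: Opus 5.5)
The statement to prove is Theorem \ref{theo:c_n->1_pois}: if all characteristic numbers satisfy $c_n^p\to 1$, then $p[f,g]_{\star^p}\to i\sqrt{3/2}\{f,g\}$ uniformly for all $f,g\in\mathcal X$. I would prove it by induction on the degree $n$ of $f$, with $f\in\mathcal X_n=Poly_{\le n}(\mathcal O_{(1,0)})$ and $g\in\mathcal X$ arbitrary. By bilinearity it suffices to treat harmonic basis elements $X^{n}_{\vb*\nu,J}$.

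\textbf{Base case.} This is $n=1$, together with the trivial case $n=0$, since constants commute with everything. Proposition \ref{prop:comm-X1-Xn-asymp} gives the exact formula
\[
[X^1,X^m]_{\star^p}=\frac{i}{c_1^p}\,\frac{\sqrt{3/2}}{p\sqrt{1+3/p}}\{X^1,X^m\}.
\]
Since $c_1^p\to1$ by hypothesis, we get $p[f,g]_{\star^p}\to i\sqrt{3/2}\{f,g\}$ for all $f\in\mathcal X_1$ and $g\in\mathcal X$.

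\textbf{Product convergence.} Theorem \ref{theo:tp-conv} shows that the hypothesis $c_n^p\to1$ for all $n$ gives $f\star^p g\to fg$ uniformly for all $f,g\in\mathcal X$. This is exactly the standing assumption of Proposition \ref{prop:tc_ind}.

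\textbf{Inductive step.} This is Proposition \ref{prop:tc_ind}. If I had to supply its argument, I would write every element of $\mathcal X_{n+1}$ as a lower-degree part plus a combination of pointwise products $X^1X^n$, which is possible because products of $X^1$'s and $X^n$'s span $Poly_{n+1}(\mathcal O_{(1,0)})$. For such a product I would use the Leibniz-type decomposition
\[
p[X^1X^n,g]_{\star^p}=p[X^1\star^p X^n,g]_{\star^p}+p[X^1X^n-X^1\star^pX^n,g]_{\star^p},
\]
and the derivation identity
\[
[a\star b,g]_\star=a\star[b,g]_\star+[a,g]_\star\star b .
\]
The first term then tends to $i\sqrt{3/2}\big(X^1\{X^n,g\}+\{X^1,g\}X^n\big)=i\sqrt{3/2}\{X^1X^n,g\}$. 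This uses the inductive hypothesis, the base case, product convergence, and the uniform boundedness of the left and right multiplication operators (Lemma \ref{lemma:L-R-bound}). All of this happens in finite-dimensional spaces $\mathcal X_k$, where convergence is coordinatewise.

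\textbf{Main obstacle.} The error term $p[X^1X^n-X^1\star^pX^n,g]_{\star^p}$ is the delicate part. Product convergence only says $X^1X^n-X^1\star^pX^n\to0$, while Theorem \ref{theo:comm_O(1/p)} says commutators are $O(1/p)$ with a constant depending on the fixed argument; here the argument varies with $p$. I would handle this by noting that $h_p:=X^1X^n-X^1\star^pX^n$ lies in the fixed finite-dimensional space $\mathcal X_{n+1}$. The bilinear maps $(h,g)\mapsto p[h,g]_{\star^p}$ on $\mathcal X_{n+1}\times\mathcal X_m$ are pointwise bounded in $p$ by Theorem \ref{theo:comm_O(1/p)}, hence uniformly bounded by the uniform boundedness principle. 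Therefore $\|p[h_p,g]_{\star^p}\|\le C\|h_p\|\to0$, which closes the induction and proves the theorem.
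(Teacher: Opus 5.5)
Your proposal is correct and follows the paper's route. The paper also argues by induction on degree, with Proposition \ref{prop:comm-X1-Xn-asymp} as the base case, Theorem \ref{theo:tp-conv} supplying product convergence, and Proposition \ref{prop:tc_ind} as the inductive step, and your fleshed-out inductive step uses exactly the ingredients the paper's sketch names: the Leibniz-type sum-and-subtract, Lemma \ref{lemma:L-R-bound}, Theorem \ref{theo:comm_O(1/p)}, and uniform boundedness. Your handling of the $p$-dependent error term $p[h_p,g]_{\star^p}$ through uniform bilinear bounds on the fixed finite-dimensional $\mathcal X_{n+1}$ is sound, and you rightly skip Theorem \ref{theo:tp_c1_conv_cn_conv}, which the paper cites but whose conclusion is already part of your hypothesis.
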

\end{document}